\documentclass[twoside,a4paper]{article}
\usepackage{actes}

\usepackage{amsfonts}
\usepackage{amsmath}
\usepackage{amssymb}
\usepackage{amsthm}
\usepackage[english]{babel}
\let\textcite\cite  
\usepackage{caption}
\usepackage{color}
\usepackage{enumerate}
\usepackage[hidelinks]{hyperref}
\usepackage{hyperxmp}
\usepackage{mathtools}
\usepackage{paralist}
\usepackage{stmaryrd}
\usepackage{xcolor,verbatim}

\usepackage[T1]{fontenc}
\usepackage[utf8]{inputenc}

\definecolor{rgreen}{HTML}{004400}
\definecolor{rblue}{HTML}{000044}

\newif\ifcolourversion
\colourversiontrue
\colourversionfalse
\newif\ifarxivversion
\arxivversionfalse
\arxivversiontrue

\ifcolourversion%
\usepackage[left]{showlabels}%
\newcommand{\arxiv}[2]{{\color{rgreen}#1\color{rblue}#2}}%
\else%
\newcommand{\arxiv}[2]{%
  \ifarxivversion%
  #1%
  \else%
  #2%
  \fi}
\fi

\ifarxivversion
\hypersetup{%
  pdftitle={On Coupled Logical Bisimulation},
  pdfauthor={Ryan Kavanagh <ryan@cs.queensu.ca>},
  pdfkeywords={bisimulation} {up-to techniques},
  pdfcopyright={Copyright (C) 2014 Ryan Kavanagh. All rights
    reserved.},
  pdflang={English},
}
\fi

\ifarxivversion%
\usepackage{microtype}
\IfFileExists{MinionPro.sty}{%
  \usepackage{MinionPro}
  
}{%
  \usepackage{mathpazo,eulervm}
}
\fi

\DeclareFontFamily{U}{mathx}{\hyphenchar\font45}
\DeclareFontShape{U}{mathx}{m}{n}{
  <5> <6> <7> <8> <9> <10>
  <10.95> <12> <14.4> <17.28> <20.74> <24.88>
  mathx10
}{}
\DeclareSymbolFont{mathx}{U}{mathx}{m}{n}
\DeclareFontSubstitution{U}{mathx}{m}{n}
\DeclareMathAccent{\widecheck}{0}{mathx}{"71}


\newcounter{DHsupercounter}
\newtheorem{lemma}[DHsupercounter]{Lemma}
\newtheorem{theorem}[DHsupercounter]{Theorem}
\newtheorem{corollary}[DHsupercounter]{Corollary}
\newtheorem{prop}[DHsupercounter]{Proposition}

\theoremstyle{definition}

\newtheorem{definition}[DHsupercounter]{Definition}

\newcommand{\ra}{\longrightarrow}
\newcommand{\nra}{{\not\ra}}
\newcommand{\Ra}{\Longrightarrow}
\newcommand{\R}{\mathrel{\mathcal{R}}}
\newcommand{\SR}{\mathrel{\mathcal{S}}}
\newcommand{\Ld}{\Lambda^\bullet}
\newcommand{\clb}{\approx}
\newcommand{\clbv}{\clb^v}
\newcommand{\clbn}{\clb^n}
\newcommand{\wt}{\widetilde}

\newcommand{\EC}{\reflectbox{C}}
\newcommand{\conv}{{\Downarrow}}
\newcommand{\convr}{\Downarrow}
\newcommand{\diver}{{\Uparrow}}

\newcommand{\ece}{\approxeq}
\newcommand{\cen}{\simeq^n}
\newcommand{\ecen}{\approxeq^n}
\newcommand{\cev}{\simeq^v}
\newcommand{\ecev}{\approxeq^v}
\newcommand{\ABn}{\approx^n_A}
\newcommand{\ABv}{\approx^v_A}
\newcommand{\V}{\mathcal{V}}

\newcommand{\VC}{\mathfrak{V}}
\newcommand{\fl}[1]{\langle #1 \rangle}

\newcommand{\vr}[1]{\mathrel{{#1}|_{\V\times\V}}}
\newcommand{\prog}{\rightarrowtail}
\newcommand{\progl}{\rightharpoondown}

\newcommand{\F}{\mathcal{F}}
\newcommand{\utr}[1]{\Rightarrow#1\Leftarrow}
\newcommand{\eccnnp}[2]{\mathrel{#1\otimes_n#2}}
\newcommand{\eccn}[2]{\mathrel{({\eccnnp{#1}{#2}})}}
\newcommand{\eccvnp}[2]{\mathrel{#1\otimes_v#2}}
\newcommand{\eccv}[2]{\mathrel{({\eccvnp{#1}{#2}})}}
\newcommand{\lbn}{\approx^{l\!n}}
\newcommand{\lbv}{\approx^{l\!v}}
\newcommand{\Id}{\mathrm{Id}}

\newcommand{\op}{^{op}}

\newcommand{\oclos}{\circ}
\newcommand{\bull}{\spadesuit}
\renewcommand{\P}{\mathcal{P}}
\newcommand{\Pev}{\P_{ev}}
\newcommand{\Q}{\mathcal{Q}}
\newcommand{\U}{\mathcal{U}}
\newcommand{\N}{\mathbb{N}}

\newcommand{\pfun}{\rightharpoonup}

\DeclareRobustCommand\longtwoheadrightarrow
     {\relbar\joinrel\twoheadrightarrow}

\usepackage{todonotes}
\newcommand{\DHC}[1]{{\color{orange} #1}}
\newcommand{\RAK}[1]{{\color{red} #1}}
\renewcommand{\RAK}[1]{} 
\newcommand{\JMM}[1]{{\color[rgb]{0,.7,0} #1}}
\renewcommand{\JMM}[1]{\doesnotcompile}

\newcommand\restr[2]{{
    \left.\kern-\nulldelimiterspace 
      #1 
      \vphantom{\big|} 
    \right|_{#2} 
  }}

\DeclareMathOperator{\fv}{fv}
\DeclareMathOperator{\bv}{bv}

\title{On Coupled Logical Bisimulation for the $\lambda$-Calculus}
\author{Ryan Kavanagh$^{1,2}$ \& Jean-Marie Madiot$^2$
  \\ {\tt ryan@cs.queensu.ca}, {\tt jeanmarie.madiot@ens-lyon.fr}
}
\titlehead{On Coupled Logical Bisimulation for the $\lambda$-Calculus}
\authorhead{Kavanagh \& Madiot}
\affiliation{\begin{tabular}{rr}
\\ 1:  School of Computing
\\     Queen's University at Kingston
\\     Kingston, Ontario, Canada
\\ 2:  Laboratoire de l'Informatique du Parallélisme
\\     École normale supérieure de Lyon
\\     Lyon, France
\end{tabular}}

\begin{document}
\setcounter{page}{1}
\maketitle

\begin{abstract}
  We study coupled logical bisimulation (CLB) to reason
  about contextual equivalence in the $\lambda$-calculus. CLB
  originates in a work by Dal Lago, Sangiorgi and Alberti, as a tool
  to reason about a $\lambda$-calculus with probabilistic
  constructs. We adapt the original definition to the pure
  $\lambda$-calculus. 
  We develop the metatheory of CLB in call-by-name and in
  call-by-value, and draw comparisons with applicative bisimulation
  (due to Abramsky) and logical bisimulation (due to Sangiorgi,
  Kobayashi and Sumii).
  We also study enhancements of the bisimulation method for CLB by
  developing a theory of up-to techniques for cases where the
  functional corresponding to bisimulation is not necessarily
  monotone.
\end{abstract}

\section{Introduction}

Several coinductive methods to reason about equivalences between
higher-order programs or processes have been proposed. The starting
point in this direction is Abramsky's Applicative Bisimulation
(AB)~\cite{Abramsky1990}.
Several alternatives to AB have been proposed since, with two main
objectives. A first objective is to be able to develop the metatheory
of the bisimilarity in a simple way.
The main question related to this objective is to prove that
bisimilarity coincides with contextual equivalence.
A second objective is to be able to equip the coinductive method with
powerful proof techniques, that allow one to avoid including redundant
pairs in the relations being studied.
These so-called \emph{up-to techniques}~\cite{Sangiorgi2012a} can turn out
to be very useful in developing proofs of equivalence between
programs.

To address these objectives, Logical Bisimulation
(LB)~\cite{Sangiorgi2007} and, successively, Environmental
Bisimulation (EB)~\cite{Sangiorgi2011b} have been proposed.
LB and EB can both be seen as improvements of AB.

\medskip

Recently, in a study of a $\lambda$-calculus enriched with
probabilistic features~\cite{DalLago2014}, Dal Lago, Sangiorgi and Alberti have
introduced Coupled Logical Bisimulation (CLB). In that work, CLB is
motivated by technical considerations, related to the way
probabilistic $\lambda$-terms evolve and can be observed.
The main purpose of the present work is to understand how CLB compares
to existing notions of bisimulation for higher-order calculi. To
achieve this, we formulate CLB in the simpler setting of the pure
$\lambda$-calculus, and study its basic metatheory, as well as some
up-to techniques.


We consider CLB in both the call-by-name and the call-by-value
$\lambda$-calculus, relating it with AB and LB.
To define bisimulation enhancements for CLB, the existing theory of
bisimulation enhancements~\cite{Sangiorgi2012a} cannot be reused
directly. The reason is that, like in the case of LB, the functional
associated to the definition of bisimulation is not monotone.
We show how sound up-to techniques for bisimulation can be adapted in
this setting, and how to compose them.

Another contribution of the paper is a written proof of the
\emph{context lemma} for the call-by-value $\lambda$-calculus, which
says that contextual equivalence is preserved when restricting to
evaluation contexts only.
While this result seems to belong to folklore, we have not been able
to find it in the literature.
The proof is not a direct adaptation of the corresponding result in
call-by-name~\cite{Milner1977}.


\ifarxivversion
\else
Some proofs are only sketched. Full details are available in the long
version of this paper~\cite{ryanjm:clb:long}.
\fi

\paragraph{Outline of the paper.}
In Section~\ref{sec:genupto}, we present a theory of up-to
techniques in absence of monotonicity of the functional corresponding
to bisimulation. Section~\ref{s:prelim} recalls some general notions
about the $\lambda$-calculus. 
We study CLB for the call-by-name $\lambda$-calculus in
  Section~\ref{s:cbn}.
We then present the main properties of CLB for the
call-by-value $\lambda$-calculus in Section~\ref{s:cbv}.
In passing, we present in Section~\ref{s:context:lemma:cbv} a proof of
the ``context lemma'' for the call-by-value $\lambda$-calculus.

\section{
Up-to techniques in absence of monotonicity
}
\label{sec:genupto}

As we shall see in Section~\ref{s:cbn}, the functional defining CLB is
not monotone.
In this section, we present an axiomatic theory of up-to techniques
which does not rely on this property.

Although our theory is motivated by up-to techniques for bisimulation
relations, it applies equally well to any theory comprising a family
$\F$ of ``accepted relations'' contained in a universe $\U$ of all
possible ``relations''.
Although our theory is inspired from bisimulations, we deliberately
avoid specifying the nature of the ``relations'' in $\U$: these need
not be relations in the traditional sense of sets of pairs of terms.
For example, our theory applies equally well to coupled logical
bisimulations, presented below, which are pairs of relations, and it
could be used to reason about any desired subset $\F$ of some universe
$\U$.

\begin{definition}
  Given a family $\F \subseteq \U$ of accepted relations, a
  \textbf{progression} for $\F$ is a relation ${\prog} \subseteq \U
  \times \U$ such that ${\R} \prog {\R}$ only if there exists an
  ${\R'} \in \F$ such that ${\R} \subseteq {\R'}$. If ${\R} \prog
  {\SR}$, then we say that $\R$ \textbf{progresses to} $\SR$.
\end{definition}

Although up-to techniques for coinductively-defined families of
relations are often given as total functions over $\U$, we relax this
definition to the following:

\begin{definition}
  \label{def:fuptot}
  Given a family of relations $\F \subseteq \U$ with a progression, we
  call a partial function $\P : \U \pfun \U$ an \textbf{up-to
    technique} and say that it is \textbf{sound} if whenever
  $\P({\R})$ is defined and ${\R} \prog \P({\R})$, there exists an
  ${\R'} \in \F$ such that ${\R} \subseteq {\R'}$.
  We say that $\P$ is \textbf{monotone} if whenever ${\R} \subseteq
  {\SR}$ and both $\P({\R})$ and $\P({\SR})$ are defined, then
  $\P({\R}) \subseteq \P({\SR})$.
\end{definition}

\begin{definition}\label{def:finiteconverge}
  We say that an up-to technique $\P$ is \textbf{finitely convergent}
  if there exists an $N$ such that for all $n, m > N$, $\P^n = \P^m$;
  in such situation, we call $N$ the \textbf{finite convergence constant}.
  Finally, we say that two up-to techniques $\P$ and $\Q$
  \textbf{commute} if $\P\circ\Q = \Q\circ\P$.
\end{definition}

We also make use of the following definitions, which are based on those in~\textcite{Pous2012}:

\begin{definition}
  We say that an up-to technique $\P$ is \textbf{compatible} if for
  all $\R$ and $\SR$ such that ${\R} \prog {\SR}$, $\P({\R})$ and
  $\P({\SR})$ are defined and $\P({\R}) \prog \P({\SR})$.
  We say that it is \textbf{respectfully compatible} if for all $\R$
  and $\SR$ such that ${\R} \subseteq {\SR}$ and ${\R} \prog {\SR}$,
  $\P({\R})$ and $\P({\SR})$ are defined, and $\P({\R}) \subseteq
  \P({\SR})$ and $\P({\R}) \prog \P({\SR})$ hold.
  We say that an up-to technique $\P$ is \textbf{extensive} if for all
  $\R$, if $\P({\R})$ is defined, then ${\R} \subseteq \P({\R})$.
\end{definition}

We remark that, although the definitions of ``monotone, compatible
up-to technique'' and ``respectfully compatible up-to technique'' are
similar, monotony is a much stronger condition than respectfulness.
We also observe that respectfully compatible up-to techniques need not
be compatible and vice-versa.

From these straightforward definitions, we can deduce sufficient
conditions for the soundness of up-to techniques.
As we shall see below, imposing a condition known as ``continuity'' on
our progression relation provides significantly simpler sufficiency
conditions for soundness.

\begin{prop}
  \label{prop:fcecsound}
  A 
  finitely convergent, extensive, (respectfully) compatible
  up-to technique $\P$ is sound.
\end{prop}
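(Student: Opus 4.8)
The plan is to iterate $\P$ on ${\R}$: show that the iterates stay in the domain of $\P$, that they form an increasing chain, and that — because $\P$ is finitely convergent — some iterate progresses to itself, which by the very definition of progression places it inside some member of $\F$; extensiveness then drags ${\R}$ itself into that member.

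Concretely, suppose $\P({\R})$ is defined and ${\R} \prog \P({\R})$; the goal is to produce an ${\R'} \in \F$ with ${\R} \subseteq {\R'}$. First I would prove, by induction on $k \geq 0$ (with $\P^0({\R}) = {\R}$), that $\P^k({\R})$ and $\P^{k+1}({\R})$ are defined and $\P^k({\R}) \prog \P^{k+1}({\R})$ — and, when $\P$ is respectfully rather than plainly compatible, also $\P^k({\R}) \subseteq \P^{k+1}({\R})$. The base case $k = 0$ is just the standing hypotheses, together with extensiveness for the inclusion ${\R} \subseteq \P({\R})$ needed in the respectful variant. The inductive step applies compatibility (respectively respectful compatibility, whose extra inclusion hypothesis is supplied precisely by the inclusion carried along in the induction) to $\P^{k}({\R}) \prog \P^{k+1}({\R})$, yielding definedness of $\P^{k+2}({\R})$ and the progression $\P^{k+1}({\R}) \prog \P^{k+2}({\R})$ (and, in the respectful variant, $\P^{k+1}({\R}) \subseteq \P^{k+2}({\R})$). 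With all iterates in hand, extensiveness gives the increasing chain ${\R} \subseteq \P({\R}) \subseteq \P^2({\R}) \subseteq \cdots$ — in the respectful variant this chain is already part of the induction.

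Next I would invoke finite convergence. Let $N$ be the finite convergence constant and set ${\SR} = \P^{N+1}({\R})$, which is defined by the previous step. Since $N+1$ and $N+2$ both exceed $N$, we have $\P^{N+1} = \P^{N+2}$, so ${\SR} = \P^{N+2}({\R})$; combined with $\P^{N+1}({\R}) \prog \P^{N+2}({\R})$ from the induction, this gives ${\SR} \prog {\SR}$. By the definition of a progression there is an ${\R'} \in \F$ with ${\SR} \subseteq {\R'}$. Finally, the chain gives ${\R} \subseteq \P^{N+1}({\R}) = {\SR} \subseteq {\R'}$, which is exactly the conclusion required for soundness.

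I do not expect a genuine obstacle: the argument is essentially bookkeeping. The two points that demand a little care are (i) checking that every iterate $\P^k({\R})$ lies in the domain of $\P$, which is delivered by (respectful) compatibility, and (ii) in the respectful variant, keeping the inclusion hypothesis of respectful compatibility alive through the induction, which is delivered by extensiveness. It is worth emphasising that no use whatsoever is made of monotonicity of $\P$, nor of any functional underlying $\F$.
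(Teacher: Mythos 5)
Your proof is correct and follows essentially the same route as the paper's: an induction showing each iterate $\P^{k}({\R})$ is defined, progresses to the next, and contains ${\R}$, followed by finite convergence to obtain $\P^{N+1}({\R}) \prog \P^{N+1}({\R})$ and hence containment in some ${\R'} \in \F$. The extra care you take with definedness of the iterates and with threading the inclusion hypothesis through the respectful variant is exactly the bookkeeping the paper's one-line induction leaves implicit.
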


\begin{proof}
  If $\P({\R})$ is defined and ${\R} \prog \P({\R})$, we prove by
  induction on $n$ that for all $n$, $\P^{n}({\R})$ is defined,
  $\P^{n}({\R}) \prog \P^{n+1}({\R})$ and ${\R} \subseteq
  \P^{n}({\R})$.  Hence, $\P^{N+1}({\R}) \prog \P^{N+2}({\R}) =
  \P^{N+1}({\R})$ by finite convergence. Thus for some ${\R'}$,
  $\P^{N+1}({\R}) \subseteq {\R'} \in \F$, which implies ${\R}
  \subseteq {\R'} \in \F$.
\end{proof}


\arxiv{As a corollary of the proof, we get that:}{}

\begin{corollary}
  \label{cor:fcecsound}
  If $\P$ is finitely convergent, extensive, and (respectfully)
  compatible and ${\R} \prog \P({\R})$, then, where $N$ is the finite
  convergence constant, for all $n > N$ we have $\P^{n}({\R}) \prog
  \P^{n}({\R})$ and $\P^n({\R}) \subseteq {\R'}$ for some ${\R'} \in
  \F$.
\end{corollary}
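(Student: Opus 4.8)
The plan is to extract the corollary directly from the induction already carried out in the proof of Proposition~\ref{prop:fcecsound}, without redoing any work. That induction establishes, for every $n$, the three facts that $\P^{n}({\R})$ is defined, that $\P^{n}({\R}) \prog \P^{n+1}({\R})$, and that ${\R} \subseteq \P^{n}({\R})$. I would record these as the starting point, observing that they are proved using only that $\P$ is extensive and (respectfully) compatible, that $\P({\R})$ is defined, and that ${\R} \prog \P({\R})$; finite convergence is not yet needed.

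Next I would fix $n > N$ and bring in finite convergence. Since $n > N$ we also have $n+1 > N$, so Definition~\ref{def:finiteconverge} gives $\P^{n} = \P^{n+1}$ as partial functions on $\U$; in particular $\P^{n+1}({\R})$ is defined and equal to $\P^{n}({\R})$. Feeding this equality into the progression $\P^{n}({\R}) \prog \P^{n+1}({\R})$ supplied by the induction yields $\P^{n}({\R}) \prog \P^{n}({\R})$, which is the first conclusion.

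For the second conclusion I would invoke the defining property of a progression: for any $\SR$, one has $\SR \prog \SR$ only if there is an ${\R'} \in \F$ with $\SR \subseteq {\R'}$. Taking $\SR$ to be $\P^{n}({\R})$ and using the reflexive progression just obtained produces the required ${\R'} \in \F$ with $\P^{n}({\R}) \subseteq {\R'}$.

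I do not expect a genuine obstacle here: the whole argument is a short specialization of an already-completed induction. The only points that deserve a moment's care are the index bookkeeping for finite convergence---one needs both $n$ and $n+1$ to exceed $N$, which is automatic once $n > N$---and noticing that the reflexive instance $\P^{n}({\R}) \prog \P^{n}({\R})$ is precisely the situation in which the definition of a progression returns a containing member of $\F$.
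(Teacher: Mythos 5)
Your proof is correct, and like the paper's it is a direct extraction from the induction in the proof of Proposition~\ref{prop:fcecsound}. The only difference is in how the reflexive progression is obtained for every $n > N$: the paper establishes $\P^{N+1}({\R}) \prog \P^{N+2}({\R}) = \P^{N+1}({\R})$ once and then propagates upward by repeatedly applying compatibility, whereas you invoke finite convergence at each index, using that $n > N$ forces $\P^{n} = \P^{n+1}$ and hence turns the induction's $\P^{n}({\R}) \prog \P^{n+1}({\R})$ directly into $\P^{n}({\R}) \prog \P^{n}({\R})$. Your route is marginally more economical (no further applications of compatibility beyond those already inside the induction), while the paper's makes the corollary robust to a weaker reading of finite convergence in which the iterates stabilise only from some point on; the two arguments are otherwise interchangeable, and your final appeal to the defining property of a progression to produce ${\R'} \in \F$ matches the paper exactly.
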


\arxiv{
  \begin{proof}
    Repeatedly applying compatibility to $\P^{N+1}({\R}) \prog
    \P^{N+1}({\R})$, we get that $\P^{n}({\R}) \prog \P^{n}({\R})$ for
    all $n > N$, and so $\P^{n}({\R}) \subseteq {\R'} \in \F$ for all
    $n > N$.
  \end{proof}
}{}

\arxiv{
The following propositions tell us that composition of up-to
techniques is well-behaved. They follow straightforwardly from the
corresponding definitions.

\begin{prop}
  \label{prop:extensivecomp}
  If $\P$ and $\Q$ are two extensive up-to techniques, then so is
  $\Q\circ\P$.
\end{prop}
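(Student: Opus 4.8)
The plan is to unfold the definition of the composite up-to technique and chain the two extensivity inclusions by transitivity of $\subseteq$. First I would recall that $\Q\circ\P$ is the composition of partial functions, so $(\Q\circ\P)({\R})$ is defined precisely when $\P({\R})$ is defined and $\Q(\P({\R}))$ is defined, in which case $(\Q\circ\P)({\R}) = \Q(\P({\R}))$. Fixing an arbitrary ${\R}$ for which $(\Q\circ\P)({\R})$ is defined, I would then invoke extensivity of $\P$ to get ${\R} \subseteq \P({\R})$, and extensivity of $\Q$ applied to the relation $\P({\R})$ (which is legitimate since $\Q(\P({\R}))$ is defined) to get $\P({\R}) \subseteq \Q(\P({\R}))$. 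Transitivity of $\subseteq$ then yields ${\R} \subseteq \Q(\P({\R})) = (\Q\circ\P)({\R})$, which is exactly extensivity of $\Q\circ\P$.

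There is essentially no obstacle here; the only point requiring a moment's care is the bookkeeping around partiality, namely that ``$(\Q\circ\P)({\R})$ is defined'' entails that $\P({\R})$ is defined, so that extensivity of $\P$ may be applied, and that $\Q$'s extensivity hypothesis is being instantiated at $\P({\R})$ rather than at ${\R}$. Everything else is a one-line transitivity argument, so I would keep the write-up to two or three sentences.
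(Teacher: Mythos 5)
Your proof is correct and is essentially identical to the paper's own argument: assume $(\Q\circ\P)({\R})$ is defined and chain ${\R} \subseteq \P({\R}) \subseteq \Q(\P({\R})) = (\Q\circ\P)({\R})$ by transitivity. The only difference is that you spell out the partiality bookkeeping explicitly, which the paper leaves implicit.
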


\arxiv{
  \begin{proof}
    Assume $(\Q\circ\P)({\R})$ is defined, then the proposition is
    immediate by transitivity: ${\R} \subseteq \P({\R}) \subseteq
    \Q(\P({\R})) = (\Q\circ\P)({\R})$.
  \end{proof}
}{}

\begin{prop}
  \label{prop:compatiblecomp}
  If $\P$ and $\Q$ are two (respectfully) compatible up-to techniques,
  then so is $\Q\circ\P$.
\end{prop}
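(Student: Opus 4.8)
The plan is simply to chain the two compatibility hypotheses, keeping careful track of the definedness of the partial functions involved. Since there are two notions at stake — plain compatibility and respectful compatibility — I would treat them in parallel, as the two arguments have the same shape.

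For the compatible case, I would start from an arbitrary pair $\R, \SR$ with ${\R} \prog {\SR}$. Compatibility of $\P$ immediately yields that $\P({\R})$ and $\P({\SR})$ are defined and that $\P({\R}) \prog \P({\SR})$. Feeding this new progression into the compatibility of $\Q$ gives that $\Q(\P({\R}))$ and $\Q(\P({\SR}))$ are defined and that $\Q(\P({\R})) \prog \Q(\P({\SR}))$, i.e.\ $(\Q\circ\P)({\R}) \prog (\Q\circ\P)({\SR})$ with both sides defined. This is exactly compatibility of $\Q\circ\P$.

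For the respectfully compatible case, I would start from $\R, \SR$ with ${\R} \subseteq {\SR}$ and ${\R} \prog {\SR}$. Respectful compatibility of $\P$ gives that $\P({\R})$ and $\P({\SR})$ are defined, that $\P({\R}) \subseteq \P({\SR})$, and that $\P({\R}) \prog \P({\SR})$; in other words, the pair $\P({\R}), \P({\SR})$ again satisfies the conjunction ``${\subseteq}$ and ${\prog}$''. Applying respectful compatibility of $\Q$ to this pair yields definedness of $\Q(\P({\R}))$ and $\Q(\P({\SR}))$ together with $\Q(\P({\R})) \subseteq \Q(\P({\SR}))$ and $\Q(\P({\R})) \prog \Q(\P({\SR}))$, which is respectful compatibility of $\Q\circ\P$.

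There is no genuine obstacle here: the statement is a direct two-step chaining, and it ``follows straightforwardly from the corresponding definitions''. The only point requiring a word of care is that $\P$ and $\Q$ are partial functions, so at each step one must invoke definedness of the composite before asserting the progression or the inclusion — but this is precisely what the (respectful) compatibility hypotheses provide, in the right order, so the argument goes through unobstructed.
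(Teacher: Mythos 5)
Your proof is correct and takes essentially the same approach as the paper's: chaining the compatibility of $\P$ and then $\Q$ while tracking definedness at each step. Your explicit treatment of the respectful case is slightly more detailed than the paper's, which simply notes that it follows in an identical manner.
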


\arxiv{
  \begin{proof}
    We show that compatibility is composed by composition; respectful
    compatibility follows in an identical manner.
    By compatibility of $\P$, for all ${\R} \prog {\SR}$, $\P({\R})$
    and $\P({\SR})$ are defined and $\P({\R}) \prog \P({\SR})$.
    By compatibility of $\Q$, for all ${\R} \prog {\SR}$, $\Q({\R})$
    and $\Q({\SR})$ are defined and $\Q({\R}) \prog \Q({\SR})$.
    Combining these two facts, we get that for all ${\R} \prog {\SR}$,
    $(\Q\circ\P)({\R})$ and $(\Q\circ\P)({\SR})$ are defined and
    $(\Q\circ\P)({\R}) \prog (\Q\circ\P)({\SR})$, i.e., $\Q\circ\P$ is
    compatible.
\end{proof}
}{}

\begin{prop}
  \label{prop:convergentcomp}
  If $\P$ and $\Q$ are two finitely convergent up-to techniques that
  commute, then $\Q\circ\P$ is finitely convergent.
\end{prop}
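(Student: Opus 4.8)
The plan is to reduce the finite convergence of $\Q\circ\P$ to that of $\P$ and $\Q$ separately, by way of the identity $(\Q\circ\P)^k = \Q^k\circ\P^k$, which holds for all $k \geq 1$ whenever $\P$ and $\Q$ commute. First I would record the auxiliary observation that if $\P\circ\Q = \Q\circ\P$, then $\P\circ\Q^k = \Q^k\circ\P$ for every $k \geq 1$: this is an easy induction on $k$ using only associativity of composition of partial functions together with the commutation hypothesis. Symmetrically, $\P^j\circ\Q^k = \Q^k\circ\P^j$ for all $j, k \geq 1$.

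Next I would prove $(\Q\circ\P)^k = \Q^k\circ\P^k$ by induction on $k \geq 1$. The base case $k = 1$ is immediate. For the inductive step, $(\Q\circ\P)^{k+1} = (\Q\circ\P)\circ(\Q^k\circ\P^k) = \Q\circ(\P\circ\Q^k)\circ\P^k = \Q\circ(\Q^k\circ\P)\circ\P^k = \Q^{k+1}\circ\P^{k+1}$, where the middle equality is the auxiliary observation. All of these manipulations are legitimate for partial functions, since they invoke only associativity of composition and the commutation equality $\P\circ\Q = \Q\circ\P$, both understood as equalities of partial functions (equal domains and pointwise agreement on them).

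Finally, let $N_\P$ and $N_\Q$ be the finite convergence constants of $\P$ and $\Q$ from Definition~\ref{def:finiteconverge}, and set $N = \max(N_\P, N_\Q)$. For any $n, m > N$ we have $n, m > N_\P$, hence $\P^n = \P^m$, and $n, m > N_\Q$, hence $\Q^n = \Q^m$; therefore $(\Q\circ\P)^n = \Q^n\circ\P^n = \Q^m\circ\P^m = (\Q\circ\P)^m$. Thus $\Q\circ\P$ is finitely convergent, with $N$ as a finite convergence constant. I do not expect a genuine obstacle here; the only thing that needs a little care is keeping the bookkeeping of the commutation identity straight and noting that partiality causes no difficulty, since no step uses anything beyond associativity of composition and the commutation hypothesis.
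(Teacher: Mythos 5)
Your proposal is correct and follows essentially the same route as the paper's proof: both rest on the identity $(\Q\circ\P)^k = \Q^k\circ\P^k$ obtained from commutation, and then take $\max$ of the two convergence constants. The only difference is that you spell out the inductive justification of that identity, which the paper simply asserts.
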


\arxiv{
  \begin{proof}
    Let $M$ and $N$ be the convergence constants of $\P$ and $\Q$
    respectively, and let $L = \max(M, N)$.
    By commutativity, $(\Q\circ\P)^k = \Q^k\circ\P^k$ for all $k$, and
    so for all $m, n > L$, $(\Q\circ\P)^m = \Q^m\circ\P^m =
    \Q^n\circ\P^n = (\Q\circ\P)^n$.
    Thus, $\Q\circ\P$ is finitely convergent.
  \end{proof}
}{}

\begin{corollary}
  If $\P$ and $\Q$ are two extensive, (respectfully) compatible, and
  finitely convergent up-to techniques that commute, then $\Q\circ\P$
  is sound.
\end{corollary}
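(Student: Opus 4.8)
The plan is to chain together the three composition results already established, namely Propositions~\ref{prop:extensivecomp}, \ref{prop:compatiblecomp}, and \ref{prop:convergentcomp}, and then invoke Proposition~\ref{prop:fcecsound}. Concretely, suppose $\P$ and $\Q$ are extensive, (respectfully) compatible, and finitely convergent, and that they commute. First I would observe that $\Q\circ\P$ is extensive by Proposition~\ref{prop:extensivecomp}, (respectfully) compatible by Proposition~\ref{prop:compatiblecomp}, and finitely convergent by Proposition~\ref{prop:convergentcomp} (this last one is where the commutation hypothesis is used). Having assembled these three properties for the single up-to technique $\Q\circ\P$, soundness is then immediate from Proposition~\ref{prop:fcecsound}, which states precisely that a finitely convergent, extensive, (respectfully) compatible up-to technique is sound.

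The only subtlety I anticipate is a bookkeeping one about \emph{definedness}, since up-to techniques are merely partial functions on $\U$. For Proposition~\ref{prop:fcecsound} to apply to $\Q\circ\P$, we need that whenever $(\Q\circ\P)(\R)$ is defined and $\R \prog (\Q\circ\P)(\R)$, the iterates $(\Q\circ\P)^n(\R)$ are all defined; but this is exactly what compatibility of $\Q\circ\P$ delivers along the chain $\R \prog (\Q\circ\P)(\R) \prog (\Q\circ\P)^2(\R) \prog \cdots$, so no extra argument is required beyond noting that the hypotheses of the three composition propositions are met. One should also keep the parenthetical ``(respectfully)'' consistent: if both $\P$ and $\Q$ are compatible then so is $\Q\circ\P$ and we use the compatible branch of Proposition~\ref{prop:fcecsound}; if both are respectfully compatible the respectful branch applies identically.

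Here is the proof I would write.

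\begin{proof}
  Since $\P$ and $\Q$ are extensive, $\Q\circ\P$ is extensive by
  Proposition~\ref{prop:extensivecomp}. Since $\P$ and $\Q$ are
  (respectfully) compatible, $\Q\circ\P$ is (respectfully) compatible
  by Proposition~\ref{prop:compatiblecomp}. Since $\P$ and $\Q$ are
  finitely convergent and commute, $\Q\circ\P$ is finitely convergent
  by Proposition~\ref{prop:convergentcomp}. Thus $\Q\circ\P$ is a
  finitely convergent, extensive, (respectfully) compatible up-to
  technique, and so it is sound by Proposition~\ref{prop:fcecsound}.
\end{proof}

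I do not expect any genuine obstacle here: the statement is a corollary in the literal sense, and all the work has been front-loaded into the preceding propositions. If anything, the ``hard part'' is purely presentational — making sure the reader sees that commutation is needed only for the finite-convergence component, and that the definedness conditions required by Proposition~\ref{prop:fcecsound} are subsumed by the compatibility of the composite.
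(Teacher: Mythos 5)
Your proof is correct and follows exactly the paper's own argument: the paper likewise combines Propositions~\ref{prop:extensivecomp}, \ref{prop:compatiblecomp}, and \ref{prop:convergentcomp} to show that $\Q\circ\P$ satisfies the hypotheses of Proposition~\ref{prop:fcecsound}. Your additional remark about definedness being handled by compatibility is a sensible clarification but not a departure from the paper's route.
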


\arxiv{
  \begin{proof}
    The composition $\Q\circ\P$ satisfies the hypotheses of
    Proposition \ref{prop:fcecsound} by Propositions
    \ref{prop:extensivecomp}, \ref{prop:compatiblecomp}, and
    \ref{prop:convergentcomp}.
  \end{proof}
}{} } { The following proposition establishes that the composition of
up-to techniques is well-behaved. The results follow straightforwardly
from the corresponding definitions.
\begin{prop}\label{props:composition} Suppose $\P$ and $\Q$ are up-to techniques.
  Then we have:
  \begin{itemize}
  \item If $\P$ and $\Q$ are extensive, then so is $\Q\circ\P$.
  \item   If $\P$ and $\Q$ are (respectfully) compatible,
  then so is $\Q\circ\P$.
\item   If $\P$ and $\Q$ are finitely convergent and
  commute, then $\Q\circ\P$ is finitely convergent.
\item If $\P$ and $\Q$ are extensive, (respectfully) compatible,
   finitely convergent, and commute, then
   $\Q\circ\P$ is sound.
  \end{itemize}
\end{prop}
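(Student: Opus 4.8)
The plan is to prove the four items of Proposition~\ref{props:composition} in order, each time unfolding the relevant definition and using only the elementary behaviour of function composition; the last item is then obtained for free by feeding the first three into Proposition~\ref{prop:fcecsound}.

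First I would treat extensivity. Assuming $(\Q\circ\P)(\R)$ is defined forces both $\P(\R)$ and $\Q(\P(\R))$ to be defined, so extensivity of $\P$ gives ${\R} \subseteq \P({\R})$ and extensivity of $\Q$ gives $\P({\R}) \subseteq \Q(\P({\R}))$; transitivity of inclusion then yields ${\R} \subseteq (\Q\circ\P)({\R})$. Next, (respectful) compatibility: given ${\R} \prog {\SR}$ — and, in the respectful case, also ${\R} \subseteq {\SR}$ — compatibility of $\P$ makes $\P({\R})$ and $\P({\SR})$ defined with $\P({\R}) \prog \P({\SR})$ (and $\P({\R}) \subseteq \P({\SR})$ in the respectful case); applying compatibility of $\Q$ to $\P({\R}) \prog \P({\SR})$ makes $\Q(\P({\R}))$ and $\Q(\P({\SR}))$ defined with $(\Q\circ\P)({\R}) \prog (\Q\circ\P)({\SR})$, which is exactly the statement for $\Q\circ\P$. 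For finite convergence, let $M$ and $N$ be the finite convergence constants of $\P$ and $\Q$, set $L = \max(M,N)$, and note that commutativity gives $(\Q\circ\P)^k = \Q^k\circ\P^k$ for every $k$ by a one-line induction; hence for all $m, n > L$ we have $(\Q\circ\P)^m = \Q^m\circ\P^m = \Q^n\circ\P^n = (\Q\circ\P)^n$, so $L$ is a finite convergence constant for $\Q\circ\P$.

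Finally, under the hypotheses of the fourth item, $\Q\circ\P$ is extensive, (respectfully) compatible, and finitely convergent by the three items just proved, so Proposition~\ref{prop:fcecsound} applies directly and gives that $\Q\circ\P$ is sound; concretely, if $(\Q\circ\P)({\R})$ is defined and ${\R} \prog (\Q\circ\P)({\R})$, that proposition produces an ${\R'} \in \F$ with ${\R} \subseteq {\R'}$.

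The hard part here is not conceptual but a matter of careful bookkeeping of partiality: one must check that the definedness clauses in the definitions of extensivity and compatibility genuinely propagate through composition in the order stated, and that the identity $(\Q\circ\P)^k = \Q^k\circ\P^k$ truly needs the commutation hypothesis — without it the iterates of $\P$ and $\Q$ interleave and no common bound on stabilisation need exist, which is precisely why commutativity is assumed in the finite-convergence and soundness items.
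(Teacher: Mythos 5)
Your proof is correct and follows essentially the same route as the paper's: transitivity of inclusion for extensivity, chaining the (respectful) compatibility of $\P$ then $\Q$, the identity $(\Q\circ\P)^k = \Q^k\circ\P^k$ with $L = \max(M,N)$ for finite convergence, and a final appeal to Proposition~\ref{prop:fcecsound} for soundness. Nothing to add.
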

}

When dealing with monotone up-to techniques, we can relax the
commutativity requirement at the expense of additional hypotheses.

\begin{lemma}
  \label{lemma:monotuptocommute}
  If $\P$ and $\Q$ are two monotone functions such that
  $(\Q\circ\P)({\R}) \subseteq (\P\circ\Q)({\R})$ for all $\R$, then
  for all $\R$ and all $k$, $(\Q\circ\P)^k({\R}) \subseteq
  (\P^k\circ\Q^k)({\R})$.
\end{lemma}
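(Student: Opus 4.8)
The plan is to prove the inclusion by induction on $k$, but for the induction to go through I would first establish a more general \emph{exchange law}: for every ${\R}$ and every $j$,
$(\Q\circ\P^j)({\R}) \subseteq (\P^j\circ\Q)({\R})$.
This I would prove by induction on $j$. The base case $j = 0$ is trivial since $\P^0$ is the identity. For the step, starting from $(\Q\circ\P^{j+1})({\R}) = \Q(\P(\P^{j}({\R})))$, the hypothesis (instantiated at $\P^j({\R})$) gives $\Q(\P(\P^j({\R}))) \subseteq \P(\Q(\P^j({\R})))$, and then monotonicity of $\P$ applied to the induction hypothesis $\Q(\P^j({\R})) \subseteq \P^j(\Q({\R}))$ yields $\P(\Q(\P^j({\R}))) \subseteq \P(\P^j(\Q({\R}))) = \P^{j+1}(\Q({\R}))$. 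Chaining the two inclusions closes this auxiliary induction.

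With the exchange law in hand, I would prove $(\Q\circ\P)^k({\R}) \subseteq (\P^k\circ\Q^k)({\R})$ by induction on $k$, using that $\Q\circ\P$ is monotone (being a composition of monotone functions). The cases $k = 0$ and $k = 1$ are immediate, the latter being exactly the hypothesis. For $k+1$, I would write $(\Q\circ\P)^{k+1}({\R}) = (\Q\circ\P)\bigl((\Q\circ\P)^k({\R})\bigr)$ and apply the monotone map $\Q\circ\P$ to the induction hypothesis $(\Q\circ\P)^k({\R}) \subseteq \P^k(\Q^k({\R}))$, obtaining $(\Q\circ\P)^{k+1}({\R}) \subseteq \Q\bigl(\P\bigl(\P^k(\Q^k({\R}))\bigr)\bigr) = \Q\bigl(\P^{k+1}(\Q^k({\R}))\bigr)$. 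Finally I would instantiate the exchange law at $j = k+1$ and the relation $\Q^k({\R})$ to get $\Q\bigl(\P^{k+1}(\Q^k({\R}))\bigr) \subseteq \P^{k+1}\bigl(\Q(\Q^k({\R}))\bigr) = \P^{k+1}(\Q^{k+1}({\R}))$, which chains with the previous inclusion to finish the induction.

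The argument is essentially bookkeeping, so I do not expect a genuine obstacle; the one point that needs care is that the hypothesis $(\Q\circ\P)({\R}) \subseteq (\P\circ\Q)({\R})$ holds only pointwise, so inclusions can be transported across $\P$ and across $\Q\circ\P$ only by invoking monotonicity, never by a bare substitution. Relatedly, the exchange law must be generalised from a single $\P$ to $\P^j$ before the main induction is even feasible, since the step for $k+1$ leaves a $\Q$ trapped to the left of $\P^{k+1}$ rather than of one $\P$, and it is precisely the generalised form that lets one commute it all the way across.
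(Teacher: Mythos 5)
Your proof is correct and follows essentially the same route as the paper's: both arguments are an induction on $k$ whose step reduces to commuting a single $\Q$ leftwards past a power of $\P$ using the pointwise hypothesis together with monotonicity of $\P$. The only difference is one of packaging: your auxiliary exchange law $(\Q\circ\P^j)({\R}) \subseteq (\P^j\circ\Q)({\R})$, proved by its own induction on $j$, is precisely the step the paper carries out inline with the phrase ``repeating in this manner,'' so your version is the same argument, just factored more cleanly.
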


\arxiv{
\begin{proof}
  We proceed by induction on $k$.
  The case $k = 1$ is by hypothesis, so assume $(\Q\circ\P)^k({\R})
  \subseteq (\P^k\circ\Q^k)({\R})$ for some $k$.
  Clearly
  \begin{equation}
    \label{eq:monotuptocommute1}
    (\Q\circ\P)^{k+1}({\R}) = \left(\left(\Q\circ\P\right)
      \circ \left(\Q\circ\P\right)^k \right)({\R}),
  \end{equation}
  and by hypothesis,
  \begin{equation}
    \label{eq:monotuptocommute2}
  \left(\left(\Q\circ\P\right) \circ \left(\Q\circ\P\right)^k
  \right)({\R}) \subseteq \left(\left(\P\circ\Q\right) \circ
    \left(\Q\circ\P\right)^k \right)({\R}).
   \end{equation}
  Since the composition of
  monotone functions is monotone, we get by the induction hypothesis
  that
  \begin{align*}
    \left(\left(\P\circ\Q\right) \circ \left(\Q\circ\P\right)^k
    \right)({\R}) &\subseteq \left(\left(\P\circ\Q\right) \circ
      \left(\P^k\circ\Q^k\right) \right)({\R})\\
    &= \left(\P\circ\left(\Q\circ \P \right) \circ
      \left(\P^{k-1}\circ\Q^k\right) \right)({\R})
  \end{align*}
  However, using once again the hypothesis that $(\Q\circ\P)({\R})
  \subseteq (\P\circ\Q)({\R})$ for all $\R$, we get
  that \[\left(\left(\Q\circ P \right) \circ
    \left(\P^{k-1}\circ\Q^k\right) \right)({\R}) \subseteq
  \left(\left(\P\circ Q \right) \circ \left(\P^{k-1}\circ\Q^k\right)
  \right)({\R}),\] and so by the monotony of $\P$, we get \[
  \left(\P\circ\left(\Q\circ \P \right) \circ
    \left(\P^{k-1}\circ\Q^k\right) \right)({\R}) \subseteq
  \left(\P\circ\left(\P\circ \Q \right) \circ
    \left(\P^{k-1}\circ\Q^k\right) \right)({\R}).\] Repeating in this
  manner, we get that
  \begin{equation}
    \label{eq:monotuptocommute3}
    \left(\P^i\circ\Q\circ\P^{k+1-i} Q^k \right) \subseteq
    \left(\P^{i+1}\circ\Q\circ\P^{k-i} Q^k \right)
  \end{equation}
  for all $0 \leq i \leq k$.
  Thus, by transitivity using the inclusions
  \eqref{eq:monotuptocommute1}, \eqref{eq:monotuptocommute2}, and
  \eqref{eq:monotuptocommute3} all the way up to $i = k$, we get that
  $(\Q\circ\P)^{k+1}({\R}) \subseteq (\P^{k+1}\circ\Q^{k+1})({\R})$ as
  desired.
  We thus conclude the lemma by induction.
\end{proof}
}{}

\begin{prop}
  If $\P$ and $\Q$ are two monotone, finitely converging,
  (respectfully) compatible, and extensive up-to techniques such that
  $(\Q\circ\P)({\R}) \subseteq (\P\circ\Q)({\R})$ for all $\R$, and
  ${\R} \prog (\Q\circ\P)({\R})$ implies ${\R} \prog \Q({\R})$, then
  $\P\circ\Q$ is sound.
\end{prop}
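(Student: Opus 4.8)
The plan is to reduce soundness of $\P\circ\Q$ to the now-familiar iteration argument for a single up-to technique, but run on $\Q$ alone. Assume $(\P\circ\Q)(\R)$ is defined and $\R \prog (\P\circ\Q)(\R)$; the goal is to find $\R' \in \F$ with $\R \subseteq \R'$.

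The opening move is to push this progression through $\Q$. Since $\P$ and $\Q$ are extensive we have $\R \subseteq (\P\circ\Q)(\R)$, so (respectful) compatibility of $\Q$ applies and gives $\Q(\R) \prog \Q\bigl((\P\circ\Q)(\R)\bigr)$. Now $\Q\bigl((\P\circ\Q)(\R)\bigr) = (\Q\circ\P)\bigl(\Q(\R)\bigr)$, simply by re-associating the composition, so the target of the progression has been rewritten to expose a $\Q\circ\P$. Instantiating the hypothesis ``$\SR \prog (\Q\circ\P)(\SR)$ implies $\SR \prog \Q(\SR)$'' at $\SR = \Q(\R)$ then collapses this to $\Q(\R) \prog \Q^2(\R)$.

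From here I would imitate the proof of Proposition~\ref{prop:fcecsound}: by induction on $n \ge 1$, using compatibility of $\Q$ to take the inductive step and extensivity of $\Q$ to maintain $\R \subseteq \Q^n(\R)$, I obtain that $\Q^n(\R)$ is defined, that $\Q^n(\R) \prog \Q^{n+1}(\R)$, and that $\R \subseteq \Q^n(\R)$, for every $n \ge 1$. Finite convergence of $\Q$ then yields an $N$ with $\Q^{N+1} = \Q^{N+2}$, whence $\Q^{N+1}(\R) \prog \Q^{N+2}(\R) = \Q^{N+1}(\R)$; by the defining property of progressions there is $\R' \in \F$ with $\Q^{N+1}(\R) \subseteq \R'$, and therefore $\R \subseteq \Q^{N+1}(\R) \subseteq \R'$, as required.

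The crux is the first step: recognising that applying $\Q$ to $\R \prog (\P\circ\Q)(\R)$ and then re-associating turns the target into $(\Q\circ\P)(\Q(\R))$, at which point the hypothesis cancels the stray $\P$ and produces a genuine self-iteration of $\Q$. I expect no serious difficulty beyond getting this bookkeeping right. I would, however, be mildly surprised if the proof of the proposition itself really needed monotonicity of $\P$ and $\Q$, the inclusion $(\Q\circ\P)(\R)\subseteq(\P\circ\Q)(\R)$, or finite convergence of $\P$; these ingredients (via Lemma~\ref{lemma:monotuptocommute}) are presumably what one uses to discharge the hypothesis ``$\R\prog(\Q\circ\P)(\R)$ implies $\R\prog\Q(\R)$'' in concrete situations, rather than being needed here. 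Should the direct route stall, an alternative is to work instead towards $\R\prog(\Q\circ\P)(\R)$ and then invoke the hypothesis directly.
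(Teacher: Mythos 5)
Your proof is correct, and it takes a genuinely different route from the paper's. Note first that there is a mismatch inside the paper itself: the proposition asserts soundness of $\P\circ\Q$, but the paper's proof opens with ``Assume ${\R}\prog(\Q\circ\P)({\R})$'' and therefore really establishes soundness of $\Q\circ\P$, for which the hypothesis ``${\R}\prog(\Q\circ\P)({\R})$ implies ${\R}\prog\Q({\R})$'' applies directly at the given $\R$; it then passes through $\Q^L({\R})\prog\Q^L({\R})$ (Corollary~\ref{cor:fcecsound}), $L$-fold compatibility of $\P$ to obtain $(\P^L\circ\Q^L)({\R})\prog(\P^L\circ\Q^L)({\R})$, and Lemma~\ref{lemma:monotuptocommute} --- hence monotonicity, the inclusion $(\Q\circ\P)({\R})\subseteq(\P\circ\Q)({\R})$, and finite convergence of $\P$ --- to conclude ${\R}\subseteq(\Q\circ\P)^L({\R})\subseteq(\P^L\circ\Q^L)({\R})\subseteq{\R'}$. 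You instead prove the statement as literally written, starting from ${\R}\prog(\P\circ\Q)({\R})$, and your re-association step $\Q({\R})\prog\Q\bigl((\P\circ\Q)({\R})\bigr)=(\Q\circ\P)\bigl(\Q({\R})\bigr)$ followed by instantiating the hypothesis at $\Q({\R})$ is exactly the bridge needed; the one thing it relies on is reading that hypothesis as universally quantified over $\R$ (the natural reading), whereas the paper only invokes it at the original $\R$. Your suspicion about redundant hypotheses is borne out: your argument consumes only extensiveness, compatibility and finite convergence of $\Q$ (plus extensiveness of $\P$ to enable respectful compatibility), and dispenses entirely with monotonicity, the commutation inclusion, and the compatibility and finite convergence of $\P$, all of which the paper's proof uses through its $\P^L\circ\Q^L$ detour. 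The only residual caveat is the usual looseness about where the partial functions are defined, which your proof shares with the paper's.
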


\begin{proof}
  Assume ${\R} \prog (\Q\circ\P)({\R})$.
  Let $N$ and $M$ be the convergence constants of $\P$ and $\Q$
  respectively (cf.\ Definition~\ref{def:finiteconverge}), 
  and let $L = \max(N, M) + 1$.
  Then by Corollary \ref{cor:fcecsound}, we have that $\Q^L({\R})
  \prog \Q^L({\R})$, and so by compatibility of $\P$, we get that
  $(\P^L\circ\Q^L)({\R}) \prog (\P^L\circ\Q^L)({\R})$.
  Thus, we have that $(\P^L\circ\Q^L)({\R}) \subseteq {\R'}$ for some
  ${\R'} \in \F$.
  By
  \arxiv{Proposition~\ref{prop:extensivecomp},}
  {Proposition~\ref{props:composition},} $(\Q\circ\P)$ is extensive,
  so by repeated application of extensiveness and transitivity, we get
  that ${\R} \subseteq (\Q\circ\P)^L(\R)$.
  By Lemma \ref{lemma:monotuptocommute}, we get that
  $(\Q\circ\P)^L({\R}) \subseteq (\P^L\circ\Q^L)({\R})$.
  Thus, since we have ${\R} \subseteq (\P^L\circ\Q^L)({\R}) \subseteq
  {\R'} \in \F$ by transitivity, we conclude that $\P\circ\Q$ is
  sound.
\end{proof}

By using a stronger notion of progression, we can drop the
hypotheses of commutation and finite convergence
when showing soundness of compositions and up-to techniques
in general.
This stronger notion may seem \textit{ad hoc}, but we will see that it
is satisfied by the canonical progressions of well known
bisimulations.

\begin{definition}
  A progression $\prog$ is said to be \textbf{continuous} if for all
  ascending chains of relations ${\R_0} \subseteq {\R_1} \subseteq
  \cdots$ and ${\SR_0} \subseteq {\SR_1} \subseteq \cdots$ such that
  ${\R_i} \prog {\SR_i}$ for all $i$, we have $\nu{\R} \prog
  \nu{\SR}$, where $\nu{\R} = \bigcup_{i\in\N} {\R_i}$ and $\nu{\SR} =
  \bigcup_{i\in\N} {\SR_i}$.
\end{definition}

\arxiv{}{%
  Using extensiveness, it is easy to show by double inclusion that:
}

\begin{prop}
  \label{prop:extensivefp}
  If $\P$ is an extensive up-to technique, then for all $\R$, and
  where $\P^i$ is the $i$-th iterate of $\P$, $\nu\P({\R}) =
  \bigcup_{i\in\N} \P^i({\R})$ is a fixpoint for $\P$.
\end{prop}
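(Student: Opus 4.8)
The plan is to prove $\P(\nu\P({\R})) = \nu\P({\R})$ by double inclusion. Write $S := \nu\P({\R}) = \bigcup_{i\in\N}\P^i({\R})$, where $\P^0({\R}) = {\R}$; for this union to be meaningful we take every iterate $\P^i({\R})$ to be defined. The key preliminary observation is that extensiveness forces the iterates into an ascending chain: applying extensiveness to $\P^i({\R})$, which is legitimate because $\P^{i+1}({\R}) = \P(\P^i({\R}))$ is defined, gives $\P^i({\R}) \subseteq \P^{i+1}({\R})$, hence ${\R} = \P^0({\R}) \subseteq \P^1({\R}) \subseteq \P^2({\R}) \subseteq \cdots$. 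So $S$ is the union of this chain, and in particular $S = \bigcup_{i\geq 1}\P^i({\R})$ as well.

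The inclusion $\nu\P({\R}) \subseteq \P(\nu\P({\R}))$ is then immediate: assuming $\P(S)$ is defined, extensiveness applied to $S$ itself yields $S \subseteq \P(S)$. The reverse inclusion $\P(\nu\P({\R})) \subseteq \nu\P({\R})$ is the substantive part, and I would prove it by exploiting that the ascending chain of iterates not merely grows but actually \emph{stabilises}, so that $S$ is literally one of its own terms. Concretely, if $\P$ is finitely convergent with finite convergence constant $N$ (Definition~\ref{def:finiteconverge}), then $\P^{N+1} = \P^{N+2} = \cdots$ as partial functions, so $\P^{N+1}({\R}) = \P^{N+k}({\R})$ for every $k \geq 1$; since the chain is ascending, it is eventually constant at $\P^{N+1}({\R})$, whence $S = \P^{N+1}({\R})$. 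Then $\P(S) = \P(\P^{N+1}({\R})) = \P^{N+2}({\R}) = \P^{N+1}({\R}) = S$, which gives both inclusions at once. The same conclusion follows if instead the relevant chains are handled by a continuity-type property of $\P$, via $\P(S) = \P\bigl(\bigcup_i \P^i({\R})\bigr) = \bigcup_i \P^{i+1}({\R}) \subseteq S$.

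The step I expect to be the main obstacle is exactly this reverse inclusion. Extensiveness by itself controls only the easy direction and says nothing about whether applying $\P$ to the $\omega$-limit $S$ produces fresh elements; in general it can, and then a genuine fixpoint is reached only after further, possibly transfinite, iteration. So I would read the statement together with an additional property of $\P$ — finite convergence, which is the one already pervasive in this section and which identifies $\nu\P({\R})$ with the stable iterate $\P^{N+1}({\R})$, or a continuity assumption — and under any such property the double-inclusion argument above closes.
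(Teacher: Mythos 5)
Your handling of the easy inclusion $\nu\P({\R}) \subseteq \P(\nu\P({\R}))$ matches the paper exactly. For the reverse inclusion the paper argues element-wise: given ${\SR} \in \P(\nu\P({\R}))$, it asserts that ${\SR} \in \P(\P^i({\R}))$ for some $i$, whence ${\SR} \in \P^{i+1}({\R}) \subseteq \nu\P({\R})$. That single assertion is precisely the $\omega$-continuity of $\P$ along the ascending chain of iterates that you identify as the missing ingredient; it does not follow from extensiveness, and your worry is well founded. For a concrete witness, let $\U$ be the powerset of $\N \cup \{\star\}$ and set $\P(X) = X \cup \{\min(\N \setminus X)\}$ when $X \cap \N$ is finite and $\P(X) = X \cup \{\star\}$ otherwise: this $\P$ is extensive, yet $\nu\P(\emptyset) = \N$ while $\P(\N) = \N \cup \{\star\}$, so $\nu\P(\emptyset)$ is not a fixpoint. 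The proposition therefore needs a continuity-type side condition on $\P$ (which all the concrete techniques in the paper satisfy, e.g.\ $\Pev$ and up-to context, since they are generated by finitary closure rules), and the paper's own proof silently assumes it at the step quoted above. Your diagnosis is thus correct and, if anything, sharper than the written proof.

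Of your two proposed repairs, the continuity one reproduces the paper's argument once the hidden assumption is made explicit, and is the one you should prefer. The finite-convergence route is also valid as mathematics — it identifies $\nu\P({\R})$ with the stabilised iterate $\P^{N+1}({\R})$, for which the fixpoint equation is immediate — but it is not ``pervasive in this section'' in the relevant sense: Proposition~\ref{prop:extensivefp} sits in the passage whose explicit purpose is to \emph{drop} the finite-convergence hypothesis of Proposition~\ref{prop:fcecsound} in favour of continuity of the progression, and the results that rely on it (Proposition~\ref{prop:sympasound}, Corollary~\ref{cor:nupsound}) assume only compatibility and extensiveness. Reinstating finite convergence here would defeat the point of that development, so the fix consistent with the paper's intent is to require (or verify, technique by technique) that $\P$ commutes with unions of ascending chains.
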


\arxiv{
  \begin{proof}
    We want to show that $\P(\nu\P({\R})) = \nu\P({\R})$.
    The inclusion $\nu\P({\R}) \subseteq \P(\nu\P({\R}))$ is immediate
    by extensiveness, so assume ${\SR} \in \P(\nu\P({\R}))$.
    Then there exists an $i \in \N$ such that ${\SR} \in
    \P(\P^i({\R}))$.
    Then ${\SR} \in \P^{i+1}({\R})$, so ${\SR} \in \nu\P({\R})$.
    We conclude the equality by double inclusion.
  \end{proof}
}{}

\begin{prop}
  \label{prop:sympasound}
  If $\F$ has a continuous progression $\prog$ and $\P$ is a
  (respectfully) compatible, extensive up-to technique,
  then $\P$ is sound and $\nu\P({\R}) \prog \nu\P({\R})$ whenever
  ${\R} \prog \P({\R})$.
\end{prop}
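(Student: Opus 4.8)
The plan is to run $\P$ to its (pre‑)fixpoint starting from $\R$, using extensiveness to obtain an ascending chain of iterates and (respectful) compatibility to propagate the progression along it, and then to close everything up with continuity of $\prog$.

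So fix an $\R$ for which $\P({\R})$ is defined and ${\R}\prog\P({\R})$. First I would establish, by induction on $i\in\N$, the conjunction: $\P^i({\R})$ is defined, $\P^i({\R})\subseteq\P^{i+1}({\R})$, and $\P^i({\R})\prog\P^{i+1}({\R})$. The base case $i=0$ is the hypothesis ${\R}\prog\P({\R})$ together with extensiveness (${\R}\subseteq\P({\R})$). For the step, the induction hypothesis supplies $\P^i({\R})\subseteq\P^{i+1}({\R})$ and $\P^i({\R})\prog\P^{i+1}({\R})$; applying (respectful) compatibility of $\P$ to this pair then gives that $\P^{i+1}({\R})$ and $\P^{i+2}({\R})$ are defined and that $\P^{i+1}({\R})\prog\P^{i+2}({\R})$ (and, in the respectful case, directly $\P^{i+1}({\R})\subseteq\P^{i+2}({\R})$; in the plain compatible case this inclusion comes instead from extensiveness applied to the now‑defined $\P^{i+2}({\R})$). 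Carrying the inclusion and the progression in lockstep, so that respectful compatibility is applicable at each stage, is the one point that needs care; the rest is bookkeeping.

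With the chain in hand I would then take ${\R_i}=\P^i({\R})$ and ${\SR_i}=\P^{i+1}({\R})$: both are ascending by the induction, and ${\R_i}\prog{\SR_i}$ for every $i$. Their unions coincide, since $\bigcup_{i\in\N}{\R_i}=\nu\P({\R})$ by definition and $\bigcup_{i\in\N}{\SR_i}=\bigcup_{i\ge 1}\P^i({\R})=\nu\P({\R})$ as well, using $\P^0({\R})={\R}\subseteq\P^1({\R})$. Continuity of $\prog$ applied to these two chains therefore yields $\nu\P({\R})\prog\nu\P({\R})$, which is the second assertion of the proposition (and, by Proposition~\ref{prop:extensivefp}, $\nu\P({\R})$ is a fixpoint of $\P$, so this is genuinely a self‑progression of an $\F$‑candidate built from $\P$).

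Soundness is then immediate: by the definition of a progression, $\nu\P({\R})\prog\nu\P({\R})$ provides some ${\R'}\in\F$ with $\nu\P({\R})\subseteq{\R'}$, and since ${\R}\subseteq\P({\R})\subseteq\nu\P({\R})\subseteq{\R'}$ we obtain ${\R}\subseteq{\R'}\in\F$, i.e.\ $\P$ is sound. The only real obstacle in the whole argument is the bookkeeping in the induction; the value of continuity is precisely that it lets us pass from the stagewise progressions $\P^i({\R})\prog\P^{i+1}({\R})$, $i\in\N$, to a single progression of $\nu\P({\R})$ to itself, with no convergence or commutation hypotheses of the kind needed in Propositions~\ref{prop:fcecsound} and~\ref{props:composition}.
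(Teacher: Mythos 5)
Your proof is correct and follows essentially the same route as the paper's: build the ascending chain $\P^i({\R})$ with stagewise progressions via extensiveness and (respectful) compatibility, apply continuity to the shifted pair of chains whose unions both equal $\nu\P({\R})$, and conclude soundness from the definition of progression. Your version merely makes explicit the definedness bookkeeping and the distinction between how the inclusions are obtained in the compatible versus respectfully compatible cases, which the paper leaves implicit.
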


\begin{proof}
  Assume ${\R} \prog \P({\R})$, then, by extensiveness, ${\R} \subseteq
  \P({\R})$.
  Then for all $k \in \N$, $\P^k({\R}) \prog \P^{k+1}({\R})$ by
  (respectful) compatibility, and $\P^k({\R}) \subseteq
  \P^{k+1}({\R})$ by extensiveness.
  Let ${\R_i} = \P^i({\R})$ and ${\SR_i} = \P^{i+1}({\R})$, then by
  continuity, $\nu{\R} \prog \nu{\SR}$.
  However, $\nu{\R} = \nu\P = \nu{\SR}$, so there exists an ${\R'} \in
  \F$ such that $\nu{\R} \subseteq {\R'}$.
  Thus, ${\R} \subseteq \nu{\R} \subseteq {\R'} \in \F$ and we
  conclude that $\P$ is sound.
\end{proof}

\begin{corollary}
  If $\F$ has a continuous progression and $\P$ and $\Q$ are two
  (respectfully) compatible, extensive up-to techniques, then
  $\Q\circ\P$ is sound.
\end{corollary}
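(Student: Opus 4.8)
The plan is to treat this as a direct corollary of Proposition~\ref{prop:sympasound}: since that proposition shows that \emph{any} extensive, (respectfully) compatible up-to technique is sound whenever $\F$ carries a continuous progression, it suffices to check that the composite $\Q\circ\P$ inherits extensiveness and (respectful) compatibility from $\P$ and $\Q$, the continuity of the progression being assumed outright. So the proof reduces to stringing together already-proved facts, and no genuinely new argument is required.

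Concretely, I would first appeal to \arxiv{Proposition~\ref{prop:extensivecomp}}{Proposition~\ref{props:composition}} to conclude that $\Q\circ\P$ is extensive, since both $\P$ and $\Q$ are; then appeal to \arxiv{Proposition~\ref{prop:compatiblecomp}}{Proposition~\ref{props:composition}} to conclude that $\Q\circ\P$ is (respectfully) compatible, again because both $\P$ and $\Q$ are. At that point $\Q\circ\P$ is an extensive, (respectfully) compatible up-to technique and $\F$ has a continuous progression, so Proposition~\ref{prop:sympasound} applies verbatim with $\Q\circ\P$ in the role of $\P$. It yields both the soundness of $\Q\circ\P$ and, as a byproduct, the side statement $\nu(\Q\circ\P)({\R}) \prog \nu(\Q\circ\P)({\R})$ whenever ${\R} \prog (\Q\circ\P)({\R})$, although only the former is needed for the corollary.

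There is essentially no obstacle here: all the real work has already been done, in the composition propositions and in Proposition~\ref{prop:sympasound}, whose proof builds the ascending chain $\P^i({\R})$, uses extensiveness to see it is a chain, uses (respectful) compatibility to propagate the progression along it, and then uses continuity to pass to the union $\nu\P({\R})$. The point worth stressing is conceptual rather than technical: continuity of the progression is precisely what lets us drop the commutation and finite-convergence hypotheses needed in the earlier composition corollary, because the limit $\nu(\Q\circ\P)({\R})$ of iterating the \emph{composite} is reached directly, without ever having to rearrange $\P$'s and $\Q$'s past one another. The only thing I would double-check while writing is that Proposition~\ref{prop:sympasound} really requires nothing of its up-to technique beyond extensiveness and (respectful) compatibility, so that the two composition facts above are genuinely sufficient to instantiate it.
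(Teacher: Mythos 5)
Your proposal is correct and follows exactly the paper's own argument: apply the composition propositions to conclude that $\Q\circ\P$ is extensive and (respectfully) compatible, then invoke Proposition~\ref{prop:sympasound}. Nothing further is needed.
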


\arxiv{
\begin{proof}
  The up-to technique $\Q\circ\P$ is extensive by Proposition
  \ref{prop:extensivecomp} and compatible by Proposition
  \ref{prop:compatiblecomp}, and so sound by Proposition
  \ref{prop:sympasound}.
\end{proof}
}{}

\begin{corollary}
  \label{cor:nupsound}
  If $\F$ has a continuous progression and $\P$ is a (respectfully)
  compatible, extensive up-to techniques, then $\nu\P$ is
  (respectfully) compatible, extensive, and sound.
\end{corollary}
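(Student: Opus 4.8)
The plan is to check the three asserted properties of $\nu\P$ in turn, with continuity of $\prog$ doing the real work and Proposition~\ref{prop:sympasound} disposing of soundness once the other two are in hand.

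Extensiveness of $\nu\P$ is immediate: whenever $\nu\P({\R})$ is defined we have ${\R} = \P^0({\R}) \subseteq \bigcup_{i\in\N}\P^i({\R}) = \nu\P({\R})$, or, without appealing to $\P^0 = \Id$, simply ${\R} \subseteq \P({\R}) \subseteq \nu\P({\R})$ by extensiveness of $\P$.

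For (respectful) compatibility, I would start from ${\R} \prog {\SR}$ (together with ${\R} \subseteq {\SR}$ in the respectful case) and set ${\R_i} = \P^i({\R})$ and ${\SR_i} = \P^i({\SR})$. By extensiveness of $\P$ these are ascending chains, ${\R_i} \subseteq {\R_{i+1}}$ and ${\SR_i} \subseteq {\SR_{i+1}}$. An induction on $i$ using (respectful) compatibility of $\P$ then shows that every $\P^i({\R})$ and $\P^i({\SR})$ is defined, that ${\R_i} \prog {\SR_i}$ for all $i$, and, in the respectful case, that ${\R_i} \subseteq {\SR_i}$ for all $i$. Since all iterates are defined, $\nu\P({\R})$ and $\nu\P({\SR})$ are defined and equal $\bigcup_i {\R_i}$ and $\bigcup_i {\SR_i}$ respectively; continuity of $\prog$ applied to these two ascending chains gives $\nu\P({\R}) \prog \nu\P({\SR})$, and in the respectful case $\bigcup_i {\R_i} \subseteq \bigcup_i {\SR_i}$ yields the extra inclusion $\nu\P({\R}) \subseteq \nu\P({\SR})$. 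Hence $\nu\P$ is (respectfully) compatible. Soundness then follows by invoking Proposition~\ref{prop:sympasound} with the up-to technique taken to be $\nu\P$ itself: $\F$ has a continuous progression by hypothesis, and $\nu\P$ is (respectfully) compatible and extensive by the previous two steps, so $\nu\P$ is sound. (Proposition~\ref{prop:extensivefp} shows in addition that $\nu\P$ is idempotent, so the iteration in that argument collapses, but this is not needed.)

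The only subtleties are bookkeeping ones — tracking which iterates of a partial function are defined, and running the plain and respectful versions of the induction in parallel — so I do not anticipate a genuine obstacle. The one point worth isolating is that continuity is precisely the closure property that transports compatibility of $\P$ to its countable iterate $\nu\P$; the rest is routine.
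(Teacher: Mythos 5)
Your proof is correct and is exactly the assembly the paper intends: the corollary is stated without proof because it follows by checking that $\nu\P$ is extensive (via $\P^0=\Id$) and, using continuity on the ascending chains $\P^i({\R})$ and $\P^i({\SR})$, (respectfully) compatible, after which Proposition~\ref{prop:sympasound} applied to $\nu\P$ itself yields soundness. Your handling of definedness and of the respectful variant is the right bookkeeping; nothing is missing.
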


Finally, it is often useful to consider up-to techniques which are in
a certain manner ``asymmetric''.
For example, for many types of bisimulation, soundness of
``weak bisimilarity up-to-bisimilarity'' requires us to use strong bisimilarity on the
side of the term making a small-step transition, and permits weak
bisimilarity on the side of the term answering with a large-step
transition; this discussion will be made more precise with examples in
the following sections.

\begin{definition}
  A \textbf{one-sided progression} is a relation ${\progl} \subseteq
  \U \times \U$ such that both ${\R} \progl {\R}$ and ${\R\op} \progl
  {\R\op}$ only if there exists ${\R'} \in \F$ such that ${\R}
  \subseteq {\R'}$.
\end{definition}

\begin{definition}
  \label{def:asfuptot}
  Given a family of relations $\F \subseteq \U$ with a one-sided
  progression, we call a partial function $\P : \U \pfun \U$ an
  \textbf{asymmetric up-to technique}, and say that it is
  \textbf{sound} if whenever $\P({\R})$ and $\P({\R\op})$ are defined
  and ${\R} \progl \P({\R})$ and ${\R\op} \progl \P({\R\op})$, there
  exists an ${\R'} \in \F$ such that ${\R} \subseteq {\R'}$.
\end{definition}

We adapt the definitions of commutativity, extensiveness, finite
convergence, and commutativity in the obvious manner, substituting
one-sided progression for progression.

\begin{prop}
  Every finitely convergent, extensive, (respectfully) compatible
  asymmetric up-to-technique $\P$ is sound.
\end{prop}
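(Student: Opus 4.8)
The plan is to mimic the proof of Proposition~\ref{prop:fcecsound}, but carry every statement through on both ``sides'' simultaneously, replacing each use of the progression $\prog$ with the one-sided progression $\progl$. So suppose $\P$ is a finitely convergent, extensive, (respectfully) compatible asymmetric up-to technique, let $N$ be its finite convergence constant, and assume that $\P(\R)$ and $\P(\R\op)$ are defined with ${\R} \progl \P({\R})$ and ${\R\op} \progl \P({\R\op})$. The first step is an induction on $n$ establishing, for all $n$, the conjunction of the following six facts: $\P^n(\R)$ and $\P^n(\R\op)$ are defined; ${\R} \subseteq \P^n(\R)$ and ${\R\op} \subseteq \P^n(\R\op)$; and $\P^n(\R) \progl \P^{n+1}(\R)$ and $\P^n(\R\op) \progl \P^{n+1}(\R\op)$. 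The base case $n = 0$ (or $n=1$) is the hypothesis together with extensiveness; the inductive step applies compatibility of $\P$ to both $\P^n(\R) \progl \P^{n+1}(\R)$ and $\P^n(\R\op) \progl \P^{n+1}(\R\op)$ to get the progressions one level up (and to conclude $\P^{n+1}$ is defined on both arguments), and applies extensiveness again for the inclusions.

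The second step invokes finite convergence: since $N$ is the finite convergence constant, $\P^{N+2} = \P^{N+1}$, so from the $n = N+1$ instance we obtain $\P^{N+1}(\R) \progl \P^{N+1}(\R)$ and $\P^{N+1}(\R\op) \progl \P^{N+1}(\R\op)$. By the defining property of a one-sided progression, this yields an ${\R'} \in \F$ with $\P^{N+1}(\R) \subseteq {\R'}$. Finally, since ${\R} \subseteq \P^{N+1}(\R) \subseteq {\R'} \in \F$ by the $n = N+1$ instance of the inclusion and transitivity, we conclude by Definition~\ref{def:asfuptot} that $\P$ is sound.

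The only subtlety — and it is the point where the one-sided setting differs from the ordinary one — is that the defining clause of a one-sided progression requires \emph{both} ${\R} \progl {\R}$ \emph{and} ${\R\op} \progl {\R\op}$ before one may extract a witness in $\F$; this is exactly why the induction must be run in parallel on $\R$ and $\R\op$ and why compatibility must be applied to both at each step. As long as ``compatible'' for asymmetric up-to techniques is read (per the remark that we adapt the definitions ``in the obvious manner'') as: whenever ${\R} \progl {\SR}$, $\P(\R)$ and $\P(\SR)$ are defined and $\P(\R) \progl \P(\SR)$ — and similarly for the respectful variant with the extra inclusion hypothesis and conclusion — there is no real obstacle; the argument is the verbatim one-sided analogue of Proposition~\ref{prop:fcecsound}, and its corollary (an asymmetric version of Corollary~\ref{cor:fcecsound}) drops out of the same induction.
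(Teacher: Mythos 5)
Your proposal is correct and follows essentially the same route as the paper's own proof, which likewise adapts Proposition~\ref{prop:fcecsound} by running the induction in parallel on ${\R}$ and ${\R\op}$ to reach $\P^{N+1}({\R}) \progl \P^{N+1}({\R})$ and $\P^{N+1}({\R\op}) \progl \P^{N+1}({\R\op})$ before invoking the one-sided progression property and extensiveness. The one point worth flagging (shared with the paper's sketch) is that extracting ${\R'} \in \F$ for $\P^{N+1}({\R})$ strictly requires $(\P^{N+1}({\R}))\op \progl (\P^{N+1}({\R}))\op$ rather than $\P^{N+1}({\R\op}) \progl \P^{N+1}({\R\op})$, so one is implicitly assuming $\P$ commutes with $(\cdot)\op$; this is harmless for the concrete techniques considered but deserves a remark.
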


\arxiv{
\begin{proof}
  Adapt the proof of Proposition \ref{prop:fcecsound}, starting from
  ${\R} \progl \P({\R})$ and ${\R\op} \progl \P({\R\op})$, reaching
  $\P^{N+1}({\R}) \progl \P^{N+1}({\R})$ and $\P^{N+1}({\R\op}) \progl
  \P^{N+1}({\R\op})$, and deducing that $\P^{N+1}({\R})$ is contained
  in some ${\R'} \in \F$.
  Conclude that $\P$ is sound since by extensiveness and transitivity,
  ${\R} \subseteq {\R'} \in \F$ and $\R$ was arbitrary.
\end{proof}
}{}

\begin{prop}
  If $\F$ is has a continuous one-sided progression, and $\P$ is an
  extensive, (respectfully) compatible and asymmetric up-to technique,
  then $\P$ is sound.
\end{prop}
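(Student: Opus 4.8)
The plan is to imitate the proof of Proposition~\ref{prop:sympasound}, running the iteration simultaneously on $\R$ and on $\R\op$. Suppose $\P({\R})$ and $\P({\R\op})$ are defined with ${\R}\progl\P({\R})$ and ${\R\op}\progl\P({\R\op})$. By extensiveness, ${\R}\subseteq\P({\R})$ and ${\R\op}\subseteq\P({\R\op})$; and by (respectful) compatibility for the one-sided progression (adapted ``in the obvious manner''), an easy induction on $k$ shows that $\P^{k}({\R})$ and $\P^{k}({\R\op})$ are defined for every $k$, with $\P^{k}({\R})\subseteq\P^{k+1}({\R})$, $\P^{k}({\R})\progl\P^{k+1}({\R})$, and the analogous statements for $\R\op$.

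Next I would put ${\R_i}=\P^{i}({\R})$ and ${\SR_i}=\P^{i+1}({\R})$. These are ascending chains with ${\R_i}\progl{\SR_i}$ for all $i$, so continuity of the one-sided progression gives $\bigcup_i{\R_i}\progl\bigcup_i{\SR_i}$. Both unions equal $\nu\P({\R})=\bigcup_{i\in\N}\P^{i}({\R})$ (cf.\ Proposition~\ref{prop:extensivefp}), whence $\nu\P({\R})\progl\nu\P({\R})$; the same argument applied to $\R\op$ yields $\nu\P({\R\op})\progl\nu\P({\R\op})$.

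To conclude, I would identify $\nu\P({\R\op})$ with $(\nu\P({\R}))\op$: the operation $(-)\op$ distributes over the directed unions defining $\nu\P$, so this amounts to $\P$ being compatible with $(-)\op$, exactly the identification that the finitely convergent asymmetric proposition already uses implicitly when it passes between $\P^{N+1}({\R\op})$ and $(\P^{N+1}({\R}))\op$. Granting this, the single relation $\SR=\nu\P({\R})$ satisfies both ${\SR}\progl{\SR}$ and ${\SR\op}\progl{\SR\op}$, so the defining property of a one-sided progression supplies an ${\R'}\in\F$ with ${\SR}\subseteq{\R'}$. By extensiveness ${\R}\subseteq\nu\P({\R})={\SR}\subseteq{\R'}\in\F$, and as $\R$ was arbitrary, $\P$ is sound.

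The only genuinely new difficulty relative to Proposition~\ref{prop:sympasound} is this last identification: the one-sided progression definition requires a \emph{single} relation to witness both the $\progl$-self-progression and that of its opposite, so one must check that the $\nu\P$-tower built from $\R\op$ really is the opposite of the one built from $\R$ (equivalently, that $\P$ interacts correctly with $(-)\op$). Everything else is a routine transcription of the symmetric argument with $\progl$ replacing $\prog$.
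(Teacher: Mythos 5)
Your proof follows the paper's own argument exactly: iterate $\P$, use extensiveness and (respectful) compatibility to build the ascending chains $\P^{i}({\R})$ and $\P^{i}({\R\op})$, apply continuity of $\progl$ to obtain $\nu\P({\R})\progl\nu\P({\R})$ and the analogue for $\R\op$, and conclude via the defining property of a one-sided progression. The identification $\nu\P({\R\op})=(\nu\P({\R}))\op$ that you single out is indeed the one delicate point, and the paper's own proof uses it silently (writing $(\nu{\R})\op\progl(\nu{\R})\op$ directly), so your version is, if anything, the more careful transcription of the same argument.
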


\arxiv{
\begin{proof}
  Adapt the proof of Proposition \ref{prop:sympasound}, starting from
  ${\R} \progl \P({\R})$ and ${\R\op} \progl \P({\R\op})$, reaching
  $\nu{\R} \progl \nu{\R}$ and $(\nu{\R})\op \progl (\nu{\R})\op$, and
  deducing that ${\R} \subseteq \nu{\R} \subseteq {\R'} \in \F$.
\end{proof}
}{}


\section{Preliminaries on the $\lambda$-calculus}
\label{s:prelim}

\arxiv{ We establish notation for the $\lambda$-calculus and prove a few
 useful lemmas about its contexts.}{}
We denote by $\Lambda$ the set of all $\lambda$ terms, and by $\Ld$
the set of all closed $\lambda$ terms, i.e., those with no free
variables.
If ${\R} \subseteq \Lambda\times\Lambda$ is a relation, and
$\wt{M}=(M_1, \dotsc, M_n)$ and $\wt{N} = (N_1,\dotsc,N_n)$ are
vectors in $\Lambda^n$, then we write $\wt{M}\R\wt{N}$ for $(M_1\R
N_1) \land \cdots \land (M_n\R N_n)$.
By abuse of notation, if $X \in \Lambda$, $\wt{M} \in \Lambda^m$, and
$\wt{N} \in \Lambda^n$, we write $\wt{M}X\wt{N}$ for the term $M_1\cdots M_m X
N_1\cdots N_n$.
Finally, if $\R$ is a relation and $S \subseteq \Lambda\times\Lambda$,
we denote by $\restr{{\R}}{S} := {\R} \cap S$ the restriction of $\R$ to
$S$.

\arxiv{
\begin{definition}
  If $\R,\R'\ \subseteq \Lambda\times\Lambda$ are two relations, then
  write $\R\R'$ for their \textbf{composition}, i.e., $M \R\R' N$ if
  and only if there exists an $L$ such that $M \R L \R' N$.
\end{definition}
}{}

We recall the familiar notion of context (see, e.g.,
\cite{Barendregt1985}) and Gordon's \cite{Gordon1994} canonical
contexts:

\begin{definition}
\arxiv{
  A \textbf{context} $C$ is given by the following grammar,
  \[ C := x \mid [\cdot] \mid C_1 C_2 \mid \lambda x.C.
  \] Let a \textbf{canonical context} be a context $\VC$ given by the
  grammar
  \[ \VC = [\cdot] \mid \lambda x.C, \]
  where $C$ ranges over all contexts.
}{
We define \textbf{contexts}, $C$, and  \textbf{canonical contexts},
$\VC$, with the following grammars:
  \[ C := x \mid [\cdot] \mid C_1 C_2 \mid \lambda x.C
\qquad\qquad \VC = [\cdot] \mid \lambda x.C\enspace. \]
}
\end{definition}

A context can contain multiple holes, which we label $[\cdot]_1,
[\cdot]_2, \dotsc, [\cdot]_n$; by convention, each hole is assigned a
unique $i$, and these are assigned in sequential order from left to
right.
Then if $C$ is a context with $n$ holes and $\wt{M}\in \Lambda^n$,
$C[\wt{M}] \in \Lambda$ is obtained by replacing $[\cdot]_i$ in $C$
with $M_i$, the $i$th projection of $\wt{M}$.
We further \arxiv{adopt the notation}{write} $C\fl{M}$ to denote the context $C$
whose every hole is filled with the term $M$.

\begin{definition}
  If ${\R} \subseteq \Lambda\times\Lambda$, then its \textbf{open
    contextual closure}, $\R^\oclos$, is given by \[{\R^\oclos} =
  \{\,(C[\wt{M}], C[\wt{N}]) \;|\; C\text{ is a context and }
  \wt{M}\R\wt{N}\,\},\] and its \textbf{closed contextual closure} is
  ${\R^\star} = {\R^\oclos} \cap {\Ld\times\Ld}$.
\end{definition}

Since we deal mostly with relations on closed terms, unless otherwise
specified, we take contextual closure to be closed contextual closure.

\arxiv{  
The following {two facts} {are immediate} from the definition of
contextual closure of binary relations on $\lambda$-terms.

\begin{lemma}
  The contextual closures of the empty set are the respective identity
  relations, i.e., $\emptyset^\oclos = \Id_{\Lambda\times\Lambda}$ and
  $\emptyset^\star = \Id_{\Ld\times\Ld}$.
\end{lemma}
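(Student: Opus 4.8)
The plan is to unwind the definition of $\R^\oclos$ in the degenerate case $\R = \emptyset$. The key observation is that a context $C$ with $n$ holes, together with vectors $\wt M, \wt N \in \Lambda^n$ satisfying $\wt M \mathrel{\emptyset} \wt N$, contributes the pair $(C[\wt M], C[\wt N])$ to $\emptyset^\oclos$; but $\wt M \mathrel{\emptyset} \wt N$ unfolds to $\bigwedge_{i=1}^{n} (M_i \mathrel{\emptyset} N_i)$, and since $M_i \mathrel{\emptyset} N_i$ never holds, this conjunction can be satisfied only when it is empty, i.e.\ when $n = 0$.

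First I would record that a context contains no holes precisely when it is a $\lambda$-term: deleting the clause $[\cdot]$ from the grammar for contexts leaves exactly the grammar for $\Lambda$, so the contexts with $0$ holes are exactly the elements of $\Lambda$; moreover, for such a $C$ and the unique vector in $\Lambda^0$ we have $C[\wt M] = C$. Combining this with the observation above yields
\[
  \emptyset^\oclos
  = \{\,(C[\wt M], C[\wt N]) \mid C \text{ a context with } 0 \text{ holes}\,\}
  = \{\,(M, M) \mid M \in \Lambda\,\}
  = \Id_{\Lambda\times\Lambda}.
\]
The closed case is then immediate from the definition $\R^\star = \R^\oclos \cap (\Ld\times\Ld)$: intersecting $\Id_{\Lambda\times\Lambda}$ with $\Ld\times\Ld$ gives exactly $\Id_{\Ld\times\Ld}$.

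There is essentially no obstacle here; the only point that needs care is the handling of the empty conjunction and of length-$0$ vectors, which is precisely what makes the \emph{identity} relation (rather than the empty relation) appear. One should also confirm that the definition of $\R^\oclos$ admits contexts with no holes (equivalently, vectors of length $0$), since the argument relies on this convention.
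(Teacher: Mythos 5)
Your proof is correct and is exactly the unwinding the paper has in mind: the paper states this fact without proof, calling it immediate from the definition, and your careful treatment of the empty conjunction over length-$0$ vectors (so that only $0$-holed contexts, i.e.\ plain $\lambda$-terms, contribute pairs $(C,C)$) is precisely why the identity relation, rather than the empty relation, results. The closed case by intersection with $\Ld\times\Ld$ is likewise as intended.
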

}{
We observe that $\emptyset^\oclos = \Id_{\Lambda\times\Lambda}$ and
  $\emptyset^\star = \Id_{\Ld\times\Ld}$, where $\Id$ is the identity relation.
  We also have:
}

\begin{lemma}
  \label{lemma:rccomb}
  For ${\vartriangle} \in \{\oclos,\star\}$, if $A\R^\vartriangle C$ and $B \R^\vartriangle
  D$, then $AB \R^\vartriangle CD$.
\end{lemma}



The following technical lemma will frequently be used in the soundness
proofs for the \textit{up-to context} technique and when proving that
coupled logical relations are congruences.

\begin{lemma}
  \label{lemma:starsubst}
  For all ${\R} \subseteq \Ld\times\Ld$,
  $\wt{M} \R \wt{N}$, $M, N \in \Lambda$, and contexts $C_1$:
  \begin{enumerate}
  \item if $C_1[\wt{M}] = M \R^\star N = C_1[\wt{N}]$, then for all
    ${\vartriangle} \in \{\oclos,\star\}$, $E$ and $F$ such that
    $E\R^\vartriangle F$, and variables $x$, we have $M[E/x]\R^\star
    N[F/x]$;
  \item if $C_1[\wt{M}] = M \R^\oclos N = C_1[\wt{N}]$ with $\fv(M) =
    \fv(N) = \{x\}$ for some $x$, then for all ${\vartriangle} \in
    \{\oclos,\star\}$ and $E$ and $F$ such that $E\R^\vartriangle F$,
    we have $M[E/x]\R^\vartriangle N[F/x]$;
  \item if $C_1[\wt{M}] = M \R^\oclos N = C_1[\wt{N}]$, then for all
    ${\vartriangle} \in \{\oclos,\star\}$, $E$ and $F$ such that
    $E\R^\vartriangle F$, and variables $x$, we have $M[E/x]\R^\oclos
    N[F/x]$.
  \end{enumerate}

\end{lemma}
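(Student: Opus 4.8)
The plan is to dispatch part~1 immediately, and then to obtain parts~3 and~2, in that order, by induction on the structure of $C_1$. For part~1 there is nothing to do: $M \R^\star N$ forces $M,N\in\Ld$, so $x\notin\fv(M)\cup\fv(N)$ and hence $M[E/x]=M \R^\star N=N[F/x]$; the context $C_1$ and the tuples play no role. The two remaining parts are the substantive ones, and the crucial preliminary observation is that, since $\wt M \R \wt N$ with ${\R}\subseteq\Ld\times\Ld$, every component of $\wt M$ and of $\wt N$ is closed, so $\fv(C_1[\wt M])=\fv(C_1[\wt N])$ and this set is exactly the set of free (uncaptured, non-hole) variables of $C_1$. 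In particular the hypothesis $\fv(M)=\{x\}$ in part~2 is really a hypothesis about $C_1$, which keeps the induction under control.

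For part~3 I would induct on $C_1$. If $C_1=[\cdot]$ then $M=M_i \R N_i=N$ are closed, so $M[E/x]=M_i \R N_i=N[F/x]$ and ${\R}\subseteq{\R^\oclos}$ finishes it. If $C_1$ is a variable $y$, then $M=N=y$ and $y\R^\oclos y$ since $\emptyset^\oclos=\Id$; if $y\neq x$ both substitutions are inert, and if $y=x$ then $M[E/x]=E\R^\vartriangle F=N[F/x]$, which gives $\R^\oclos$ because ${\R^\star}\subseteq{\R^\oclos}$. If $C_1=D_1D_2$, split $\wt M$ and $\wt N$ accordingly, apply the induction hypothesis to $D_1$ and $D_2$ (each yields an $\R^\oclos$-relation after substitution), and combine via Lemma~\ref{lemma:rccomb} taken at $\vartriangle=\oclos$, using that substitution distributes over application. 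If $C_1=\lambda y.D$, rename so that $y\notin\{x\}\cup\fv(E)\cup\fv(F)$; then $M[E/x]=\lambda y.(D[\wt M][E/x])$ and similarly for $N$, the induction hypothesis gives $D[\wt M][E/x]\R^\oclos D[\wt N][F/x]$, and abstracting the witnessing context over $y$ shows $\lambda y.(\cdots)\R^\oclos\lambda y.(\cdots)$.

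For part~2 I would again induct on $C_1$, now using the already established part~3 as an auxiliary. The base cases are as above, except that only those compatible with $\fv(C_1)=\{x\}$ arise: the hole case is vacuous, and in the variable case $y=x$, where the conclusion reads $E\R^\vartriangle F$, matching the required output exactly. For $C_1=D_1D_2$, the set $\{x\}=\fv(D_1)\cup\fv(D_2)$ distributes in one of three ways; for a factor $D_j$ with $\fv(D_j)=\{x\}$ apply the part~2 induction hypothesis, and for a factor with $\fv(D_j)=\emptyset$ the substitution is inert and $D_j[\wt M]\R^\star D_j[\wt N]$ holds directly from $\wt M\R\wt N$ and closedness; combine the two with Lemma~\ref{lemma:rccomb}, noting that when $\vartriangle=\star$ the non-trivial factor is also closed after substitution so both factors lie in $\R^\star$, while when $\vartriangle=\oclos$ it suffices that both lie in $\R^\oclos$. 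For $C_1=\lambda y.D$ with $y$ chosen fresh, $\fv(D)$ is either $\{x\}$ or $\{x,y\}$: in the first case the part~2 induction hypothesis gives $D[\wt M][E/x]\R^\vartriangle D[\wt N][F/x]$, in the second case part~3 gives this relation only at $\oclos$. Either way, abstracting over $y$ yields $M[E/x]\R^\oclos N[F/x]$, and when $\vartriangle=\star$ we upgrade to $\R^\star$ by observing that then $E,F$ are closed, so $D[\wt M][E/x]$ and $D[\wt N][F/x]$ have at most $y$ free and their $\lambda y$-abstractions are closed.

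The main obstacle is precisely this last case of part~2: the inner context $D$ may carry an extra free variable $y$ that is invisible in $M=\lambda y.D[\wt M]$, so one cannot recursively stay inside part~2 and must route through part~3, then recover the sharper conclusion $\R^\star$ by re-abstracting and invoking closedness. Phrasing the three statements so that this routing, together with the uneven splitting of free variables in the application case, always lands on an applicable induction hypothesis is the delicate bookkeeping of the proof; the $\alpha$-renaming of bound variables, kept implicit by working up to $\alpha$-equivalence, must also be carried out before each substitution under a $\lambda$.
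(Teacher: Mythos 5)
Your proof is correct, and it reaches the conclusion by a genuinely different decomposition than the paper's. The paper's proof is a single explicit construction: writing $E = C_2[\wt{E}]$ and $F = C_2[\wt{F}]$ with $\wt{E} \R \wt{F}$, it forms the substituted context $C_3 = C_1[C_2/x]$ (via a recursively defined operator $\gamma$) and interleaved tuples $\wt{M'}$, $\wt{N'}$ of $\wt{M}$ with copies of $\wt{E}$ and of $\wt{N}$ with copies of $\wt{F}$ (via an operator $\phi$), so that $M[E/x] = C_3[\wt{M'}]$, $N[F/x] = C_3[\wt{N'}]$, and $\wt{M'} \R \wt{N'}$; all three parts then follow in one step from the definition of $\R^\oclos$, with the $\R^\star$ conclusions obtained by inspecting which variables remain free. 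You never name this global witnessing context: you push the substitution through $C_1$ by structural induction, using Lemma~\ref{lemma:rccomb} at application nodes and closure of $\R^\oclos$ under abstraction at $\lambda$-nodes, and you make explicit the mutual dependency between parts~2 and~3 that the paper leaves implicit (the body of a $\lambda y$ in a context with $\fv = \{x\}$ may have free variables $\{x,y\}$, forcing a detour through part~3 followed by an upgrade to $\R^\star$ via closedness of the re-abstracted terms). The paper's route is shorter once $\gamma$ and $\phi$ are written down, but the key identity $C_1[C_2[\wt{E}]/x][\wt{M}] = C_3[\wt{M'}]$ is itself only verifiable by the kind of induction you carry out explicitly; your version trades the two auxiliary recursive definitions for a longer but more self-contained case analysis, and its explicit free-variable bookkeeping in part~2 is more careful than the paper's one-line treatment of that case.
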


\arxiv{
  \begin{proof}
    Let $\wt{E}$, $\wt{F}$, and $C_2$ be such that $F = C_2[\wt{F}]$,
    and $\wt{E} \R \wt{F}$.
    Since each entry of $\wt{M}$ and $\wt{N}$ is in $\Ld$, $x \notin
    \fv(\wt{M})\cup\fv(\wt{N})$.
    Assume first that $M \R^\star N$, then $\fv(M) \cup \fv(N) =
    \emptyset$, so $M[E/x] = M$ and $N[F/x] = N$ for all $E, F$, and
    $x$.
    Thus, $M[E/x] = M \R^\star N = N[F/x]$ as desired.
    Now assume that $M \R^\circ N$.
    If $C_1 = [\cdot]$, then $M = C_1[\wt{M}] \in \Ld$ and $N =
    C_1[\wt{N}] \in \Ld$, so $M[E/x] \R N[E/x]$.
    Since ${\R} \subseteq {\R^\star} \subseteq {\R^\oclos}$, we
    conclude that $M[E/x] \R^\bull N[F/x]$.
    Now assume that $C_1 \neq [\cdot]$, then $(C_1[\wt{M}])[E/x] =
    (C_1[E/x])[\wt{M}]$ and similarly for $\wt{N}$ and $F$.
    Now consider the context $C_3 = C_1[C_2/x]$, or more formally, the
    context $C_3 = \gamma(C_1, x, C_2)$ where $\gamma(C_1, x, C_2)$ is
    recursively given by \[ \gamma(C_1,x,C_2) = \begin{cases}
      C_2 & \text{if } C_1 = x\\
      y & \text{if } C_1 = y\\
      [\cdot] & \text{if } C_1 = [\cdot]\\
      \lambda z.\gamma(C_1', x, C_2) & \text{if } C_1 = \lambda
      z.C_1'\text{ and } z\neq x\\
      \gamma(C_1', x, C_2)\gamma(C_1'', x, C_2) & \text{if } C_1 =
      C_1'C_1'',
    \end{cases}\] and let $\wt{M'}$ be the vector obtained by
    interweaving $\wt{M}$ and $\wt{E}$ such that $C_3[\wt{M'}] =
    C_1[C_2[\wt{E}]/x][\wt{M}]$.
    More formally, we can let $\wt{M'} = \phi(C_1,x,\wt{M},\wt{E})$
    where \[ \phi(C,x,\wt{M},\wt{E}) =
    \begin{cases}
      \wt{E} & \text{if } C = x\\
      y & \text{if } C = y\\
      M_i & \text{if } C = [\cdot]_i\\
      \phi(C',x,\wt{M},\wt{E}) & \text{if } C = \lambda z.C'\\
      \phi(C',x,\wt{M},\wt{E}) \mathrel{+\!+} \phi(C'',x,\wt{M},\wt{E})
      & \text{if } C = C'C''.
    \end{cases}\]

    Similarly, let $\wt{N'} = \phi(C,x,\wt{N},\wt{F})$, then $N[F/x] =
    C_3[\wt{N'}]$.
    If $\fv(M) = \fv(N) = \{x\}$, then $C_3[\wt{M'}]$ and
    $C_3[\wt{N'}]$ are closed, so $C_3[\wt{M'}] \R^\star
    C_3[\wt{N'}]$.
    In all other cases, since $C_3[\wt{M'}] \R^\oclos C_3[\wt{N'}]$,
    this completes our case analysis and we conclude the lemma.
  \end{proof}
}{}

\begin{corollary}
  \label{cor:rcredux}
  Suppose ${\R} \subseteq \Ld\times\Ld$, $\lambda x.C[\wt{P}] \R^\star
  \lambda x.C[\wt{Q}]$, and $M \R^\star N$.
  Then, we have $(C[\wt{P}])[M/x]
  \R^\star (C[\wt{Q}])[N/x]$.
\end{corollary}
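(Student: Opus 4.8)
The plan is to reduce the statement to Lemma~\ref{lemma:starsubst}(2). First I would strip off the outer abstraction. Since $\lambda x.C[\wt{P}]$ and $\lambda x.C[\wt{Q}]$ lie in $\Ld$, we get $\fv(C[\wt{P}]) \subseteq \{x\}$ and $\fv(C[\wt{Q}]) \subseteq \{x\}$; moreover, because every entry of $\wt{P}$ and $\wt{Q}$ is closed, the free variables of $C[\wt{P}]$ and of $C[\wt{Q}]$ both coincide with the free variables of the context $C$ itself, so $\fv(C[\wt{P}]) = \fv(C[\wt{Q}])$. Directly from the definition of the open contextual closure, since $C$ is a context and $\wt{P} \R \wt{Q}$, we also have $C[\wt{P}] \mathrel{\R^\oclos} C[\wt{Q}]$. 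Finally, note that $M, N \in \Ld$, since $\R^\star \subseteq \Ld\times\Ld$.

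Now I would split on whether $x$ occurs free. If $x \notin \fv(C[\wt{P}])$, then $\fv(C[\wt{P}]) = \fv(C[\wt{Q}]) = \emptyset$, so the substitutions are vacuous: $(C[\wt{P}])[M/x] = C[\wt{P}]$ and $(C[\wt{Q}])[N/x] = C[\wt{Q}]$. Since both terms are then closed and $C[\wt{P}] \mathrel{\R^\oclos} C[\wt{Q}]$, we get $C[\wt{P}] \mathrel{\R^\star} C[\wt{Q}]$ by definition of the closed contextual closure, which is the desired conclusion. Otherwise $\fv(C[\wt{P}]) = \fv(C[\wt{Q}]) = \{x\}$, and I would apply Lemma~\ref{lemma:starsubst}(2) with $C_1 := C$, with the lemma's ``$M$'' and ``$N$'' instantiated to $C[\wt{P}]$ and $C[\wt{Q}]$, with the triangle taken to be $\star$, and with $E := M$, $F := N$ (using the hypothesis $M \mathrel{\R^\star} N$). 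This yields $(C[\wt{P}])[M/x] \mathrel{\R^\star} (C[\wt{Q}])[N/x]$, as required.

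There is no genuinely hard step: the whole content is already packaged in Lemma~\ref{lemma:starsubst}. The only point that needs attention is the side condition ``$\fv = \{x\}$'' appearing in part~(2) of that lemma, which forces the degenerate case where $x$ does not occur free to be peeled off and handled separately (there everything collapses because $C[\wt{P}]$ and $C[\wt{Q}]$ are closed). One might hope to sidestep the case analysis by weakening to $C[\wt{P}] \mathrel{\R^\oclos} C[\wt{Q}]$ and using part~(3) of the lemma, but that only delivers $\R^\oclos$ rather than $\R^\star$, so the combination of part~(2) with the small case split is the route I would take.
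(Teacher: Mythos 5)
Your proof is correct and follows the route the paper intends: the corollary is stated without its own proof precisely because it is a direct instance of Lemma~\ref{lemma:starsubst}, and your instantiation of part~(2) (with the vacuous-substitution case peeled off separately) carries this out validly. As a minor simplification, part~(3) already suffices without any case split: it yields $(C[\wt{P}])[M/x] \mathrel{\R^\oclos} (C[\wt{Q}])[N/x]$, and both sides are closed terms (since $\fv(C[\wt{P}]) = \fv(C[\wt{Q}]) \subseteq \{x\}$ and $M, N \in \Ld$), so the pair lies in ${\R^\oclos} \cap (\Ld\times\Ld) = {\R^\star}$.
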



We use the definition of congruence given by Selinger
\textcite{Selinger2008}:

\begin{definition}
  A relation $\R\ \subseteq \Lambda\times\Lambda$ is said to be a
  \textbf{congruence} if it is an equivalence relation and
  additionally respects the rules for constructing $\lambda$-terms,
  i.e., if it satisfies:
  \[ \frac{A\R C\quad B\R D}{AB \R CD}\quad\quad \frac{M \R N}{\lambda
    x.M \R \lambda x.N}.\]
\end{definition}








\arxiv{Finally, the}{The} following definitions will serve to define coupled
logical bisimulations in the next section:

\begin{definition}
  A \textbf{paired relation} $\R$ is a pair of relations $(\R_1,
  \R_2)$ with ${\R_1}, {\R_2} \subseteq \Ld\times\Ld$.
  A \textbf{coupled relation} is a paired relation $\R$ such that
  ${\R_1} \subseteq {\R_2}$.
\end{definition}

We define the usual set theoretic operations on coupled relations in a
pointwise manner\arxiv{, e.g.,}{:} for two coupled relations $\R, \R'$, ${\R'}
\subseteq {\R}$ if ${\R'_1} \subseteq {\R_1}$ and ${\R'_2} \subseteq
{\R_2}$, we let ${\R} \cup {\R'} = ({\R_1} \cup {\R'_1}, {\R_2} \cup
{\R'_2})$, etc.

\section{CLB in the Call-by-name $\lambda$-calculus}
\label{s:cbn}

We begin by considering the theory of coupled logical bisimulations
for the call-by-name (cbn) $\lambda$-calculus.
Many proofs are ommited or only sketched since they can be seen as
simplifications of the proofs given in the call-by-value case
(Section~\ref{s:cbv}).

\begin{definition}
  The \textbf{call-by-name} $\lambda$-calculus is defined by the
  following reduction rules:
  \begin{equation*}
    \frac{M\ra M'}{MN\ra M'N}\hspace{3em}
    \frac{}{(\lambda x.P)N \ra P[N/x]}.
  \end{equation*}
  We take the set $\V$ of \textbf{values} to be the set of all
  abstractions $\lambda x.P \in \Ld$.

  We write $\Ra$ for the reflexive and transitive closure of $\ra$,
  and say that a term $M$ \textbf{converges}, written $M\conv$, if
  there exists a value $\lambda x.P$ such that $M \Ra \lambda x.P$; we
  may also write $M \Downarrow \lambda x.P$ in this case.
  Similarly, we say $M$ \textbf{diverges}, written $M\diver$, if it
  does not converge; in this case, it will sometimes be useful to
  write $M \Uparrow M'$ if $M \Ra M'$ and $M'\diver$.
\end{definition}

\begin{definition}
  A
  \textbf{cbn evaluation context}
  $\EC$ is given by
  \arxiv{}{$ \EC := [\cdot] \mid \EC M $}
  \arxiv{the
    following grammar}{}
  ($M$ ranges over $\Ld$)\arxiv{:}{.}
  \arxiv{\[ \EC := [\cdot] \mid \EC M\enspace. \]}{}
\end{definition}

\begin{definition}
  Two terms $M, N \in \Ld$ are \textbf{contextually equivalent},
  written $M \cen N$, if for all contexts $C$, $C[M]\conv$ if and only
  if $C[N]\conv$. Similarly, we say that two terms $M$ and $N$ are
  \textbf{evaluation-contextually equivalent}, written $M \ecen N$, if
  for all \arxiv{evaluation contexts}{} $\EC$, $\EC[M]\conv$ if and only if
  $\EC[N]\conv$.
\end{definition}

Although contextual equivalence appears to be \arxiv{a stronger notion
of equivalence}{stronger} than evaluation-contextual equivalence, Milner's
\cite{Milner1977} ``context lemma'' (see also Gordon \cite[Proposition
4.18]{Gordon1994}) tells us otherwise:

\begin{theorem}[\cite{Milner1977}]\label{t:contextlemma:cbn}
  \label{thm:eceisce}
  Evaluation-contextual equivalence and contextual equivalence
  coincide, i.e., $M \ecen N$ if and only if $M \cen N$.
\end{theorem}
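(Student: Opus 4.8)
The plan is to prove the nontrivial direction: $M \ecen N$ implies $M \cen N$ (the converse is immediate, since every evaluation context is a context). So suppose $M \ecen N$ and let $C$ be an arbitrary context; I want $C[M]\conv \iff C[N]\conv$. The natural strategy is to show that $\ecen$ is preserved by the term constructors, so that $C[M] \ecen C[N]$, whence in particular $C[M]\conv \iff C[N]\conv$ by taking the empty evaluation context. The key closure properties to establish are: (i) if $P \ecen Q$ then $\lambda x.P \ecen \lambda x.Q$; (ii) if $P \ecen Q$ and $R \in \Ld$ then $PR \ecen QR$ and $RP \ecen RQ$; and, to handle open subterms under $\lambda$, (iii) a substitutivity statement: if $P \ecen Q$ then $P[R/x] \ecen Q[R/x]$ for all closed $R$ (and more generally closing substitutions agreeing on the two sides). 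Given (i)--(iii) one decomposes $C$ hole-by-hole using a structural induction on $C$, closing off free variables as one passes under binders.

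The heart of the argument is the application case, and this is where I expect the main obstacle. To see that $P \ecen Q$ implies $RP \ecen RQ$ one must analyse how a computation in $\EC[RP]$ can interact with the argument $P$. The standard call-by-name trick is that $P$ is only ever used in evaluation position: if $\EC[RP]$ converges, the reduction either converges without ever duplicating or entering $P$, or at some point a descendant of $P$ is placed in the evaluation-context position, i.e.\ we reach a state of the form $\EC'[P]$ for some evaluation context $\EC'$ (possibly with several copies of $P$, handled one at a time, or by an induction on the number of copies / length of reduction). At that point one invokes $P \ecen Q$ with the evaluation context $\EC'$. Making this precise requires a careful lemma about cbn reduction — something like: for any $\EC$ and closed $P$, either $\EC[P]\diver$ for all $P$, or $\EC[P] \Ra \EC'[P]$ with $\EC'$ again an evaluation context and the number of "active" copies of $P$ controlled — essentially the observation that in cbn a closed term in argument position is inert until promoted to head position. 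I would prove this by induction on the length of the reduction $\EC[RP] \Ra \lambda x.S$, tracking the position of the (residuals of the) subterm $P$. This is the cbn analogue of Milner's original argument, and it is genuinely easier than the cbv case (Section~\ref{s:cbv}) precisely because arguments are not evaluated before substitution.

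For clause (iii), substitutivity, I would again argue by showing that if $P \ecen Q$ then for closed $R$, $\EC[P[R/x]]\conv \iff \EC[Q[R/x]]\conv$; since $x$ may occur in evaluation position inside $P$ this does not follow trivially, but one can reduce it to clause (ii) and the previous lemma by viewing $P[R/x]$ as built from $P$ with $R$ plugged in — or, more cleanly, first establish (ii) and (iii) simultaneously by a single induction, as is customary for context-lemma proofs. Once all closure properties are in hand, the structural induction on $C$ giving $C[M] \ecen C[N]$ is routine, and taking $\EC = [\cdot]$ finishes the proof. The one subtlety to be careful about is bookkeeping when $C$ has multiple holes and when substitutions must be applied on both the $M$-side and the $N$-side consistently; this is exactly the role played by Lemma~\ref{lemma:starsubst}-style reasoning elsewhere in the paper, and I would expect an analogous helper lemma here.
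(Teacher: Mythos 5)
The paper itself offers no proof of this theorem: it is imported from Milner~\cite{Milner1977} (see also Gordon~\cite{Gordon1994}), so the only in-paper point of comparison is the call-by-value proof of Section~\ref{s:context:lemma:cbv}. That proof hinges on Proposition~\ref{prop:jmd}, which tests a value with the evaluation context $(\lambda x.C[x])[\cdot]$; this shortcut is unavailable in call-by-name, because the cbn grammar $\EC := [\cdot] \mid \EC M$ never places the tested term in argument position. So you are right that a Milner-style residual-tracking induction is what is needed, and your overall strategy --- reduce everything to the application case and induct on the length of the converging reduction --- is the standard and correct one.

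The one step that would fail as described is the application case, which is the crux of the whole proof. When a residual of $P$ reaches head position you have a term of the shape $P\,A_1\fl{P}\cdots A_k\fl{P}$, whose arguments still contain residuals of $P$; the matching term on the other side is $Q\,A_1\fl{Q}\cdots A_k\fl{Q}$. Invoking $P \ecen Q$ ``with the evaluation context $\EC'$'' at that point, as you propose, only swaps the head occurrence while keeping the \emph{same} arguments on both sides, so it does not relate these two terms; and ``handling the copies one at a time'' has no decreasing measure, since the number of residuals can grow unboundedly in cbn. The missing idea is the \emph{order} of the two replacements: first run the head to a value, $P \Ra \lambda x.P_0$, and perform the head $\beta$-step, reaching $P_0[A_1\fl{P}/x]\,A_2\fl{P}\cdots A_k\fl{P}$, which is again of the form $D\fl{P}$ for a multi-holed context $D = P_0[A_1/x]A_2\cdots A_k$ with a strictly shorter converging reduction; the induction hypothesis then gives $D\fl{Q}\conv$, hence $P\,A_1\fl{Q}\cdots A_k\fl{Q}\conv$; only \emph{then} apply $P \ecen Q$ with the evaluation context $[\cdot]\,A_1\fl{Q}\cdots A_k\fl{Q}$ (identical on both sides) to conclude $Q\,A_1\fl{Q}\cdots A_k\fl{Q}\conv$. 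With that ordering in place, a single induction on reduction length over pairs $C\fl{M}$, $C\fl{N}$ goes through directly, and it also makes your congruence-style decomposition (i)--(iii) and the structural induction on $C$ unnecessary: the abstraction case is vacuous because $\lambda x.C'\fl{M}$ is already a value, so $\ecen$ never needs to be extended to open terms.
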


The following definition is based on \textcite[Lemma
5.8]{DalLago2014}:

\begin{definition}
  If $\R, \R'\ \subseteq \Ld\times\Ld $ are relations, then the
  \textbf{evaluation-contextual closure of $\R$ under $\R'$}, written $\eccn{\R}{\R'}$,
  is given by
  \[{\eccnnp{\R}{\R'}} = \{\, (E\wt{M}, F\wt{N}) \;|\; E\R F \text{
    and } \wt{M}\R'\wt{N} \,\}\] where $\wt{M}$ and $\wt{N}$ are
  potentially empty.
\end{definition}


We remark that for all $\R$ and $\R'$, ${\R} \subseteq
{\eccn{\R}{\R'}}$.

\begin{definition}
  If R is a coupled relation, then let its \textbf{contextual closure}
  $\R^C$ be given by
  \[\R^C = \left(\R_1^\star, {\eccn{\R_2}{\R_1^\star}} \cup
    {\R_1^\star} \right).\]
\end{definition}

\subsection{Coupled Logical Bisimulation}

The following definition is extracted from the corresponding notion in
\cite{DalLago2014}:
\begin{definition}\label{def:clb}
  A coupled relation $\R$ is a \textbf{coupled logical bisimulation}
  (CLB) if \arxiv{ whenever $M \R_2 N$, we have:}{$M \R_2 N$ implies:}
  \begin{enumerate}
  \item if $M \ra M'$, then there exists an $N'$ such that $N \Ra N'$
    and $M' \R_2 N'$;
  \item if $M = \lambda x.M'$, then $N \Ra \lambda x.N'$ and for all
    $P, Q \in \Ld$ such that $P \R_1^\star Q$, we have $M'
    [P/x]\R_2N'[Q/x]$;
  \item   and  the converses of the two previous conditions for $N$.
  \end{enumerate}
  Coupled logical bisimilarity, written $\clbn\ = (\clbn_1, \clbn_2)$,
  is the pairwise union of all CLBs.
\end{definition}


As one would hope, CLBs have a continuous progression:

\begin{definition}[CLB progressions, call-by-name case]
  \label{def:progresses}
  Given coupled relations $\R$ and $\SR$, we say \textbf{$\R$
    progresses to $\SR$}, written ${R} \prog {\SR}$, if whenever $M
  \R_2 N$, then:
  \begin{enumerate}
  \item whenever $M \ra M'$ then $N \Ra N'$ and $M' \SR_2 N'$;
  \item whenever $M = \lambda x.P$ then $N \Ra \lambda x.Q$ such that
    for all $X \R_1^\star Y$, $P[X/x] \SR_2 Q[Y/x]$;
  \item the converses of the previous two conditions for $N$.
  \end{enumerate}
\end{definition}

\begin{prop}
  \label{prop:clbnprog}
  For all coupled relations $\R$, ${\R} \prog {\R}$ if and only if
  $\R$ is a CLB. 
  Thus, the relation $\prog$ is a progression for CLBs in the universe
  of coupled relations.
\end{prop}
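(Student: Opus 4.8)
The plan is to unfold both definitions and observe that they say literally the same thing once we instantiate $\SR := \R$ in the progression. Recall that $\R \prog \SR$ (Definition~\ref{def:progresses}) requires, for every $M \R_2 N$: (1)~if $M \ra M'$ then $N \Ra N'$ with $M' \SR_2 N'$; (2)~if $M = \lambda x.P$ then $N \Ra \lambda x.Q$ with $P[X/x] \SR_2 Q[Y/x]$ for all $X \R_1^\star Y$; and (3)~the converses for $N$. On the other hand, $\R$ being a CLB (Definition~\ref{def:clb}) requires, for every $M \R_2 N$: (1)~if $M \ra M'$ then $N \Ra N'$ with $M' \R_2 N'$; (2)~if $M = \lambda x.M'$ then $N \Ra \lambda x.N'$ and $M'[P/x] \R_2 N'[Q/x]$ for all $P \R_1^\star Q$; and (3)~the converses. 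So the first claim, ``$\R \prog \R$ iff $\R$ is a CLB,'' is immediate by setting $\SR = \R$: the two statements are syntactically identical up to renaming of bound metavariables, and I would simply write this out.

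For the second sentence, I need to check that $\prog$ (as just defined, ranging over all pairs of coupled relations) is indeed a \emph{progression} for CLBs in the universe $\U$ of coupled relations, in the sense of the abstract Definition in Section~\ref{sec:genupto}: that is, $\R \prog \R$ must imply the existence of $\R' \in \F$ with $\R \subseteq \R'$, where $\F$ is the family of CLBs. But by the first part, $\R \prog \R$ gives that $\R$ itself is a CLB, so we may take $\R' = \R$, and trivially $\R \subseteq \R$. (One should note in passing that $\subseteq$ here is the pointwise inclusion on coupled relations defined earlier, and reflexivity of $\subseteq$ is immediate.) Hence $\prog$ satisfies the defining condition of a progression.

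I do not anticipate any real obstacle: the proposition is essentially a sanity-check that the concrete definition of progression for CLBs is compatible with the abstract framework of Section~\ref{sec:genupto}. The only mild care needed is to make explicit that when we restrict $\prog$ to the ``diagonal'' by taking $\SR = \R$, the condition for a coupled relation to be a CLB is recovered verbatim — in particular that clause~(2) matches because in both definitions the quantification over the argument pair $(P,Q)$, resp.\ $(X,Y)$, ranges over $\R_1^\star$, and the resulting bodies must lie in the \emph{second} component $\R_2$, resp.\ $\SR_2$. So the proof will be short: prove the biconditional by inspection, then derive the ``progression'' claim as an immediate consequence together with reflexivity of pointwise inclusion.
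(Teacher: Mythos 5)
Your proposal is correct and is exactly the argument the paper intends: the paper states Proposition~\ref{prop:clbnprog} without proof, treating the biconditional as immediate from instantiating $\SR := \R$ in Definition~\ref{def:progresses} and comparing with Definition~\ref{def:clb}, with the progression property following by taking $\R' = \R$. Your added care about quantifying only over coupled relations (so that the coupledness requirement in the definition of CLB is automatic) and about pointwise inclusion is exactly the right thing to make explicit.
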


\begin{prop}
  \label{prop:clbnsympa}
  The relation $\prog$ is continuous.
\end{prop}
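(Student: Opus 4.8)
The plan is to unfold the definition of continuity and check directly that the relevant bisimulation clauses survive taking countable unions of ascending chains. So suppose $\R_0 \subseteq \R_1 \subseteq \cdots$ and $\SR_0 \subseteq \SR_1 \subseteq \cdots$ are ascending chains of coupled relations with $\R_i \prog \SR_i$ for all $i$, and set $\nu\R = \bigcup_i \R_i$, $\nu\SR = \bigcup_i \SR_i$ (with the analogous pointwise unions on the two components, noting that the union of coupled relations is again coupled, since $\subseteq$ is preserved by unions). I want to show $\nu\R \prog \nu\SR$. First I would record the following two monotonicity-flavoured observations. (i) Since $\bigcup_i (\R_i)_1 = (\nu\R)_1$, and the star-closure $(-)^\star$ commutes with directed unions — because a context $C$ has only finitely many holes, so any pair $(C[\wt M], C[\wt N]) \in (\nu\R)_1^\star$ already lies in $(\R_i)_1^\star$ for $i$ large enough that all finitely many witnessing pairs $\wt M\, (\R_i)_1\, \wt N$ appear — we have $(\nu\R)_1^\star = \bigcup_i (\R_i)_1^\star$, again an ascending chain. (ii) Conversely, for each fixed $i$, $(\R_i)_1^\star \subseteq (\nu\R)_1^\star$, so any pair related by the ``small'' star-closure at stage $i$ is related by the limit star-closure.

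Now I would verify the three progression clauses of Definition~\ref{def:progresses} for $\nu\R \prog \nu\SR$. Take $M \,(\nu\R)_2\, N$; then $M \,(\R_i)_2\, N$ for some fixed $i$, and we use $\R_i \prog \SR_i$. For clause (1): if $M \ra M'$, then there is $N'$ with $N \Ra N'$ and $M' \,(\SR_i)_2\, N' \subseteq (\nu\SR)_2\, N'$, done. For clause (2): if $M = \lambda x.P$, then $N \Ra \lambda x.Q$ and for all $X \,(\R_i)_1^\star\, Y$ we get $P[X/x] \,(\SR_i)_2\, Q[Y/x]$. But I need this for all $X \,(\nu\R)_1^\star\, Y$, which is a \emph{larger} set of pairs — this is the one place where the argument is not a pure ``find the right $i$''. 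Here observation (i) is exactly what saves us: if $X \,(\nu\R)_1^\star\, Y$ then $X \,(\R_j)_1^\star\, Y$ for some $j$, and taking $k = \max(i,j)$ we have both $M \,(\R_k)_2\, N$ (chain is ascending) and $X \,(\R_k)_1^\star\, Y$, so applying $\R_k \prog \SR_k$ gives $P[X/x] \,(\SR_k)_2\, Q[Y/x] \subseteq (\nu\SR)_2$. Clause (3), the converses for $N$, is handled symmetrically. This establishes $\nu\R \prog \nu\SR$ and hence continuity.

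The main obstacle — and really the only non-formal point — is precisely the interaction in clause (2) between taking the union of the chain and the star-closure on the first component: the hypothesis quantifies over $X \,\R_1^\star\, Y$, and at the limit this quantifier ranges over strictly more pairs than at any finite stage, so one cannot simply ``pick the $i$ that works.'' The resolution is the finiteness-of-holes argument showing $(\nu\R)_1^\star = \bigcup_i (\R_i)_1^\star$ together with the diagonal choice $k=\max(i,j)$; once that is in hand the rest is bookkeeping. (If the paper prefers, this can also be phrased abstractly: $\prog$ is continuous because the right-hand side of each clause is built from $\SR_2$ by monotone operations that preserve directed unions, while the left-hand side's universal quantifier over $\R_1^\star$-pairs is ``Scott-continuous'' in $\R$ for the same finiteness reason.)
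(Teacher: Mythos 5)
Your proof is correct and follows essentially the same route as the paper's: the paper likewise picks, for $M \mathrel{(\nu\R)_2} N$ and $X \mathrel{(\nu\R)_1^\star} Y$, a single stage $n$ large enough that both $M \mathrel{(\R_n)_2} N$ and the finitely many witnessing pairs $\wt{M} \mathrel{(\R_n)_1} \wt{N}$ hold, exactly your finiteness-of-holes plus $\max(i,j)$ argument. The only difference is cosmetic (you isolate the identity $(\nu\R)_1^\star = \bigcup_i (\R_i)_1^\star$ as a separate observation rather than inlining it), so no further comment is needed.
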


\begin{proof}
  \newcommand{\nr}{\mathrel{(\nu{\R})}}
  \newcommand{\ns}{\mathrel{(\nu{\SR})}}
  \newcommand{\rn}{\mathrel{({\R}_n)}}
  \newcommand{\sn}{\mathrel{({\SR}_n)}}
  \arxiv{
    Assume ${\R_0} \subseteq {\R_1} \subseteq \cdots$ and ${\SR_0}
    \subseteq {\SR_1} \subseteq \cdots$ are two ascending chains of
    relations such that ${\R_i} \prog {\SR_i}$ for all $i$, and assume
    $M \nr_2 N$ and $X \nr_1^\star Y$ for arbitrary $M, N, X, Y$.
    Then $X = C[\wt{M}]$ and $Y = C[\wt{N}]$ for some $\wt{M} \nr_1
    \wt{N}$.
    Moreover, there exists a $K$ such that for all $n > K$, $M \rn_2 N$
    and $\wt{M} \rn_1 \wt{N}$; fix any such $n > K$.

    If $M \ra M'$, then, since ${\R_n} \prog {\SR_n}$, $N \Ra N'$ such
    that $M' \sn_2 N'$.
    Since ${\SR_n} \subseteq \nu{\SR}$, we then get that $M' \ns_2 N'$
    and we are done.

    If $M = \lambda x.M'$, then, since ${\R_n} \prog {\SR_n}$, $N \Ra
    \lambda x.N'$ such that $M'[X/x] \sn_2 N'[Y/x]$ for all $X
    \rn_1^\star Y$.
    Since ${\SR_n} \subseteq \nu{\SR}$, we then get that $M'[X/x]
    \ns_2 N'[Y/x]$ and we are done.

    The symmetric cases for $N$ follow symmetrically.
  }{
    Assume ${\R_0} \subseteq {\R_1} \subseteq \cdots$ and ${\SR_0}
    \subseteq {\SR_1} \subseteq \cdots$ are two ascending chains of
    relations such that ${\R_i} \prog {\SR_i}$ for all $i$, and assume
    $M \nr_2 N$ for arbitrary $M, N$.
    We establish the result by case analysis on whether $M$ and $N$
    are values or can reduce further.
  }
\end{proof}

In the notation of Section \ref{sec:genupto}, the family of desired
relations $\F$ is the set of all CLBs, and the universe $\U$ of
relations we are working in is the set of all coupled relations.
Unfortunately, we cannot extend our notion of progression to all
paired relations since it is not the case that for all paired relations $\R$,
${\R} \prog {\R}$ only if there exists a CLB $\R'$ such that ${\R}
\subseteq {\R'}$.
To see why, consider the paired relation ${\R} = \left( \left\{ \left(
      I, \Omega \right) \right\}, \left\{ \left( \Omega, \Omega
    \right) \right\} \right)$, where $\Omega = (\lambda x.xx)(\lambda
x.xx)$ \arxiv{is the standard divergent term}{} and $I = \lambda x.x$\arxiv{ the
identity.}{.}
It is easy to see that ${\R} \prog {\R}$, so assume there existed a
CLB $\R'$ such that ${\R} \subseteq {\R'}$.
Since every CLB is a coupled relation, 
we would have $(I,
\Omega) \in {\R_1} \subseteq {\R'_1} \subseteq {\R'_2}$.
\arxiv{Since $I = \lambda x.x$, by}{By} the second clause of the definition of CLB, $\Omega$ would converge.

We further develop properties of progression that will be useful in
showing the soundness of various up-to techniques.

\begin{prop}
  \label{prop:clbnprogcuiuu}
  \newcommand{\Ri}{\mathrel{({\R_i})}}
  \newcommand{\SRi}{\mathrel{({\SR_i})}}

  If $\left\{{\R_i}\right\}_{i\in I}$ and
  $\left\{{\SR_i}\right\}_{i\in I}$ are two families of paired
  relations such that ${\R_i} \prog {\SR_i}$ for all $i \in I$, then
  $\left(\bigcap_{i\in I} {\Ri_1}, \bigcup_{i\in I} {\Ri_2}\right)
  \prog \bigcup_{i\in I} {\SR_i}$.
\end{prop}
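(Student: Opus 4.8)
The plan is to unwind the definition of progression (Definition~\ref{def:progresses}) directly, using only that its clauses never mention the coupledness condition ${\R_1}\subseteq{\R_2}$ and hence make sense on arbitrary paired relations. Write ${\R} := \left(\bigcap_{i\in I} {(\R_i)_1},\ \bigcup_{i\in I} {(\R_i)_2}\right)$ for the paired relation on the left and ${\SR} := \bigcup_{i\in I} \SR_i = \left(\bigcup_{i\in I} {(\SR_i)_1},\ \bigcup_{i\in I} {(\SR_i)_2}\right)$ for the one on the right. To show ${\R} \prog {\SR}$, suppose $M \R_2 N$; by the pointwise reading of unions this says $M \mathrel{(\R_j)_2} N$ for some fixed $j \in I$. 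Since ${\R_j} \prog {\SR_j}$, the pair $(M,N)$ already satisfies all three clauses of Definition~\ref{def:progresses} with ${\SR_j}$ in place of ${\SR}$ and ${(\R_j)_1}$ in place of ${\R_1}$, and it remains only to transport these conclusions to ${\R}$ and ${\SR}$.

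First I would dispatch the reduction clause: if $M \ra M'$, then ${\R_j} \prog {\SR_j}$ provides $N'$ with $N \Ra N'$ and $M' \mathrel{(\SR_j)_2} N'$, and since ${(\SR_j)_2} \subseteq \bigcup_{i\in I}{(\SR_i)_2} = {\SR_2}$ we conclude $M' \SR_2 N'$. The converse clause, handling $N \ra N'$, is identical.

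The abstraction clause is the only place the intersection on the left matters. If $M = \lambda x.P$, then ${\R_j} \prog {\SR_j}$ gives $N \Ra \lambda x.Q$ such that $P[X/x] \mathrel{(\SR_j)_2} Q[Y/x]$ for every $X \mathrel{(\R_j)_1^\star} Y$. Here I would invoke monotonicity of the closed contextual closure: from ${\R_1} = \bigcap_{i\in I}{(\R_i)_1} \subseteq {(\R_j)_1}$ it follows immediately from the definition of $(-)^\star$ that ${\R_1^\star} \subseteq {(\R_j)_1^\star}$ (every pair obtained by plugging an ${\R_1}$-related vector into a context is also obtained by plugging a ${(\R_j)_1}$-related vector). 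Hence every pair $X \R_1^\star Y$ is in particular $(\R_j)_1^\star$-related, so $P[X/x] \mathrel{(\SR_j)_2} Q[Y/x]$, and ${(\SR_j)_2} \subseteq {\SR_2}$ again finishes the case; the converse clause for $N$ is symmetric.

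I do not expect any genuine obstacle: the argument is bookkeeping, and the single conceptual point is the variance of the two components of a coupled relation in the progression clauses. The first component appears only in the \emph{hypothesis} $X \R_1^\star Y$ of the abstraction clause and is therefore contravariant, which forces the intersection; the second component appears only in the \emph{conclusions} and is covariant, which is why the union is the right choice on that side. The only small gap to record along the way is the monotonicity of $(-)^\star$, which the excerpt does not state explicitly but which is immediate from the definition of the contextual closure.
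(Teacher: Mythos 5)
Your proof is correct and follows essentially the same route as the paper's: pick the index $j$ witnessing $M \mathrel{(\R_j)_2} N$, use monotonicity of contextual closure to pass from $\bigcap_i (\R_i)_1$ to $(\R_j)_1$ in the hypothesis of the abstraction clause, and conclude by inclusion of $(\SR_j)_2$ into the union. Your closing remark on the contravariance of the first component versus the covariance of the second is a nice articulation of why the intersection/union asymmetry is forced, but the argument itself matches the paper's.
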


\begin{proof}
  \renewcommand{\SS}{\mathrel{\left(\bigcup_{i\in I} {\SR_i}\right)}}
  \newcommand{\UU}{\mathrel{\U}}
  \newcommand{\Ri}{\mathrel{({\R_i})}}
  \newcommand{\SRi}{\mathrel{({\SR_i})}}

  Let ${\UU} = (\bigcap_{i\in I} \Ri_1, \bigcup_{i\in I} \Ri_2)$, and
  assume $M \UU_2 N$.
  Then there exists some $\R_i$ such that $M \Ri_2 N$.

  If $M \ra M'$, then $N \Ra N'$ such that $M' \SRi_2 N'$, and by
  inclusion, we get that $M' \SS_2 N'$.

  If $M = \lambda x.M'$ and $X \UU_1^\star Y$, then since ${\UU_1}
  \subseteq {\Ri_1}$, we have $X \Ri_1^\star Y$ by the monotonicity of
  contextual closure.
  Then since ${\R_i} \prog {\SR_i}$, $N \Ra \lambda x.N'$ and $M'[X/x]
  \SRi_2 N'[Y/x]$, and by inclusion, we get that $M'[X/x] \SS_2
  N'[Y/x]$.

  The symmetric cases for $N$ follow symmetrically, and we're done.
\end{proof}

\begin{prop}
  Progression $\prog$ is closed under left intersection and right
  union.
\end{prop}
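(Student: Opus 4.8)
The plan is to obtain this as a direct corollary of Proposition~\ref{prop:clbnprogcuiuu}. The key observation I would record first is that the first component ${\SR_1}$ of the \emph{target} of a progression plays no role whatsoever in Definition~\ref{def:progresses}: whether ${\R} \prog {\SR}$ holds depends only on ${\R_2}$ (through the hypothesis ``$M \R_2 N$''), on ${\R_1}$ (through the quantification ``$X \R_1^\star Y$'' in clause~2), and on ${\SR_2}$ (through the conclusions $M' \SR_2 N'$, $P[X/x]\SR_2 Q[Y/x]$, and their converses); it never mentions ${\SR_1}$. Since contextual closure $(-)^\star$ is monotone, this yields the monotonicity principle that ${\R'} \prog {\SR'}$ holds whenever ${\R} \prog {\SR}$, ${\R'_1}\subseteq{\R_1}$, ${\R'_2}\subseteq{\R_2}$, and ${\SR_2}\subseteq{\SR'_2}$, with the target's first component entirely unconstrained.

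Next I would invoke Proposition~\ref{prop:clbnprogcuiuu}: from ${\R_i} \prog {\SR_i}$ for all $i\in I$ (with $I\neq\emptyset$) it gives $\bigl(\bigcap_{i\in I} (\R_i)_1,\ \bigcup_{i\in I}(\R_i)_2\bigr) \prog \bigcup_{i\in I}{\SR_i}$, and by the pointwise convention the right-hand side is $\bigl(\bigcup_{i\in I}(\SR_i)_1,\ \bigcup_{i\in I}(\SR_i)_2\bigr)$. Applying the monotonicity principle above, I would shrink the target's first component from $\bigcup_{i\in I}(\SR_i)_1$ down to $\bigcap_{i\in I}(\SR_i)_1$, obtaining $\bigl(\bigcap_{i\in I}(\R_i)_1,\ \bigcup_{i\in I}(\R_i)_2\bigr)\prog\bigl(\bigcap_{i\in I}(\SR_i)_1,\ \bigcup_{i\in I}(\SR_i)_2\bigr)$. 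Specialising to $I=\{1,2\}$: if ${\R}\prog{\SR}$ and ${\R'}\prog{\SR'}$, then $({\R_1}\cap{\R'_1},\ {\R_2}\cup{\R'_2})\prog({\SR_1}\cap{\SR'_1},\ {\SR_2}\cup{\SR'_2})$, which is exactly closure of $\prog$ under (pointwise) left intersection and right union.

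I do not expect a genuine obstacle here: the mathematical content is carried entirely by Proposition~\ref{prop:clbnprogcuiuu} together with the syntactic fact that $\prog$ ignores ${\SR_1}$. The only point needing a moment of care is well-formedness of the pairs produced. If the ambient universe is required to consist of coupled relations, one must check $\bigcap_{i\in I}(\SR_i)_1\subseteq\bigcup_{i\in I}(\SR_i)_2$; this follows by fixing any $j\in I$ and chaining $\bigcap_{i\in I}(\SR_i)_1\subseteq(\SR_j)_1\subseteq(\SR_j)_2\subseteq\bigcup_{i\in I}(\SR_i)_2$, using that each $\SR_i$ is coupled and $I$ is nonempty (the same proviso already underlies Proposition~\ref{prop:clbnprogcuiuu}).
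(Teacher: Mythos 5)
Your proposal is correct and matches the paper's intent exactly: the proposition is stated there without proof as an immediate consequence of the preceding Proposition~\ref{prop:clbnprogcuiuu}, combined with precisely your observation that Definition~\ref{def:progresses} never mentions the first component of the target, so that component may be shrunk freely (your monotonicity principle and the coupledness check are both sound). The only caveat worth recording is that this ``${\SR_1}$ is irrelevant'' argument is specific to the call-by-name progression; it would not transfer verbatim to Definition~\ref{def:progressescbv}, where clause~2 does constrain ${\SR_1}$.
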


\begin{prop}
  \label{prop:r1starcompat}
  For all $\R,\SR$ 
  such that ${\R_1^\star} \subseteq {\R_2}$ and
  ${\SR_1^\star} \subseteq {\SR_2}$, if ${\R} \prog {\SR}$ then
  $({\R_1^\star}, {\R_2}) \prog ({\SR_1^\star}, {\SR_2})$.
\end{prop}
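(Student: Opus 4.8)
The plan is to unfold the definition of progression (Definition \ref{def:progresses}) for the coupled relation $({\R_1^\star}, {\R_2})$ and show it progresses to $({\SR_1^\star}, {\SR_2})$, using the hypothesis ${\R} \prog {\SR}$ as the engine. The only thing we must check is the behaviour of pairs in the \emph{second} component, namely $M \R_2 N$ (here the second component of $({\R_1^\star},{\R_2})$ is literally $\R_2$, so this is unchanged from $\R$). The subtlety is that, whereas ${\R} \prog {\SR}$ lets us use arguments $X \R_1^\star Y$ in the abstraction clause, the progression ${(\R_1^\star,\R_2)} \prog {(\SR_1^\star,\SR_2)}$ only supplies us arguments $X \mathrel{(\R_1^\star)^\star} Y$. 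So the heart of the matter is the idempotence fact $(\R_1^\star)^\star = \R_1^\star$ for relations on closed terms: the closed contextual closure of an already contextually-closed relation adds nothing new. Granting that, any $X \mathrel{(\R_1^\star)^\star} Y$ is already $X \R_1^\star Y$, and we can feed it directly into the clauses guaranteed by ${\R} \prog {\SR}$.

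Concretely, I would argue as follows. Suppose $M \R_2 N$. If $M \ra M'$, then since ${\R} \prog {\SR}$ there is $N'$ with $N \Ra N'$ and $M' \SR_2 N'$, which is exactly what $({\SR_1^\star},{\SR_2})$ requires in its second component. If $M = \lambda x.P$, then since ${\R} \prog {\SR}$ we get $N \Ra \lambda x.Q$ with $P[X/x] \SR_2 Q[Y/x]$ for all $X \R_1^\star Y$; now given any $X \mathrel{(\R_1^\star)^\star} Y$, idempotence gives $X \R_1^\star Y$, so $P[X/x] \SR_2 Q[Y/x]$ as needed. The converse clauses for $N$ are symmetric. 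This establishes $({\R_1^\star},{\R_2}) \prog ({\SR_1^\star},{\SR_2})$. The hypotheses ${\R_1^\star} \subseteq {\R_2}$ and ${\SR_1^\star}\subseteq{\SR_2}$ are needed precisely to ensure $({\R_1^\star},{\R_2})$ and $({\SR_1^\star},{\SR_2})$ are \emph{coupled} relations, i.e., legitimate objects of the universe $\U$, so that the statement even typechecks.

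The main obstacle is the idempotence claim $(\R_1^\star)^\star = \R_1^\star$. The inclusion $\R_1^\star \subseteq (\R_1^\star)^\star$ is immediate since $\vartriangle = \star$ closure is extensive (a single hole). For the reverse, if $X \mathrel{(\R_1^\star)^\star} Y$ then $X = C[\wt{M}]$, $Y = C[\wt{N}]$ with $\wt{M} \R_1^\star \wt{N}$, and each $M_i = D_i[\wt{P_i}]$, $N_i = D_i[\wt{Q_i}]$ with $\wt{P_i} \R_1 \wt{Q_i}$; substituting the $D_i$ into the holes of $C$ produces a single context $C'$ with $X = C'[\wt{P}]$, $Y = C'[\wt{Q}]$ and $\wt{P} \R_1 \wt{Q}$, where $\wt{P},\wt{Q}$ are the concatenations — and both $X,Y$ are closed, so $X \R_1^\star Y$. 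This is a routine context-composition argument of exactly the kind already carried out in the proof of Lemma \ref{lemma:starsubst}; I would either cite that style or simply invoke it as a standard fact. Everything else in the proof is a direct transcription of the progression clauses and needs no further work.
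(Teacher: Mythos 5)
Your proposal is correct and takes exactly the approach the paper does: the paper's entire proof is ``Immediate by the idempotence of contextual closure,'' and your write-up is just that argument carried out in full (including the routine context-composition verification of $(\R_1^\star)^\star = \R_1^\star$ and the observation that the side conditions ensure the pairs are genuinely coupled relations). Nothing to add.
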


\arxiv{
\begin{proof}
  Immediate by the idempotence of contextual closure.
\end{proof}
}{}

\begin{corollary}
  \label{cor:clbstar}
  If $\R$ is a CLB such that ${\R_1^\star} \subseteq {\R_2}$, then
  $(\R_1^\star, \R_2)$ is also a CLB.
\end{corollary}

\begin{proof}
  By Proposition \ref{prop:clbnprog}, ${\R} \prog {\R}$, so by
  Proposition \ref{prop:r1starcompat}, $({\R_1^\star}, {\R_2}) \prog
  ({\R_1^\star}, {\R_2})$ and $({\R_1^\star}, {\R_2})$ is a coupled
  relation.
  Then again, by Proposition \ref{prop:clbnprog}, $({\R_1^\star},
  {\R_2})$ is a CLB.
\end{proof}

\begin{lemma}
  \label{lemma:clbsubs}
  If $(\R_1, \R_2)$ is a CLB, then so is $(\R_1', \R_2)$ for all
  $\R_1'\ \subseteq\ \R_1$.
\end{lemma}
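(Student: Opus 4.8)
The plan is to verify the three clauses of Definition~\ref{def:clb} for $(\R_1', \R_2)$ directly, exploiting the fact that shrinking the first component of a coupled relation can only make the ``logical'' clause easier to satisfy.

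First I would observe that $(\R_1', \R_2)$ is still a coupled relation: by hypothesis $\R_1' \subseteq \R_1$, and $\R_1 \subseteq \R_2$ because $(\R_1, \R_2)$ is coupled, so $\R_1' \subseteq \R_2$. Next, fix $M \R_2 N$. The first clause of Definition~\ref{def:clb}, and the reduction half of the third clause, mention only $\R_2$, which is unchanged; hence they hold verbatim since $(\R_1, \R_2)$ is a CLB. For the second clause, suppose $M = \lambda x.M'$; since $(\R_1, \R_2)$ is a CLB there is $N'$ with $N \Ra \lambda x.N'$ such that $M'[P/x] \R_2 N'[Q/x]$ for every $P \R_1^\star Q$. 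By monotonicity of contextual closure, $(\R_1')^\star \subseteq \R_1^\star$, so every pair $P \R_1'^{\star} Q$ is in particular a pair $P \R_1^\star Q$, and therefore $M'[P/x] \R_2 N'[Q/x]$. Thus the same $N'$ witnesses the second clause for $(\R_1', \R_2)$, and the converse clause for $N$ follows symmetrically. Equivalently, one can phrase this computation using Definition~\ref{def:progresses}: it shows $(\R_1', \R_2) \prog (\R_1', \R_2)$, whence $(\R_1', \R_2)$ is a CLB by Proposition~\ref{prop:clbnprog}.

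There is essentially no obstacle here; the lemma is an immediate consequence of the monotonicity of $(-)^\star$. The only point that requires a word of care is that we restrict to $\R_1' \subseteq \R_1$ rather than to arbitrary sub-relations of $\R_2$: this is exactly what is needed to keep the coupling condition $\R_1' \subseteq \R_2$ intact, so that $(\R_1', \R_2)$ qualifies as a coupled relation in the first place.
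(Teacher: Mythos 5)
Your proof is correct and matches the paper's argument, which is exactly ``check the definition and use the monotonicity of contextual closure'': shrinking $\R_1$ only weakens the universally quantified hypothesis $P \R_1^\star Q$ in clause~2, while the coupling condition survives via $\R_1' \subseteq \R_1 \subseteq \R_2$. Nothing to add.
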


\arxiv{
\begin{proof}
  Check the definition and use the monotonicity of contextual
  closure.
\end{proof}
}{}

\begin{corollary}
  If $(\R_1, \R_2)$ is a CLB, then so is $(\emptyset, \R_2)$.
\end{corollary}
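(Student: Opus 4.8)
The plan is to obtain this immediately as an instance of Lemma~\ref{lemma:clbsubs}. That lemma states that if $(\R_1,\R_2)$ is a CLB, then so is $(\R_1',\R_2)$ for every $\R_1' \subseteq \R_1$. Since $\emptyset \subseteq \R_1$ for any relation $\R_1$, applying the lemma with $\R_1' = \emptyset$ yields that $(\emptyset,\R_2)$ is a CLB. I would also note in passing that $(\emptyset,\R_2)$ is indeed a coupled relation, since $\emptyset \subseteq \R_2$ trivially holds, so the hypothesis of Lemma~\ref{lemma:clbsubs} (which already presupposes $(\R_1,\R_2)$ is a coupled relation, hence a well-formed CLB) is met and the conclusion is a bona fide CLB.

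There is essentially no obstacle here: the only content is the trivial set-theoretic observation $\emptyset \subseteq \R_1$, and everything else is inherited from the previous lemma. For completeness I would spell out why Lemma~\ref{lemma:clbsubs} applies at all — namely that the defining clauses of a CLB for $(\emptyset,\R_2)$ quantify over pairs in $\emptyset^\star = \Id_{\Ld\times\Ld}$ when checking the abstraction clause, which is a smaller demand than the one already satisfied by $(\R_1,\R_2)$, since $\emptyset^\star = \Id_{\Ld\times\Ld} \subseteq \R_1^\star$ whenever $\R_1 \supseteq \emptyset$ — but this is exactly the monotonicity-of-contextual-closure argument used to prove Lemma~\ref{lemma:clbsubs}, so it need not be repeated.

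Concretely, the proof is a single line:

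\begin{proof}
  Since $\emptyset \subseteq \R_1$, this is immediate from Lemma~\ref{lemma:clbsubs}.
\end{proof}
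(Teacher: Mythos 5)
Your proof is correct and is exactly the argument the paper intends: the corollary is stated as an immediate instance of Lemma~\ref{lemma:clbsubs} with $\R_1' = \emptyset$, using only $\emptyset \subseteq \R_1$. Your side remarks (that $(\emptyset,\R_2)$ is a coupled relation and that $\emptyset^\star = \Id \subseteq \R_1^\star$ by monotonicity of contextual closure) are accurate but already subsumed by the lemma.
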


\begin{lemma}
  Let $\mathcal{A}, \mathcal{B} \subseteq\ \R_2$ such that
  $(\mathcal{A}, \R_2)$ and $(\mathcal{B}, \R_2)$ are \arxiv{both}{} CLBs.
  Then $(\mathcal{A}\cup\mathcal{B},\R_2)$ is \arxiv{also}{}
  a CLB.
\end{lemma}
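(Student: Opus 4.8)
The plan is to verify Definition~\ref{def:clb} for $(\mathcal{A}\cup\mathcal{B},\R_2)$ directly (equivalently, one checks $(\mathcal{A}\cup\mathcal{B},\R_2)\prog(\mathcal{A}\cup\mathcal{B},\R_2)$ and invokes Proposition~\ref{prop:clbnprog}). Since $\mathcal{A},\mathcal{B}\subseteq\R_2$ we have $\mathcal{A}\cup\mathcal{B}\subseteq\R_2$, so it is a coupled relation, and it remains to check the three clauses for an arbitrary $M\R_2 N$. Clauses~(1) and~(3), the reduction-matching conditions, refer only to the second component $\R_2$ and not to the first, so they hold because $(\mathcal{A},\R_2)$ is already a CLB; the only real content is clause~(2).

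So the key step is clause~(2): suppose $M=\lambda x.M'$. From $(\mathcal{A},\R_2)$ being a CLB we get $N\Ra\lambda x.N'$, and since call-by-name reduction is deterministic this $N'$ is, up to $\alpha$-conversion, the unique value-form of $N$; hence the analogous consequence of $(\mathcal{B},\R_2)$ being a CLB refers to the \emph{same} $N'$. Consequently we already know $M'[P/x]\R_2 N'[Q/x]$ whenever $P\mathrel{\mathcal{A}^\star}Q$ or $P\mathrel{\mathcal{B}^\star}Q$, and what must be shown is $M'[P/x]\R_2 N'[Q/x]$ for an arbitrary $P\mathrel{(\mathcal{A}\cup\mathcal{B})^\star}Q$. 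Writing $P=C[\wt P]$ and $Q=C[\wt Q]$ with each $(P_i,Q_i)\in\mathcal{A}\cup\mathcal{B}$, I would induct on the number of holes of $C$ filled by a pair from $\mathcal{B}\setminus\mathcal{A}$: picking one such hole $j$ and changing only its filling from $P_j$ to $Q_j$ yields a term $Z$ with $P\mathrel{\mathcal{B}^\star}Z$ (so $M'[P/x]\R_2 N'[Z/x]$) and $Z\mathrel{(\mathcal{A}\cup\mathcal{B})^\star}Q$ via a context with strictly fewer such holes (so $M'[Z/x]\R_2 N'[Q/x]$ by the induction hypothesis).

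The hard part will be ``stitching'' these two facts into $M'[P/x]\R_2 N'[Q/x]$. This is \emph{not} a transitivity step: the second component of a CLB need not be transitive, and the middle terms $N'[Z/x]$ and $M'[Z/x]$ are genuinely different. To close the induction I expect one must use more of the hypothesis that $(\mathcal{A},\R_2)$ and $(\mathcal{B},\R_2)$ are full CLBs---in particular, via the preceding corollary that $(\emptyset,\R_2)$ is a CLB, that $\R_2$ relates abstractions applied to \emph{identical} arguments, and that every pair already in $\R_2$ satisfies the clause~(2) requirement for both $\mathcal{A}^\star$ and $\mathcal{B}^\star$. I expect this to be the only delicate point; everything else is routine bookkeeping with contextual closures, for which Lemmas~\ref{lemma:starsubst} and~\ref{lemma:clbsubs} are the relevant tools.
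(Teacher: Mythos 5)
Your diagnosis of where the difficulty lies is accurate, and in fact sharper than the paper's own one-line proof (``straightforward check of the definition''): clauses (1) and (3) mention only $\R_2$ and so transfer verbatim from either hypothesis; by determinism of call-by-name reduction the abstraction $\lambda x.N'$ reached from $N$ is the same for both hypotheses; and the entire content of the lemma is clause (2) for pairs $P \mathrel{(\mathcal{A}\cup\mathcal{B})^\star} Q$ whose context has some holes filled from $\mathcal{A}$ and others from $\mathcal{B}$, since $(\mathcal{A}\cup\mathcal{B})^\star$ strictly contains $\mathcal{A}^\star\cup\mathcal{B}^\star$ for multi-hole contexts. (That this case is not free is also signalled by the paper itself: $\prog$ is closed under \emph{left intersection}, not left union.) Your auxiliary claims are also correct: $P \mathrel{\mathcal{B}^\star} Z$ and $Z \mathrel{(\mathcal{A}\cup\mathcal{B})^\star} Q$ with one fewer $\mathcal{B}$-hole do hold for the $Z$ you construct.

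The proposal is nevertheless incomplete, and it stops precisely at the only step that carries any content. The induction yields $M'[P/x]\R_2 N'[Z/x]$ and $M'[Z/x]\R_2 N'[Q/x]$, and there is no way to conclude $M'[P/x]\R_2 N'[Q/x]$ from these: $\R_2$ is an arbitrary relation with no transitivity assumption, and even granting transitivity the two facts do not share an endpoint, since the intermediate terms $N'[Z/x]$ and $M'[Z/x]$ are different. The extra ingredients you propose do not obviously repair this: that $(\emptyset,\R_2)$ is a CLB only gives clause (2) for arguments related by $\Id=\emptyset^\star$, i.e.\ pairs of the form $M'[P/x]\R_2 N'[P/x]$, which again fail to chain with the facts above; and ``every pair already in $\R_2$ satisfies clause (2) for both $\mathcal{A}^\star$ and $\mathcal{B}^\star$'' is exactly the hypothesis you have already exhausted. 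So the decisive step is an unproved expectation, and I do not see how to discharge it along the route chosen. You should either find a genuinely different argument for the mixed-hole case or flag that the lemma's ``straightforward check'' is itself only straightforward when one (incorrectly, for multi-hole contexts) identifies $(\mathcal{A}\cup\mathcal{B})^\star$ with $\mathcal{A}^\star\cup\mathcal{B}^\star$; as it stands, your write-up establishes the lemma only for pairs in $\mathcal{A}^\star\cup\mathcal{B}^\star$.
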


\arxiv{
\begin{proof}
  Follows from a straightforward check of the definition.
\end{proof}
}{}

\begin{lemma}
  \label{lemma:bigsteps}
  If $\R$ is a CLB, $M\R_2N$ and $M \Ra M'$, then there exists $N'
  \in \Ld$ such that $N \Ra N'$ and $M' \R_2 N'$.
\end{lemma}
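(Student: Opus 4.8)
The plan is to proceed by induction on the length $k$ of the reduction sequence $M \Ra M'$ (i.e., the number of single-step reductions witnessing $M \Ra M'$).

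For the base case $k = 0$, we have $M = M'$, so we simply take $N' = N$; then $N \Ra N'$ trivially and $M' = M \R_2 N = N'$ by hypothesis. For the inductive step, suppose $M \Ra M'$ in $k+1$ steps, so that $M \ra M''$ and $M'' \Ra M'$ in $k$ steps. Applying clause~1 of Definition~\ref{def:clb} to $M \R_2 N$ and $M \ra M''$, we obtain $N''$ with $N \Ra N''$ and $M'' \R_2 N''$. The induction hypothesis, applied to the CLB $\R$, the pair $M'' \R_2 N''$, and the $k$-step reduction $M'' \Ra M'$, yields $N'$ with $N'' \Ra N'$ and $M' \R_2 N'$. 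Finally, since $\Ra$ is transitive, $N \Ra N'' \Ra N'$ gives $N \Ra N'$, which completes the step.

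There is essentially no obstacle here: the only point requiring the slightest care is making sure the induction is set up on the length of the derivation rather than on $\Ra$ directly, since $\Ra$ as a relation has no built-in notion of "smaller"; and observing that clause~1 of the definition gives us exactly a single-step-to-many-step matching, which is what the induction needs to iterate. One could alternatively phrase this as a direct consequence of the fact that the relation $\{(M',N') \mid \exists M,N.\ M\R_2 N \land M\Ra M' \land N\Ra N' \land M'\R_2 N'\}$ is stable under clause~1, but the bare induction on derivation length is the cleanest route.
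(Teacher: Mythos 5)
Your proof is correct and matches the paper's approach exactly: the paper also proves this by a straightforward induction on the length of the reduction $M \Ra M'$, using clause~1 of the CLB definition at each step and transitivity of $\Ra$. You have simply written out the details that the paper leaves implicit.
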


\begin{proof}
  Follows by a straightforward induction on the length of the
  reduction $M \Ra M'$.
\end{proof}

{Having introduced the necessary material for CLB, we can work
  with the theory of Section~\ref{sec:genupto} to analyse up-to
  techniques for CLB in the cbn case. We study two versions of the
  up-to contexts technique.
This will allow us to deduce congruence properties for CLB.  
}

\begin{definition}
  We call \textbf{up-to evaluation context} the up-to technique $\Pev$
  given by:
  \[ \Pev({\R}) = \left(\R_1, {\R_2} \cup \left\{(EM, FN) \mid E \R_2
      F, M \R_1^\star N \right\}\right). \]
\end{definition}

\begin{prop}
  \label{prop:nupeveccn}
  We have that $\nu\Pev({\R}) = (\R_1, \eccn{\R_2}{\R_1^\star})$.
\end{prop}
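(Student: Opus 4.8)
The plan is to prove the two inclusions $\nu\Pev(\R) \subseteq (\R_1, \eccn{\R_2}{\R_1^\star})$ and $(\R_1, \eccn{\R_2}{\R_1^\star}) \subseteq \nu\Pev(\R)$ separately. First I would note that $\Pev$ is extensive and leaves the first component $\R_1$ untouched, so by an easy induction $\Pev^i(\R)$ has first component exactly $\R_1$ for every $i$, and hence $\nu\Pev(\R)$ has first component $\R_1$; this disposes of the first components of both relations at once, so the remaining work concerns only the second components. For the second component, I would describe $\Pev^i(\R)$ explicitly: writing $S_i$ for the second component of $\Pev^i(\R)$, we have $S_0 = \R_2$ and $S_{i+1} = S_i \cup \{(EM, FN) \mid E \mathrel{S_i} F,\ M \R_1^\star N\}$ — that is, each iteration appends one more ``layer'' of evaluation-context application on top of what we already have.

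For the inclusion $\nu\Pev(\R)_2 \subseteq \eccn{\R_2}{\R_1^\star}$, I would argue by induction on $i$ that $S_i \subseteq \eccn{\R_2}{\R_1^\star}$. The base case is immediate since $\R_2 \subseteq \eccn{\R_2}{\R_1^\star}$ (the remark just before the definition of $\R^C$, applied with empty vectors). For the inductive step, a pair $(EM, FN) \in S_{i+1}\setminus S_i$ has $E \mathrel{S_i} F$ and $M \R_1^\star N$; by the induction hypothesis $E = E'\wt{M}$, $F = E'\wt{N}$ with $E' \R_2 E''$... more precisely $E \mathrel{\eccnnp{\R_2}{\R_1^\star}} F$ means $E = E_0\wt{P}$, $F = F_0\wt{Q}$ with $E_0 \R_2 F_0$ and $\wt{P} \R_1^\star \wt{Q}$, and then $EM = E_0\wt{P}M$, $FN = F_0\wt{Q}N$ exhibits $(EM, FN)$ as an element of $\eccn{\R_2}{\R_1^\star}$ by appending $M, N$ to the vectors (using that $\wt{M}$, $\wt{N}$ in the definition of $\eccnnp{}{}$ may be any length, including one longer). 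Since the key structural fact here is just that an evaluation context applied to an evaluation context is again an evaluation context, I expect this direction to be routine.

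For the reverse inclusion $\eccn{\R_2}{\R_1^\star} \subseteq \nu\Pev(\R)_2$, take $(E\wt{M}, F\wt{N})$ with $E \R_2 F$ and $\wt{M} \R_1^\star \wt{N}$, where $\wt{M} = (M_1,\dots,M_k)$. I would show by induction on $k$ that $(E M_1 \cdots M_k, F N_1 \cdots N_k) \in S_k$: for $k=0$ this is $E \R_2 F$, i.e. membership in $S_0$; for the step, $(EM_1\cdots M_{k}, FN_1\cdots N_{k}) \in S_{k}$ together with $M_{k+1} \R_1^\star N_{k+1}$ gives $(EM_1\cdots M_{k+1}, FN_1\cdots N_{k+1}) \in S_{k+1}$ directly from the definition of $\Pev$. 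Since $S_k \subseteq \bigcup_i S_i = \nu\Pev(\R)_2$, we are done. The only point requiring a little care — and the main obstacle, such as it is — is the bookkeeping needed to match up the ``$EM$'' shape in the definition of $\Pev$ (a single application on the right) with the ``$E\wt{M}$'' shape of $\eccnnp{}{}$ (a block of applications), and to recognize that peeling off arguments one at a time is exactly what the iteration of $\Pev$ does; once that correspondence is set up cleanly, both inductions are short. Finally, Proposition~\ref{prop:extensivefp} guarantees $\nu\Pev(\R)$ is genuinely a fixpoint, so the object on the left-hand side is well-defined as claimed.
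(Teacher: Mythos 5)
Your proof is correct and follows exactly the route the paper intends: the paper's own proof is simply the remark ``simple double inclusion,'' and your two inductions (on the iteration index $i$ for one direction, on the length $k$ of the argument vector for the other) are the natural way to carry that out, with the first components handled trivially since $\Pev$ never changes them. The only superfluous remark is the appeal to Proposition~\ref{prop:extensivefp} at the end: $\nu\Pev(\R)$ is defined directly as the union $\bigcup_i \Pev^i(\R)$, so it is well-defined whether or not it is a fixpoint.
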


\arxiv{
\begin{proof}
  Simple double inclusion.
\end{proof}
}{}

\begin{prop}
  \label{prop:UTEextrespc}
  Up-to evaluation context is extensive and respectfully compatible.
\end{prop}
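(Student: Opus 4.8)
The plan is to verify the two properties of $\Pev$ separately, each by a direct check against the relevant definitions. For extensiveness, the claim is that whenever $\Pev(\R)$ is defined (it is total here), $\R \subseteq \Pev(\R)$. This is immediate from the shape of $\Pev$: its first component is $\R_1$ itself, and its second component is $\R_2$ together with extra pairs, so both inclusions $\R_1 \subseteq \R_1$ and $\R_2 \subseteq \R_2 \cup \{(EM,FN)\mid E\R_2 F,\ M\R_1^\star N\}$ hold trivially. So this part is a one-line observation.

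For respectful compatibility, I must show: whenever $\R \subseteq \SR$ and $\R \prog \SR$, then $\Pev(\R) \subseteq \Pev(\SR)$ and $\Pev(\R) \prog \Pev(\SR)$. The first conjunct, $\Pev(\R) \subseteq \Pev(\SR)$, follows by monotonicity: from $\R_1 \subseteq \SR_1$ we get $\R_1^\star \subseteq \SR_1^\star$ by monotonicity of contextual closure (Lemma~\ref{lemma:rccomb} and the definition), and combined with $\R_2 \subseteq \SR_2$ every generator $(EM,FN)$ of the added set for $\R$ is also a generator for $\SR$; hence both components of $\Pev(\R)$ are contained in the corresponding components of $\Pev(\SR)$. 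For the progression conjunct $\Pev(\R) \prog \Pev(\SR)$, I would unfold Definition~\ref{def:progresses}: take $M \mathrel{(\Pev(\R))_2} N$ and consider the two cases. If $M \mathrel{\R_2} N$, the required clauses follow directly from $\R \prog \SR$ together with $\SR_2 \subseteq (\Pev(\SR))_2$ and $\SR_1^\star = (\Pev(\SR))_1^\star$ (since $\Pev$ leaves the first component untouched). The substantive case is $M = E M_0$ and $N = F N_0$ with $E \mathrel{\R_2} F$ and $M_0 \mathrel{\R_1^\star} N_0$. Here I split on whether $E$ is a value: if $E \to E'$ then $M = E M_0 \to E' M_0$, and since $\R \prog \SR$ gives $F \Ra F'$ with $E' \mathrel{\SR_2} F'$, we get $N = F N_0 \Ra F' N_0$ with $E' M_0 \mathrel{(\Pev(\SR))_2} F' N_0$ because $(E' M_0, F' N_0)$ is exactly a generator of the added set (using $M_0 \mathrel{\R_1^\star} N_0 \subseteq \SR_1^\star$). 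If $E = \lambda x.E_0$, then $\R \prog \SR$ gives $F \Ra \lambda x.F_0$ with $E_0[X/x] \mathrel{\SR_2} F_0[Y/x]$ for all $X \mathrel{\R_1^\star} Y$; the redex step $M = (\lambda x.E_0) M_0 \cdots \to E_0[M_0/x]\cdots$ (together with $N \Ra F_0[N_0/x]\cdots$) lands us back, via $(\lambda x.E_0)M_0$ reducing, in a pair related by $(\Pev(\SR))_2$ — either directly by $\SR_2$ when there are no further trailing arguments, or again as a generator of the added set when there are. The converse clauses for $N$ are symmetric.

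The main obstacle I expect is the bookkeeping in the value case of the $E$-is-a-value subcase: one has to be careful about how the leading abstraction $\lambda x.E_0$ interacts with the argument $M_0$ and any remaining trailing arguments in $E M_0$, making sure the resulting term after the $\beta$-step is still of the form "(something $\SR_2$-related) applied to ($\R_1^\star$-related arguments)", so that it falls into $(\Pev(\SR))_2 = \SR_2 \cup \{(E'M',F'N')\mid E'\SR_2 F',\ M'\SR_1^\star N'\}$. Since $\Pev$ only adds a single layer of application, I should check that after one reduction step we never need two layers; this works because the leading value absorbs exactly one argument via $\beta$-reduction, and the progression hypothesis for $E$ already accounts for substitution into $E_0$. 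Everything else — monotonicity of $(-)^\star$, the fact that $\Pev$ fixes the first component, closure of $\prog$ under the relevant operations — is routine and already available from the preceding propositions.
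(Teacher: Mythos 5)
Your proof is correct and follows essentially the same route as the paper's: extensiveness by inspection, then respectful compatibility by splitting on whether the related pair lies in $\R_2$ or is a generator $(EM_0,FN_0)$, and in the latter case on whether $E$ reduces or is an abstraction. Your worry about ``trailing arguments'' is moot, since $\Pev$ adds only single applications $EM$, so after the $\beta$-step the residual pair lands directly in $\SR_2$; the only point you leave implicit is the (trivial) observation that $EM_0$ is never a value, so clause~2 of the progression is vacuous for the added pairs.
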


\begin{proof}
  \newcommand{\pr}{\mathrel{\Pev({\R})}}
  \newcommand{\ps}{\mathrel{\Pev({\SR})}}

  \arxiv{
    Extensiveness is immediate by definition, so assume ${\R} \prog
    {\SR}$ and ${\R} \subseteq {\SR}$.
    Clearly $\Pev({\R}) \subseteq \Pev({\SR})$.
    We want to show that $\Pev({\R}) \prog \Pev({\SR})$, so assume $M
    \pr_2 N$.
    If $M \R_2 N$, then we're done since ${\R} \prog {\SR}$, so assume
    instead that $M = EM'$ and $N = FN'$ for $E \R_2 F$ and $M'
    \R_1^\star N'$.
    Since ${\R} \subseteq {\SR}$ and contextual closure is monotone, we
    have $M' \SR_1^\star N'$.

    Assume first that $M \ra M''$, then we fall into two cases.
    The first is that $EM' \ra E'M'$.
    Since $E \R_2 F$, then $F \Ra F'$ such that $E' \R_2 F'$.
    Then $E'M' \ps_2 F'N'$ and we're done.
    The second is that $E = \lambda x.E'$ and that $EM' \ra E'[M'/x]$.
    Then since $E \R_2 F$, then $F \Ra \lambda x.F'$ such that $\lambda
    x.E' \SR_2 \lambda x.F'$ and for all $X \R_1^\star Y$, $E'[X/x]
    \SR_2 F'[Y/x]$.
    Thus, $FN' \Ra F'[N'/x]$ and $E'[M'/x] \ps_2 F'[N'/x]$ and we're
    done.

    The case that $EM' = \lambda x.M''$ is impossible, so we're done
    and conclude respectful compatibility.
  }{%
    Extensiveness is immediate by the definition of up-to evaluation
    context, and respectful compatibility follows from a relatively
    straightforward check of the definition.
  }
\end{proof}

\begin{corollary}
  Up-to evaluation context is sound.
\arxiv{\end{corollary}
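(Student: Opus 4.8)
The plan is to obtain this immediately as a consequence of the general soundness criterion established in Section~\ref{sec:genupto}, since all the hypotheses have already been verified. Specifically, I would invoke Proposition~\ref{prop:sympasound}, which states that if the family $\F$ has a continuous progression and $\P$ is a (respectfully) compatible, extensive up-to technique, then $\P$ is sound. Here $\F$ is the set of all CLBs and $\U$ the universe of coupled relations, as identified after Proposition~\ref{prop:clbnsympa}.

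Concretely, I would check off the three needed facts in order: first, that $\prog$ is a genuine progression for CLBs in the universe of coupled relations, which is Proposition~\ref{prop:clbnprog}; second, that $\prog$ is continuous, which is Proposition~\ref{prop:clbnsympa}; and third, that $\Pev$ is extensive and respectfully compatible, which is exactly Proposition~\ref{prop:UTEextrespc}. With these three in hand, Proposition~\ref{prop:sympasound} applies verbatim and yields that $\Pev$ is sound, i.e., whenever $\Pev(\R)$ is defined and $\R \prog \Pev(\R)$, there is a CLB $\R'$ with $\R \subseteq \R'$ (and moreover $\nu\Pev(\R) \prog \nu\Pev(\R)$).

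There is essentially no obstacle here: the corollary is a direct instantiation of the abstract machinery, and all the genuine content — the continuity of the CLB progression and the respectful compatibility of $\Pev$ — was discharged in the preceding propositions. The only thing worth a sentence of care is making explicit that the notions of ``family'', ``progression'', and ``universe'' from Section~\ref{sec:genupto} are being instantiated with CLBs, coupled relations, and $\prog$ respectively, so that Proposition~\ref{prop:sympasound} may legitimately be applied.

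\begin{proof}
  By Proposition~\ref{prop:clbnprog}, $\prog$ is a progression for the family $\F$ of CLBs in the universe $\U$ of coupled relations, and by Proposition~\ref{prop:clbnsympa} this progression is continuous. By Proposition~\ref{prop:UTEextrespc}, $\Pev$ is extensive and respectfully compatible. Hence Proposition~\ref{prop:sympasound} applies and $\Pev$ is sound.
\end{proof}
\end{corollary}
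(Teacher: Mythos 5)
Your proof is correct and is exactly the route the paper intends: the corollary is a direct instantiation of Proposition~\ref{prop:sympasound}, using Proposition~\ref{prop:clbnprog} and Proposition~\ref{prop:clbnsympa} for the (continuous) progression on coupled relations and Proposition~\ref{prop:UTEextrespc} for extensiveness and respectful compatibility of $\Pev$. The paper's own argument for the companion statement about $(\R_1, \eccn{\R_2}{\R_1^\star})$ simply unrolls the same chain (compatibility, extensiveness, continuity, then Proposition~\ref{prop:clbnprog}), so there is no substantive difference.
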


\begin{corollary}
  \label{cor:clbevclos}
  If}{

Moreover, if} $\R$ is a CLB, then so is $(\R_1, \eccn{\R_2}{\R_1^\star})$.
\end{corollary}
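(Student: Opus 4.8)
The plan is to assemble the abstract machinery of Section~\ref{sec:genupto} with the two concrete facts just proved: that the CLB progression is continuous (Proposition~\ref{prop:clbnsympa}) and that $\Pev$ is extensive and respectfully compatible (Proposition~\ref{prop:UTEextrespc}). These are precisely the hypotheses of Proposition~\ref{prop:sympasound} (equivalently, of Corollary~\ref{cor:nupsound}), so $\Pev$ is sound; Proposition~\ref{prop:sympasound} also gives us the stronger conclusion that $\nu\Pev({\R}) \prog \nu\Pev({\R})$ whenever ${\R} \prog \Pev({\R})$. That already disposes of the first assertion.

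For the ``moreover'' part, I would fix an arbitrary CLB $\R$ and arrange that ${\R} \prog \Pev({\R})$, so that the conclusion of Proposition~\ref{prop:sympasound} applies. By Proposition~\ref{prop:clbnprog} we have ${\R} \prog {\R}$; since $\Pev$ is extensive, ${\R} \subseteq \Pev({\R})$, and enlarging the target of a progression preserves it (immediate from Definition~\ref{def:progresses}, whose clauses only constrain the target relation from above), so ${\R} \prog \Pev({\R})$. Hence $\nu\Pev({\R}) \prog \nu\Pev({\R})$. By Proposition~\ref{prop:nupeveccn}, $\nu\Pev({\R}) = (\R_1, \eccn{\R_2}{\R_1^\star})$, and this is a coupled relation since ${\R_1} \subseteq {\R_2} \subseteq {\eccn{\R_2}{\R_1^\star}}$, the last inclusion being the remark that ${\R'} \subseteq {\eccn{\R'}{\R''}}$ for all $\R',\R''$ (take the adjoined vectors empty). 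A coupled relation that progresses to itself is a CLB by Proposition~\ref{prop:clbnprog}, so $(\R_1, \eccn{\R_2}{\R_1^\star})$ is a CLB, as required.

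I do not anticipate a genuine obstacle: the argument is a bookkeeping exercise in checking that the hypotheses of the abstract soundness results hold and that the fixpoint $\nu\Pev({\R})$ is coupled. The only step meriting a moment's care is the passage from ${\R} \prog {\R}$ to ${\R} \prog \Pev({\R})$, which relies on monotonicity of $\prog$ in its target; this follows at once from the form of Definition~\ref{def:progresses}. If one prefers to avoid invoking that observation, an alternative is to apply Corollary~\ref{cor:nupsound} directly to $\nu\Pev$ together with the identity $\nu\Pev(\nu\Pev({\R})) = \nu\Pev({\R})$ from Proposition~\ref{prop:extensivefp}, but the route above is shorter.
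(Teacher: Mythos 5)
Your proposal is correct and follows essentially the same route as the paper: both arguments combine continuity of $\prog$ with the extensiveness and respectful compatibility of $\Pev$ to obtain $\nu\Pev({\R}) \prog \nu\Pev({\R})$, and then identify $\nu\Pev({\R})$ with $(\R_1, \eccn{\R_2}{\R_1^\star})$ via Proposition~\ref{prop:nupeveccn}. The only cosmetic difference is that you first upgrade ${\R} \prog {\R}$ to ${\R} \prog \Pev({\R})$ (using monotonicity of $\prog$ in its target) so as to invoke Proposition~\ref{prop:sympasound} as a black box, whereas the paper iterates respectful compatibility directly from ${\R} \prog {\R}$.
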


\arxiv{
  \begin{proof}
    By Proposition \ref{prop:clbnprog}, ${\R} \prog {\R}$.
    Then by compatibility, $\Pev^i({\R}) \prog \Pev^i({\R})$ for all $i
    \in \N$, and by extensiveness, we have an ascending chain of relations
    ${\R} \subseteq \Pev({\R}) \subseteq \Pev^2({\R}) \subseteq \cdots$.
    Then by continuity (Proposition \ref{prop:clbnsympa}),
    $\nu\Pev({\R}) \prog \nu\Pev({\R})$, and so by Proposition
    \ref{prop:clbnprog}, $\nu\Pev({\R})$ is a CLB.
    Since $\nu\Pev({\R}) = (\R_1, \eccn{\R_2}{\R_1^\star})$, we conclude
    that $(\R_1, \eccn{\R_2}{\R_1^\star})$ is a CLB.
  \end{proof}
}{}

\begin{definition}
  We call \textbf{up-to context} the up-to technique given by ${\R}
  \mapsto {\R}^C$.
\end{definition}

\arxiv{
  \begin{lemma}
    \label{lemma:rc2comb}
    If $E \R^C_2 F$ and $\wt{M} \R_1^\star \wt{N}$, then $E\wt{M} \R^C_2
    F\wt{N}$.
  \end{lemma}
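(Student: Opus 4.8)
The plan is to expand the definition of the second component of the contextual closure, namely ${\R^C_2} = {\eccn{\R_2}{\R_1^\star}} \cup {\R_1^\star}$, and then to split into two cases according to which of these two sets contains the pair $(E, F)$. In each case the argument is a routine manipulation of vectors of terms; nothing here uses the coupledness of $\R$, and the only thing one has to be careful about is the bookkeeping of the (possibly empty) vectors that are hidden inside the notations $E\wt{M}$, $F\wt{N}$ and inside the definition of $\eccnnp{\cdot}{\cdot}$. So the ``main obstacle'' is purely notational.

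Write $\wt{M} = (M_1, \dotsc, M_k)$ and $\wt{N} = (N_1, \dotsc, N_k)$, so that $M_i \R_1^\star N_i$ for each $i$ and $E\wt{M} = EM_1\cdots M_k$, $F\wt{N} = FN_1\cdots N_k$. Suppose first that $E \R_1^\star F$. I would then show $E\wt{M} \R_1^\star F\wt{N}$ by induction on $k$: the base case $k = 0$ is the hypothesis, and for the inductive step Lemma~\ref{lemma:rccomb} (taken with ${\vartriangle} = \star$), applied to $EM_1\cdots M_{k-1} \R_1^\star FN_1\cdots N_{k-1}$ and $M_k \R_1^\star N_k$, gives $EM_1\cdots M_k \R_1^\star FN_1\cdots N_k$. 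Since ${\R_1^\star} \subseteq {\R^C_2}$, this yields $E\wt{M} \R^C_2 F\wt{N}$.

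Suppose instead that $(E, F) \in {\eccnnp{\R_2}{\R_1^\star}}$. By definition there are terms $E', F'$ and (possibly empty) vectors $\wt{P}, \wt{Q}$ with $E = E'\wt{P}$, $F = F'\wt{Q}$, $E' \R_2 F'$, and $\wt{P} \R_1^\star \wt{Q}$. Then $E\wt{M} = E'(\wt{P}\wt{M})$ and $F\wt{N} = F'(\wt{Q}\wt{N})$, where $\wt{P}\wt{M}$ denotes the concatenation of $\wt{P}$ and $\wt{M}$; from $\wt{P} \R_1^\star \wt{Q}$ and $\wt{M} \R_1^\star \wt{N}$ we get $\wt{P}\wt{M} \R_1^\star \wt{Q}\wt{N}$ componentwise. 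Hence $(E\wt{M}, F\wt{N}) \in {\eccnnp{\R_2}{\R_1^\star}} \subseteq {\R^C_2}$, i.e., $E\wt{M} \R^C_2 F\wt{N}$. This covers both cases, so the lemma follows; as anticipated, no step poses a genuine difficulty beyond the two-line case split and the induction delegated to Lemma~\ref{lemma:rccomb}.
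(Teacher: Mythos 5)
Your proposal is correct and follows essentially the same route as the paper's proof: a case split on whether $E \R^C_2 F$ holds via ${\R_1^\star}$ (handled with Lemma~\ref{lemma:rccomb}) or via ${\eccnnp{\R_2}{\R_1^\star}}$ (handled by absorbing $\wt{M}$, $\wt{N}$ into the argument vectors of the decomposition $E = E'\wt{P}$, $F = F'\wt{Q}$). The only difference is that you spell out the iteration of Lemma~\ref{lemma:rccomb} that the paper leaves implicit.
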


  \begin{proof}
    By case analysis on $E \R^C_2 F$.
    If $E \R^C_2 F$ because $E \R_1^\star F$, then the conclusion
    follows by Lemma \ref{lemma:rccomb}.
    Finally, if $E \R^C_2 F$ because there exist $E', F', \wt{K},
    \wt{L}$ such that $E' \R_2 F'$, $\wt{K} R_1^\star \wt{L}$, $E =
    E'\wt{K}$, and $F = F'\wt{L}$, then by Lemma \ref{lemma:rccomb},
    $\wt{K}\wt{M}\R_1^\star \wt{L}\wt{N}$, and so by construction,
    $E\wt{M} = E' \wt{K}\wt{M} R^C_2 F'\wt{L}\wt{N} = F\wt{N}$.
  \end{proof}
}{}

\begin{prop}
  Up-to context is extensive and respectfully compatible.
\end{prop}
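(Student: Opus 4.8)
The plan is to check both properties directly from the definitions: extensiveness is immediate, while respectful compatibility reduces to establishing the progression $\R^C \prog \SR^C$, which is where the work lies. For extensiveness, note that $\R_1 \subseteq \R_1^\star$ always, and that $S \subseteq \eccnnp{S}{T}$ for all relations $S, T$ (the vectors in the definition of $\otimes_n$ may be empty), so $\R_2 \subseteq \eccn{\R_2}{\R_1^\star} \cup \R_1^\star = \R^C_2$; hence $\R \subseteq \R^C$.

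For respectful compatibility, fix coupled relations with $\R \subseteq \SR$ and $\R \prog \SR$; since $(\cdot)^C$ is a total function, $\R^C$ and $\SR^C$ are defined, and $\R^C \subseteq \SR^C$ is immediate from the monotonicity of $(\cdot)^\star$ and of $\otimes_n$ in both of its arguments. It remains to prove $\R^C \prog \SR^C$, so suppose $M \R^C_2 N$: either $M \R_1^\star N$, or $(M,N) \in {\eccn{\R_2}{\R_1^\star}}$, say $M = E\wt M$ and $N = F\wt N$ with $E \R_2 F$ and $\wt M \R_1^\star \wt N$ (the vectors $\wt M, \wt N$ possibly empty). We must verify the clauses of Definition~\ref{def:progresses} for $\R^C \prog \SR^C$; a small but crucial observation is that the predicate on arguments appearing in the abstraction clause is $(\R^C_1)^\star = (\R_1^\star)^\star = \R_1^\star$ by idempotence of contextual closure, which is exactly the predicate that $\R \prog \SR$ supplies.

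In the case $(M,N) \in {\eccn{\R_2}{\R_1^\star}}$: if $\wt M$ is empty, then $M \R_2 N$ and every clause follows from $\R \prog \SR$ together with $\SR_2 \subseteq \SR^C_2$; if $\wt M = (M_1, \wt M')$, then $M$ is an application, hence not a value, and a cbn step of $M$ either reduces $E$ or fires $E$ when $E$ is an abstraction (using $M_1 \R_1^\star N_1$ to instantiate its body), so we apply the matching clause of $\R \prog \SR$ to $E \R_2 F$, obtaining a move of $F$ answered inside $\SR_2 \subseteq \SR^C_2$, and then recombine with $\wt M' \R_1^\star \wt N'$ using Lemma~\ref{lemma:rc2comb} and $\R_1^\star \subseteq \SR_1^\star$ (which follows from $\R \subseteq \SR$) to stay in $\SR^C_2$. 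In the case $M = C[\wt P] \R_1^\star C[\wt Q] = N$ with $\wt P \R_1 \wt Q$ (so $\wt P \R_2 \wt Q$ since $\R$ is coupled), I argue by induction on the derivation of $M \ra M'$ for the reduction clause and by a case split on $C$ for the abstraction clause: if $C = [\cdot]_i$, the goal is an instance of the corresponding clause of $\R \prog \SR$ for $P_i \R_2 Q_i$ (using $\SR_2 \subseteq \SR^C_2$ and the idempotence remark); if $C = C_1 C_2$ and the step occurs inside $C_1[\wt P']$, the induction hypothesis matches it with a reduction of $C_1[\wt Q']$, and Lemma~\ref{lemma:rc2comb} (with $\R_1^\star \subseteq \SR_1^\star$) reassembles the application; if $C = C_1 C_2$ and the step is a $\beta$-redex, then either $C_1 = [\cdot]_i$, handled by the abstraction clause of $\R \prog \SR$ for $P_i$ instantiated at $C_2[\wt P''] \R_1^\star C_2[\wt Q'']$, or $C_1 = \lambda x.C_1'$ and Corollary~\ref{cor:rcredux} (with $\R := \R_1$) yields $(C_1'[\wt P'])[C_2[\wt P'']/x] \R_1^\star (C_1'[\wt Q'])[C_2[\wt Q'']/x]$, which lies in $\SR_1^\star \subseteq \SR^C_2$; and for the abstraction clause, $C$ is either a hole (apply the abstraction clause of $\R \prog \SR$) or $C = \lambda x.C'$, where Corollary~\ref{cor:rcredux} gives $(C'[\wt P])[X/x] \R_1^\star (C'[\wt Q])[Y/x]$ for every $X \R_1^\star Y$. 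In all cases, the clauses governing moves of $N$ are symmetric, using the converse clauses of $\R \prog \SR$.

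The main obstacle is the bookkeeping in the $M \R_1^\star N$ case: one must ensure that closed subterms stay closed when recursing, that the chosen decomposition $M = C[\wt P]$ is compatible with $M$ being an application or an abstraction, and---most importantly---that the hypothesis $\R \subseteq \SR$ is genuinely used (to pass from $\R_1^\star$ to $\SR_1^\star$ when invoking Lemma~\ref{lemma:rc2comb}), which is precisely why up-to context is respectfully compatible rather than plainly compatible.
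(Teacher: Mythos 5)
Your proof is correct, and the core mathematics coincides with ours: extensiveness is immediate, ${\R^C}\subseteq{\SR^C}$ is monotonicity, and the substance is an induction over contexts for the $\R_1^\star$ part, using Corollary~\ref{cor:rcredux} for the two abstraction cases, Lemma~\ref{lemma:rc2comb} to reassemble applications (with ${\R_1^\star}\subseteq{\SR_1^\star}$, the one place respectfulness is genuinely needed), and the idempotence $({\R_1^\star})^\star={\R_1^\star}$ to see that the argument predicate for ${\R^C}$ is unchanged. The organization, however, differs. We split ${\R^C_2}$ as ${\eccn{\R_2}{\R_1^\star}}\cup\bigl({\R_1^\star}\setminus{\eccn{\R_2}{\R_1^\star}}\bigr)$, dispatch the first summand by reusing the already-established respectful compatibility of up-to evaluation context (Propositions~\ref{prop:UTEextrespc} and~\ref{prop:nupeveccn} with Corollary~\ref{cor:nupsound}), and glue the pieces with Propositions~\ref{prop:clbnprogcuiuu} and~\ref{prop:r1starcompat}; the set difference makes the sub-case in which a hole in head position has its filler $P_i$ reduce literally impossible, since such pairs already lie in ${\eccn{\R_2}{\R_1^\star}}$. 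You instead take the two (overlapping) summands ${\eccn{\R_2}{\R_1^\star}}$ and ${\R_1^\star}$ and handle both from scratch, re-proving the evaluation-context part inline and absorbing the head-position sub-case into your induction on the derivation of $M\ra M'$, where it is answered by the reduction clause of ${\R}\prog{\SR}$ applied to $P_i\R_2 Q_i$ followed by Lemma~\ref{lemma:rc2comb}. Your version is more self-contained and makes the role of ${\R}\subseteq{\SR}$ explicit at each recombination; ours buys reuse of the evaluation-context technique and a shorter exhaustive case analysis. Both arguments are sound.
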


\begin{proof}
  Extensiveness is immediate by definition of up-to context.

  Assume ${\R} \prog {\SR}$ and ${\R} \subseteq {\SR}$; we want to
  show that ${\R^C} \subseteq {\SR^C}$ and ${\R^C} \prog {\SR^C}$.
  The inclusion ${\R^C} \subseteq {\SR^C}$ is obvious.
  Since $(\R_1, \eccn{\R_2}{\R_1^\star}) \prog (\SR_1,
  \eccn{\SR_2}{\SR_1^\star})$ by Propositions \ref{prop:UTEextrespc}
  and \ref{prop:nupeveccn} and Corollary \ref{cor:nupsound}, it is
  sufficient to show that $(\R_1, \R_1^\star \setminus
  \eccn{\R_2}{\R_1^\star}) \prog {\SR^C}$, since \arxiv{%
    then, by Proposition \ref{prop:clbnprogcuiuu},
    \[ \left( {\R_1} \cap {\R_1}, \eccn{\R_2}{\R_1^\star} \cup
      \left(\R_1^\star \setminus \eccn{\R_2}{\R_1^\star} \right)
    \right ) \prog \left(\SR \cup \SR^C\right), \] i.e., $\left( \R_1,
      \R^C_2 \right) \prog {\SR^C}$ by extensiveness and Proposition
    \ref{prop:nupeveccn}.
  }{%
    then $\left( \R_1, \R^C_2 \right) \prog {\SR^C}$ by Proposition
    \ref{prop:clbnprogcuiuu}, extensiveness, and Proposition
    \ref{prop:nupeveccn}.
  } Then, since contextual closure is idempotent and ${\R_1^\star}
  \subseteq {\R_2^C}$ and $({\SR_1^C})^\star \subseteq {\SR^C_2}$, we
  get $\left(\R_1^\star, \R^C_2 \right) \prog {\SR^C}$ by Proposition
  \ref{prop:r1starcompat}, i.e., ${\R^C} \prog {\SR^C}$.

  \arxiv{
    We show that $(\R_1, \R_1^\star \setminus \eccn{\R_2}{\R_1^\star})
    \prog {\SR^C}$.
    Assume $M\R_1^\star N$.
    Then $M=C[\wt{M}]$ and $N = C[\wt{N}]$ for $\wt{M}\R_1\wt{N}$.
    We proceed by induction on $C$.

    If $C = [\cdot]$, then $M = M_1 \R_1 N_1 = N$, and since $R_1
    \subseteq R_2$, and ${\R} \prog {\SR} \subseteq {\SR^C}$ and $\prog$
    is closed under right union, we're done.

    The case $C=x$ is impossible: it is never the case that $x
    \R_1^\star x$ since $x \notin \Ld$.

    Consider the case where $C = \lambda x.C'$, i.e., $\lambda
    x.C'[\wt{M}] \R_1^\star \lambda x.C'[\wt{N}]$.
    Since there exists no $M'$ such that $\lambda x.C'[\wt{M}] \ra M'$,
    we need only check the second clause of the definition of CLB,
    namely that for all $X \R_1^\star Y$, $(C'[\wt{M}])[X/x] \SR^C_2
    (C'[\wt{N}])[Y/x]$.
    By Corollary \ref{cor:rcredux}, $(C'[\wt{M}])[X/x] \R_1^\star
    (C'[\wt{N}])[Y/x]$, so ${\R_1^\star} \subseteq {\SR_1^\star}
    \subseteq {\SR^C_2}$, $(C'[\wt{M}])[X/x] \SR^C_2 (C'[\wt{N}])[Y/x]$
    and we're done.

    Now assume that $C = C_1 C_2$, and let $\wt{M} = \wt{M_1}\wt{M_2}$
    be such that $C[\wt{M}] = C_1[\wt{M_1}]C_2[\wt{M_2}]$ and similarly
    for $\wt{N}$.
    We show the first part of the definition, namely, that if $M\ra M'$
    then $N \Ra N'$ such that $M'\SR^C_2N'$.
    We fall into one of four mutually exclusive cases:
    \begin{enumerate}[(i)]
    \item $C_1[\wt{M_1}] \ra C_1'[\wt{M_1}]$
    \item $C_1[\wt{M_1}] \ra C_1[\wt{M_1'}]$
    \item $C_1[\wt{M_1}] \ra_\beta M_1'$
    \item $C_1[\wt{M_1}] = \lambda x.P$ and $M \ra_\beta P
      [C_2[\wt{M_2}]/x]$.
    \end{enumerate}

    If we fall into case (i), then there is a transition internal to
    the context $C_1$ not involving any of the $M_i$, that is to say,
    ``$C_1[\cdot] \ra C_1'[\cdot]$''.
    Then $M \ra C_1'[\wt{M_1}]C_2[\wt{M_2}]$ and $N \Ra
    C_1'[\wt{N_1}]C_2[\wt{N_2}]$.
    Since $C_1'C_2$ is again a context, we have
    \[C_1'[\wt{M_1}]C_2[\wt{M_2}] \R_1^\star
    C_1'[\wt{N_1}]C_2[\wt{N_2}],\] and since ${\R_1^\star} \subseteq
    {\SR_1^\star} \subseteq {\SR^C_2}$,
    \[C_1'[\wt{M_1}]C_2[\wt{M_2}] \SR^C_2 C_1'[\wt{N_1}]C_2[\wt{N_2}]\]
    as desired.

    Case (ii) is impossible: take $\wt{M_1} = (M_1,\dotsc,M_n)$ and
    $\wt{N_1} = (N_1,\dotsc,N_n)$. If $C_1[\wt{M_1}]\ra C_1[\wt{M_1'}]$,
    then by call-by-name reduction implies that $C_1$ is of the form
    $C_1 = [\cdot]C_1'$.
    Thus, $M = [M_1]C'_1[M_2,\dotsc,M_n]C_2[\wt{M_2}]
    \eccn{\R_2}{\R_1^\star} [N_1]C'_1[N_2,\dotsc,N_n]C_2[\wt{N_2}] = N$,
    a contradiction.

    If we fall into case (iii), that's to say, if there's a
    $\beta$-reduction involving the context $C_1$ and $M_1$, then
    $C_1[\wt{M}] \ra_\beta M_1'$, so $M \ra M_1'C_2[\wt{M_2}]$.
    By the induction hypothesis, $C_1[\wt{N}] \Ra N_1'$ such that $M_1'
    \SR^C_2 N_1'$. Thus, $N \Ra N_1'C_2[\wt{N_2}]$.
    Since $C_2[\wt{M_2}] \R_1^\star C_2[\wt{N_2}]$ and ${\R_1^\star}
    \subseteq {\SR_1^\star}$, by Lemma \ref{lemma:rc2comb}, we conclude
    that $M_1'C_2[\wt{M_2}] \SR^C_2 N_1'C_2[\wt{N_2}]$ as desired.

    Finally, we consider case (iv).
    By the induction hypothesis, $C_1[\wt{N_1}] \Ra \lambda x.Q$ such
    that $\lambda x.P \SR_2^C \lambda x.Q$ and for all $X \R_1^\star Y$,
    $P[X/x] \SR^C_2 Q[Y/x]$.
    Thus, since $C_2[\wt{M_2}] \R_1^\star C_1[\wt{N_2}]$, we get
    $P[C_2[\wt{M_2}]/x] \SR^C_2 Q[C_2[\wt{N_2}]/x]$ as desired.

    This completes the induction on $C$.
    Thus, we conclude that $(\R_1, \R_1^\star \setminus
    \eccn{\R_2}{\R_1^\star}) \prog {\SR^C}$, and thus that ${\R^C}
    \prog {\SR^C}$.
  }{ To show that $(\R_1, \R_1^\star \setminus
    \eccn{\R_2}{\R_1^\star}) \prog {\SR^C}$, assume $M \R_1^\star N$,
    where $M=C[\wt{M}]$ and $N = C[\wt{N}]$ for some $C$ and
    $\wt{M}\R_1\wt{N}$, and proceed by structural induction on $C$.
  }
\end{proof}



\begin{corollary}
  Up-to context is sound.
\arxiv{
\end{corollary}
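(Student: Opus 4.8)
The plan is to obtain this as an immediate consequence of the abstract soundness machinery of Section~\ref{sec:genupto}. In the notation of that section, I instantiate $\F$ to be the family of all CLBs and $\U$ to be the universe of all coupled relations, equipped with the progression $\prog$ of Definition~\ref{def:progresses}: by Proposition~\ref{prop:clbnprog} this $\prog$ is genuinely a progression for $\F$, and by Proposition~\ref{prop:clbnsympa} it is continuous. The preceding proposition has just established that up-to context, ${\R} \mapsto {\R}^C$, is extensive and respectfully compatible, and it clearly takes coupled relations to coupled relations since ${\R_1^\star} \subseteq {\eccn{\R_2}{\R_1^\star}} \cup {\R_1^\star}$. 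Hence every hypothesis of Proposition~\ref{prop:sympasound} is satisfied, and I conclude directly that up-to context is sound; as a by-product, Proposition~\ref{prop:sympasound} additionally gives $\nu(\cdot^C)({\R}) \prog \nu(\cdot^C)({\R})$ whenever ${\R} \prog {\R}^C$, which is what feeds the subsequent congruence results.

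Consequently there is essentially no residual work at this point: all the real content — in particular the structural induction on the context $C$ needed for respectful compatibility — was discharged in the previous proposition, and the continuous-progression soundness theorem does the rest. The only thing to double-check is that the instantiation of $\F$, $\U$, and $\prog$ matches the hypotheses of Proposition~\ref{prop:sympasound} literally, which it does by Propositions~\ref{prop:clbnprog} and~\ref{prop:clbnsympa}. (If one wished to avoid continuity, one could alternatively go through Proposition~\ref{prop:fcecsound}, but that would force us to prove finite convergence of up-to context, which is not needed here.) So I do not anticipate a genuine obstacle; the statement is a corollary of the preceding proposition together with continuity of the CLB progression.
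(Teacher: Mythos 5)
Your proposal is correct and matches the paper's (implicit) argument: the corollary is obtained by feeding the preceding proposition's extensiveness and respectful compatibility, together with continuity of the CLB progression (Proposition~\ref{prop:clbnsympa}), into the abstract soundness result Proposition~\ref{prop:sympasound}. Nothing further is needed.
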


\begin{corollary}
  If}{

Moreover, if} $\R$ is a CLB, then so is $\R^C$.
\end{corollary}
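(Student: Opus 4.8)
The plan is to obtain both halves directly from the general theory of Section~\ref{sec:genupto} together with the progression characterisation of CLBs, with essentially no new work.

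For soundness, the preceding proposition established that up-to context, i.e.\ the partial function ${\R} \mapsto {\R}^C$, is extensive and respectfully compatible, and Proposition~\ref{prop:clbnsympa} showed that the CLB progression $\prog$ is continuous. Hence Proposition~\ref{prop:sympasound} applies, with $\F$ the family of CLBs and $\U$ the universe of coupled relations, and yields that up-to context is sound. (If one wishes to avoid continuity, an alternative is to note that $(-)^C$ is idempotent on coupled relations---this follows from Lemma~\ref{lemma:rc2comb}, which gives $\eccn{\R^C_2}{\R_1^\star} = \R^C_2$ and hence $(\R^C)^C = \R^C$---so that up-to context is finitely convergent with convergence constant $0$, and then to invoke Proposition~\ref{prop:fcecsound}; but the route through Proposition~\ref{prop:sympasound} is shorter.)

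For the ``moreover'' clause, let $\R$ be a CLB. By Proposition~\ref{prop:clbnprog} we have ${\R} \prog {\R}$. Instantiating the definition of respectful compatibility of up-to context with $\SR$ taken to be $\R$ (so that ${\R} \subseteq {\SR}$ and ${\R} \prog {\SR}$ hold trivially) gives ${\R^C} \prog {\R^C}$. Moreover $\R^C = (\R_1^\star,\ \eccn{\R_2}{\R_1^\star} \cup \R_1^\star)$ is a coupled relation, since its first component $\R_1^\star$ is contained in its second. Applying Proposition~\ref{prop:clbnprog} in the converse direction then gives that $\R^C$ is a CLB.

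I do not anticipate a genuine obstacle; the argument is pure bookkeeping over earlier results. The one point demanding a moment's care is recording that $\R^C$ is a coupled relation: this is precisely what licenses the second appeal to Proposition~\ref{prop:clbnprog}, and without it ${\R^C} \prog {\R^C}$ by itself would not be enough---recall the paired relation $(\{(I,\Omega)\},\ \{(\Omega,\Omega)\})$, which progresses to itself yet is contained in no CLB. The closest thing to a subtlety is the idempotence computation needed for the alternative soundness proof, which the route via Proposition~\ref{prop:sympasound} sidesteps entirely; and one could equally mirror the proof of the analogous corollary for up-to evaluation context by iterating $\Pev$, but here the direct argument via respectful compatibility is simpler since $\R^C$ is already a fixed point of $(-)^C$.
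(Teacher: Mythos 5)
Your argument is correct and is exactly the paper's: the paper's proof reads ``Immediate by Proposition~\ref{prop:clbnprog} and respectful compatibility,'' which is precisely your instantiation of respectful compatibility at $\SR = \R$ followed by the converse direction of Proposition~\ref{prop:clbnprog} (and your observation that $\R^C$ is a coupled relation is the right point to make explicit). The soundness half likewise follows the paper's route through continuity and Proposition~\ref{prop:sympasound}.
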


\arxiv{
  \begin{proof}
    Immediate by Proposition \ref{prop:clbnprog} and respectful
    compatibility.
  \end{proof}
}{}

\begin{corollary}
  \leavevmode
  \label{corollary:Clift}
  \begin{enumerate}
  \item If $M \clbn_1 N$, then $C[M] \clbn_1 C[N]$ for all contexts
    $C$.
  \item If $E \clbn_2 F$, then $\EC[E] \clbn_2 \EC[F]$ for all
    evaluation contexts $\EC$.
  \end{enumerate}
\end{corollary}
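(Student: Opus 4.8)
The plan is to read off both statements from the preceding corollary, which gives that $\R^C$ is a CLB whenever $\R$ is, together with the explicit shape $\R^C = (\R_1^\star, {\eccn{\R_2}{\R_1^\star}} \cup {\R_1^\star})$ and the observation that $\clbn = (\clbn_1,\clbn_2)$, being the pairwise union of all CLBs, contains the first component of every CLB inside $\clbn_1$ and the second component of every CLB inside $\clbn_2$. So no fresh bisimulation argument is needed: the two parts are corollaries of up-to context plus bookkeeping about which terms are closed.

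For part~(1): assume $M \clbn_1 N$, so there is a CLB $\R$ with $M \R_1 N$. Fix a context $C$ (necessarily one for which $C[M]$ and $C[N]$ are closed, since $\clbn_1 \subseteq \Ld\times\Ld$), and let $\wt M$, $\wt N$ be the vectors of copies of $M$, resp.\ $N$, matching the holes of $C$, so that $C[M]$ abbreviates $C[\wt M]$. Then $\wt M \R_1 \wt N$, hence $C[\wt M] \R_1^\star C[\wt N]$ by the definition of the (closed) contextual closure. Since $\R^C$ is a CLB with $(\R^C)_1 = \R_1^\star$, we get $(C[M], C[N]) \in (\R^C)_1 \subseteq \clbn_1$, i.e.\ $C[M] \clbn_1 C[N]$.

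For part~(2): assume $E \clbn_2 F$, and fix a CLB $\R$ with $E \R_2 F$. A cbn evaluation context has the form $\EC = [\cdot]\,\wt M$ for a (possibly empty) vector $\wt M$ of closed terms. Because $\emptyset^\star = \Id_{\Ld\times\Ld}$ and contextual closure is monotone, $\wt M \R_1^\star \wt M$. Hence $(\EC[E], \EC[F]) = (E\wt M, F\wt M) \in {\eccn{\R_2}{\R_1^\star}} \subseteq (\R^C)_2$, and since $\R^C$ is a CLB, $(\R^C)_2 \subseteq \clbn_2$, giving $\EC[E] \clbn_2 \EC[F]$.

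The only points that require any attention are (a) recording in part~(1) the implicit hypothesis that $C[M]$ and $C[N]$ are closed, so the pair really lands in the closed contextual closure $\R_1^\star = (\R^C)_1$; and (b) using in part~(2) that cbn evaluation contexts plug in only closed terms, which is exactly what makes the reflexivity $\wt M \R_1^\star \wt M$ available (via $\emptyset^\star = \Id_{\Ld\times\Ld}$) and puts the resulting pair into ${\eccn{\R_2}{\R_1^\star}}$. There is no real obstacle; the work has already been done in the soundness/CLB-closure results for up-to context.
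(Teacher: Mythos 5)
Your proof is correct and follows essentially the same route as the paper's: both deduce the result from the fact that $\R^C$ is again a CLB (hence contained in $\clbn$), reading part~(1) off $\R^C_1 = \R_1^\star$ and part~(2) off ${\eccn{\R_2}{\R_1^\star}} \subseteq \R^C_2$ together with ${\Id} \subseteq {\R_1^\star}$. Your extra remarks about closedness of $C[M]$, $C[N]$ and of the terms plugged into $\EC$ are harmless bookkeeping that the paper leaves implicit.
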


\arxiv{
\begin{proof}
  We show the first statement.
  If $M \clbn_1 M$, then there exists a CLB ${\R} \subseteq {\clbn}$
  such that $M\R_1 N$.
  Then ${\R^C} \subseteq {\clbn}$ also and ${\R_1^\star} = {\R^C_1}
  \subseteq {\clbn_1}$.

  We proceed in a similar fashion to show the second statement: if $E
  \clbn_2 F$ then there exists a CLB ${\R} \subseteq {\clbn}$ such
  that $E \R_2 F$. Then again, ${\R^C} \subseteq\ {\clbn}$.
  Since ${\eccn{\R_2}{\R_1^\star}} \subseteq {\R^C_2}$ and ${\Id}
  \subseteq {\R_1^\star}$, we have ${\eccn{R_2}{\Id}} \subseteq
  {\R^C_2}$.
  Thus, $E \R_2 F$ implies $\EC[E] \R^C_2 \EC[F]$ for all evaluation
  contexts $\EC$, and since ${R^C_2} \subseteq {\clbn_2}$, we conclude
  the second statement.
\end{proof}
}{}

\begin{corollary}
  \label{cor:inclusion1}
  We have the inclusion $\clbn\ \subseteq (\cen, \ecen)$, i.e., if $M
  \clbn_1 N$, then $M \cen N$, and if $E \clbn_2 F$, then $E \ece F$.
\end{corollary}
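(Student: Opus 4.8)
The plan is to reduce the statement to a single ``convergence agreement'' observation about CLBs, and then feed into it the contextual and evaluation-contextual closure properties of $\clbn$ already established in Corollary~\ref{corollary:Clift}.

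First I would record the following fact: for any CLB $\R$, if $M \R_2 N$ then $M\conv$ if and only if $N\conv$. Indeed, if $M \Ra \lambda x.P$ for some value $\lambda x.P$, then by Lemma~\ref{lemma:bigsteps} there is an $N'$ with $N \Ra N'$ and $\lambda x.P \R_2 N'$; since $\lambda x.P$ is a value, clause~2 of Definition~\ref{def:clb} gives $N' \Ra \lambda x.N''$ for some $N''$, so $N \Ra \lambda x.N''$ and $N\conv$; the reverse implication is symmetric, using clause~3. Note that one must run this argument through a genuine CLB rather than through $\clbn$ itself: since the functional defining CLB is not monotone, $\clbn$ need not be a CLB, which is exactly why Corollary~\ref{corollary:Clift} was phrased in terms of individual CLBs.

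For the first half, take $M \clbn_1 N$ and an arbitrary context $C$. By Corollary~\ref{corollary:Clift}(1) we have $C[M] \clbn_1 C[N]$, so there is a CLB $\R$ with $C[M] \R_1 C[N]$; since $\R$ is a coupled relation, $C[M] \R_2 C[N]$, and the convergence-agreement fact then gives $C[M]\conv \iff C[N]\conv$. As $C$ was arbitrary, $M \cen N$. For the second half, take $E \clbn_2 F$ and an arbitrary evaluation context $\EC$; by Corollary~\ref{corollary:Clift}(2) we have $\EC[E] \clbn_2 \EC[F]$, so $\EC[E] \R_2 \EC[F]$ for some CLB $\R$, and convergence agreement gives $\EC[E]\conv \iff \EC[F]\conv$, i.e., $E \ecen F$.

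There is no substantial obstacle here: the argument is bookkeeping on top of Corollary~\ref{corollary:Clift}, and the only point needing care is to extract a genuine CLB at each use of $\clbn_1$ or $\clbn_2$ and to invoke clauses~2--3 of Definition~\ref{def:clb} through it, rather than through $\clbn$.
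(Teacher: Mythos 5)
Your proof is correct and follows essentially the same route as the paper's: apply Corollary~\ref{corollary:Clift} to close under (evaluation) contexts, extract a witnessing CLB, and derive convergence agreement from Lemma~\ref{lemma:bigsteps}. Your version is slightly more careful in that it makes explicit the appeal to clause~2 of Definition~\ref{def:clb} to conclude that the matching term actually reaches a value, a step the paper's proof leaves implicit.
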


\arxiv{
\begin{proof}
  Assume first that $M \clbn_1 N$, then $C[M] \clbn_1 C[N]$, and so
  for some CLB $\R$, $C[M] \R C[N]$.
  By Lemma \ref{lemma:bigsteps}, this implies that $C[M] \Ra \lambda
  x.P$ for some $P$ if and only if $C[N] \Ra \lambda x.Q$ for some
  $Q$, and conversely.
  But this is exactly the definition of $M \cen N$.
  The case of $E \clbn_2 F$ follows in an identical manner.
\end{proof}
}{}

\arxiv{
\begin{lemma}
  \label{lemma:Raece}
  If $E \Ra E'$, then $E \ecen E'$.
\end{lemma}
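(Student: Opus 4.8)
The plan is to reduce the statement to the single fact that \emph{cbn evaluation contexts preserve reduction}, and then to close the converse direction using confluence (indeed determinism) of call-by-name reduction. Since $E \ecen E'$ means $\EC[E]\conv \iff \EC[E']\conv$ for every evaluation context $\EC$, it suffices to fix an arbitrary $\EC$ and prove this equivalence.

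First I would prove, by structural induction on a cbn evaluation context $\EC := [\cdot] \mid \EC M$, that $T \ra T'$ implies $\EC[T] \ra \EC[T']$. The case $\EC = [\cdot]$ is trivial; for $\EC = \EC_0\,M$ the induction hypothesis gives $\EC_0[T] \ra \EC_0[T']$, and the rule allowing $PN \ra P'N$ whenever $P \ra P'$ then yields $\EC_0[T]\,M \ra \EC_0[T']\,M$, i.e.\ $\EC[T] \ra \EC[T']$. Chaining this along a reduction sequence, $T \Ra T'$ implies $\EC[T] \Ra \EC[T']$; in particular $\EC[E] \Ra \EC[E']$ for every evaluation context $\EC$. (Note this is genuinely special to \emph{evaluation} contexts: an arbitrary context may put the hole under a $\lambda$ or in argument position, where cbn does not reduce, which is why we can only hope for $\ecen$ here, the context lemma then being invoked separately if full $\cen$ is wanted.)

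Now fix $\EC$. If $\EC[E']\conv$, say $\EC[E'] \Downarrow \lambda x.P$, then $\EC[E] \Ra \EC[E'] \Ra \lambda x.P$, so $\EC[E]\conv$. Conversely, suppose $\EC[E]\conv$, say $\EC[E] \Ra V$ with $V$ a value. Call-by-name reduction is deterministic, hence confluent: a closed non-value is an application $MN$ whose only redex is $MN$ itself when $M$ is an abstraction and otherwise, recursively, the redex of $M$, while variables and values are normal forms. Applying confluence to $\EC[E] \Ra \EC[E']$ and $\EC[E] \Ra V$ produces a common reduct $W$ with $\EC[E'] \Ra W$ and $V \Ra W$; since $V$ is a normal form, $W = V$, so $\EC[E'] \Ra V$ and $\EC[E']\conv$. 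As $\EC$ was arbitrary, $E \ecen E'$.

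The only mildly delicate point is the appeal to confluence/determinism of call-by-name reduction in the converse direction; since this is entirely standard for the pure $\lambda$-calculus I would merely record it as a one-line observation rather than belabour it, the rest being a routine induction and two short implications. Alternatively one could package the same reasoning by checking that $(\Id, {\Ra})$ is a CLB and invoking Corollary~\ref{cor:inclusion1}, but this still rests on the same determinism fact and is no shorter.
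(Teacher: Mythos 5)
Your proof is correct, but it takes a genuinely different route from the paper's. The paper disposes of this lemma in two lines by observing that $(\emptyset, {\Ra})$ is a CLB (note: first component $\emptyset$, whose contextual closure is $\Id$, rather than the $(\Id,{\Ra})$ you mention in your closing remark) and then invoking Corollary~\ref{cor:inclusion1} to get ${\Ra} \subseteq {\clbn_2} \subseteq {\ecen}$. You instead argue directly from the operational semantics: evaluation contexts preserve single cbn steps, hence $\EC[E] \Ra \EC[E']$, and the two directions of the convergence equivalence follow from transitivity of $\Ra$ on one side and determinism of cbn reduction (values being normal forms) on the other. The trade-off is real: the paper's proof is shorter on the page but leans on the CLB machinery already built up (soundness of up-to context, Corollary~\ref{cor:inclusion1}), whereas yours is self-contained and elementary, depending only on the reduction rules and determinism --- which, as you note, is also the hidden ingredient in verifying that $(\emptyset,{\Ra})$ is a CLB, so neither route avoids it. Your version would survive being moved earlier in the development or ported to a setting without the CLB apparatus; the paper's version is the natural choice once that apparatus is in place. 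Both arguments are sound.
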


\begin{proof}
  It is easy to verify that $(\emptyset, \Ra)$ is a CLB.
  By Corollary \ref{cor:inclusion1}, we get $\Ra\ \subseteq\ \clbn_2\
  \subseteq\ \ecen$, so if $E \Ra E'$, $E \ecen E'$.
\end{proof}

\begin{lemma}
  \label{lemma:eceabssubst}
  If $\lambda x.P \ecen \lambda x.Q$, then $P[M/x] \ecen Q[M/x]$ for
  all $M$.
\end{lemma}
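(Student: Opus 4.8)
The plan is to deduce this from Lemma~\ref{lemma:Raece} together with two elementary observations about cbn evaluation contexts. Throughout I read ``for all $M$'' as ``for all $M \in \Ld$'', which is forced anyway: $\lambda x.P, \lambda x.Q \in \Ld$ implies $\fv(P)\cup\fv(Q)\subseteq\{x\}$, so $P[M/x], Q[M/x] \in \Ld$ exactly when $M\in\Ld$, and $\ecen$ is only defined on closed terms.

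First I would unfold the grammar $\EC := [\cdot] \mid \EC M$ to note that every cbn evaluation context has the form $[\cdot]\,N_1\cdots N_k$ with $k\ge 0$ and $N_1,\dots,N_k\in\Ld$, so that $\EC[L] = L\,N_1\cdots N_k$ for every term $L$. This has two consequences: (i) for any closed $M$, the context $\EC' := [\cdot]\,M\,N_1\cdots N_k$ is again a cbn evaluation context, and $\EC[L\,M] = \EC'[L]$ for every $L$; and (ii) $(\lambda x.P)M$ reduces in one cbn $\beta$-step to $P[M/x]$, so $(\lambda x.P)M \Ra P[M/x]$, and likewise $(\lambda x.Q)M \Ra Q[M/x]$.

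From (i) and the hypothesis $\lambda x.P \ecen \lambda x.Q$, I get that for every cbn evaluation context $\EC = [\cdot]\,N_1\cdots N_k$,
\begin{align*}
  \EC[(\lambda x.P)M]\conv
  &\iff \EC'[\lambda x.P]\conv
  \iff \EC'[\lambda x.Q]\conv \\
  &\iff \EC[(\lambda x.Q)M]\conv,
\end{align*}
where the first and last ``$\iff$'' hold because the terms on either side are syntactically identical, and the middle one is the hypothesis applied to $\EC'$. Hence $(\lambda x.P)M \ecen (\lambda x.Q)M$. By (ii) and Lemma~\ref{lemma:Raece}, we also have $(\lambda x.P)M \ecen P[M/x]$ and $(\lambda x.Q)M \ecen Q[M/x]$. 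Since $\ecen$ is transitive (immediate from its definition), chaining these three equivalences yields $P[M/x] \ecen Q[M/x]$, as desired.

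I do not anticipate a genuine obstacle: the argument is a short unfolding. The only points that need care are the closedness of $M$ (without it $P[M/x]$ is not in the domain of $\ecen$) and observation (i), that cbn evaluation contexts are closed under precomposition with $[\cdot]\,M$; both are immediate from the grammar. If one prefers to avoid invoking Lemma~\ref{lemma:Raece}, the same conclusion follows by hand using that cbn reduction is deterministic: $(\lambda x.P)M\,N_1\cdots N_k \ra P[M/x]\,N_1\cdots N_k$ forces $\EC[P[M/x]]\conv \iff \EC'[\lambda x.P]\conv$, and symmetrically on the $Q$ side, whence the hypothesis applied to $\EC'$ again gives $\EC[P[M/x]]\conv \iff \EC[Q[M/x]]\conv$.
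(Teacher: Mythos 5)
Your proof is correct and follows the same overall decomposition as the paper's: first establish $(\lambda x.P)M \ecen (\lambda x.Q)M$, then use Lemma~\ref{lemma:Raece} on the $\beta$-steps $(\lambda x.P)M \Ra P[M/x]$ and $(\lambda x.Q)M \Ra Q[M/x]$, and conclude by transitivity. The one place you diverge is the first step: the paper obtains $(\lambda x.P)M \ecen (\lambda x.Q)M$ by citing Corollary~\ref{corollary:Clift}, i.e.\ closure of $\clbn_2$ under evaluation contexts, which implicitly relies on the inclusion ${\ecen} \subseteq {\clbn_2}$ --- a fact that is only established later in the development. You instead prove exactly the instance you need directly from the grammar of cbn evaluation contexts, observing that precomposing an evaluation context with $[\cdot]\,M$ yields another evaluation context, so the claim is an immediate unfolding of the definition of $\ecen$. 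This is more elementary and self-contained, and it sidesteps the forward reference; the paper's route buys brevity by reusing the already-stated congruence corollary. Your side remarks (closedness of $M$, and the alternative argument via determinism of cbn reduction in place of Lemma~\ref{lemma:Raece}) are both sound.
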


\begin{proof}
  If $\lambda x.P \ecen \lambda x.Q$, then by Corollary
  \ref{corollary:Clift}, $(\lambda x.P)M \ecen (\lambda x.Q)M$ for all
  $M$.
  Moreover, $(\lambda x.P)M \Ra P[M/x]$ and $(\lambda x.Q)M \Ra
  Q[M/x]$.
  Thus, by Lemma \ref{lemma:Raece}, \[ P[M/x] \ecen (\lambda x.P)M
  \ece (\lambda x.Q)M \ecen Q[M/x] \] and the conclusion follows by
  transitivity of $\ecen$.
\end{proof}

\begin{lemma}
  \label{lemma:ecesubst}
  If $M \cen N$, then for all $P \in \Lambda$, $P[M/x] \ecen P[N/x]$
  for all $x \notin \bv(P)$.
\end{lemma}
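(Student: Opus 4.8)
The plan is to mimic the proof of Lemma~\ref{lemma:eceabssubst}: we $\beta$-expand $P[M/x]$ and $P[N/x]$ to $(\lambda x.P)M$ and $(\lambda x.P)N$, and then exploit the fact that $M \cen N$ is a \emph{contextual} equivalence (quantifying over all contexts, not merely evaluation contexts) to compare $(\lambda x.P)M$ with $(\lambda x.P)N$. Throughout we treat the case in which the terms in the statement are closed, i.e.\ $\fv(P) \subseteq \{x\}$ (recall $M, N \in \Ld$); the hypothesis $x \notin \bv(P)$ guarantees that $(\lambda x.P)M$ and $(\lambda x.P)N$ $\beta$-reduce precisely to $P[M/x]$ and $P[N/x]$ with no capture.

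First I would record the two reductions $(\lambda x.P)M \Ra P[M/x]$ and $(\lambda x.P)N \Ra P[N/x]$, so that Lemma~\ref{lemma:Raece} (together with symmetry of $\ecen$) yields $P[M/x] \ecen (\lambda x.P)M$ and $(\lambda x.P)N \ecen P[N/x]$. It then remains to establish $(\lambda x.P)M \ecen (\lambda x.P)N$, after which the conclusion follows by transitivity of $\ecen$ (an equivalence relation, being defined by a biconditional). Note that it would \emph{not} be enough here to know $M \ecen N$: the argument position of an application is not an evaluation context, so evaluation-contextual equivalence of $M$ and $N$ need not transfer to $(\lambda x.P)M$ and $(\lambda x.P)N$. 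This is exactly the point where the full strength of $\cen$ is consumed.

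To prove $(\lambda x.P)M \ecen (\lambda x.P)N$, I would fix an arbitrary evaluation context $\EC$ and form the ordinary context $D := \EC[(\lambda x.P)[\cdot]]$, obtained by plugging $(\lambda x.P)[\cdot]$ into the hole of $\EC$. Since evaluation contexts contain no binders and $M, N$ are closed, no variable capture occurs, so $D[M] = \EC[(\lambda x.P)M]$ and $D[N] = \EC[(\lambda x.P)N]$. From $M \cen N$ we get $D[M]\conv$ iff $D[N]\conv$, that is, $\EC[(\lambda x.P)M]\conv$ iff $\EC[(\lambda x.P)N]\conv$; as $\EC$ was arbitrary, $(\lambda x.P)M \ecen (\lambda x.P)N$. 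Chaining, $P[M/x] \ecen (\lambda x.P)M \ecen (\lambda x.P)N \ecen P[N/x]$, whence $P[M/x] \ecen P[N/x]$.

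The only genuine subtlety — and the step I would be most careful about — is the bookkeeping in the previous paragraph: checking that $D$ really is a single-hole context and that $D[M]$, $D[N]$ are the intended terms, i.e.\ that no free variable of $M$ or $N$ is captured inside $\EC$ or under $\lambda x$. This is immediate since evaluation contexts are binder-free and $M, N \in \Ld$, but it is precisely where the hypotheses are used. An alternative route by structural induction on $P$ is available but less convenient: in the application case one would need the full contextual equivalence $\cen$ (not just $\ecen$) in the inductive hypothesis to handle the argument subterm, essentially reproving that $\cen$ is a congruence, whereas the $\beta$-expansion argument above sidesteps this.
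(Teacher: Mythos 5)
Your proof is correct, but it follows a different decomposition than the paper's. The paper's argument is a one-liner: form the (multi-hole) context $C = P[[\cdot]/x]$, so that $C[M] = P[M/x]$ and $C[N] = P[N/x]$ (this is where $x \notin \bv(P)$ is used), observe that $\cen$ is preserved by context filling (the paper cites Corollary~\ref{corollary:Clift}, though this closure property also follows directly from the definition of $\cen$ by composing contexts), and conclude with the trivial inclusion ${\cen} \subseteq {\ecen}$. You instead reuse the template of Lemma~\ref{lemma:eceabssubst}: $\beta$-expand to $(\lambda x.P)M$ and $(\lambda x.P)N$, discharge the expansions via Lemma~\ref{lemma:Raece} and transitivity, and handle the middle link $(\lambda x.P)M \ecen (\lambda x.P)N$ by unfolding the definition of $\cen$ against the composed single-hole context $\EC[(\lambda x.P)[\cdot]]$. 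Both proofs hinge on the same point — that $\cen$, unlike $\ecen$, lets you place $M$ and $N$ in argument position — and your remark that $M \ecen N$ alone would not suffice is exactly right. What your route buys is that it avoids the multi-hole context bookkeeping and any appeal to the congruence corollary, needing only the definition of $\cen$, Lemma~\ref{lemma:Raece}, and transitivity of $\ecen$; what it costs is the extra $\beta$-expansion step and the (correct, but worth stating, as you do) check that no capture occurs when composing $\EC$ with $(\lambda x.P)[\cdot]$. Your restriction to $\fv(P) \subseteq \{x\}$ is harmless, since $\ecen$ is only defined on closed terms, so the statement is only meaningful in that case anyway.
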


\begin{proof}
  Let $C = P[[\cdot]/x]$ be the context obtained by replacing every
  occurence of $x$ with a hole, and so $C[M] = P[M/x]$ and similarly
  for $N$.
  Then, by Corollary \ref{corollary:Clift}, $C[M] \cen C[N]$, so
  $P[M/x] \cen P[N/x]$.
  Since $\cen\ \subseteq\ \ecen$, we conclude $P[M/x] \ecen P[N/x]$ as
  desired.
\end{proof}
}{}

\begin{prop}
  The coupled relation $(\cen, \ecen)$ is a CLB, that is to say,
  $(\cen, \ecen) \subseteq\ \clbn$.
\end{prop}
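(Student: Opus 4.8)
The plan is to check directly that the coupled relation $\R := ({\cen},{\ecen})$ satisfies the three clauses of Definition~\ref{def:clb}; since $\clbn$ is by definition the union of all CLBs, this at once gives the stated inclusion ${\R} \subseteq {\clbn}$. First, $\R$ is a coupled relation, since every evaluation context is a context, so $M \cen N$ implies $M \ecen N$, i.e.\ ${\cen} \subseteq {\ecen}$ (by Theorem~\ref{thm:eceisce} they even coincide, though I will not need this). Throughout I use freely that $\ecen$ is an equivalence relation and that $E \Ra E'$ implies $E \ecen E'$ (Lemma~\ref{lemma:Raece}). Clause~1 is immediate: if $M \ecen N$ and $M \ra M'$, then $M \Ra M'$, so $M \ecen M'$ and hence $M' \ecen M \ecen N$; thus $N' := N$ witnesses the clause, as $N \Ra N$. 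Since $\ecen$ is symmetric and the clauses for $N$ are mirror images of those for $M$, I need only treat the clauses for $M$.

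For clause~2, suppose $M \ecen N$ with $M = \lambda x.M'$. Being a value, $M$ converges, so the trivial evaluation context $[\cdot]$ forces $N \conv$, say $N \Ra \lambda x.N'$ (up to $\alpha$-renaming). By Lemma~\ref{lemma:Raece}, $N \ecen \lambda x.N'$, so $\lambda x.M' \ecen \lambda x.N'$ by transitivity, and then Lemma~\ref{lemma:eceabssubst} yields $M'[P/x] \ecen N'[P/x]$ for every closed $P$. It therefore suffices to prove that $N'[P/x] \ecen N'[Q/x]$ whenever $P$ and $Q$ lie in the closed contextual closure ${\cen}^\star$; combining this with the previous step by transitivity gives $M'[P/x] \ecen N'[Q/x]$, which is exactly what clause~2 demands.

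The one substantive point is thus to deduce $N'[P/x] \ecen N'[Q/x]$ from $P \mathrel{{\cen}^\star} Q$. I would first establish ${\cen}^\star \subseteq {\cen}$ directly from the definition of contextual equivalence: writing $P = C[\wt A]$ and $Q = C[\wt B]$ with $\wt A \cen \wt B$ componentwise (all terms closed), for any context $D$ one passes from $D[P]$ to $D[Q]$ by replacing the entries of $\wt A$ by those of $\wt B$ one at a time, and each such replacement preserves convergence because the surrounding term-with-holes is again a context and the two terms substituted are $\cen$-related; hence $D[P] \conv$ iff $D[Q] \conv$, so $P \cen Q$. Given $P \cen Q$ (in particular $P, Q \in \Ld$), Lemma~\ref{lemma:ecesubst} applied with $N'$ as the ambient term (after $\alpha$-converting so $x \notin \bv(N')$) gives $N'[P/x] \ecen N'[Q/x]$, completing clause~2. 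I expect this to be the only real work; note one should resist invoking ``contextual equivalence is a congruence'' here, since that fact is itself derived from the present proposition together with Corollaries~\ref{cor:inclusion1} and~\ref{corollary:Clift}, so the inclusion ${\cen}^\star \subseteq {\cen}$ has to be argued directly as above.
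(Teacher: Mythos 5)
Your proof is correct and follows essentially the same route as the paper's: clause~1 by reflexivity and transitivity of $\ecen$, and clause~2 by combining Lemma~\ref{lemma:eceabssubst} with Lemma~\ref{lemma:ecesubst} and closing under transitivity. The only difference is that you explicitly justify ${({\cen})^\star} \subseteq {\cen}$ via the one-hole-at-a-time argument, whereas the paper silently tests clause~2 only against pairs in $\cen$ rather than in its contextual closure; this is added care rather than a different method.
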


  \arxiv{
\begin{proof}

    We show the first clause of the definition of CLB.
    Assume $M \ecen N$ and that $M \ra M'$ (so $M \Ra M'$, and thus $M
    \ecen M'$).
    Then $N \Ra N$ and $M' \ecen N$ as desired follows by the
    transitivity and symmetry of $\ecen$.

    We now show that $\lambda x.P \ecen N$ satisfies the second clause
    of the definition of CLB.
    By definition of $\ecen$, $N \Ra \lambda x.Q$.
    Let $V \cen W$, then by Lemma \ref{lemma:eceabssubst}, we have
    $P[V/x] \ecen Q[V/x]$, and by Lemma \ref{lemma:ecesubst}, we have
    $Q[V/x] \ecen Q[W/x]$.
    By transitivity of $\ecen$, we conclude that $P[V/x] \ecen Q[W/x]$
    as desired.
\end{proof}
  }{
  }

\begin{corollary}
  \label{cor:censareclbns}
  The contextual equivalences and coupled logical bisimilarity
  coincide; that is, we have $(\cen, \ecen) = (\clbn_1, \clbn_2)$.
\end{corollary}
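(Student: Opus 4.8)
The plan is to obtain the equality by antisymmetry of inclusion on coupled relations, assembling two inclusions that are already at hand. Recall that, by Definition~\ref{def:clb}, $\clbn = (\clbn_1,\clbn_2)$ is the pairwise union of all CLBs, so any coupled relation is contained in $\clbn$ as soon as it is known to be a CLB.

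First I would invoke Corollary~\ref{cor:inclusion1} for the inclusion $(\clbn_1,\clbn_2) \subseteq (\cen,\ecen)$: it states precisely that $M \clbn_1 N$ implies $M \cen N$ and that $E \clbn_2 F$ implies $E \ecen F$. (This is the soundness direction, and upstream it rests on the lifting Corollary~\ref{corollary:Clift} together with the big-step Lemma~\ref{lemma:bigsteps}, which convert bisimilarity into joint convergence under all contexts, respectively all evaluation contexts.) For the reverse inclusion $(\cen,\ecen) \subseteq (\clbn_1,\clbn_2)$ I would use the immediately preceding proposition, which shows that the coupled relation $(\cen,\ecen)$ is itself a CLB; note this is legitimately a coupled relation since every evaluation context is a context, so $\cen \subseteq \ecen$ (in fact $\cen = \ecen$ by the context lemma, Theorem~\ref{thm:eceisce}). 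Being a CLB, $(\cen,\ecen) \subseteq \clbn$, i.e.\ $\cen \subseteq \clbn_1$ and $\ecen \subseteq \clbn_2$. Combining the two inclusions componentwise yields $(\cen,\ecen) = (\clbn_1,\clbn_2)$.

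I expect no genuine obstacle in the corollary itself: once the two preceding results are in place, it is a one-line consequence obtained by antisymmetry. All the real difficulty sits upstream --- in Corollary~\ref{cor:inclusion1} (soundness, via the up-to-context technique and its respectful compatibility) and in the preceding proposition (completeness, which needs the context lemma together with the substitutivity properties of $\ecen$). Accordingly, rather than prove anything new here, I would simply cite those two facts and conclude.
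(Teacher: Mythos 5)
Your proposal is correct and matches the paper's (implicit) argument exactly: the equality follows by double inclusion from Corollary~\ref{cor:inclusion1} ($\clbn \subseteq (\cen,\ecen)$) and the immediately preceding proposition ($(\cen,\ecen)$ is a CLB, hence contained in the pairwise union $\clbn$ of all CLBs). Your side remarks --- that $\cen\subseteq\ecen$ makes $(\cen,\ecen)$ a genuine coupled relation, and that all the real work sits in the two cited results --- are accurate.
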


\begin{corollary}
  \label{cor:clbn1-eq-clbn2}
  The two components of coupled logical bisimilarity coincide with
  each other and with contextual equivalences, i.e., ${\clbn_1} =
  {\cen} = {\ecen} = {\clbn_2}$.
\end{corollary}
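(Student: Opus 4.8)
The plan is to simply chain together results already in hand. By Corollary~\ref{cor:censareclbns} we have the componentwise equalities ${\clbn_1} = {\cen}$ and ${\clbn_2} = {\ecen}$. It therefore only remains to identify ${\cen}$ with ${\ecen}$, and this is exactly the content of Milner's context lemma for call-by-name, recorded here as Theorem~\ref{thm:eceisce}: $M \ecen N$ if and only if $M \cen N$, i.e.\ ${\cen} = {\ecen}$.

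Concretely, I would write: from Corollary~\ref{cor:censareclbns}, ${\clbn_1} = {\cen}$ and ${\clbn_2} = {\ecen}$; by Theorem~\ref{thm:eceisce}, ${\cen} = {\ecen}$; composing these three equalities gives ${\clbn_1} = {\cen} = {\ecen} = {\clbn_2}$, as claimed. There is essentially no obstacle here — the corollary is a one-line consequence of the preceding development. The only thing worth emphasising is that the nontrivial external ingredient being invoked is the context lemma (Theorem~\ref{thm:eceisce}); without it one would only obtain the pair of equalities ${\clbn_1} = {\cen}$ and ${\clbn_2} = {\ecen}$, and the collapse of all four relations to a single one genuinely relies on ${\cen} = {\ecen}$.
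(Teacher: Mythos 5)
Your proof is correct and is exactly the paper's argument: the paper's own proof reads ``Immediate by Corollary~\ref{cor:censareclbns} and Theorem~\ref{thm:eceisce}'', which are precisely the two ingredients you chain together. Nothing is missing.
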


\begin{proof}
  Immediate by Corollary \ref{cor:censareclbns} and Theorem
  \ref{thm:eceisce}.
\end{proof}

\arxiv{
\subsubsection{Further up-to techniques}
\label{sss:clbnupto}

We present further up-to techniques, in addition to the up-to
evaluation and up-to context techniques presented above.

\begin{definition}
  We call \textbf{up-to reduction} the up-to technique ${\R} \mapsto
  {\utr{\R}}$, where $({\utr{\R}})_1 = {\R_1}$ and $({\utr{\R}})_2 =
  ({\Ra\R_2\Longleftarrow})$, i.e., $M \mathrel{({\utr{\R}})_2} N$ if
  there exist $M'$ and $N'$ such that $M \Ra M'$, $N \Ra N'$, and $M'
  \R_2 N'$.
\end{definition}

\begin{prop}
  Up-to reduction is extensive and respectfully compatible.
\end{prop}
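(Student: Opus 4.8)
The plan is to unfold both definitions and check them one clause at a time; the argument is routine once the right case split is isolated. For \textbf{extensiveness}, I would first observe that $\utr{\R}$ is always a coupled relation, so that the technique ${\R} \mapsto {\utr{\R}}$ is total on the universe of coupled relations: indeed $(\utr{\R})_1 = {\R_1} \subseteq {\R_2} \subseteq {\Ra}{\R_2}{\Longleftarrow} = (\utr{\R})_2$, using that $\R$ is coupled and $\Ra$ is reflexive. The same reflexivity of $\Ra$ gives ${\R_2} \subseteq (\utr{\R})_2$, and ${\R_1} = (\utr{\R})_1$ trivially, so ${\R} \subseteq {\utr{\R}}$.

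For \textbf{respectful compatibility}, assume ${\R} \subseteq {\SR}$ and ${\R} \prog {\SR}$. The inclusion ${\utr{\R}} \subseteq {\utr{\SR}}$ is immediate, since ${\R_i} \subseteq {\SR_i}$ and the operation ${\mathcal T} \mapsto {\Ra}{\mathcal T}{\Longleftarrow}$ is monotone. The substantive part is ${\utr{\R}} \prog {\utr{\SR}}$: I would take $M \mathrel{(\utr{\R})_2} N$, expand it as $M \Ra M_0$, $N \Ra N_0$, $M_0 \R_2 N_0$, and then verify the three clauses of Definition~\ref{def:progresses} for $\utr{\SR}$, noting in passing that reduction and the substitutions involved preserve closedness, so everything stays in $\Ld$.

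I expect the \textbf{reduction clause} to be the only point requiring care, and hence the main obstacle. Suppose $M \ra M'$. Since call-by-name reduction is deterministic, the reduction $M \Ra M_0$ is the unique one out of $M$, so either it is empty --- then $M = M_0$ and $M \R_2 N_0$ --- or its first step is exactly $M \ra M'$, so that $M' \Ra M_0$. In the empty case I apply ${\R} \prog {\SR}$ to $M \R_2 N_0$ and $M \ra M'$, obtaining $N_0 \Ra N_0'$ with $M' \SR_2 N_0'$; then $N \Ra N_0 \Ra N_0'$ and $M' \mathrel{(\utr{\SR})_2} N_0'$ by reflexivity of $\Ra$. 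Otherwise I keep $N' = N_0$: $N \Ra N_0$, and $M' \Ra M_0 \R_2 N_0$ together with ${\R_2} \subseteq {\SR_2}$ give $M' \mathrel{(\utr{\SR})_2} N_0$.

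For the \textbf{value clause}, suppose $M = \lambda x.P$; since no rule reduces a bare abstraction, $M$ is a normal form, so $M \Ra M_0$ forces $M_0 = \lambda x.P$ and thus $\lambda x.P \R_2 N_0$. Then ${\R} \prog {\SR}$ yields $N_0 \Ra \lambda x.Q$ with $P[X/x] \SR_2 Q[Y/x]$ for all $X \R_1^\star Y$; since $(\utr{\R})_1 = {\R_1}$ we have $(\utr{\R})_1^\star = {\R_1^\star}$, which is precisely the required quantifier domain, and composing with reflexivity of $\Ra$ lands us in $(\utr{\SR})_2$, while $N \Ra N_0 \Ra \lambda x.Q$. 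The converse clauses for $N$ follow symmetrically, the construction ${\mathcal T} \mapsto {\Ra}{\mathcal T}{\Longleftarrow}$ treating both sides uniformly. Beyond this bookkeeping the only facts used are determinism of call-by-name reduction and that abstractions are normal forms, so no auxiliary lemma is needed.
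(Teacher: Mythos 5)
Your proof is correct and follows essentially the same route as the paper's: extensiveness from reflexivity of $\Ra$, and respectful compatibility by splitting the reduction clause according to whether the witnessing reduction $M \Ra M_0$ is empty (apply ${\R}\prog{\SR}$) or not (peel off its first step), with the value clause handled by noting that abstractions are normal forms and that $({\utr{\R}})_1^\star = {\R_1^\star}$. The only differences are cosmetic --- you make the appeal to determinism of call-by-name reduction explicit and answer with $N_0$ where the paper answers with $N$ itself --- so no further comparison is needed.
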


\begin{proof} \newcommand{\UR}{\mathrel{({\utr{\R}})}}
  \newcommand{\US}{\mathrel{({\utr{\SR}})}}

  Extensiveness is immediate by the reflexivity of $\Ra$: by
  definition, ${\R_1} = {\UR_1}$.
  Moreover, $\R_2\ \subseteq\ \UR_2$ since whenever $M \R_2 N$, $M \Ra
  M \R_2 N \Longleftarrow N$, i.e., $M \UR_2 N$.
  Hence, ${\R} \subseteq {\UR}$.

  We now show compatibility.
  Assume ${\R} \prog {\SR}$ and ${\R} \subseteq {\S}$, then we want to
  show that ${\utr{\R}} \prog {\utr{\SR}}$.

  Assume $M \UR_2 N$, and let $M''$ and $N''$ be such that $M \Ra M''
  \R_2 N'' \Longleftarrow N$.
  First assume that $M \ra M'$.
  If $M \neq M''$, then we're done, since $N \Ra N$ and $M' \Ra M''
  \R_2 N'' \Longleftarrow N$, and by inclusion of ${\R}$ in ${\SR}$,
  $M' \Ra M'' \SR_2 N'' \Longleftarrow N$.
  Otherwise, if $M = M''$, then the fact that ${\R} \prog {\SR}$ and
  $M'' \R_2 N''$ implies that $N'' \Ra N'$ such that $M' \SR_2 N'$, so
  $N \Ra N'$ by transitivity of $\Ra$, and $M' \US_2 N'$ by
  reflexivity of $\Ra$, as desired.

  Now assume that $M = \lambda x.P$, then $M'' = \lambda x.P$ as well.
  Then since $\lambda x.P \R_2 N''$ and ${\R} \prog {\SR}$, we have
  that $N'' \Ra \lambda x.Q$ such that for all $X \R_1^\star Y$,
  $P[X/x] \SR_2 Q[Y/x]$.
  By transitivity of $\Ra$, we thus have that $N \Ra \lambda x.Q$ and
  for all $X \UR_1^\star Y$, that $P[X/x] \UR_2 Q[Y/x]$ as desired.
  \RAK{This last sentence needs to be proven}

  The symmetric cases follow symmetrically and we conclude respectful
  compatibility.
\end{proof}

\begin{corollary}
  Up-to reduction is sound.
\end{corollary}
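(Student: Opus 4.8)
The plan is to derive soundness of up-to reduction by instantiating the general soundness criterion established in Section~\ref{sec:genupto}. Concretely, Proposition~\ref{prop:sympasound} states that if $\F$ has a continuous progression $\prog$ and $\P$ is a (respectfully) compatible, extensive up-to technique, then $\P$ is sound. I would check that all three hypotheses are already available for up-to reduction in the call-by-name setting.

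First, the relevant family $\F$ is the set of all CLBs and the ambient universe $\U$ is the set of coupled relations; the progression $\prog$ from Definition~\ref{def:progresses} is a genuine progression for CLBs by Proposition~\ref{prop:clbnprog}. Second, this progression is continuous by Proposition~\ref{prop:clbnsympa}. Third, the immediately preceding proposition shows that up-to reduction, ${\R} \mapsto {\utr{\R}}$, is extensive and respectfully compatible. Having assembled these three facts, Proposition~\ref{prop:sympasound} applies verbatim and yields soundness.

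There is essentially no obstacle here: the corollary is a one-line consequence of the preceding proposition together with the continuity result, so the proof is simply a citation of Propositions~\ref{prop:clbnsympa}, \ref{prop:sympasound}, and the preceding proposition. If a slightly more self-contained argument were wanted, one could instead unfold the proof of Proposition~\ref{prop:sympasound}: from ${\R} \prog {\utr{\R}}$, extensiveness gives the ascending chain ${\R} \subseteq {\utr{\R}} \subseteq {\utr{\utr{\R}}} \subseteq \cdots$, respectful compatibility gives $(\utr{})^i({\R}) \prog (\utr{})^{i+1}({\R})$ at each stage, and continuity then yields $\nu(\utr{}) \prog \nu(\utr{})$ with $\nu(\utr{}) = \bigcup_i (\utr{})^i({\R})$, so that $\nu(\utr{})$ is a CLB containing ${\R}$. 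But the cleanest presentation is simply to invoke Proposition~\ref{prop:sympasound}.

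\begin{proof}
  Up-to reduction is extensive and respectfully compatible by the preceding proposition, and the progression $\prog$ for CLBs is continuous by Proposition~\ref{prop:clbnsympa}. Hence, by Proposition~\ref{prop:sympasound}, up-to reduction is sound.
\end{proof}
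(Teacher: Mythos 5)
Your proof is correct and is exactly the route the paper intends: the corollary is stated immediately after the proposition establishing that up-to reduction is extensive and respectfully compatible, and soundness then follows from continuity of the CLB progression (Proposition~\ref{prop:clbnsympa}) via Proposition~\ref{prop:sympasound}. Nothing further is needed.
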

}{}

\arxiv{
\subsection{Applicative Bisimulation}

We recall the big-step version of applicative bisimulation, as
originally presented by Abramsky \cite{Abramsky1990}:

\begin{definition}
  A relation $\R\ \subseteq \Ld\times\Ld$ is called an
  \textbf{applicative bisimulation} if $M\R N$ implies that whenever
  $M \Ra \lambda x.P$, $N \Ra \lambda x.Q$ for some $Q$, and $P[W/x]
  \R Q[W/x]$ for all $W \in \Ld$, and conversely for $N$.
  We call the union of all applicative bisimulations, written $\ABn$,
  \textbf{applicative bisimilarity}.
\end{definition}

It is not hard to show that applicative bisimilarity is itself an
applicative bisimulation.

\begin{prop}
  \label{prop:CLBAB}
  If $({\R'}, {\R})$ is a CLB for some ${\R'} \subseteq {\R} \cap
  {\Id}$, then $\R$ is an applicative bisimulation.
\end{prop}

\arxiv{
\begin{proof}
  Immediate by Lemma \ref{lemma:bigsteps} and the definitions of
  applicative bisimulation and coupled logical bisimulation.
\end{proof}
}{}

\begin{corollary}
  \label{cor:CLBNisABn}
  The relation $\clbn_2$ is an applicative bisimulation.
\end{corollary}

\arxiv{
\begin{lemma}
  Applicative bisimilarity is an equivalence relation.
\end{lemma}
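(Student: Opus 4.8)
The plan is to verify the three defining properties of an equivalence relation separately, each time by exhibiting an appropriate applicative bisimulation and then invoking the fact (noted just above) that $\ABn$ is the union of all applicative bisimulations.

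For reflexivity, I would check that the identity relation $\Id_{\Ld\times\Ld}$ is an applicative bisimulation: if $M \Id M$ and $M \Ra \lambda x.P$, then trivially $M \Ra \lambda x.P$ with $P[W/x] \Id P[W/x]$ for every $W \in \Ld$, and the converse clause is identical. Hence $\Id_{\Ld\times\Ld} \subseteq \ABn$, so $\ABn$ is reflexive on $\Ld$. For symmetry, I would observe that the definition of applicative bisimulation is already symmetric in its two arguments — the clause for $N$ is explicitly the converse of the clause for $M$ — so a relation $\mathcal{R}$ is an applicative bisimulation if and only if its opposite $\mathcal{R}\op$ is. Since $\ABn$ is the union of all applicative bisimulations, and this family is closed under taking opposites, we get $\ABn\op = \ABn$, i.e., $\ABn$ is symmetric.

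For transitivity, the key step is to show that if $\mathcal{R}$ and $\mathcal{S}$ are applicative bisimulations then so is their composition $\mathcal{R}\mathcal{S}$ (recall $M \mathrel{(\mathcal{R}\mathcal{S})} N$ iff $M \mathrel{\mathcal{R}} L \mathrel{\mathcal{S}} N$ for some $L$). Suppose $M \mathrel{(\mathcal{R}\mathcal{S})} N$, witnessed by $L$, and suppose $M \Ra \lambda x.P$. Applying the bisimulation clause for $\mathcal{R}$ to $M \mathrel{\mathcal{R}} L$ yields $L \Ra \lambda x.P'$ with $P[W/x] \mathrel{\mathcal{R}} P'[W/x]$ for all $W \in \Ld$; applying the clause for $\mathcal{S}$ to $L \mathrel{\mathcal{S}} N$ and the reduction $L \Ra \lambda x.P'$ yields $N \Ra \lambda x.Q$ with $P'[W/x] \mathrel{\mathcal{S}} Q[W/x]$ for all $W$; composing, $P[W/x] \mathrel{(\mathcal{R}\mathcal{S})} Q[W/x]$ for all $W$. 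The converse direction, starting from a reduction of $N$, is symmetric. Thus $\mathcal{R}\mathcal{S}$ is an applicative bisimulation, hence $\mathcal{R}\mathcal{S} \subseteq \ABn$. Now given $M \ABn L \ABn N$, pick applicative bisimulations $\mathcal{R}$ and $\mathcal{S}$ with $M \mathrel{\mathcal{R}} L$ and $L \mathrel{\mathcal{S}} N$; then $M \mathrel{(\mathcal{R}\mathcal{S})} N$, so $M \ABn N$. This gives transitivity and completes the proof.

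I do not expect a serious obstacle here; the proof is essentially routine. The only points that require a little care are threading the universal quantifier over arguments $W \in \Ld$ through the two successive applications of the bisimulation clauses in the transitivity argument, and arranging (by $\alpha$-renaming) that the abstractions $\lambda x.P$, $\lambda x.P'$, $\lambda x.Q$ obtained along the way all share the same bound variable $x$.
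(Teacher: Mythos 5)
Your proof is correct and follows essentially the same route as the paper's: the identity relation witnesses reflexivity, closure of applicative bisimulations under opposites gives symmetry, and closure under relational composition gives transitivity. The paper's argument is a terser version of exactly this.
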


\begin{proof}
  To show reflexivity, observe that $Id$ is an applicative
  bisimulation, so $M \ABn M$ for all $M$.

  To show symmetry, observe that if $M \ABn N$, then $M \R N$ for some
  applicative bisimulation.
  Then $\R^{op}\ \subseteq\ \ABn$ is also an applicative bisimulation,
  so $N \ABn M$.

  To show transitivity, observe that if $L \ABn M$ and $M \ABn N$,
  then $L \R M$ and $M \R' N$ for some applicative bisimulations $\R,
  \R'\ \subseteq\ \ABn$.
  Then $L \Ra \lambda x.L'$ iff $M \Ra \lambda x.M'$ such that
  $L'[V/x] \R M'[V/x]$ for all $V$, and $M \Ra \lambda x.M'$ iff $N
  \Ra \lambda x.N'$ such that $M'[V/x] \R' N'[V/x]$ for all $V$.
  Thus, $L \Ra \lambda x.L'$ iff $N \Ra \lambda x.N'$ such that
  $L'[V/x] \R\R' N'[V/x]$ for all $V$.
  The converse follows in an identical fashion.
  It is thus clear that $\R\R'$ is an applicative bisimulation, so $L
  \ABn N$ as desired.
\end{proof}

\begin{lemma}
  \label{lemma:bsra}
  We have the following containment of relations: $\Ra\ \subseteq\
  \ABn$.
\end{lemma}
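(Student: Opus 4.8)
The plan is to show that $\Ra$ is itself an applicative bisimulation; since $\ABn$ is by definition the union of all applicative bisimulations, this immediately gives $\Ra \subseteq \ABn$. (Strictly one should restrict $\Ra$ to $\Ld\times\Ld$, but closed terms only reduce to closed terms, so this is harmless.)

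To verify the two clauses of the definition, suppose $M \Ra N$. The key observation is that call-by-name reduction $\ra$ is deterministic — for each term there is at most one applicable $\ra$-step, and values are irreducible — so the $\ra$-reducts of a fixed term form a chain linearly ordered by $\Ra$. Hence if $M \Ra \lambda x.P$, then, since $N$ is also a reduct of $M$ and $\lambda x.P$ is irreducible, we must have $N \Ra \lambda x.P$; taking $Q := P$, the residual requirement $P[W/x] \Ra Q[W/x]$ for all $W \in \Ld$ is then just reflexivity of $\Ra$. Conversely, if $N \Ra \lambda x.Q$, then $M \Ra N \Ra \lambda x.Q$ gives $M \Ra \lambda x.Q$ by transitivity, and we take $P := Q$. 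This establishes that $\Ra$ is an applicative bisimulation, and hence the lemma.

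There is essentially no real obstacle here; the only point needing care is the appeal to determinism (equivalently, to the Church--Rosser property together with the irreducibility of values) in the first clause, which is precisely what turns ``$N$ and $\lambda x.P$ are both reducts of $M$'' into ``$N \Ra \lambda x.P$''.

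Alternatively, one can bypass the direct verification using machinery already available: $(\emptyset, \Ra)$ is a CLB (as noted in the proof of Lemma~\ref{lemma:Raece}), so $\Ra \subseteq \clbn_2$; by Corollary~\ref{cor:CLBNisABn}, $\clbn_2$ is an applicative bisimulation, hence $\clbn_2 \subseteq \ABn$; chaining the two inclusions yields $\Ra \subseteq \ABn$.
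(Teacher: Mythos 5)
Your proposal is correct and follows essentially the same route as the paper: show $\Ra$ is itself an applicative bisimulation, using determinacy of call-by-name reduction (plus irreducibility of values) to get $N \Ra \lambda x.P$ from $M \Ra N$ and $M \Ra \lambda x.P$, reflexivity of $\Ra$ for the substitution clause, and transitivity for the converse direction. The alternative detour through $(\emptyset,\Ra)$ being a CLB is a nice observation but unnecessary; the direct argument is what the paper uses.
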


\begin{proof}
  Assume $M \Ra N$ and that $M \Ra \lambda x.P$.
  Then by the determinacy of the call-by-name semantics, $N \Ra
  \lambda x.P$ and for all $V$, $P[V/x] \Ra P[V/x]$ since $\Ra$ is a
  reflexive relation. The converse follows identically.
  Thus, $\Ra$ is an applicative bisimulation, so $\Ra\ \subseteq\
  \ABn$.
\end{proof}
}{}

Although the big-step formulation prevents every applicative
bisimulation from being seen as a CLB via the mapping ${\R} \mapsto
({\Id}, {\R})$, we at the very least have that every applicative
bisimulation is, in a certain sense, contained in a CLB: \RAK{If I
  have time, show that every AB can be seen as a CLB up-to reduction
  via the aforementioned mapping.}

\begin{prop}
  \label{prop:ABnisCLBn}
  The coupled relation $({\Id}, {\ABn})$ is a CLB.
\end{prop}

\arxiv{
  \begin{proof}
    By Lemma \ref{lemma:bsra} and the fact that ${\ra} \subseteq
    {\Ra}$, ${\ra} \subseteq {\ABn}$.
    So if $M \ra M'$ and $M \ABn N$, then since $N \Ra N$, by
    transitivity, $M' \ABn N$ as desired.
    If $M = \lambda x.P$, then since $M \ABn N$, $N \Ra \lambda x.Q$
    such that $P[W/x] \ABn Q[W/x]$ for all $W$ (recall that $Id^\star
    = Id$).
    Thus, $({\Id}, {\ABn})$ is a CLB.
  \end{proof}
}{
}

\begin{corollary}
  \label{cor:stringeqs1}
  Applicative bisimilarity, coupled logical bisimilarity, and
  contextual equivalences coincide, i.e., ${\ABn} = {\clbn_2} =
  {\ecen} = {\cen} = {\clbn_1}$.
  Thus, applicative bisimilarity is a congruence.
\end{corollary}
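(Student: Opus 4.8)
The plan is to deduce everything from the results already proved, chiefly Corollary~\ref{cor:clbn1-eq-clbn2}, Corollary~\ref{cor:CLBNisABn}, and Proposition~\ref{prop:ABnisCLBn}. First I would establish ${\ABn} = {\clbn_2}$ by double inclusion. For the inclusion ${\ABn} \subseteq {\clbn_2}$: by Proposition~\ref{prop:ABnisCLBn}, $({\Id}, {\ABn})$ is a CLB, and since $\clbn$ is defined as the pairwise union of all CLBs, its second component $\clbn_2$ contains $\ABn$. For the reverse inclusion ${\clbn_2} \subseteq {\ABn}$: Corollary~\ref{cor:CLBNisABn} states that $\clbn_2$ is itself an applicative bisimulation, and applicative bisimilarity is the union of all applicative bisimulations, so ${\clbn_2} \subseteq {\ABn}$. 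Hence ${\ABn} = {\clbn_2}$.

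Next I would splice this into the chain of equalities already available. Corollary~\ref{cor:clbn1-eq-clbn2} gives ${\clbn_1} = {\cen} = {\ecen} = {\clbn_2}$, so appending ${\clbn_2} = {\ABn}$ yields ${\ABn} = {\clbn_2} = {\ecen} = {\cen} = {\clbn_1}$, which is the first assertion of the corollary.

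Finally, for the congruence claim, I would argue that $\cen$ is a congruence and transport this along the equality ${\ABn} = {\cen}$. Contextual equivalence $\cen$ is visibly an equivalence relation (reflexivity, symmetry, and transitivity of ``$C[M]\conv$ iff $C[N]\conv$'' are immediate). It respects the term-forming rules: by Corollary~\ref{corollary:Clift}(1), $M \clbn_1 N$ implies $C[M] \clbn_1 C[N]$ for every context $C$, and since ${\clbn_1} = {\cen}$ this says that $\cen$ is preserved by arbitrary contexts; instantiating $C$ with $\lambda x.[\cdot]$, with $[\cdot]\,B$, and with $A\,[\cdot]$ (and using transitivity to combine the two latter into $AB \cen CD$ from $A \cen C$ and $B \cen D$) gives exactly the two inference rules in the definition of congruence. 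Thus $\cen$ is a congruence, and therefore so is $\ABn$.

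\textbf{Main obstacle.} There is essentially no hard step: everything is an assembly of previously established facts. The only point requiring a small amount of care is checking that $\cen$ satisfies the \emph{compatibility} clauses of the congruence definition rather than merely being closed under all one-hole contexts; this is handled by combining single-hole closures with transitivity of $\cen$ as indicated above.
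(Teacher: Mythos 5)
Your proposal is correct and follows essentially the same route as the paper: the paper's proof establishes ${\ABn} = {\clbn_2}$ by double inclusion via Proposition~\ref{prop:ABnisCLBn} and Corollary~\ref{cor:CLBNisABn}, and derives the remaining equalities from Corollary~\ref{cor:clbn1-eq-clbn2}. Your additional unpacking of the congruence claim (transporting the congruence property of $\cen$ along ${\ABn}={\cen}$, using Corollary~\ref{corollary:Clift} and transitivity) is a sound elaboration of a step the paper leaves implicit.
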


\arxiv{
  \begin{proof}
    We have that ${\ABn} = {\clbn_2}$ by double inclusion via
    Proposition \ref{prop:ABnisCLBn} and Corollary
    \ref{cor:CLBNisABn}.
    The other equalities follow from Corollary
    \ref{cor:clbn1-eq-clbn2}.
  \end{proof}
}{}
\subsection{Logical Bisimulation}

\textcite{Sangiorgi2007} introduced the notion of \textit{logical
  bisimulation}, a notion we show to be subsumed by coupled logical
bisimulation.

\begin{definition}
  A relation $\R\ \subseteq\ \Ld\times\Ld$ is called a \textbf{logical
    bisimulation} if whenever $M \R N$:
  \begin{enumerate}
  \item if $M \ra M'$, then $N \Ra N'$ and $M' \R N'$;
  \item if $M = \lambda x.M'$, then $N \Ra \lambda x.N'$ and for all
    $X \R^\star Y$, $M'[X/x] \R N'[Y/x]$;
  \item the converses for $N$.
  \end{enumerate}
  The union of all logical bisimulations is called \textbf{logical
    bisimilarity} and is denoted $\lbn$.
\end{definition}

As one would expect, we have the following proposition, the proof of
which may be found in \cite[Corollary 1 and Lemma 4]{Sangiorgi2007}:

\begin{prop}[\cite{Sangiorgi2007}]
  \label{prop:lbisbisimandcong}
  Logical bisimilarity is the largest logical bisimulation and is a
  congruence relation.
\end{prop}

The following proposition follows from a straightforward check of the
definitions and tells us that the notion of logical bisimulation is
subsumed by that of CLB:

\begin{prop}
  \label{prop:lbiffclb}
  A relation $\R$ is a logical bisimulation if and only if $(\R, \R)$
  is a CLB.
\end{prop}

\begin{corollary}
  Logical bisimilarity, coupled logical bisimilarities, applicative
  bisimilarity, and contextual equivalences coincide, i.e., ${\lbn} =
  {\ABn} = {\clbn_2} = {\cen} = {\ecen} = {\clbn_1}$.
\end{corollary}
\subsection{Applicative and Logical Bisimulations}

}{

We now compare CLB with AB and LB.
We first recall the big-step version of applicative bisimulation, as
originally presented by Abramsky \cite{Abramsky1990}:

\begin{definition}[Applicative bisimulation]
  A relation $\R\ \subseteq \Ld\times\Ld$ is called an
  \textbf{applicative bisimulation} if $M\R N$ implies that whenever
  $M \Ra \lambda x.P$, $N \Ra \lambda x.Q$ for some $Q$, and $P[W/x]
  \R Q[W/x]$ for all $W \in \Ld$, and conversely for $N$.
  We call the union of all applicative bisimulations, written $\ABn$,
  \textbf{applicative bisimilarity}.
\end{definition}

It is not hard to show that applicative bisimilarity is itself an
applicative bisimulation.

\begin{prop}
  \label{prop:CLBAB}
  If $({\R'}, {\R})$ is a CLB for some ${\R'} \subseteq {\R} \cap
  {\Id}$, then $\R$ is an applicative bisimulation.
\end{prop}

\begin{corollary}
  \label{cor:CLBNisABn}
  The relation $\clbn_2$ is an applicative bisimulation.
\end{corollary}

\begin{prop}
  \label{prop:ABnisCLBn}
  The coupled relation $({\Id}, {\ABn})$ is a CLB.
\end{prop}


As in the cbn case,  \textit{logical
  bisimulation}~\textcite{Sangiorgi2007} is subsumed by coupled
logical bisimulation.

\begin{definition}[Logical bisimulation]
  A relation $\R\ \subseteq\ \Ld\times\Ld$ is called a \textbf{logical
    bisimulation} if whenever $M \R N$:
  \begin{enumerate}
  \item if $M \ra M'$, then $N \Ra N'$ and $M' \R N'$;
  \item if $M = \lambda x.M'$, then $N \Ra \lambda x.N'$ and for all
    $X \R^\star Y$, $M'[X/x] \R N'[Y/x]$;
  \item the converses for $N$.
  \end{enumerate}
  The union of all logical bisimulations is called \textbf{logical
    bisimilarity} and is denoted $\lbn$.
\end{definition}

As one would expect, we have the following proposition, the proof of
which may be found in \cite[Corollary 1 and Lemma 4]{Sangiorgi2007}:

\begin{prop}[\cite{Sangiorgi2007}]
  \label{prop:lbisbisimandcong}
  Logical bisimilarity is the largest logical bisimulation and is a
  congruence relation.
\end{prop}

\begin{prop}
  \label{prop:lbiffclb}
  A relation $\R$ is a logical bisimulation if and only if $(\R, \R)$
  is a CLB.
\end{prop}

\begin{corollary}
  Logical bisimilarity, coupled logical bisimilarities, applicative
  bisimilarity, and contextual equivalences coincide, i.e., ${\lbn} =
  {\ABn} = {\clbn_2} = {\cen} = {\ecen} = {\clbn_1}$.
\end{corollary}

}

\section{CLB in the Call-by-value $\lambda$-calculus}
\label{s:cbv}

We now move to the study of call-by-value.

\begin{definition}
  The \textbf{call-by-value} $\lambda$-calculus is defined by the
  following reduction rules:
  \begin{equation*}
    \frac{N\ra N'}{MN\ra MN'}\hspace{3em}
    \frac{M\ra M'\quad V\in\V}{MV\ra M'V}\hspace{3em}
    \frac{V\in\V}{(\lambda x.P)V \ra P[V/x]},
  \end{equation*}
  where we take the set $\V$ of \textbf{values} to be the set of all
  abstractions $\lambda x.P \in \Lambda$ and all variables $x$.
\end{definition}

Since our theory is centered around closed terms, we restrict the set
$\V$ to the set of all $\lambda x.P \in \Ld$ throughout our
development.
We let the relations $\Ra$, $\conv$, and $\diver$, and the predicates
$\conv$ and $\diver$ be as before, except using call-by-value
reduction instead of call-by-name reduction.

\begin{definition}
  We say that two terms $M, N \in \Ld$ are \textbf{contextually
    equivalent}, written $M \cev N$, if for all contexts $C$,
  $C[M]\conv$ if and only if $C[N]\conv$.
\end{definition}

\begin{definition}
  A \textbf{call-by-value evaluation context} $\EC$ is given by the
  following grammar,
  \[ \EC := [\cdot] \mid M\EC \mid \EC V, \] where $V$ ranges over
  $\V$ and $M$ ranges over $\Ld$.
  Two terms $M$ and $N$ are \textbf{evaluation-contextually
    equivalent}, written $M \ecev N$, if for all evaluation contexts
  $\EC$, $\EC[M]\conv$ if and only if $\EC[N]\conv$.
\end{definition}

\begin{definition}
  If $\R, \R'\ \subseteq \Ld\times\Ld$ are relations, then $\R$'s
  \textbf{evaluation-contextual closure under $\R'$},
  $\eccv{\R}{\R'}$, is the least relation closed forward under the
  following rules: \[
  \frac{X \R Y}{X \eccv{\R}{\R'} Y}\quad
  \frac{M \R' N\quad X \eccv{\R}{\R'} Y}{MX \eccv{\R}{\R'} NY}\quad
  \frac{V \vr{\R'} W\quad X \eccv{\R}{\R'} Y}{XV \eccv{\R}{\R'} YW}.
  \]
\end{definition}

\begin{definition}
  If $\R$ is a coupled relation, then let its \textbf{contextual closure}
  $\R^V$ be given by
  \[\R^V = \left(\R_1^\star, 
    \eccv{\R_2}{\R_1^\star} \cup
    \R_1^\star \right).\]
\end{definition}

\subsection{Proving the Context Lemma for call-by-value}
\label{s:context:lemma:cbv}

As for the call-by-name $\lambda$-calculus, we have a Milner-style
context lemma. Although allusions to a proof exist in the literature,
e.g., in a footnote in~\cite{Milner1990}, the authors have been
unable to find a published proof. 
We present ours
  below.

\begin{theorem}
  \label{thm:cev-is-ecev}
  We have the following equality of relations: ${\cev} = {\ecev}$.
\end{theorem}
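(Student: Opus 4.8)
The inclusion ${\cev} \subseteq {\ecev}$ is immediate, since every evaluation context is a context. The content of the theorem is the reverse inclusion ${\ecev} \subseteq {\cev}$, which I would derive by showing that $\ecev$ is preserved by arbitrary contexts: if $M \ecev N$ and $C$ is a context with $C[M]$ and $C[N]$ closed, then $C[M] \ecev C[N]$, whence $C[M]\conv \iff C[N]\conv$ (take $\EC = [\cdot]$), which is exactly $M \cev N$. To make the induction go through I would pass to the open extension $M \ecev^\circ N \iff (\forall \text{ closing } \sigma)\; M\sigma \ecev N\sigma$, and prove that $\ecev^\circ$ (already an equivalence) is a congruence, i.e.\ is compatible with application and with $\lambda$-abstraction; the closed statement above is then the instance for the trivial substitution.

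I would first record three easy facts. (i) ${\Ra} \subseteq {\ecev}$: evaluation contexts are closed under reduction, so $M \ra M'$ implies $\EC[M] \ra \EC[M']$ for every evaluation context $\EC$. (ii) Every diverging closed term makes every evaluation context diverge, since the hole stays in reduction position and never becomes a value; hence all diverging closed terms are $\ecev$-equivalent, and, with (i), we get the reformulation: $M \ecev N$ iff both diverge, or $M\conv V$ and $N\conv W$ with $V \ecev W$. (iii) If $\lambda x.P \ecev \lambda x.Q$ then $P[U/x] \ecev Q[U/x]$ for every closed value $U$, by applying the evaluation context $[\cdot]\,U$ and using (i).

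The application clause is then a short chain: for closing $\sigma$ we have $M\sigma\,P\sigma \ecev M\sigma\,Q\sigma$ via the evaluation context $M\sigma[\cdot]$, and $M\sigma\,Q\sigma \ecev N\sigma\,Q\sigma$ by cases on whether $Q\sigma$ diverges (then both sides diverge) or $Q\sigma \conv W$ (reduce the argument and compare via $[\cdot]\,W$, using $M\sigma \ecev N\sigma$); transitivity concludes. The abstraction clause is the crux, and is where the call-by-name argument fails, because an evaluation context may duplicate and relocate the value $\lambda x.P$ before applying it. It suffices to treat closed instances: given closed $P, Q$ with $\fv(P), \fv(Q) \subseteq \{x\}$ and $P[U/x] \ecev Q[U/x]$ for all closed values $U$, show $\lambda x.P \ecev \lambda x.Q$. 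For this I would use a bisimulation-up-to-context argument. Let $\mathcal{R}$ be the closure under (multi-hole) contexts and under relational composition of ${\ecev} \cup \{(\lambda x.P, \lambda x.Q)\}$, and prove by induction on the length of a convergent reduction that $\mathcal{R}$ is an applicative bisimulation up to this closure --- $\mathcal{R}$-related closed terms either both diverge, or both converge to abstractions which, applied to any closed value, give $\mathcal{R}$-related terms. The decisive case is the first reduction of $A = D[\widetilde{A}]$: if the redex lies within $D$, the contractum is again an $\mathcal{R}$-pair over a strictly shorter derivation; if a hole of $D$ occupies the \emph{argument} slot of the redex, the contractum is still a context-pair; and if a hole occupies the \emph{function} slot, so the redex is $(\lambda x.P)\,U$ with $U$ a closed value (or an $\ecev$-sub-pair sits in function position, which is handled by (iii)), then $P[U/x] \ecev Q[U/x]$ is available, and combining it with context- and composition-closure for the remaining hole occurrences shows the contractum is $\mathcal{R}$-related to the corresponding reduct of $B$. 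Taking $\mathcal{R}$ symmetric and instantiating $A = \EC[\lambda x.P]$, $B = \EC[\lambda x.Q]$ gives $\lambda x.P \ecev \lambda x.Q$.

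The main obstacle is precisely this abstraction clause: choosing $\mathcal{R}$ closed under the right operations --- contexts, composition, and (subsumed by context closure, since the plugged terms are closed) value substitution --- so that \emph{every} one-step reduct of an $\mathcal{R}$-pair is again an $\mathcal{R}$-pair with a strictly smaller measure, and tracking carefully which holes of a context remain in evaluation position after a reduction and how they are duplicated or erased. With that bookkeeping settled, the induction on reduction length, hence the congruence of $\ecev^\circ$ and the theorem, are routine.
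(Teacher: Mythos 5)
Your route is genuinely different from the paper's. You aim to prove that $\ecev$ (via its open extension) is a congruence and then conclude by induction on an arbitrary context; the paper never proves congruence of $\ecev$ at all. Instead it splits the problem according to the behaviour of the two terms under the empty evaluation context: (a) every converging term is contextually equivalent to its value (Proposition~\ref{prop:cmconviffcvconv}, proved by exhibiting an explicit simulation between $C\fl{M}$ and $C\fl{V}$ for arbitrary many-holed contexts $C$, matching each step of $C\fl{V}$ by steps of $C\fl{M}$ after possibly first reducing some copies of $M$ to $V$); (b) any two diverging terms are contextually equivalent (Proposition~\ref{prop:divercev}, by a similar explicit relation); and (c) the key step, Proposition~\ref{prop:jmd}: two \emph{values} that are evaluation-contextually equivalent are contextually equivalent, because for a value $V$ an arbitrary context $C$ is simulated by the \emph{evaluation} context $(\lambda x.C[x])[\cdot]$, which reduces $\EC[V]$ to $C[V]$ in one step. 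Chaining (a)--(c) via Proposition~\ref{prop:jmdrak} gives the theorem. Your preliminary facts (i)--(iii) all appear in the paper (Lemma~\ref{lemma:RainECEV}, Lemma~\ref{lemma:ecevevctxtcong}, Lemma~\ref{lemma:ecevvalsbeta}) and your application clause is fine; what trick (c) buys is that the entire $\lambda$-abstraction case of your congruence induction is bypassed.

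That abstraction case is also where your proposal stops being a proof. You correctly identify it as the crux, but the argument you sketch --- close ${\ecev}\cup\{(\lambda x.P,\lambda x.Q)\}$ under multi-hole contexts \emph{and relational composition}, then show by induction on the length of a converging reduction that the result is an applicative simulation up to itself --- is exactly where the known hazards live. Up-to-context reasoning for applicative (bi)simulation in call-by-value is unsound in general, and the composition closure interacts badly with an induction measured on the left-hand term's reduction length: when $A \R C \R B$ you must run the middle term $C$, whose reduction length is not controlled by your measure, yet composition is needed inside the induction (precisely in your function-position case, to pass from $P[U/x] \ecev Q[U/x]$ to $Q[U/x]$ related to $Q[U'/x]$ by context closure). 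Making this rigorous requires either reorganising the relation around substitution of related values into a common open term, in the style of the ciu theorem, or Howe's method; as written, the decisive case is asserted rather than established. So your plan is a recognisable and in-principle viable alternative, but the one step that constitutes the actual content of the call-by-value context lemma is left open, whereas the paper discharges it with the $(\lambda x.C[x])[\cdot]$ observation together with the two explicit simulation relations.
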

\ifarxivversion
\begin{lemma}
  \label{lemma:ecevevctxtcong}
  The relation $\ecev$ is closed forward under the following rules:
  \[ \frac{A \ecev B \quad M\in\Ld}{MA \ecev MB}\quad\quad \frac{A
    \ecev B \quad V\in\V}{AV \ecev BV}.\]
\end{lemma}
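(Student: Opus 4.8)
The idea is to obtain both rules from a single structural observation about call-by-value evaluation contexts: that filling the hole of one evaluation context with another yields an evaluation context, and that $M[\cdot]$ (for $M \in \Ld$) and $[\cdot]V$ (for $V \in \V$) are themselves evaluation contexts.

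First I would show, by induction on the evaluation context $\EC_1$, that if $\EC_1$ and $\EC_2$ are cbv evaluation contexts then so is the context obtained by filling the hole of $\EC_1$ with $\EC_2$. The case $\EC_1 = [\cdot]$ is immediate; if $\EC_1 = M\EC_1'$ the composite is $M(\EC_1'[\EC_2])$, which is an evaluation context by the induction hypothesis and the production $M\EC$ (note $M \in \Ld$); and the case $\EC_1 = \EC_1'V$ is symmetric, using the production $\EC V$ (note $V \in \V$).

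For the first rule, assume $A \ecev B$ and $M \in \Ld$, and let $\EC$ be an arbitrary evaluation context. Since $[\cdot]$ is an evaluation context and $M \in \Ld$, $M[\cdot]$ is an evaluation context; hence, filling the hole of $\EC$ with $M[\cdot]$ gives an evaluation context $\EC'$, and $\EC[MA] = \EC'[A]$ and $\EC[MB] = \EC'[B]$. From $A \ecev B$ we get that $\EC'[A]\conv$ if and only if $\EC'[B]\conv$, that is, $\EC[MA]\conv$ if and only if $\EC[MB]\conv$; as $\EC$ was arbitrary, $MA \ecev MB$. The second rule is proved identically: from $A \ecev B$ and $V \in \V$, the context $[\cdot]V$ is an evaluation context, so $\EC[AV] = \EC''[A]$ where $\EC''$ is the evaluation context obtained by filling $\EC$'s hole with $[\cdot]V$, and the definition of $\ecev$ concludes.

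I do not anticipate a real obstacle here; the only step requiring an actual induction is closure of evaluation contexts under composition, and that is routine. The point deserving attention is that the two rules are stated with exactly the shapes matching the evaluation-context productions — an arbitrary closed function with the hole in argument position, and a value with the hole in function position — so the side conditions $M \in \Ld$ and $V \in \V$ are met automatically. This is precisely why the lemma is particular to $\ecev$ and does not, by this argument, give closure under $\lambda$-abstraction or under $[\cdot]N$ for non-value $N$; handling those is where the real work in proving Theorem~\ref{thm:cev-is-ecev} lies.
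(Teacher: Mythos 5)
Your proof is correct and follows essentially the same route as the paper's: both reduce each rule to the definition of $\ecev$ by observing that $\EC(M[\cdot])$ and $\EC([\cdot]V)$ are again evaluation contexts, so that $\EC[MA]=\EC'[A]$ for an evaluation context $\EC'$. The only difference is that you spell out, by induction, the closure of evaluation contexts under hole-filling, which the paper's proof takes for granted.
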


\begin{proof}
  We consider the first rule.  Let $\EC$ is an evaluation context, we
  prove that $\EC[M A] \conv$ if and only if $\EC[M B] \conv$, which
  is equivalent to $\EC'[A] \conv$ if and only if $\EC'[B] \conv$ with
  $\EC' = \EC(M[\cdot])$.  The latter holds since $A \ecev B$.  The
  second rule follows in an identical manner with
  $\EC'=\EC([\cdot]V)$.
\end{proof}


\begin{corollary}
  The relation $\ecev$ is closed forward under the following rule:
  \[ \frac{M \ecev N \quad V \vr{\ecev} W}{MV \ecev NW}. \]
\end{corollary}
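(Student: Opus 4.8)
The plan is to chain the two closure rules established in Lemma~\ref{lemma:ecevevctxtcong} and conclude by transitivity of $\ecev$. Note first that transitivity (as well as reflexivity and symmetry) of $\ecev$ is immediate from its definition: $M \ecev N$ holds exactly when $\EC[M]\conv \iff \EC[N]\conv$ for all evaluation contexts $\EC$, which is a biconditional and hence transitive. So the only work is to put the two rules of the lemma in the right order.

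First I would unpack the hypothesis $V \vr{\ecev} W$: by definition of the restriction $\vr{\cdot}$ this means $V, W \in \V$ and $V \ecev W$; in particular $V$ and $W$ are closed values. Likewise, since $\ecev$ relates closed terms, $M \ecev N$ gives $M, N \in \Ld$. Having made these observations, the two rules of Lemma~\ref{lemma:ecevevctxtcong} apply verbatim.

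Now apply the second rule of Lemma~\ref{lemma:ecevevctxtcong} with $A := M$, $B := N$ and the value $V \in \V$: from $M \ecev N$ we get $MV \ecev NV$. Then apply the first rule of the same lemma with $A := V$, $B := W$ and the term $N \in \Ld$: from $V \ecev W$ we get $NV \ecev NW$. Transitivity of $\ecev$ then yields $MV \ecev NW$, which is the desired conclusion.

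There is essentially no obstacle here: the statement is a direct consequence of the preceding lemma. The only point requiring a moment's care is extracting $V \ecev W$ and the membership facts $V, W \in \V$, $N \in \Ld$ from the notational shorthand $V \vr{\ecev} W$, so that the two closure rules can be instantiated cleanly.
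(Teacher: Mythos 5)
Your proof is correct and is essentially identical to the paper's: both derive $MV \ecev NV$ and $NV \ecev NW$ from the two rules of Lemma~\ref{lemma:ecevevctxtcong} and conclude by transitivity of $\ecev$. The extra care you take in unpacking $V \vr{\ecev} W$ is fine but not needed beyond what the paper already does implicitly.
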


\begin{proof}
  Assume $M \ecev N$ and $V \vr{\ecev} W$.
  Then by Lemma \ref{lemma:ecevevctxtcong}, we have $MV \ecev NV$ and
  $NV \ecev NW$.
  Then by transitivity of $\ecev$, we get $MV \ecev NW$.
\end{proof}
\fi

\begin{lemma}
  \label{lemma:RainECEV}
  If $M \Ra N$, then $M \ecev N$.
\end{lemma}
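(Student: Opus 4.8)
The plan is to reduce the lemma to two elementary facts about call-by-value reduction: (i) every evaluation context is preserved under reduction of its hole, i.e.\ if $M \ra M'$ then $\EC[M] \ra \EC[M']$ for all evaluation contexts $\EC$; and (ii) call-by-value reduction is deterministic, so that $\Ra$ preserves convergence in both directions, i.e.\ $P \Ra Q$ implies $P \conv \iff Q \conv$. Granting these, the lemma is immediate: given $M \Ra N$ and an arbitrary evaluation context $\EC$, iterating (i) along the reduction sequence $M = M_0 \ra \dots \ra M_k = N$ gives $\EC[M] \Ra \EC[N]$, whence by (ii) $\EC[M]\conv \iff \EC[N]\conv$; since $\EC$ was arbitrary, $M \ecev N$.

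I would prove (i) by a straightforward induction on the structure of the evaluation context $\EC$, following the grammar $\EC := [\cdot] \mid M\EC \mid \EC V$. The base case $\EC = [\cdot]$ is the hypothesis $M \ra M'$ itself. If $\EC = M_1\EC_1$, the induction hypothesis gives $\EC_1[M] \ra \EC_1[M']$, and the congruence rule for the right component of an application lifts this to $M_1\EC_1[M] \ra M_1\EC_1[M']$. If $\EC = \EC_1 V$, the induction hypothesis again gives $\EC_1[M] \ra \EC_1[M']$, and the rule that reduces the operator position against a value argument lifts it to $\EC_1[M]\,V \ra \EC_1[M']\,V$. The key sanity check here is that the three clauses of the evaluation-context grammar correspond exactly to the reduction/congruence rules of the calculus, so that the hole of an evaluation context always sits in the (unique) reduction position.

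For (ii), I would appeal to the standard observation that the call-by-value rules assign a unique redex to each non-value closed term, so $\ra$ is a partial function and the reduction sequence out of any term is unique; consequently, if $P \Ra Q$ then $Q$ either lies on that sequence or is the normal form it ends at, and in either direction $P$ reaches a value iff $Q$ does. (Alternatively, one can phrase the whole argument through the forward-closure properties of $\ecev$ established earlier, e.g.\ Lemma~\ref{lemma:ecevevctxtcong}, but the direct route above is shorter.)

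The main obstacle is mild and lies entirely in fact (i): one must be careful that no reduction of $M$ inside $\EC$ is ``blocked'' by the surrounding context, which is exactly guaranteed by the shape of evaluation contexts — this is where the definitions of call-by-value reduction and of call-by-value evaluation context must be matched up precisely. Once (i) is in place, the remainder is bookkeeping.
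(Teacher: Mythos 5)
Your proof is correct and follows essentially the same route as the paper's: the paper likewise observes that $\EC[M] \Ra \EC[N]$ for any evaluation context $\EC$ and then invokes determinism of $\ra$ to conclude that $\EC[M]$ converges iff $\EC[N]$ does. You merely spell out the structural induction on $\EC$ and the determinism argument that the paper leaves implicit.
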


\begin{proof}
  If $\EC$ is an evaluation-context, then $\EC[M] \Ra \EC[N]$, and so
  by determinism of $\ra$, we know that $\EC[M]$ converges if and only
  if $\EC[N]$ does.
\end{proof}

\begin{prop}
  \label{prop:cmconviffcvconv}
  If $M \convr V$ then $M \cev V$.
\end{prop}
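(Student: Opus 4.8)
The plan is to exhibit $M \cev V$ via a single relation between closed terms that is preserved by call-by-value reduction in a bisimulation-like sense. The key observation is that $M$ differs from its value $V$ only by being unreduced, and $M \convr V$ pins that reduction down by determinism; so I would work with
\[ \mathcal{S} = \{\,(C[\wt{P}],\, C[\wt{V}]) \mid C \text{ a context},\ C[\wt{P}], C[\wt{V}]\in\Ld,\ \text{ and } P_i \convr V \text{ for all } i\,\}, \]
where $\wt{P} = (P_1,\dots,P_k)$ ranges over closed terms each converging to $V$, and $C[\wt{V}]$ fills every hole of $C$ with the fixed value $V$. Since $(C[M],C[V]) \in \mathcal{S}$ for every context $C$ (take $\wt{P}=(M,\dots,M)$), it suffices to show that $A \mathrel{\mathcal{S}} B$ implies $A\conv \Leftrightarrow B\conv$.

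I would prove each implication by induction on the length of the convergent reduction on the side assumed to converge. In the base case the reducing term is a value, and a short case analysis on where a hole can occur in $C$ shows the other term is a value too; the only subtle subcase, a single hole at the root whose content is a value, forces that content to be $V$ by determinism of $\ra$. In the inductive step I contract the (unique) redex of the reducing term and split on its position relative to the holes of $C$: (a) it lies within the skeleton $C$, so $C \ra C'$ as a context and both sides take one matching step, landing in $\mathcal{S}$; (b) it is headed by the contents of a hole, which is then a value and hence $V$, and again both sides take one matching step; (c) the subterm currently being evaluated is the contents $P_i$ of a not-yet-evaluated hole, so the reducing side steps via $P_i \ra P_i'$ with $P_i' \convr V$ by determinism, leaving the other side unchanged and still $\mathcal{S}$-related. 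In each case the induction hypothesis applies to the strictly shorter remaining reduction.

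The main obstacle is the direction $B\conv \Rightarrow A\conv$. When $B = C[\wt{V}]$ contracts a redex using a hole of $C$ that sits in evaluation position, the matching position of $A$ may hold a non-value $P_i$, so $A$ cannot immediately imitate $B$'s step: it must first drive that $P_i$ — and, if necessary, the hole-contents appearing in the enclosing evaluation frames — down to $V$, which takes finitely many steps since each such $P_i \convr V$, bringing $A$ to a state that matches $B$ and from which $B$'s step replays. Determinism of call-by-value reduction is what guarantees these interleaved reductions terminate at $V$ and cannot make $A$ diverge where $B$ converges. Concretely, I would establish that whenever $A \mathrel{\mathcal{S}} B$ and $B \ra B'$ there is $A'$ with $A \Ra A'$, $A' \mathrel{\mathcal{S}} B'$, and $B'$ converging by a strictly shorter reduction, which feeds the induction above. (An alternative route is a structural induction on $C$ deriving $C[M] \ecev C[V]$ from Lemmas~\ref{lemma:RainECEV} and~\ref{lemma:ecevevctxtcong}, but the abstraction case of that induction would need $\ecev$ to be closed under $\lambda$-abstraction, equivalently under value substitution, which is not yet available; the operational argument sketched here sidesteps this.)
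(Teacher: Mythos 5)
Your proposal is correct and follows essentially the same strategy as the paper's proof: a candidate relation pairing $C$ filled with (terms reducing to) $V$ against $C$ filled with $V$, a two-sided simulation argument driven by determinism of call-by-value reduction, with the asymmetric catch-up on the unreduced side (first driving blocked hole-contents down to $V$) handling exactly the case the paper flags as ``before we might have to reduce some $M$ blocking a reduction into $V$.'' The only difference is cosmetic: the paper fills every hole with the single term $M$ and absorbs each fully-reduced $V$ into the context, whereas your relation lets each hole carry its own intermediate reduct of $M$, making the relation closed under single steps on the left.
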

\begin{proof}[\ifarxivversion Proof\else Proof sketch\fi]
  \newcommand{\rap}{\longtwoheadrightarrow}%
  Consider the relation:
  \[ {\R} = \{(C\fl{M}, C\fl{V}) \mid C \mbox{ is a $n$-holed context,
  } n \in \N \}.\]%
  We show that whenever $P \R Q$, if $P \in \V$ then $Q \conv$, and if
  $P \ra P'$, then $P' \Ra \R \Longleftarrow Q$.  We show the same
  property for $\R\op$ which is enough to conclude by determinism of
  $\ra$.
  
  Intuitively, if on the left $M\ra M'$ then we complete with $M'\Ra
  V$ and we proceed by taking a $V$ off the $C$.  When $C\fl{M}$ is
  moving on the left without touching $M$ (maybe duplicating it), we
  can do the same on the right. When $C\fl{V}$ moves on the right, it
  does the same on the left, but before we might have to reduce some
  $M$ blocking a reduction into $V$. We proceed with an induction on
  $C$.
  \ifarxivversion
  
  The cases when $C\fl{M}\in\V$ or $C\fl{V}\in\V$ are immediate since
  then $C$ is canonical.

  Suppose now $C\fl{M}\ra$ or $C\fl{V}\ra$.  If $C$ is canonical then
  $C\fl{M}=M \ra M' \Ra V = C\fl{V}$ so we proceed with the $0$-holed
  context $V$. We suppose now that $C$ is not canonical ($C=C_1C_2$)
  and we prove by induction on $C$ that there exists $C'$ such that
  $C\fl{V} \ra C'\fl{V}$ and $C\fl{M} \rap C'\fl{M}$ where $\rap$ is
  the transitive closure of $\ra$.  Suppose first that $C_1$ or $C_2$
  is not canonical.
  \begin{itemize}
  \item If $C_2$ is not canonical, then by induction $C_2\fl{V} \ra
    C_2'\fl{V}$ and $C_2\fl{M} \rap C_2'\fl{M}$ and thus $C_1C_2\fl{V}
    \ra C_1C_2'\fl{V}$ and $C_1C_2\fl{M} \rap C_1C_2'\fl{M}$. (Context
    $C'$ is then $C_1C_2'$.)
  \item   
    If $C_1$ is not canonical and $C_2$ is but $C_2\neq[\cdot]$ then
    by induction $C_1\fl{V} \ra C_1'\fl{V}$ and $C_1\fl{M} \rap
    C_1'\fl{M}$ and thus $C_1C_2\fl{V} \ra C_1'C_2\fl{V}$ and
    $C_1C_2\fl{M} \rap C_1'C_2\fl{M}$ because $C_2\fl{M}$ is a
    value. (Context $C'$ is then $C_1'C_2$.)
  \item 
    If $C_1$ is not canonical and $C_2 = [\cdot]$ is then by induction
    $C_1\fl{V} \ra C_1'\fl{V}$ and $C_1\fl{M} \rap C_1'\fl{M}$ and
    thus $C_1\fl{V}V \ra C_1'\fl{V}V$ and $C_1\fl{M}M \Ra C_1\fl{M}V
    \rap C_1'\fl{M}V$. ($C'$ is then $(C_1')V$.)
  \end{itemize}
  We now handle the cases where both $C_1$ and $C_2$ are
  canonical. Let $R$ be such that $V = \lambda x.R$.
  \begin{itemize}
  \item $C_1 = [\cdot] = C_2 $: then $V V\ra R[V / x]$ and $M M \Ra M
    V \Ra V V \ra R[V / x]$. ($C'$ is the $0$-holed context $R[V /
    x]$.)
  \item $C_1 = [\cdot] \neq C_2$: then $V C_2\fl{V} \ra R[C_2\fl{V}/x]$
    and $M C_2\fl{M} \Ra V C_2\fl{M} \ra R[C_2\fl{M}/x]$ since $C_2\fl{M} \in\V$.
    (Then $C'$ is $R[C_2/x]$.)
  \item $C_1=\lambda y.D_1$ and $C_2 = [\cdot]$: then $(\lambda
    y.D_1\fl{V})V \ra D_1[V/y]\fl{V}$ and $(\lambda y.D_1\fl{M})M \Ra
    (\lambda y.D_1\fl{M})V \ra D_1[V/y]\fl{M}$.  (Then $C'$ is
    $D_1[V/y]$.)
  \item $C_1=\lambda y.D_1$ and $C_2 \neq [\cdot]$: then $(\lambda
    y.D_1\fl{V})C_2\fl{V} \ra D_1[C_2\fl{V}/y]\fl{V}$ and $(\lambda
    y.D_1\fl{M})C_2\fl{M}$ $\ra D_1[C_2\fl{M}/y]\fl{M}$. (Then
    $C'$ is $D_1[C_2/y]$.)
  \end{itemize}
  This case analysis shows us that if $P \mathrel{ \mathcal{S}} Q$ and
  $P \ra P'$ then $P' \Ra \mathrel{ \mathcal{S}} \Longleftarrow Q$ for
  both $\mathcal{S} \in \{{\R},{\R\op}\}$.  From this, and the clause
  about $P\in\V$, we can easily prove that $P \mathrel{ \mathcal{R}}
  Q$ implies ($P\conv$ iff $Q\conv$).
  \fi
\end{proof}

\begin{prop}
  \label{prop:jmd}
  If $V,W\in\V$ and $V \ecev W$ then $V \cev W$.
\end{prop}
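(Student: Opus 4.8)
The plan is to turn an arbitrary context into a $\beta$-redex whose argument is the value $V$ (resp.\ $W$) and whose surrounding context is an \emph{evaluation} context, so that the hypothesis $V \ecev W$ becomes directly applicable. Concretely, I will show that $C\fl{V}\conv$ iff $C\fl{W}\conv$ for every context $C$ with $C\fl{V},C\fl{W}\in\Ld$; this is exactly $V \cev W$.

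Fix such a $C$, pick a variable $x$ not occurring in $C$, and set $M := C\fl{x}$. Then $\fv(M)\subseteq\{x\}$, hence $\lambda x.M\in\Ld$, and moreover $M[V/x]=C\fl{V}$ and $M[W/x]=C\fl{W}$ (no capture occurs, since $x$ is fresh and $V,W$ are closed). The key point is that $(\lambda x.M)[\cdot]$ is a call-by-value evaluation context: it has the shape $M'\EC$ with the closed term $M'=\lambda x.M$ and the evaluation context $\EC=[\cdot]$. Applying the hypothesis $V \ecev W$ to this evaluation context gives
\[ (\lambda x.M)\,V \conv \iff (\lambda x.M)\,W \conv . \]
Here the assumption $V,W\in\V$ is used crucially: because $V$ is a value, $(\lambda x.M)\,V \ra M[V/x]=C\fl{V}$ is a genuine reduction step, and likewise $(\lambda x.M)\,W \ra C\fl{W}$. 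Since a term converges precisely when its unique one-step reduct converges, we conclude $C\fl{V}\conv$ iff $(\lambda x.M)\,V\conv$ iff $(\lambda x.M)\,W\conv$ iff $C\fl{W}\conv$, which is what we wanted.

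I do not anticipate a real obstacle: the argument is essentially the observation that substitution into an arbitrary position of a context is realised, in call-by-value, by an application sitting in an evaluation context. The only care needed is in the bookkeeping---checking that $(\lambda x.M)[\cdot]$ indeed matches the evaluation-context grammar and that freshness of $x$ makes $M[V/x]$ equal to $C\fl{V}$ on the nose. (If the definition of $\cev$ is read as ranging over non-closing contexts as well, one first reduces to the closing case in the standard way; for closing $C$ the argument above is complete.)
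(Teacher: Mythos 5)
Your proof is correct and is essentially the paper's own argument: both construct the evaluation context $(\lambda x.C\fl{x})[\cdot]$, use that $V,W\in\V$ to fire the $\beta$-redex, and conclude from $V\ecev W$ by determinism of reduction. Your version merely makes explicit the bookkeeping (freshness of $x$, closedness, matching the evaluation-context grammar) that the paper leaves implicit.
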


\begin{proof}
  We prove $C[V] \conv$ iff $C[W] \conv$.  Consider the evaluation
  context $\EC = (\lambda x.C[x])[\cdot]$ yielding $\EC[V] \ra C[V]$
  and $\EC[W] \ra C[W]$.  Since $V \ecev W$, $\EC[V] \conv$ iff
  $\EC[W] \conv$; hence $C[V] \conv$ iff $C[W] \conv$.
\end{proof}

The above proposition can be strengthened as follows:

\begin{prop}
  \label{prop:jmdrak}
  If $M \conv$ and $N \conv$ then $M \ecev N$ implies $M \cev N$.
\end{prop}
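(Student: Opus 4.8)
The plan is to chain together the facts already established in this section, reducing the general statement to the value case. Since $M\conv$ and $N\conv$, fix values $V,W\in\V$ with $M\convr V$ and $N\convr W$. By Lemma~\ref{lemma:RainECEV} we have $M\ecev V$ and $N\ecev W$; combining this with the hypothesis $M\ecev N$ and using that $\ecev$ is an equivalence relation (symmetry and transitivity), we obtain $V\ecev W$. Since $V$ and $W$ are both values, Proposition~\ref{prop:jmd} then yields $V\cev W$.

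Next, Proposition~\ref{prop:cmconviffcvconv} applied to $M\convr V$ and to $N\convr W$ gives $M\cev V$ and $N\cev W$. Finally, using symmetry and transitivity of $\cev$, we string these together: $M\cev V\cev W$ and $W\cev N$, hence $M\cev N$, as required.

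There is no real obstacle here: the proposition is a straightforward corollary of Propositions~\ref{prop:cmconviffcvconv} and~\ref{prop:jmd} together with Lemma~\ref{lemma:RainECEV}. The only points worth checking are that $\ecev$ and $\cev$ are genuinely equivalence relations, which is immediate from their definitions as ``for all (evaluation) contexts $C$, $C[\cdot]\conv$ iff $C[\cdot]\conv$''. The ingredient doing the actual work is Proposition~\ref{prop:cmconviffcvconv}, which permits replacing a converging term by its value up to $\cev$; the present statement simply exploits that to reduce to the value case handled by Proposition~\ref{prop:jmd}.
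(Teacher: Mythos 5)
Your proof is correct and follows essentially the same route as the paper's: reduce to values via Proposition~\ref{prop:cmconviffcvconv}, transfer the hypothesis to $V\ecev W$, apply Proposition~\ref{prop:jmd}, and chain with transitivity. The only cosmetic difference is that you obtain $M\ecev V$ directly from Lemma~\ref{lemma:RainECEV}, whereas the paper gets it implicitly from $M\cev V$ and ${\cev}\subseteq{\ecev}$; both are fine.
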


\begin{proof}
  Let $V$ and $W$ such that $M \conv V$, $N \conv W$.  By Proposition
  \ref{prop:cmconviffcvconv}, $M \cev V$ and $N \cev W$. Then, if $M
  \cev N$, then $V \ecev W$, then $V \cev W$ by
  Proposition~\ref{prop:jmd} and finally $M \cev N$.
\end{proof}

\begin{prop}
  \label{prop:divercev}
  If $M\diver$ and $N\diver$, then $M \cev N$.
\end{prop}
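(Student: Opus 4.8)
The plan is to prove $M \cev N$ directly by exhibiting a bisimulation-style relation; we cannot appeal to the context lemma (Theorem~\ref{thm:cev-is-ecev}) here, since its proof relies on this very proposition. Fix closed divergent terms $M$ and $N$ and set
\[ {\R} = \{\,(C[\wt{P}], C[\wt{Q}]) \mid C \text{ is an } n\text{-holed context, } \wt{P},\wt{Q}\in\Ld^n, \text{ and every } P_i,\, Q_i \text{ diverges}\,\}. \]
Then $(C\fl{M}, C\fl{N}) \in {\R}$ for every context $C$, and ${\R}$ is symmetric, so it suffices to show that $P \R Q$ implies $P\conv$ iff $Q\conv$; taking $P = C\fl{M}$ and $Q = C\fl{N}$ then yields $M \cev N$ by definition of $\cev$.

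To get there I would establish two closure properties of ${\R}$: (i) if $P \R Q$ and $P \in \V$, then $Q\conv$; and (ii) if $P \R Q$ and $P \ra P'$, then there exist $P''$ and $Q'$ with $P' \Ra P''$, $Q \Ra Q'$, and $P'' \R Q'$. Granting (i) and (ii), the determinism of call-by-value reduction gives the equivalence by a short induction on the length of a reduction $P \Ra V$ to a value: if the length is $0$ then $P\in\V$ and (i) applies; otherwise $P \ra P_1 \Ra V$, and (ii) together with determinism lets us continue from $P''$ along the unique reduction path to $V$, so that the remaining reduction length strictly decreases and the induction hypothesis gives $Q'\conv$, hence $Q\conv$. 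The converse implication follows from the same argument applied to $Q \R P$, using symmetry of ${\R}$.

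Property (i) is immediate: a divergent term is never an abstraction, hence never a value, so if $C[\wt{P}]\in\V$ then $C$ is not a hole, whence $C = \lambda x.C'$ and $C[\wt{Q}] = \lambda x.C'[\wt{Q}]\in\V$. Property (ii) is the crux, and I would prove it by structural induction on $C$, much as in the proof of Proposition~\ref{prop:cmconviffcvconv}. The key point is that, since no $P_i$ is a value, a $P_i$ can never be the operator or the operand of a top-level $\beta$-redex; therefore any redex contracted in $C[\wt{P}]$ either lies strictly inside some $P_i$, or is formed from $C$ with the $P_i$ merely as passengers. In the first case $P_i \ra P_i'$ with $P_i'$ still divergent, so $P' = C[\wt{P}']$ where $\wt{P}'$ replaces $P_i$ by $P_i'$; here $P' \R C[\wt{Q}] = Q$ already, so we take $P'' = P'$ and $Q' = Q$. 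In the second case $P' = C'[\wt{P}']$ for some context $C'$ and a tuple $\wt{P}'$ obtained from $\wt{P}$ by duplicating, deleting, and permuting components (so still componentwise divergent); the corresponding position in $C[\wt{Q}]$ is again a redex, because its operand is a value and abstraction-hood is unaffected by what fills internal holes, so $Q \ra C'[\wt{Q}']$ and $(P', C'[\wt{Q}']) \in {\R}$, and we take $P'' = P'$, $Q' = C'[\wt{Q}']$.

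The main obstacle is the case split inside (ii): making precise the dichotomy ``redex strictly inside some $P_i$'' versus ``redex formed by $C$'', and faithfully tracking the duplication and deletion of holes that occurs when the argument of a contracted $\beta$-redex itself contains holes. This is routine but delicate, and is handled most cleanly by an explicit structural induction on $C$ along the lines of Proposition~\ref{prop:cmconviffcvconv}; the remaining ingredients — determinism of $\ra$ and the fact that closed values are abstractions — are standard.
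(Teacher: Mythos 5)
Your argument is correct and rests on the same core idea as the paper's proof: exhibit an explicit simulation relation built from contexts whose holes are filled with divergent closed terms, establish a progression property by splitting on whether the contracted redex lies inside a filler or is generated by the context itself, and conclude by determinism of call-by-value reduction. The two proofs differ only in how that relation is decomposed. The paper uses
\[ \{(C\fl{M}, C\fl{N}) \mid C \mbox{ a context}\} \;\cup\; \{(\EC_1[M'],\EC_2[N']) \mid M'\diver \mbox{ and } N'\diver\}, \]
so that as soon as a divergent term reaches evaluation position the pair drops into the fully decoupled second clause, where neither side can ever converge and matching steps is immediate; the cost is that the first clause must fill every hole with the \emph{same} term, so that $\beta$-steps generated by $C$ (which duplicate, delete and permute holes) keep the pair inside that clause. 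You instead keep a single clause but generalise the fillers to arbitrary componentwise-divergent tuples, which simultaneously absorbs the hole-shuffling of context-generated $\beta$-steps and lets the left side reduce inside a filler while the right side answers with zero steps. Both devices are sound; yours is the more uniform (one clause, one invariant), while the paper's makes explicit --- and literally trivial --- the key case in which a divergent subterm in evaluation position decouples the two sides forever. The one genuinely delicate point, namely that the deterministic strategy selects \emph{corresponding} redexes on the two sides because ``being a value'' at a node of $C$ depends only on $C$ once every filler is a non-value, is exactly what the structural induction you borrow from Proposition~\ref{prop:cmconviffcvconv} delivers, and you identify it correctly.
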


\begin{proof}
  Consider the symmetric relation:
  \[ {\R} = \{(C\fl{M}, C\fl{N}) \mid C \mbox{ is a $n$-holed context}
  \} \cup \{(\EC_1[M'],\EC_2[N'] \mid M'\diver \mbox{ and }
  N'\diver)\}.\]%
  Suppose $P \R Q$.  Trivially, if $P \in \V$ then $Q\in V$. Remains
  to prove that $P \ra P'$, then $Q \ra Q'$ and $P' \R Q'$ for some
  $Q'$.  The second part of the relation is trivial.  Regarding the
  first part, if $M$ appears in evaluation position ($P=\EC_1[M]$)
  then so does $N$ ($Q=\EC_2[N]$), the pair progressing to the second
  part of the relation.  If $M$ does not, then $C\fl{M} \ra C'\fl{M}$
  and $C\fl{N} \ra C'\fl{N}$.
\end{proof}



\begin{corollary}
  \label{cor:cevecevcoincide}
  $M \cev N$ if and only if $M \ecev N$.
\end{corollary}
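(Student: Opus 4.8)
The statement is exactly Theorem~\ref{thm:cev-is-ecev} restated, so the corollary should just assemble the propositions proved above by a case analysis on convergence. The two inclusions are of very different character, and all the real work has already been done.

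\emph{Forward direction} ($\cev \subseteq \ecev$). This is immediate: every call-by-value evaluation context $\EC$ is in particular a context, so if $C[M]\conv \iff C[N]\conv$ for all contexts $C$, then a fortiori $\EC[M]\conv \iff \EC[N]\conv$ for all evaluation contexts $\EC$. Hence $M \cev N$ implies $M \ecev N$.

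\emph{Reverse direction} ($\ecev \subseteq \cev$). Suppose $M \ecev N$. First I would instantiate the definition of $\ecev$ with the trivial evaluation context $\EC = [\cdot]$ to obtain $M \conv$ if and only if $N \conv$. This splits into two cases. If both $M\conv$ and $N\conv$, then Proposition~\ref{prop:jmdrak} applies directly and gives $M \cev N$. If instead both $M\diver$ and $N\diver$, then Proposition~\ref{prop:divercev} gives $M \cev N$. In either case we conclude $M \cev N$, which establishes the inclusion and hence the equivalence.

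\emph{Where the difficulty lies.} There is no genuine obstacle remaining at the level of the corollary itself; it is a two-line case split. The substance of the context lemma for call-by-value is already packaged into Propositions~\ref{prop:cmconviffcvconv}, \ref{prop:jmd}, \ref{prop:jmdrak}, and \ref{prop:divercev} — in particular the bisimulation-style arguments driving Propositions~\ref{prop:cmconviffcvconv} and \ref{prop:divercev}, where one must chase reductions through an $n$-holed context $C$ and reconcile a left-hand $\beta$-step on $M$ with the blocking/duplication behaviour of $C$ around the hole. The only thing to be careful about here is to invoke $\EC=[\cdot]$ to align the convergence behaviour of $M$ and $N$ before branching, since Propositions~\ref{prop:jmdrak} and \ref{prop:divercev} each presuppose that $M$ and $N$ lie on the same side of the convergence dichotomy.
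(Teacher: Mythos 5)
Your proof is correct and follows exactly the paper's own argument: the forward inclusion is immediate since evaluation contexts are contexts, and the reverse inclusion instantiates $\EC=[\cdot]$ to split on the convergence dichotomy and then applies Propositions~\ref{prop:jmdrak} and~\ref{prop:divercev}. Nothing to add.
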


\begin{proof}
  Clearly $M \cev N$ implies $M \ecev N$.  Considering $\EC=[\cdot]$,
  either $M \conv$ and $N\conv$ or $M \diver$ and $N\diver$ and using
  Propositions~\ref{prop:jmdrak} and~\ref{prop:divercev} we derive the
  other implication.
\end{proof}

\ifarxivversion
\begin{lemma}
  \label{lemma:eccvreduxcases}
  If $M \eccv{\R}{\R'} N$ and $M \ra M'$, then one of the following
  cases holds
  \begin{enumerate}
  \item $M \R N$;
  \item $M = \vec{E_M}[\alpha]$, $N = \vec{E_N}[\beta]$, $\alpha \R
    \beta$, $\|\vec{E_M}\| = \|\vec{E_N}\| = n$, $E_{Mn} = [\cdot]V$,
    $E_{Nn} = [\cdot]W$, $V \vr{\R'} W$, and $\alpha \ra \alpha'$;
  \item $M = \vec{E_M}[\alpha]$, $N = \vec{E_N}[\beta]$, $\alpha \R
    \beta$, $\|\vec{E_M}\| = \|\vec{E_N}\| = n$, $E_{Mn} = [\cdot]V$,
    $E_{Nn} = [\cdot]W$, $V \vr{\R'} W$, $\alpha = \lambda x.\alpha'$,
    and $M \ra E_{M1}[\cdots[E_{M(n-1)}[\alpha'[V/x]]]\cdots]$;
  \item $M = \vec{E_M}[\alpha]$, $N = \vec{E_N}[\beta]$, $\alpha \R
    \beta$, $\|\vec{E_M}\| = \|\vec{E_N}\| = n$, $E_{Mn} = X[\cdot]$,
    $E_{Nn} = Y[\cdot]$, $X \R' Y$, and $\alpha \ra \alpha'$;
  \item $M = \vec{E_M}[\alpha]$, $N = \vec{E_N}[\beta]$, $\alpha \R
    \beta$, $\|\vec{E_M}\| = \|\vec{E_N}\| = n$, $E_{Mn} = X[\cdot]$,
    $E_{Nn} = Y[\cdot]$, $X \R' Y$, $\alpha \in \V$, and $X \ra X'$;
  \item $M = \vec{E_M}[\alpha]$, $N = \vec{E_N}[\beta]$, $\alpha \R
    \beta$, $\|\vec{E_M}\| = \|\vec{E_N}\| = n$, $E_{Mn} = X[\cdot]$,
    $E_{Nn} = Y[\cdot]$, $X \R' Y$, $X = \lambda x.X', \alpha =
    \lambda y.\alpha'$, and $M \ra
    E_{M1}[\cdots[E_{M(n-1)}[X'[\alpha/x]]]\cdots]$.
  \end{enumerate}
  Moreover, where $E_{Mi}$ and $E_{Ni}$ are such that $\vec{E_M} =
  E_{M1},\dotsc,E_{Mn}$ and $\vec{E_N} = E_{N1},\dotsc,E_{Nn}$,
  whenever $E_{Mi} = [\cdot]V_i$ then $E_{Ni} = [\cdot]W_i$ for some
  $W_i$ with $V_i \vr{\R'} W_i$ and conversely, and whenever $E_{Mi} =
  X_i[\cdot]$ then $E_{Ni} = Y_i[\cdot]$ for some $Y_i$ with $X_i \R'
  Y_i$ and conversely.
\end{lemma}
\fi


\ifarxivversion
\begin{lemma}
  \label{lemma:eccvcombine}
  If $\R, \R'\ \subseteq \Ld\times\Ld$ are relations, $\vec{E} =
  E_1,\dotsc,E_n$ and $\vec{F} = F_1,\dotsc,F_n$ are lists of
  experiments such that whenever $E_i = [\cdot]V_i$ then $F_i =
  [\cdot]W_i$ for some $W_i$ with $V_i \vr{\R'} W_i$ and conversely,
  and whenever $E_i = M_i[\cdot]$ then $F_i = N_i[\cdot]$ for some
  $N_i$ with $M_i \R' N_i$ and conversely, then for all $\alpha \R
  \beta$, $\vec{E}[\alpha] \eccv{\R}{\R'} \vec{F}[\beta]$.
\end{lemma}

\begin{proof}
  By induction on $n$.
  The case of $n = 0$ is trivial, so we assume true for some $n-1$,
  and let $\vec{E} = E_1,\vec{E'}$ and $\vec{F} = F_1,\vec{F'}$ be two
  lists of experiments of length $n$ satisfying the hypotheses.
  Then by the induction hypothesis, since $\vec{E'}$ and $\vec{F'}$
  are appropriately related lists of length $n-1$, for all $\alpha \R
  \beta$, $\vec{E'}[\alpha] \eccv{\R}{\R'} \vec{F'}[\beta]$.
  If $E_1 = [\cdot]V_1$, then $F_1 = [\cdot]W_1$ for $V_1 \vr{\R'}
  W_1$, and so by definition of $\eccv{\R}{\R'}$, we get
  $\vec{E}[\alpha] = \vec{E'}[\alpha]V_1 \eccv{\R}{\R'}
  \vec{F'}[\beta]W_1 = \vec{F}[\beta]$ for all $\alpha \R \beta$ as
  desired.
  The case of $E_1 = M_1[\cdot]$ follows in a similar manner.
  We thus conclude the lemma by induction.
\end{proof}

\begin{lemma}
  \label{lemma:nestrvineccv}
  If $\vec E = E_1(E_2(\cdots(E_n)\cdots))$ and $\vec F =
  F_1(F_2(\cdots(F_n)\cdots))$ and $E_i \R_1 F_i$ for $1 \leq i \leq
  n$ and $\alpha \R^V_2 \beta$, then $\vec E[\alpha] \R^V_2 \vec
  F[\beta]$.
\end{lemma}

\begin{proof}
  We proceed by case analysis on why $\alpha \R^V_2 \beta$.
  If the relation holds because $\alpha \eccv{\R_2}{\R_1^\star}
  \beta$, then $\alpha = \vec{A}[\xi]$ and $\beta = \vec{B}[\psi]$
  where $\vec{A}$ and $\vec{B}$ are lists of equal length of $A_i$ and
  $B_i$ such that $A_i \R_1^\star B_i$, and $\xi \R_2 \psi$.
  Let $\vec{G_\alpha} = \vec{E},\vec{A}$ and $\vec{G_\beta} =
  \vec{F},\vec{B}$.
  Then we get $\vec{E}[\alpha] = \vec{G_\alpha}[\xi]
  \eccv{\R_2}{\R_1^\star} \vec{G_\beta}[\psi] = \vec{F}[\beta]$ as
  desired.

  If the relation holds because of $\R_2$, then we fall into the
  previous case.

  If the relation holds because of $\R_1^\star$, then $\vec{E}[\alpha]
  \R_1^\star \vec{F}[\beta]$ and so we're done since $\R_1^\star\
  \subseteq\ \R^V_2$.
\end{proof}
\fi

\subsection{Coupled Logical Bisimulation}

The definition of coupled logical bisimulation for the call-by-value
calculus differs 
from that for the call-by-name calculus,
viz., the second clause.
The additional requirement in the second clause plays a central role
in showing that coupled logical bisimilarity is a CLB.


\begin{definition}
  A coupled relation $\R$ is a \textbf{coupled logical bisimulation}
  (CLB) if whenever $M \R_2 N$, we have:
  \begin{enumerate}
  \item if $M \ra M'$, then there exists an $N'$ such that $N \Ra N'$
    and $M' \R_2 N'$;
  \item if $M = \lambda x.M'$, then $N \Ra \lambda x.N'$, $\lambda
    x.M' \R_1 \lambda x.N'$, and for all $P, Q \in \Ld$ such that $P
    \vr{\R_1^\star} Q$, we have $M' [P/x]\R_2N'[Q/x]$;
  \item the converses of the previous two conditions for $N$.
  \end{enumerate}
  Coupled logical bisimilarity, written ${\clbv} = (\clbv_1,
  \clbv_2)$, is the pairwise union of all CLBs.
\end{definition}

As in the call-by-name case, CLBs for the call-by-value
$\lambda$-calculus have a continuous progression:

\begin{definition}
  \label{def:progressescbv}
  Given pairs of relations $\R$ and $\SR$, we say
  \textbf{$\R$ progresses to $\SR$}, written $\R\ \prog\
  \SR$, if whenever $M \R_2 N$, then:
  \begin{enumerate}
  \item whenever $M \ra M'$ then $N \Ra N'$ and $M' \SR_2 N'$;
  \item whenever $M = \lambda x.P$ then $N \Ra \lambda x.Q$ such that
    $\lambda x.P \SR_1 \lambda x.Q$ and for all $X \vr{\R_1^\star} Y$,
    $P[X/x] \SR_2 Q[Y/x]$;
  \item the converses of the previous two conditions for $N$.
  \end{enumerate}
\end{definition}

\begin{prop}
  \label{prop:equalthenclb}
  A pair of relations $\R$ is a CLB if and only if ${\R} \prog {\R}$.
  Thus, $\prog$ is a progression for coupled logical bisimulations.
\end{prop}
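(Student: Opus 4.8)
The plan is to prove the equivalence by directly unfolding the definition of coupled logical bisimulation (for call-by-value) against Definition~\ref{def:progressescbv}, which was set up precisely so that the progression $\prog$ mirrors the bisimulation clauses. For the forward implication, I would take a CLB $\R$ and instantiate ``$\R \prog \SR$'' at $\SR = \R$: the three clauses of ${\R} \prog {\R}$ then read ``whenever $M \R_2 N$: (1) $M \ra M'$ implies $N \Ra N'$ with $M' \R_2 N'$; (2) $M = \lambda x.P$ implies $N \Ra \lambda x.Q$ with $\lambda x.P \R_1 \lambda x.Q$ and $P[X/x] \R_2 Q[Y/x]$ for all $X \vr{\R_1^\star} Y$; (3) the converses''. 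Up to renaming of bound terms, this is verbatim the definition of a CLB, so ${\R} \prog {\R}$ holds. For the converse I would run the same reading in the opposite direction: from ${\R} \prog {\R}$ one recovers clauses (1)--(3) of the CLB definition, and $\R$ is a coupled relation by hypothesis (we are working inside the universe $\U$ of coupled relations), hence $\R$ is a CLB.

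For the ``thus'' part, I would recall from Section~\ref{sec:genupto} that a progression for a family $\F \subseteq \U$ is a relation ${\prog} \subseteq \U \times \U$ such that ${\R} \prog {\R}$ entails the existence of some ${\R'} \in \F$ with ${\R} \subseteq {\R'}$. Taking $\F$ to be the family of CLBs and $\U$ the coupled relations, the equivalence just established shows that ${\R} \prog {\R}$ forces $\R$ itself to be a CLB, so one may take ${\R'} = {\R}$ and the inclusion ${\R} \subseteq {\R'}$ is trivial. Hence $\prog$, viewed on $\U \times \U$, is a progression for CLBs in the sense of Section~\ref{sec:genupto}.

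I do not expect a genuine obstacle: the statement is essentially definitional bookkeeping, exactly as in the call-by-name case (Proposition~\ref{prop:clbnprog}). The one point worth a word of care is that being a CLB includes being a coupled relation (${\R_1} \subseteq {\R_2}$), a property not forced by ${\R} \prog {\R}$ on its own but guaranteed because $\U$ consists of coupled relations; the substantive work --- establishing that $\prog$ is continuous, and that the up-to-context techniques are compatible --- is deferred to the propositions that follow.
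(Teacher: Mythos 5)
Your proof is correct and is exactly the intended argument: the paper states this proposition without proof precisely because, as you observe, instantiating $\SR = \R$ in Definition~\ref{def:progressescbv} reproduces the CLB clauses verbatim, and the ``thus'' part follows by taking ${\R'} = {\R}$. Your side remark about coupledness is also the right one to make --- the paper itself notes (in the call-by-name section) that the equivalence fails for arbitrary paired relations, which is why $\U$ is taken to be the coupled relations.
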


\begin{prop}
  \label{prop:clbvsympa}
  The relation $\prog$ is continuous.
\end{prop}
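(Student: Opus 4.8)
The plan is to adapt the proof of Proposition~\ref{prop:clbnsympa} almost verbatim, the one delicate point being that the abstraction clause of Definition~\ref{def:progressescbv} now quantifies over argument pairs drawn from $\vr{\nu{\R}_1^\star}$, so one must first see that any such pair is already witnessed at a finite stage. Concretely, I would fix ascending chains ${\R_0} \subseteq {\R_1} \subseteq \cdots$ and ${\SR_0} \subseteq {\SR_1} \subseteq \cdots$ of pairs of relations with ${\R_i} \prog {\SR_i}$ for every $i$, put $\nu{\R} = \bigcup_{i\in\N}{\R_i}$ and $\nu{\SR} = \bigcup_{i\in\N}{\SR_i}$, and, aiming to establish $\nu{\R} \prog \nu{\SR}$, start from a pair $M \mathrel{(\nu{\R})_2} N$, so that $M \mathrel{({\R_n})_2} N$ for some $n \in \N$.

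The reduction clause is then exactly as in call-by-name: if $M \ra M'$, then ${\R_n} \prog {\SR_n}$ yields $N'$ with $N \Ra N'$ and $M' \mathrel{({\SR_n})_2} N'$, whence $M' \mathrel{(\nu{\SR})_2} N'$ since ${\SR_n} \subseteq \nu{\SR}$. Since the converse clauses for $N$ are symmetric, I would only treat the two clauses for $M$ explicitly.

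For the abstraction clause, suppose $M = \lambda x.P$. From ${\R_n} \prog {\SR_n}$ I obtain $N \Ra \lambda x.Q$ together with $\lambda x.P \mathrel{({\SR_n})_1} \lambda x.Q$, hence $\lambda x.P \mathrel{(\nu{\SR})_1} \lambda x.Q$; it remains to verify $P[X/x] \mathrel{(\nu{\SR})_2} Q[Y/x]$ for an arbitrary $X \vr{\nu{\R}_1^\star} Y$. Here I would write $X = C[\wt{A}]$ and $Y = C[\wt{B}]$ for some context $C$ with $\wt{A} \mathrel{\nu{\R}_1} \wt{B}$; as $\wt{A}$ is a finite tuple, $\wt{A} \mathrel{({\R_m})_1} \wt{B}$ for some $m$, so with $k = \max(n,m)$ the chain inclusions give both $M \mathrel{({\R_k})_2} N$ and $X \vr{({\R_k})_1^\star} Y$. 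Applying ${\R_k} \prog {\SR_k}$ to $M = \lambda x.P$ now produces some $\lambda x.Q'$ with $N \Ra \lambda x.Q'$ and $P[X/x] \mathrel{({\SR_k})_2} Q'[Y/x]$; since call-by-value reduction is deterministic and abstractions are normal forms for $\ra$, both $\lambda x.Q$ and $\lambda x.Q'$ are \emph{the} value reached from $N$, so $Q' = Q$, and therefore $P[X/x] \mathrel{(\nu{\SR})_2} Q[Y/x]$, as required.

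I expect this last step to be the only genuine obstacle and the sole real departure from the call-by-name argument: a witness for a single pair in $\vr{\nu{\R}_1^\star}$ appears only at some finite stage $m$, which must be merged with the stage $n$ carrying $M \mathrel{({\R_n})_2} N$, after which determinacy of call-by-value evaluation is needed to identify the abstraction $\lambda x.Q$ obtained at stage $n$ with the one obtained at stage $k$. The remaining bookkeeping --- the symmetric clauses for $N$, and the case split on whether $M$ reduces or is an abstraction --- is a routine transcription of Proposition~\ref{prop:clbnsympa}.
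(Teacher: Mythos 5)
Your proof is correct and follows essentially the same route as the paper's: reduce each hypothesis of the progression clause to a finite stage of the chains, take a common upper bound, apply ${\R_k}\prog{\SR_k}$ there, and push the conclusion into $\nu\SR$ by inclusion. Your explicit appeal to determinism of call-by-value reduction to identify the abstractions produced at different stages is a point the paper's sketch glosses over, but it is the same underlying argument.
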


\begin{proof}
  \newcommand{\nr}{\mathrel{(\nu{\R})}}
  \newcommand{\ns}{\mathrel{(\nu{\SR})}}
  \newcommand{\rn}{\mathrel{({\R}_n)}}
  \newcommand{\sn}{\mathrel{({\SR}_n)}}

  \arxiv{The proof is identical to that of Proposition
    \ref{prop:clbnsympa} apart from the case $M = \lambda x.M'$, which
    now reads as:

    If $M = \lambda x.M'$, then, since ${\R_n} \prog {\SR_n}$, $N \Ra
    \lambda x.N'$ such that $\lambda x.M' \sn_1 \lambda x.N'$, and so
    since ${\SR_n} \subseteq \nu{\SR}$, we get that $\lambda x.M'
    \ns_1 \lambda x.N'$.
    Moreover, $M'[X/x] \sn_2 N'[Y/x]$ for all $X \rn_1^\star Y$, so
    $M'[X/x] \ns_2 N'[Y/x]$ and we're done.
  }{ Similar to Proposition
    \ref{prop:clbnsympa},  the clause $\lambda x.M'
    \ns_1 \lambda x.N'$ being handled in the same way.
  }
\end{proof}

\begin{prop}
  \label{prop:clbv-prog-union}
  \newcommand{\Ri}{\mathrel{({\R_i})}}
  \newcommand{\SRi}{\mathrel{({\SR_i})}}

  If $\left\{{\R_i}\right\}_{i\in I}$ and
  $\left\{{\SR_i}\right\}_{i\in I}$ are two families of paired
  relations such that ${\R_i} \prog {\SR_i}$ for all $i \in I$, then
  $\left(\bigcap_{i\in I} {\Ri_1}, \bigcup_{i\in I} {\Ri_2}\right)
  \prog \bigcup_{i\in I} {\SR_i}$.
\end{prop}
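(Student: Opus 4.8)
The plan is to replay the proof of Proposition~\ref{prop:clbnprogcuiuu}, adjusting only for the two features that distinguish the call-by-value progression (Definition~\ref{def:progressescbv}) from the call-by-name one: the extra clause $\lambda x.P\,\SR_1\,\lambda x.Q$, and the fact that in the substitution clause the arguments range over related \emph{values} $X \vr{\R_1^\star} Y$ rather than over arbitrary $X \R_1^\star Y$. Neither feature interferes with the intersection on the left or the union on the right.

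In detail, write $\U = \left(\bigcap_{i\in I}(\R_i)_1,\ \bigcup_{i\in I}(\R_i)_2\right)$ and $\SR = \bigcup_{i\in I}\SR_i$, assume $M\,\U_2\,N$, and fix $i\in I$ with $M\,(\R_i)_2\,N$. First I would handle the reduction clause: if $M \ra M'$, then ${\R_i}\prog{\SR_i}$ yields $N'$ with $N \Ra N'$ and $M'\,(\SR_i)_2\,N'$, and since $(\SR_i)_2 \subseteq \SR_2$ we conclude $M'\,\SR_2\,N'$.

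For the abstraction clause, suppose $M = \lambda x.P$. Then ${\R_i}\prog{\SR_i}$ gives $N \Ra \lambda x.Q$ with $\lambda x.P\,(\SR_i)_1\,\lambda x.Q$ and, for every $X \vr{(\R_i)_1^\star} Y$, $P[X/x]\,(\SR_i)_2\,Q[Y/x]$. Since $(\SR_i)_1 \subseteq \SR_1$, the first of these gives $\lambda x.P\,\SR_1\,\lambda x.Q$. For the second, I would take an arbitrary $X \vr{\U_1^\star} Y$; as $\U_1 = \bigcap_{j\in I}(\R_j)_1 \subseteq (\R_i)_1$ and both contextual closure and restriction to $\V\times\V$ are monotone, it follows that $X \vr{(\R_i)_1^\star} Y$, hence $P[X/x]\,(\SR_i)_2\,Q[Y/x]$, and therefore $P[X/x]\,\SR_2\,Q[Y/x]$. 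The converse clauses for $N$ go through symmetrically, which finishes the argument.

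I do not anticipate a genuine obstacle: the statement is designed so that intersecting on the left keeps $\U_1 \subseteq (\R_i)_1$ --- so that the clause of each hypothesis $\R_i\prog\SR_i$ about $\R_1^\star$-related arguments can be invoked --- while unioning on the right keeps each $(\SR_i)_k \subseteq \SR_k$. The only point worth a second glance is that restriction to $\V\times\V$ behaves well under the enlargement from $\U_1$ to $(\R_i)_1$, which is harmless since that restriction is monotone.
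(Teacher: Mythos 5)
Your proof is correct and follows essentially the same route as the paper's: fix an $i$ with $M\,(\R_i)_2\,N$, use ${\R_i}\prog{\SR_i}$ clause by clause, and push the conclusions through the inclusions $\U_1 \subseteq (\R_i)_1$ (via monotonicity of contextual closure and of restriction to $\V\times\V$) and $(\SR_i)_k \subseteq \SR_k$. Your explicit handling of the extra clause $\lambda x.P\,\SR_1\,\lambda x.Q$ matches what the paper does.
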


\begin{proof}
  \renewcommand{\SS}{\mathrel{\left(\bigcup_{i\in I} {\SR_i}\right)}}
  \newcommand{\UU}{\mathrel{\U}}
  \newcommand{\Ri}{\mathrel{({\R_i})}}
  \newcommand{\SRi}{\mathrel{({\SR_i})}}

  Let ${\UU} = (\bigcap_{i\in I} \Ri_1, \bigcup_{i\in I} \Ri_2)$, and
  assume $M \UU_2 N$.
  Then there exists some $\R_i$ such that $M \Ri_2 N$.

  If $M \ra M'$, then $N \Ra N'$ \arxiv{such that}{with} $M' \SRi_2 N'$, and by
  inclusion, we get that $M' \SS_2 N'$.

  If $M = \lambda x.M'$ and $X \UU_1^\star Y$, then since ${\UU_1}
  \subseteq {\Ri_1}$, we have $X \Ri_1^\star Y$ by the monotonicity of
  contextual closure.
  Then since ${\R_i} \prog {\SR_i}$, we have $N \Ra \lambda x.N'$,
  $\lambda x.M' \SRi_1 \lambda x.N'$, and $M'[X/x] \SRi_2 N'[Y/x]$.
  Then by inclusion, we get that $\lambda x.M' \SS_1 \lambda x.N'$ and
  $M'[X/x] \SS_2 N'[Y/x]$ as desired.

  The symmetric cases for $N$ follow symmetrically, and we are done.
\end{proof}

\arxiv{
\begin{lemma}
  \label{lemma:clbvconv}
  If $\R$ is a CLB and $M\R_2N$, then there exists a $V$ such that
  $M\conv V$ if and only if there exists a $W$ such that $N \conv W$.
\end{lemma}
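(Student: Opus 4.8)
The plan is to reduce the claim to two ingredients: a big-step version of the first clause of the definition of CLB and a single application of the second clause. First I would establish the auxiliary fact that if $\R$ is a CLB, $M \R_2 N$ and $M \Ra M'$, then there is an $N' \in \Ld$ with $N \Ra N'$ and $M' \R_2 N'$. This is proved by induction on the number of $\ra$-steps in $M \Ra M'$, exactly as for Lemma~\ref{lemma:bigsteps}: for zero steps take $N' = N$; for $M \ra M'' \Ra M'$, apply clause~1 of the definition of CLB to $M \R_2 N$ to obtain $N \Ra N''$ with $M'' \R_2 N''$, then apply the induction hypothesis to $M'' \R_2 N''$ and the shorter reduction $M'' \Ra M'$, and compose the two reductions out of $N$.

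With this in hand, suppose $M \conv$, say $M \Ra \lambda x.P$. By the auxiliary fact there is an $N'$ with $N \Ra N'$ and $\lambda x.P \R_2 N'$. Since $\lambda x.P$ is a value, clause~2 of the definition of CLB applies to $\lambda x.P \R_2 N'$ and yields $N' \Ra \lambda x.Q$ for some $Q$; hence $N \Ra N' \Ra \lambda x.Q$, so $N \conv$. The converse implication, that $N \conv$ entails $M \conv$, follows symmetrically, using the converse clauses (clause~3) for $N$: first the converse of clause~1 to produce a matching reduction out of $M$, then the converse of clause~2 to reach a value.

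I do not anticipate a genuine obstacle. The only point requiring attention is that the definition of CLB constrains only single $\ra$-steps on the challenged side, so the big-step fact must be proved before it is invoked; everything else is a direct unfolding of the definition. This lemma is the call-by-value counterpart of Lemma~\ref{lemma:bigsteps}, and it is used --- as in the call-by-name development --- when showing that coupled logical bisimilarity is contained in contextual equivalence.
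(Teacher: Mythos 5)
Your proof is correct and is essentially the argument the paper intends: the paper dismisses this lemma with ``Immediate by the definition of CLBV,'' implicitly relying on the same two ingredients you spell out, namely the big-step lifting of clause~1 (the call-by-value analogue of Lemma~\ref{lemma:bigsteps}, proved by induction on the length of $M \Ra M'$) followed by one application of clause~2 to reach a value on the other side, with the converse direction by clause~3. Your version is simply more explicit about the need to establish the big-step fact before invoking it.
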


\begin{proof}
  Immediate by the definition of CLBV.
\end{proof}

\begin{lemma}
  \label{lemma:IdRaCLBV}
  The coupled relation $(\vr{\Id}, \Ra)$ is a CLB.
\end{lemma}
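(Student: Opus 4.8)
The plan is to verify directly that $\R := (\vr{\Id}, \Ra)$ satisfies the three defining clauses of a call-by-value CLB; equivalently, by Proposition~\ref{prop:equalthenclb}, that ${\R} \prog {\R}$. Two preliminary observations do almost all of the work. First, $\R$ is a coupled relation: $\vr{\Id} \subseteq {\Ra}$, since $V \Ra V$ for every closed value $V$ by reflexivity of $\Ra$. Second, I claim $(\vr{\Id})^\star = \Id_{\Ld\times\Ld}$, so that $\vr{(\vr{\Id})^\star} = \vr{\Id}$; consequently the right-hand pairs appearing in clause~2 will always be of the form $(P,P)$ with $P$ a closed value.

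For the claim, note $\emptyset \subseteq \vr{\Id} \subseteq \Id_{\Lambda\times\Lambda}$, so by monotonicity of the open contextual closure, $\emptyset^\oclos \subseteq (\vr{\Id})^\oclos \subseteq (\Id_{\Lambda\times\Lambda})^\oclos = \Id_{\Lambda\times\Lambda}$. Since the open contextual closure of $\emptyset$ is already $\Id_{\Lambda\times\Lambda}$, all these inclusions are equalities, and intersecting with $\Ld\times\Ld$ gives $(\vr{\Id})^\star = \Id_{\Ld\times\Ld}$.

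Now suppose $M \R_2 N$, i.e.\ $M \Ra N$. For clause~1, if $M \ra M'$ then either $M = N$, in which case $N' := M'$ works ($N \ra M' = N'$ and $M' \Ra M'$), or the derivation of $M \Ra N$ begins with a step $M \ra M''$; by determinism of $\ra$ we have $M'' = M'$, so $M' \Ra N$ and $N' := N$ works. For clause~2, if $M = \lambda x.M'$ then $M$ is a value, so by determinism $N = M$; take $N' := M'$. Then $N \Ra \lambda x.N'$ trivially, $\lambda x.M' \vr{\Id} \lambda x.N'$ since $\lambda x.M'$ is a closed value, and given $P \vr{(\vr{\Id})^\star} Q$ we have $P = Q$ with $P$ a closed value, so $M'[P/x] \in \Ld$ (as $\fv(M') \subseteq \{x\}$) and $M'[P/x] \Ra M'[Q/x]$ by reflexivity of $\Ra$. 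Clause~3 is handled in the same way, exploiting that $M \Ra N$: if $N \ra N'$ then $M \Ra N'$, so $M' := N'$ works; if $N = \lambda x.N'$ then $M \Ra \lambda x.N'$, so $M' := N'$ works, with $\lambda x.N' \vr{\Id} \lambda x.N'$ and $N'[P/x] \Ra N'[Q/x]$ whenever $P = Q$.

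There is no genuine obstacle here; the only points requiring minor care are the computation of $(\vr{\Id})^\star$ (so that the hypothesis $P \vr{\R_1^\star} Q$ forces $P = Q$) and the bookkeeping for the asymmetric clause~3, where one must remember that $M$ may be arbitrarily far ahead of $N$ in the reduction sequence and simply let $M$ "skip to" the relevant reduct of $N$.
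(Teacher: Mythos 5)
Your proof is correct and follows essentially the same route as the paper's: a direct check of the clauses of the call-by-value CLB definition, using reflexivity and determinism of $\Ra$ and the fact that values do not reduce. The only substantive difference is that you explicitly justify $(\vr{\Id})^\star = \Id_{\Ld\times\Ld}$ (via monotonicity of contextual closure between $\emptyset^\star$ and $\Id^\star$), a step the paper's proof leaves implicit when it quantifies over $V \vr{\Id} W$ in clause~2.
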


\begin{proof}
  Assume $M \Ra M'$ and that $M \ra M''$.
  If $M = M'$, then $M' \Ra M''$ also, and $M'' \Ra M''$.
  If $M \neq M'$, then $M'' \Ra M''$, and $M' \Ra M''$.
  Conversely, if $M'' \Ra M'$, then $M \Ra M$ and by transitivity of
  $\Ra$, $M \Ra M'$.

  Now assume $M = \lambda x.P$, then $M'' = \lambda x.P$, so $M'' \Ra
  \lambda x.P$ by reflexivity, and $\lambda x.P \vr{\Id} \lambda x.P$.
  Clearly for all $V \vr{\Id} W$, $P[V/x] \Ra P[W/x]$ by reflexivity,
  since $P[V/x] = P[W/x]$.
  Conversely, if $M' = \lambda x.P$, then $M \Ra \lambda x.P$ and the
  same argument applies.
\end{proof}
}{}

As in Section~\ref{s:cbn}, we study the up-to context technique, which
allows us to deduce congruence.
\begin{definition}
  We call \textbf{up-to context} the up-to technique given by ${\R}
  \mapsto {\R}^V$.
  We say a coupled relation $\R$ is a \textbf{CLB up-to context} if
  ${\R} \prog {\R^V}$.
\end{definition}

\begin{lemma}
  \label{lemma:rstarclbv}
  If ${\R} \prog {\SR}$ and ${\R} \subseteq {\SR}$, then $({\R_1},
  {\R_1^\star}) \prog {\SR^V}$.
\end{lemma}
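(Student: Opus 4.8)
The plan is to verify directly that the pair $({\R_1},{\R_1^\star})$ (a coupled relation, since $\R_1\subseteq\R_1^\star$) satisfies the clauses of Definition~\ref{def:progressescbv} relative to ${\SR^V} = ({\SR_1^\star},\,{\eccv{\SR_2}{\SR_1^\star}}\cup{\SR_1^\star})$. So suppose $M\R_1^\star N$; then $M = C[\wt M]$ and $N = C[\wt N]$ for some context $C$ and some $\wt M\R_1\wt N$, and I would prove the two clauses (and their converses for $N$) by structural induction on $C$. Throughout I would use the inclusions $\R_1\subseteq\R_1^\star\subseteq\SR_1^\star=\SR^V_1$, $\SR_2\subseteq\eccv{\SR_2}{\SR_1^\star}\subseteq\SR^V_2$, and $\SR_1^\star\subseteq\SR^V_2$, so that whatever pairs the induction produces automatically lie in the components of $\SR^V$.

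The base cases are routine. The case $C = x$ is impossible since $x\notin\Ld$. If $C = [\cdot]$, then $M\R_1 N$, hence $M\R_2 N$ because $\R$ is coupled, and the clauses follow from ${\R}\prog{\SR}$ together with the inclusions above — the quantifier ``for all $X\vr{\R_1^\star}Y$'' appearing in ${\R}\prog{\SR}$ is exactly the one required here. If $C = \lambda x.C'$, then $M$ and $N$ are the values $\lambda x.C'[\wt M]$ and $\lambda x.C'[\wt N]$, so only the second clause applies: $M\R_1^\star N$ already gives $M\SR^V_1 N$, and for any $X\vr{\R_1^\star}Y$ Corollary~\ref{cor:rcredux} (applied to the relation $\R_1$) yields $(C'[\wt M])[X/x]\R_1^\star(C'[\wt N])[Y/x]\subseteq\SR^V_2$.

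The substantial case is $C = C_1C_2$, so $M = A\,B$ and $N = A'\,B'$ with $A = C_1[\wt M_1]$, $B = C_2[\wt M_2]$, $A' = C_1[\wt N_1]$, $B' = C_2[\wt N_2]$, the holes of $C$ being split between $C_1$ and $C_2$. Since $M$ is an application, only the first clause need be checked, and I would case on the deterministic, argument-first call-by-value reduction of $M$. If $B$ is not a value, then $B\ra B''$ and $M\ra A\,B''$; the induction hypothesis for $C_2$ gives $B'\Ra B'''$ with $B''\SR^V_2 B'''$, and since $A\R_1^\star A'$, reassembling with the second rule defining $\eccv{\cdot}{\cdot}$ (or Lemma~\ref{lemma:rccomb} when $B''\SR_1^\star B'''$) gives $A\,B''\SR^V_2 A'\,B'''$, while $N\Ra A'\,B'''$. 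If $B$ is a value but $A$ is not, then $A\ra A''$ and $M\ra A''\,B$; the induction hypothesis for $C_1$ gives $A'\Ra A'''$ with $A''\SR^V_2 A'''$, and the second clause of the induction hypothesis for $C_2$ gives a value $W$ with $B'\Ra W$ and $B\vr{\SR_1^\star}W$, so $N\Ra A'\,W\Ra A'''\,W$ and reassembling with the third rule for $\eccv{\cdot}{\cdot}$ (or Lemma~\ref{lemma:rccomb}) gives $A''\,B\SR^V_2 A'''\,W$. Finally, if both $A$ and $B$ are values, then $C_1 = \lambda z.C_1'$, $A = \lambda z.C_1'[\wt M_1]$, and $M\ra(C_1'[\wt M_1])[B/z]$; the second clause of the induction hypothesis for $C_2$ (or, when $C_2$ is a single hole $[\cdot]_i$, ${\R}\prog{\SR}$ applied to $M_i\R_2 N_i$) produces a value $W$ with $B'\Ra W$ and $B\vr{\SR_1^\star}W$, so $N\Ra(\lambda z.C_1'[\wt N_1])\,W\ra(C_1'[\wt N_1])[W/z]$, and a substitution argument by Corollary~\ref{cor:rcredux} (with the relation $\SR_1$, replacing $z$ by a fresh hole in $C_1'$) gives $(C_1'[\wt M_1])[B/z]\SR_1^\star(C_1'[\wt N_1])[W/z]\subseteq\SR^V_2$. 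The conditions on $N$ follow by the mirror-image argument, using the converse clauses of ${\R}\prog{\SR}$ and of the induction hypothesis.

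I expect the bookkeeping in the $C = C_1C_2$ case to be the main obstacle: because call-by-value is deterministic and evaluates arguments first, one must track the evaluation order carefully so that $N$ can always answer $M$'s single step with a (possibly longer) reduction — in particular in the situations where a subterm of $M$ is already a value while the corresponding subterm of $N$ is not yet a value and must first be reduced — and then verify that every reassembled pair genuinely lies in $\eccv{\SR_2}{\SR_1^\star}\cup\SR_1^\star$. This last point is precisely where the extra requirement $\lambda x.M'\R_1\lambda x.N'$ in the call-by-value definition of CLB pays off, since it is what makes the second clause of the induction hypothesis for $C_2$ (and of ${\R}\prog{\SR}$) usable in the two value subcases above.
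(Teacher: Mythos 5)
Your proposal is correct and follows essentially the same route as the paper's proof: the same structural induction on the context witnessing $M \R_1^\star N$, the same three-way case analysis of the call-by-value reduction of an application (argument steps; function steps with the argument a value; $\beta$-reduction), the same reassembly via Lemma~\ref{lemma:rccomb} and the closure rules of the evaluation-contextual closure, and the same appeal to Corollary~\ref{cor:rcredux}/Lemma~\ref{lemma:starsubst} in the abstraction and $\beta$ cases. The only nit is that in the $\beta$ subcase you assert $C_1 = \lambda z.C_1'$, silently omitting the possibility $C_1 = [\cdot]$ (where one instead invokes clause 2 of ${\R}\prog{\SR}$ directly, as you do for $C_2$); the paper's own write-up is comparably terse at that point.
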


\ifarxivversion
\begin{proof}
  Suppose $M \R_1^\star N$.
  It is sufficient to show that: \begin{inparaenum}[\itshape (i)
    \upshape]
  \item if $M \ra M'$, then there exists an $N'$ such that $N \Ra N'$
    and $M' \SR^V_2 N'$, and symmetrically if $N \ra N'$;
  \item if $M = \lambda x.M'$, then there exists an $N'$ such that $N
    \Ra \lambda x.N'$, $\lambda x.M' \SR^V_1 \lambda x.N'$, and for
    all $X \vr{\R_1^\star} Y$, $M'[X/x] \SR^V_2 N'[Y/x]$, and
    symmetrically if $N = \lambda x.N'$.
  \end{inparaenum}

  Let $\wt{M} \R_1 \wt{N}$ such that $M = C[\wt{M}]$ and $N =
  C[\wt{N}]$ and assume first that $M \ra M'$.
  Then we proceed by induction on $C$, and observe that it is
  sufficient to show either of $M' \SR_1^\star N'$ or $M'
  \eccv{\SR_2}{\SR_1^\star} N'$.

  The case $C = [\cdot]$ is trivial since it implies $M \R_1 N$, and
  we have as hypothesis that ${\R} \prog {\SR}$.

  The cases $C = x$ and $C = \lambda x.C'$ does not arise since
  $C[\wt{M}] \not\ra$.

  Finally, we consider the case $C = C_1C_2$ and let $\wt{M_1},
  \wt{M_2}$ be such that $C[\wt{M}] = C_1[\wt{M_1}] C_2[\wt{M_2}]$ and
  similarly for $\wt{N}$.
  By monotonicity of contextual closure, we observe that $A \R_1^\star
  B$ implies $A \SR_1^\star B$ for all $A, B$ since ${\R} \subseteq
  {\SR}$.
  The reduction $M \ra M'$ is due to one of the following mutually
  exclusive subcases:
  \begin{enumerate}
  \item $C_2[\wt{M_2}] \ra M_2'$, so $M \ra C_1[\wt{M_1}]M_2'$;
  \item $C_2[\wt{M_2}] \in \V$ and $C_1[\wt{M_1}] \ra M_1'$, so $M \ra
    M_1'C_2[\wt{M_2}]$;
  \item $C_1[\wt{M_1}] = \lambda x.P, C_2[\wt{M_2}] \in \V$, so $M \ra
    P[C_2[\wt{M_2}]/x]$.
  \end{enumerate}

  In the first case, since $C_2[\wt{M_2}] \R_1^\star C_2[\wt{N_2}]$,
  by induction hypothesis, $C_2[\wt{N_2}] \Ra N_2'$ for some $N_2'$
  such that $M_2' \SR_2^V N_2'$, and $N \Ra N' := C_1[\wt{N_1}]N_2'$.
  If $M_2' \SR_2^V N_2'$ holds because of $\SR_1^\star$, then Lemma
  \ref{lemma:rccomb} gives us $C_1[\wt{M_1}] M_2' \SR_1^\star
  C_1[\wt{N_1}]N_2'$ and we're done.
  If it holds because of $\SR_2$, then we're done because
  $C_1[\wt{M_1}] M_2' \eccv{\SR_2}{\SR_1^\star} C_1[\wt{N_1}]N_2'$.
  Finally, if it holds because of $\eccv{\SR_2}{\SR_1^\star}$, then
  $C_1[\wt{M_1}] \SR_1^\star C_1[\wt{N_1}]$ implies $C_1[\wt{M_1}]
  M_2' \eccv{\SR_2}{\SR_1^\star} C_1[\wt{N_1}]N_2'$ as desired.

  In the second case, $C_2[\wt{M_2}] \in \V$ implies $C_2$ is a
  canonical context.
  If $C_2 = [\cdot]$, then $\wt{M_2} = M_2$ is a single element list,
  and since $M_2 \R_1 N_2$ and ${\R} \prog {\SR}$, we have $M_2 \in
  \V$ and $N_2 \Ra N_2' \in \V$ such that $M_2 \SR_1^\star N_2'$.
  Otherwise, $C_2 = \lambda x.C_2'$ and $C_2[\wt{N_2}] \in \V$
  trivially; in this case, let $N_2' = C_2[\wt{N_2}]$.
  In either case, $C_2[\wt{M_2}] \vr{\SR_1^\star} N_2'$.
  By the induction hypothesis, since $C_1[\wt{M_1}] \R_1^\star
  C_1[\wt{N_1}]$ and $C_1[\wt{M_1}] \ra M_1'$, $C_1[\wt{N_1}] \Ra
  N_1'$ such that $M_1' \SR^V_2 N_1'$; thus, $N \Ra N' := N_1'N_2'$.
  If $M_1' \SR^V_2 N_1'$ because of $\SR_1^\star$, then Lemma
  \ref{lemma:rccomb} gives us $M_1'C_2[\wt{M_2}] \SR_1^\star N_1'N_2'$
  and we're done.
  If the relation holds because of $\SR_2$, then $M_1'C_2[\wt{M_2}]
  \eccv{\SR_2}{\SR_1^\star} N_1' N_2'$ and again we're done.
  Finally, if the relation holds because of
  $\eccv{\SR_2}{\SR_1^\star}$, then $C_2[\wt{M_2}] \vr{\SR_1^\star}
  N_2'$ implies $M_1'C_2[\wt{M_2}] \eccv{\SR_2}{\SR_1^\star} N_1'N_2'$
  by Lemma \ref{lemma:eccvcombine} as desired.

  Finally, in the third case, the same argument as in the second case
  gives us that $C_2[\wt{N_2}] \Ra N_2' \in \V$, and similarly that
  $C_1[\wt{N_1}] \Ra \lambda x.Q \in \V$ with $P \R_1^\oclos Q$. Since
  $C_2[\wt{M_2}] \vr{\SR_1^\star} N_2'$ and $\fv(P) = \fv(Q) = \{x\}$,
  by Lemma \ref{lemma:starsubst}, we then conclude $P[C_2[\wt{M_2}]/x]
  \SR_1^\star Q[N_2'/x]$ and we're done.

  This exhausts all possible reductions in the case of $C = C_1C_2$
  and completes the induction on $C$.
  The symmetric case follows symmetrically.
  We thus conclude the first half of the lemma.

  Now assume $M = \lambda x.M'$.
  Then we proceed by case analysis on $C$, and observe that it is
  sufficient to show that $N \Ra \lambda x.N'$ and that for all $X
  \vr{\R_1^\star} Y$, either $M'[X/x] \SR_1^\star N'[Y/x]$
  or $M'[X/x] \eccv{\SR_2}{\SR_1^\star} N'[Y/x]$.

  Clearly, $M = \lambda x.M'$ implies $C$ is a canonical context.
  If $C = [\cdot]$, then $\lambda x.M' \R_2 N$ and the claim follows
  immediately from the fact that ${\R} \prog {\SR}$.
  Otherwise, we have $C = \lambda x.C'$, and so $N = \lambda
  x.C'[\wt{N}]$ implies $N \Ra \lambda x.C'[\wt{N}]$ by reflexivity,
  and since ${\R_1^\star} \subseteq {\SR^\star_1} \subseteq
  {\SR^V_2}$, $\lambda x.M' \SR_2^V \lambda x.C'[\wt{N}]$.
  Clearly, $M' = C'[\wt{M}]$ and $N' = C'[\wt{N}]$ and $\fv(M') =
  \fv(N') = \{x\}$.
  But by Lemma \ref{lemma:starsubst}, we get that for all $X
  \vr{\R_1^\star} Y$, $M'[X/x] \R_1^\star N'[Y/x]$ and thus conclude
  $M'[X/x] \SR^V_2 N'[Y/x]$ as desired.
  This exhausts all possible cases for $C$, and since the symmetric
  case follows symmetrically, we conclude the lemma.
\end{proof}
\else
\begin{proof}[Proof sketch]
  Clause 2 of Definition~\ref{def:progressescbv} clearly holds; the
  focus is on clause 1, when $C[\wt M] \R_1^\star C[\wt N]$ (with $\wt
  M \R_1\wt N$) and $C[\wt M] \ra M'$.  We prove the result using an
  induction on the context. The first case is when only $C$ moves: then
  $C[\wt M] \ra C'[\wt M]$ and $C[\wt N] \ra C'[\wt N]$, and we have
  ${\R_1^\star} \subseteq {\SR^V_1}$.  In the second case, some
  $M_i = \lambda x.P_i$ is taken apart (with $N_i \Ra \lambda x.Q_i$
  catching up) with two arguments in ${\R_1^\star}$, corresponding
  precisely to clause 2 for ${\R} \prog {\SR}$ with some additional
  context.
  \end{proof}
\fi

\begin{lemma}
  \label{lemma:eccvclbv}
  If ${\R} \prog {\SR}$ and ${\R} \subseteq {\SR}$, then $({\R_1},
  {\eccv{\R_2}{\R_1^\star}}) \prog {\SR^V}$.
\end{lemma}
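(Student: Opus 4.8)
The plan is to verify the two clauses of Definition~\ref{def:progressescbv} for the progression $(\R_1, \eccv{\R_2}{\R_1^\star}) \prog \SR^V$, working by rule induction on the derivation of $M \eccv{\R_2}{\R_1^\star} N$ (equivalently, on the nesting depth of the ``experiments'' wrapped around the $\R_2$-related core). Throughout I will use that $\R \subseteq \SR$ and the monotonicity of contextual closure yield $\R_1^\star \subseteq \SR_1^\star = \SR^V_1 \subseteq \SR^V_2$, so that any conclusion phrased in terms of $\R_1^\star$, $\SR_1^\star$ or $\SR_2$ transports into $\SR^V$.

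Clause~2 is immediate: if $M \eccv{\R_2}{\R_1^\star} N$ with $M$ an abstraction, then, since the two non-base rules defining $\eccv{\R_2}{\R_1^\star}$ always produce applications, the derivation must be the single base step $M \R_2 N$, and the required conclusion is exactly what clause~2 of $\R \prog \SR$ delivers, after the inclusions above. Likewise the base case of clause~1 (that is, $M \R_2 N$ with $M \ra M'$) is immediate from clause~1 of $\R \prog \SR$.

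For the inductive step of clause~1 I would take the last rule to be the second one, so $M = M_0 X$, $N = N_0 Y$ with $M_0 \R_1^\star N_0$ and $X \eccv{\R_2}{\R_1^\star} Y$ (the third rule being symmetric), and analyse the reduction $M \ra M'$ in three cases. (i) If $X \ra X'$: apply the induction hypothesis to $X \eccv{\R_2}{\R_1^\star} Y$, getting $Y \Ra Y'$ with $X' \SR^V_2 Y'$, and reassemble $M_0 X' \SR^V_2 N_0 Y'$ by Lemma~\ref{lemma:rccomb} (if $X' \SR_1^\star Y'$) or the second defining rule of $\eccv{\SR_2}{\SR_1^\star}$ (if $X' \eccv{\SR_2}{\SR_1^\star} Y'$). (ii) If $X \in \V$ and $M_0 \ra M_0'$: since $X \in \V$ forces $X \R_2 Y$, clause~2 of $\R \prog \SR$ gives $Y \Ra Y'' \in \V$ with $X \SR_1 Y''$, and Lemma~\ref{lemma:rstarclbv} applied to $M_0 \R_1^\star N_0$ gives $N_0 \Ra N_0'$ with $M_0' \SR^V_2 N_0'$; then $N \Ra N_0 Y'' \Ra N_0' Y''$, and $M_0' X \SR^V_2 N_0' Y''$ follows as in~(i), now using $X \vr{\SR_1^\star} Y''$ and the third defining rule of $\eccv{\SR_2}{\SR_1^\star}$. (iii) If $X \in \V$, $M_0 = \lambda x.P$ and $M \ra P[X/x]$: from $X \R_2 Y$ one gets $Y \Ra Y'' \in \V$ with $X \SR_1 Y''$, and from $\lambda x.P \R_1^\star N_0$ clause~2 of Lemma~\ref{lemma:rstarclbv} gives $N_0 \Ra \lambda x.Q$ with $\lambda x.P \SR^V_1 \lambda x.Q$ and $P[Z/x] \SR^V_2 Q[W/x]$ for all $Z \vr{\R_1^\star} W$; hence $N \Ra (\lambda x.Q) Y'' \ra Q[Y''/x]$, and it remains to show $P[X/x] \SR^V_2 Q[Y''/x]$.

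The hard part is exactly this last step. Since $X$ is a value, $X \vr{\R_1^\star} X$, so one already has $P[X/x] \SR^V_2 Q[X/x]$, and it suffices to bridge $Q[X/x]$ to $Q[Y''/x]$; this is where the call-by-value-specific ``$\lambda x.M' \SR_1 \lambda x.N'$'' clause is needed, as it supplies $X \SR_1^\star Y''$. With $Q$ having $x$ as its sole free variable, I would close the gap by substituting the contextual decomposition of $X \SR_1^\star Y''$ into $Q$, using Lemma~\ref{lemma:starsubst} and Corollary~\ref{cor:rcredux} and transporting the resulting $\SR_1^\star$-relationship into $\SR^V$. Keeping careful track here of which arguments are related by $\R_1^\star$ as opposed to the possibly larger $\SR_1^\star$, and recombining pieces by Lemmas~\ref{lemma:rccomb} and~\ref{lemma:eccvcombine}, is the only genuinely delicate point; all other cases are routine bookkeeping of the call-by-value reduction and of the experiment structure.
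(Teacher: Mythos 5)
Your reorganisation of the argument as a rule induction on the derivation of $M \eccv{\R_2}{\R_1^\star} N$, rather than the paper's decomposition into lists of experiments (Lemma~\ref{lemma:eccvreduxcases}), is only cosmetic, and your handling of the abstraction clause, of the base case, and of cases~(i) and~(ii) is correct and corresponds to the paper's cases. The problem is case~(iii), which you rightly single out as the hard part: the proposed ``bridge'' does not close. You obtain $P[X/x] \SR^V_2 Q[X/x]$ by instantiating the substitution clause of Lemma~\ref{lemma:rstarclbv} at $X \vr{\R_1^\star} X$, and $Q[X/x] \SR_1^\star Q[Y''/x]$ from $X \SR_1 Y''$, and you then need to compose the two to conclude $P[X/x] \SR^V_2 Q[Y''/x]$. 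No composition property is available: ${\SR^V_2}$ is the union of ${\SR_1^\star}$ and ${\eccv{\SR_2}{\SR_1^\star}}$, and neither of these is closed under composition with ${\SR_1^\star}$ for an arbitrary $\SR$ (contextual closure is not transitive unless the underlying relation is, and nothing of the sort is assumed of $\SR$). So ``it suffices to bridge'' is precisely the unjustified step.

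The underlying difficulty is a quantifier mismatch, and it is worth seeing exactly where it bites. Write $\lambda x.P = C[\wt{M}]$ and $N_0 = C[\wt{N}]$ with $\wt{M}\R_1\wt{N}$; the canonical context $C$ is either $\lambda x.C'$ or $[\cdot]$. In the first sub-case there is no problem and no bridge is needed: $P \mathrel{\R_1^\oclos} Q$ syntactically, and Lemma~\ref{lemma:starsubst} lets you substitute the pair $X \SR_1^\star Y''$ directly into $P \mathrel{\SR_1^\oclos} Q$, giving $P[X/x] \SR_1^\star Q[Y''/x] \subseteq {\SR^V_2}$. In the second sub-case, $\lambda x.P \R_1 N_0$, the only handle on $P$ versus $Q$ is the behavioural clause~2 of Definition~\ref{def:progressescbv}, whose universal quantifier ranges only over arguments related by $\vr{\R_1^\star}$ --- whereas the argument pair $(X, Y'')$ supplied by the cbv-specific clause is only known to lie in $\SR_1$, and the inclusion ${\SR_1^\star} \subseteq {\R_1^\star}$ that would license the instantiation does not hold. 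Your proof does not get past this point (and, for what it is worth, neither does the paper's own written argument for the corresponding case~6, which instantiates the $\vr{\R_1^\star}$-quantifier at a pair only known to be in $\vr{\SR_1^\star}$). A genuine repair seems to require either strengthening the statement so that the substitution clause in the conclusion quantifies over $\vr{\SR_1^\star}$-related arguments, or imposing an additional closure hypothesis on $\SR$.
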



\ifarxivversion
\begin{proof}
  Suppose $M \eccv{\R_2}{\R_1^\star} N$. It is sufficient to show
  that: \begin{inparaenum}[\itshape (i) \upshape]
  \item if $M \ra M'$, then $N \Ra N'$ such that $M' \SR^V_2 N'$, and
    symmetrically for $N$;
  \item if $M = \lambda x.M'$, then there exists an $N'$ such that $N
    \Ra \lambda x.N'$, $\lambda x.M' \SR^V_1 \lambda x.N'$, and for
    all $X \vr{\R_1^\star} Y$, $M'[X/x] \SR^V_2 N'[Y/x]$, and
    symmetrically if $N = \lambda x.N'$.
  \end{inparaenum}

  The hypothesis $M \ra M'$ implies one of the six cases given by
  Lemma \ref{lemma:eccvreduxcases}.
  Let $\alpha$, $\beta$, $\vec{E}_M$, etc., be as in Lemma
  \ref{lemma:eccvreduxcases}, and observe that since ${\R_1} \subseteq
  {\SR_1}$, $E_{Mi} \SR_1 E_{Ni}$ for all $i$.

  In the first case, we're done, for ${\R} \prog {\SR}$.

  In the second and fourth case, if $\alpha \ra \alpha'$, then since
  $\alpha \R_2 \beta$ and ${\R} \prog {\SR}$, $\beta \Ra
  \beta'$ such that $\alpha' \SR^V_2 \beta'$.
  Then this implies $M \ra \vec{E_M}[\alpha']$, $N \Ra
  \vec{E_N}[\beta']$, and by Lemma \ref{lemma:nestrvineccv},
  $\vec{E}[\alpha'] \SR^V_2 \vec{E}[\beta']$.

  In the third case, if $\alpha = \lambda x.\alpha'$, then since
  $\alpha \R_2 \beta$ and ${\R} \prog {\SR}$, $\beta \Ra \lambda
  x.\beta'$ such that for all $X \vr{\R_1^\star} Y$, $\alpha'[X/x]
  \SR^V_2 \beta'[Y/x]$.
  Thus, $M \ra E_{M1}[\cdots[E_{M(n-1)}[\alpha'[V/x]]\cdots]$, $N \Ra
  E_{N1}[\cdots[E_{N(n-1)}[\beta'[V/x]]\cdots]$, and by Lemma
  \ref{lemma:nestrvineccv} we get
  \[E_{M1}[\cdots[E_{M(n-1)}[\alpha'[V/x]]]\cdots] \SR^V_2
  E_{N1}[\cdots[E_{N(n-1)}[\beta'[V/x]]]\cdots]\] as desired.

  In the fifth case, since $\alpha \R_2 \beta$ and $\alpha = \lambda
  x.\alpha'$, $\beta \Ra \lambda x.\beta'' = \beta'$ such that $\alpha
  \vr{\SR_1^\star} \beta'$.
  Let $X \R_1^\star Y$ be such that $E_{Mn} = X[\cdot]$ and $E_{Nn} =
  Y[\cdot]$, then by Lemma \ref{lemma:rstarclbv}, since $X \ra X'$, $Y
  \Ra Y'$ such that $X' \SR^V_2 Y'$.
  Let $E_M' = [\cdot]\alpha$ and $E_N' = [\cdot]\beta'$.
  Then, $M \ra E_{M1}[\cdots[E_{M(n-1)}[E_M'[X']]]\cdots]$ and $N \Ra
  E_{N1}[\cdots[E_{N(n-1)}[E_N'[Y']]]\cdots]$.
  Then by Lemma \ref{lemma:nestrvineccv} we get
  \[ E_{M1}[\cdots[E_{M(n-1)}[E_{M}'[X']]]\cdots] \SR^V_2
  E_{N1}[\cdots[E_{N(n-1)}[E_N'[Y']]]\cdots] \] and we're done.

  Finally, in the sixth case, since $\alpha \R_2 \beta$ and $\alpha =
  \lambda x.\alpha'$, $\beta \Ra \lambda x.\beta'' = \beta'$ such that
  $\alpha \vr{\SR_1^\star} \beta'$.
  Let $X \R_1^\star Y$ be such that $E_{Mn} = X[\cdot]$ and $E_{Nn} =
  Y[\cdot]$, then by Lemma \ref{lemma:rstarclbv}, since $X = \lambda
  x.X'$, $Y \Ra \lambda x.Y'$ such that $X'[\nu/x] \SR^V_2 Y'[\mu/x]$
  for all $\nu \vr{\R_1^\star} \mu$.
  Since ${\R_1^\star} \subseteq {\SR_1^\star}$, we get $X'[\alpha/x]
  \SR^V_2 Y'[\beta'/x]$.
  \RAK{Previous sentence does not follow, and I still haven't figured
    out how to fix it.
    Problem: we're substituting $\SR^\star_1$ into $\R^\star_1$, which
    we aren't necessarily able to do.}
  Thus, $M \ra E_{M1}[\cdots[E_{M(n-1)}[X'[\alpha/x]]]\cdots]$ and $N
  \Ra E_{N1}[\cdots[E_{N(n-1)}[Y'[\beta'/x]]]\cdots]$.
  By Lemma \ref{lemma:nestrvineccv}, we conclude
  \[E_{M1}[\cdots[E_{M(n-1)}[X'[\alpha/x]]]\cdots] \SR^V_2
  E_{N1}[\cdots[E_{N(n-1)}[Y'[\beta'/x]]]\cdots].\]

  The symmetric case follows symmetrically.

  Now assume $M = \lambda x.M'$.
  Clearly, $M = \lambda x.M'$ implies $\vec{E_N} = [\cdot]$.
  Then $\lambda x.M' \R_2 N$ and the claim follows immediately from
  the fact that ${\R} \prog {\SR}$.
  The symmetric case follows symmetrically, and so we conclude the
  lemma.
\end{proof}
\else
\begin{proof}[Proof sketch]
  Suppose $C[\wt M] \eccv{\R_2}{\R_1^\star} C[\wt N]$ with, for
  exactly one $j$, $M_j \R_2 N_j$ in evaluation position and $M_i \R_1
  N_i$ for all $i\neq j$.  Again the focus is on clause 1, when $C[\wt
  M] \ra M'$.  Either $M_j \ra M_j'$, in which case $N_j \Ra N_j'$
  using clause 1 of ${\R} \prog {\SR}$, or both $M_j\in\V$ and $N_j
  \Ra \lambda x.Q_j$ are given arguments related by ${\R_1^\star}$, which
  corresponds again precisely to clause 2 of ${\R} \prog {\SR}$.
  \end{proof}
\fi

\begin{theorem}
  The up-to-context technique is extensive and respectfully compatible,
  and hence sound.
\end{theorem}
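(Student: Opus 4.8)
The plan is to check extensiveness and respectful compatibility of ${\R} \mapsto {\R^V}$ directly, delegating the real work to Lemmas~\ref{lemma:rstarclbv} and~\ref{lemma:eccvclbv}, and then to read off soundness from the general theory of Section~\ref{sec:genupto}. Extensiveness is immediate by unwinding $\R^V = (\R_1^\star, \eccv{\R_2}{\R_1^\star} \cup \R_1^\star)$: the hole context $[\cdot]$ gives ${\R_1} \subseteq {\R_1^\star}$, and the first inference rule defining $\eccv{\R_2}{\R_1^\star}$ gives ${\R_2} \subseteq {\eccv{\R_2}{\R_1^\star}} \subseteq {\R^V_2}$, whence ${\R} \subseteq {\R^V}$.

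For respectful compatibility, assume ${\R} \prog {\SR}$ and ${\R} \subseteq {\SR}$. The inclusion ${\R^V} \subseteq {\SR^V}$ is routine from monotonicity of $(-)^\star$ and of the evaluation-contextual-closure operation in both of its arguments. For the progression ${\R^V} \prog {\SR^V}$, Lemmas~\ref{lemma:rstarclbv} and~\ref{lemma:eccvclbv} give $({\R_1}, {\R_1^\star}) \prog {\SR^V}$ and $({\R_1}, {\eccv{\R_2}{\R_1^\star}}) \prog {\SR^V}$. Applying Proposition~\ref{prop:clbv-prog-union} to this two-element family yields $({\R_1}, {\R_1^\star} \cup {\eccv{\R_2}{\R_1^\star}}) \prog {\SR^V}$, i.e., $({\R_1}, {\R^V_2}) \prog {\SR^V}$ since ${\R^V_2} = {\R_1^\star} \cup {\eccv{\R_2}{\R_1^\star}}$. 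It then remains to replace ${\R_1}$ by ${\R_1^\star}$ in the first component: since ${\R_1^\star} \subseteq {\R^V_2}$ and $({\SR^V_1})^\star = {\SR_1^\star} = {\SR^V_1} \subseteq {\SR^V_2}$, the idempotence of contextual closure (the call-by-value counterpart of Proposition~\ref{prop:r1starcompat}, with an identical proof) gives $({\R_1^\star}, {\R^V_2}) \prog {\SR^V}$, that is, ${\R^V} \prog {\SR^V}$.

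Soundness then follows from the abstract theory of Section~\ref{sec:genupto}: by Proposition~\ref{prop:equalthenclb}, $\prog$ is a progression for the family of CLBs in the universe of coupled relations, and by Proposition~\ref{prop:clbvsympa} it is continuous; since up-to context is the total function ${\R} \mapsto {\R^V}$, which the previous paragraph shows to be extensive and respectfully compatible, Proposition~\ref{prop:sympasound} yields that it is sound.

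The genuine difficulty is not in this theorem but in the two lemmas it rests on, namely tracking how a term $C[\wt{M}]$ --- or a term sitting inside an evaluation-contextual closure, for which the reduction cases are catalogued in Lemma~\ref{lemma:eccvreduxcases} --- can reduce, and exhibiting the matching reductions on the $C[\wt{N}]$ side together with the relation between the residuals. The only point of care in the assembly above is that ${\R} \mapsto {\R^V}$ takes the contextual closure of the \emph{first} component, so one must decompose ${\R^V_2}$ into its ${\R_1^\star}$ and ${\eccv{\R_2}{\R_1^\star}}$ parts, treat each by the appropriate lemma, recombine with Proposition~\ref{prop:clbv-prog-union}, and only afterwards close the first component under contexts by idempotence; doing these steps in a different order fails.
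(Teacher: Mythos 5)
Your proposal is correct and follows the paper's own proof essentially step for step: extensiveness by unwinding the definition, then Lemmas~\ref{lemma:rstarclbv} and~\ref{lemma:eccvclbv} combined via Proposition~\ref{prop:clbv-prog-union} and the call-by-value analogue of Proposition~\ref{prop:r1starcompat} to close the first component, with soundness read off from the continuity-based results of Section~\ref{sec:genupto}. The only difference is that you spell out the inclusion ${\R^V} \subseteq {\SR^V}$ and the appeal to Propositions~\ref{prop:clbvsympa} and~\ref{prop:sympasound}, which the paper leaves implicit.
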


\begin{proof}
  Extensiveness is obvious.
  Assume ${\R} \prog {\SR}$ and ${\R} \subseteq {\SR}$, then by Lemma
  \ref{lemma:rstarclbv} we have $({\R_1}, {\R_1^\star}) \prog
  {\SR^V}$, and by Lemma \ref{lemma:eccvclbv}, $(\R_1,
  {\eccv{\R_2}{\R_1^\star}}) \prog {\SR^V}$.
  Then, by Lemma \ref{prop:clbv-prog-union}, $({\R_1}, {\R_1^\star}
  \cup {\eccv{\R_2}{\R_1^\star}}) \prog {\SR^V}$.
  By the call-by-value analog of Proposition \ref{prop:r1starcompat},
  we deduce that ${\R^V} \prog {\SR^V}$ and we are done.
\end{proof}

\begin{corollary}
  \label{cor:rvclbv}
  If $\R$ is a CLB, then so is $\R^V$.
\end{corollary}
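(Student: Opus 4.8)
The plan is to obtain this as an immediate consequence of the preceding theorem (up-to-context is extensive and respectfully compatible) together with Proposition~\ref{prop:equalthenclb}. Suppose $\R$ is a CLB. First I would record the one sanity check needed before invoking Proposition~\ref{prop:equalthenclb}: that $\R^V$ is again a \emph{coupled} relation. By definition $\R^V = \left(\R_1^\star,\ {\eccv{\R_2}{\R_1^\star}} \cup \R_1^\star\right)$, so $\R^V_1 = \R_1^\star \subseteq {\eccv{\R_2}{\R_1^\star}} \cup \R_1^\star = \R^V_2$, and both components are subsets of $\Ld\times\Ld$; hence $\R^V$ is a coupled relation.

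Next, since $\R$ is a CLB, Proposition~\ref{prop:equalthenclb} gives ${\R} \prog {\R}$. Now apply respectful compatibility of up-to-context with the instantiation ${\SR} := {\R}$: we trivially have ${\R} \subseteq {\R}$, and we have ${\R} \prog {\R}$, so respectful compatibility tells us that $\R^V$ is defined and ${\R^V} \prog {\R^V}$. Applying Proposition~\ref{prop:equalthenclb} once more to the coupled relation $\R^V$, the fact that ${\R^V} \prog {\R^V}$ yields that $\R^V$ is a CLB, as desired.

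There is essentially no hard step here: the real work has all been front-loaded into showing that up-to-context is respectfully compatible (Lemmas~\ref{lemma:rstarclbv} and~\ref{lemma:eccvclbv} and the theorem combining them via the call-by-value analogues of Propositions~\ref{prop:clbv-prog-union} and~\ref{prop:r1starcompat}). The only point deserving a moment's attention is confirming that the instantiation ${\SR} = {\R}$ is legitimate in the statement of respectful compatibility, which it is since ${\R} \subseteq {\R}$ holds vacuously. (If one wished to avoid using respectful compatibility in this direct form, one could instead build the ascending chain ${\R} \subseteq {\R^V} \subseteq ({\R^V})^V \subseteq \cdots$ using extensiveness, obtain $\P^i({\R}) \prog \P^i({\R})$ for all $i$ by repeated compatibility, and conclude $\nu\P({\R}) \prog \nu\P({\R})$ by continuity of $\prog$ (Proposition~\ref{prop:clbvsympa}); but since ${\R^V} \prog {\R^V}$ holds already at the first iterate, this detour is unnecessary.)
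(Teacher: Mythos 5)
Your proposal is correct and matches the paper's (essentially one-line) argument: the call-by-name analogue is proved exactly this way, namely ``immediate by Proposition~\ref{prop:equalthenclb} and respectful compatibility,'' instantiating $\SR := \R$. The extra check that $\R^V$ is a coupled relation is a reasonable bit of added care but does not change the route.
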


\begin{corollary}
  \label{cor:clbvcontextuallyclosed}
  \leavevmode
  \begin{enumerate}
  \item If $M \clbv_1 N$, then for all contexts $C$, $C[M] \clbv_1
    C[N]$.
  \item If $E \clbv_2 F$, then for all evaluation contexts $\EC$,
    $\EC[E] \clbv_2 \EC[F]$.
  \end{enumerate}
\end{corollary}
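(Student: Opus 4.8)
The plan is to mirror the call-by-name argument of Corollary~\ref{corollary:Clift} and derive both statements from Corollary~\ref{cor:rvclbv}, which says that $\R^V$ is a CLB whenever $\R$ is. Since $\clbv$ is the pairwise union of all CLBs, for any CLB $\R\subseteq\clbv$ the coupled relation $\R^V$ is again a CLB and hence $\R^V\subseteq\clbv$; in particular $\R^V_1\subseteq\clbv_1$ and $\R^V_2\subseteq\clbv_2$. Recall also that $\R^V_1={\R_1^\star}$ and $\R^V_2={\eccv{\R_2}{\R_1^\star}}\cup{\R_1^\star}$.

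For the first statement, suppose $M\clbv_1 N$, so that $M\R_1 N$ for some CLB $\R\subseteq\clbv$. For any context $C$ with $C[M],C[N]\in\Ld$ we have $C[M]\R_1^\star C[N]$ straight from the definition of the (closed) contextual closure, hence $C[M]\R^V_1 C[N]$, and therefore $C[M]\clbv_1 C[N]$.

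For the second statement, suppose $E\clbv_2 F$, so that $E\R_2 F$ for some CLB $\R\subseteq\clbv$. It suffices to show $\EC[E]\eccv{\R_2}{\R_1^\star}\EC[F]$ for every evaluation context $\EC$ with $\EC[E],\EC[F]\in\Ld$, since then $\EC[E]\R^V_2\EC[F]$ and so $\EC[E]\clbv_2\EC[F]$. I would prove this by structural induction on the grammar $\EC:=[\cdot]\mid M\EC\mid\EC V$. The base case $\EC=[\cdot]$ is the first rule defining $\eccv{\cdot}{\cdot}$ applied to $E\R_2 F$. For $\EC=M\EC'$, the induction hypothesis gives $\EC'[E]\eccv{\R_2}{\R_1^\star}\EC'[F]$, and since $\emptyset^\star=\Id_{\Ld\times\Ld}\subseteq{\R_1^\star}$ we have $M\R_1^\star M$, so the second rule applies. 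The case $\EC=\EC'V$ is identical, using the third rule and the fact that $V\vr{\R_1^\star}V$ (because $V\in\V$ and $V\R_1^\star V$). Equivalently, one may note that $\eccv{\R_2}{\Id}\subseteq\eccv{\R_2}{\R_1^\star}$ by monotonicity of the evaluation-contextual closure in its second argument and run the same induction with $\Id$ in place of ${\R_1^\star}$.

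No genuine obstacle is expected: all the work sits in Corollary~\ref{cor:rvclbv}, and what remains is the bookkeeping identification of the two components of $\R^V$ together with the two short inductions above. The only point requiring care is the implicit convention, used throughout this section, that the contexts in question produce closed terms, so that ${\R_1^\star}$, ${\eccv{\R_2}{\R_1^\star}}$, and the equality $\emptyset^\star=\Id_{\Ld\times\Ld}$ are all applicable.
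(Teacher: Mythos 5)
Your proof is correct and follows essentially the same route as the paper's: both statements are reduced to Corollary~\ref{cor:rvclbv} via the identification $\R^V_1={\R_1^\star}$ and the observation that ${\Id}\subseteq{\R_1^\star}$ gives $\eccv{\R_2}{\Id}\subseteq\R^V_2$. The only difference is that you spell out the induction on the evaluation-context grammar explicitly, which the paper leaves implicit.
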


\begin{proof}
  In the first case, if $M \clbv_1 N$, then there exists a CLB ${\R}
  \subseteq {\clb}$ such that $M \R_1 N$.
  Then by Corollary \ref{cor:rvclbv}, ${\R^V} \subseteq {\clb}$.
  Since $C[M] \R_1^\star C[N]$ and ${\R^V_1} = {\R_1^\star}$, we get
  $C[M] \clbv_1 C[N]$ as desired.

  In the second case, if $E \clbv_2 F$ then there exists a CLB ${\R}
  \subseteq {\clbv}$ such that $E \R_2 F$. Then again, ${\R^V}
  \subseteq\ {\clbv}$.
  Since ${\eccv{\R_2}{\R_1^\star}} \subseteq {\R^V_2}$ and ${\Id}
  \subseteq {\R_1^\star}$, we have ${\eccn{R_2}{\Id}} \subseteq
  {\R^V_2}$.
  Thus, $E \R_2 F$ implies $\EC[E] \R^V_2 \EC[F]$ for all evaluation
  contexts $\EC$, and since ${R^V_2} \subseteq {\clbn_2}$, we deduce
  the second statement.
\end{proof}

\begin{corollary}
  \label{cor:CLBVinCEV}
  We have the following inclusion of coupled relations: ${\clbv}
  \subseteq (\cev, \ecev)$.
\end{corollary}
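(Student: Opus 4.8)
The plan is to mirror the call-by-name argument of Corollary~\ref{cor:inclusion1}, combining the contextual-closure properties of Corollary~\ref{cor:clbvcontextuallyclosed} with the fact that a CLB preserves convergence. So first I would record that convergence-preservation fact: if $\R$ is a CLB and $M \R_2 N$, then $M \conv$ if and only if $N \conv$. (This is Lemma~\ref{lemma:clbvconv} in the long version.) It follows by a straightforward induction on the length of a reduction $M \Ra V$ to a value: repeatedly applying clause~1 of the definition of CLB yields $N \Ra N'$ with $V \R_2 N'$, and then clause~2 gives $N' \Ra \lambda x.N''$, so $N' \conv$ and hence $N \conv$; the converse holds symmetrically. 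Since coupled logical bisimilarity is the pairwise union of coupled relations, we have ${\clbv_1} \subseteq {\clbv_2}$, and every pair related by $\clbv_2$ lies in the second component $\R_2$ of some CLB $\R$; consequently $M \clbv_2 N$ implies that $M \conv$ if and only if $N \conv$.

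For the first inclusion, suppose $M \clbv_1 N$ and let $C$ be an arbitrary context. By Corollary~\ref{cor:clbvcontextuallyclosed}(1), $C[M] \clbv_1 C[N]$, hence $C[M] \clbv_2 C[N]$, so $C[M] \conv$ if and only if $C[N] \conv$ by the convergence-preservation fact. As $C$ was arbitrary, this is exactly $M \cev N$.

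For the second inclusion, suppose $E \clbv_2 F$ and let $\EC$ be an arbitrary evaluation context. By Corollary~\ref{cor:clbvcontextuallyclosed}(2), $\EC[E] \clbv_2 \EC[F]$, so $\EC[E] \conv$ if and only if $\EC[F] \conv$. As $\EC$ was arbitrary, this is exactly $E \ecev F$, which completes the proof of ${\clbv} \subseteq (\cev, \ecev)$.

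I do not expect any genuine obstacle here: the only content beyond bookkeeping is the convergence-preservation lemma, and its proof is the routine induction sketched above; the two inclusions are then immediate from Corollary~\ref{cor:clbvcontextuallyclosed} together with the inclusion ${\clbv_1}\subseteq{\clbv_2}$ coming from coupledness.
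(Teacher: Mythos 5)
Your proof is correct and follows essentially the same route as the paper: the paper's own argument (given explicitly for the call-by-name analogue, Corollary~\ref{cor:inclusion1}, and intended identically here) combines the contextual-closure properties of Corollary~\ref{cor:clbvcontextuallyclosed} with the fact that a CLB preserves convergence of $\R_2$-related terms (Lemma~\ref{lemma:clbvconv}, proved by iterating clause~1 and finishing with clause~2, exactly as you sketch). Nothing further is needed.
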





\begin{lemma}
  \label{lemma:ecevvalsbeta}
  If $\lambda x.P \ecev \lambda x.Q$, then for all $V \in \V$, $P[V/x]
  \ecev Q[V/x]$.
\end{lemma}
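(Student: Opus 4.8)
The plan is to interpose the $\beta$-redex $(\lambda x.P)V$ between $P[V/x]$ and $\lambda x.P$ (and likewise $(\lambda x.Q)V$ between $Q[V/x]$ and $\lambda x.Q$), exploiting that $[\cdot]V$ is a call-by-value evaluation context and that convergence is invariant under reduction; the hypothesis $\lambda x.P \ecev \lambda x.Q$ then bridges the two sides.

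Concretely, I would fix an arbitrary $V \in \V$ and an arbitrary evaluation context $\EC$, and form the evaluation context $\EC' = \EC[[\cdot]V]$, which is again an evaluation context since $[\cdot]V$ is one and evaluation contexts are closed under composition. Since $V \in \V$, the call-by-value rule gives $(\lambda x.P)V \ra P[V/x]$, hence $\EC'[\lambda x.P] = \EC[(\lambda x.P)V] \ra \EC[P[V/x]]$; as $\ra$ is deterministic, $\EC[(\lambda x.P)V]\conv$ if and only if $\EC[P[V/x]]\conv$. Symmetrically, $\EC[(\lambda x.Q)V]\conv$ if and only if $\EC[Q[V/x]]\conv$.

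Then I would apply the hypothesis $\lambda x.P \ecev \lambda x.Q$ at the evaluation context $\EC'$, obtaining $\EC[(\lambda x.P)V]\conv$ if and only if $\EC[(\lambda x.Q)V]\conv$. Chaining the three biconditionals yields $\EC[P[V/x]]\conv$ if and only if $\EC[Q[V/x]]\conv$; since $\EC$ was arbitrary, $P[V/x]\ecev Q[V/x]$ by definition of $\ecev$. (Note $\lambda x.P,\lambda x.Q\in\Ld$ forces $\fv(P),\fv(Q)\subseteq\{x\}$, so $P[V/x],Q[V/x]$ are closed and everything stays within $\Ld\times\Ld$.)

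I do not expect a genuine obstacle here: the only points requiring care are that $\EC[[\cdot]V]$ is a legitimate evaluation context and that a $\beta$-step in the hole propagates under $\EC$, both immediate from the grammar of call-by-value evaluation contexts and the reduction rules. An equivalent route, mirroring the call-by-name proof of Lemma~\ref{lemma:eceabssubst}, is to first derive $(\lambda x.P)V \ecev (\lambda x.Q)V$ from $\lambda x.P\ecev\lambda x.Q$ via Lemma~\ref{lemma:ecevevctxtcong}, then use $(\lambda x.P)V \Ra P[V/x]$ and $(\lambda x.Q)V \Ra Q[V/x]$ together with Lemma~\ref{lemma:RainECEV} and transitivity of $\ecev$.
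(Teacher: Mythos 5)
Your proof is correct and is essentially the paper's argument: the paper derives $(\lambda x.P)V \ecev (\lambda x.Q)V$ from congruence (Lemma~\ref{lemma:ecevevctxtcong}) and then uses ${\Ra}\subseteq{\ecev}$ (Lemma~\ref{lemma:RainECEV}) with transitivity, which is exactly the ``equivalent route'' you sketch at the end. Your main write-up merely inlines those two lemmas into one direct computation with the composed evaluation context $\EC[[\cdot]V]$, which is fine.
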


\begin{proof}
  Since ${\Ra} \subseteq {\ecev}$, $(\lambda x.P)V \Ra P[V/x]$ implies
  $(\lambda x.P)V \ecev P[V/x]$ and similarly $(\lambda x.Q)V \ecev
  Q[V/x]$.
  Since $\ecev$ is a congruence relation, $(\lambda x.P)V \ecev
  (\lambda x.Q)V$, and so the lemma follows.
\end{proof}

\begin{theorem}
  \label{thm:cevecevcislbv}
  The coupled relation $(\cev, \ecev)$ is a CLB.
\end{theorem}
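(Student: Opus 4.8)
The plan is to verify the three clauses of the definition of CLB directly for the coupled relation $\R = (\cev, \ecev)$. First observe that $\R$ really is a coupled relation: every evaluation context is a context, so $M \cev N$ implies $M \ecev N$, i.e.\ ${\R_1} = {\cev} \subseteq {\ecev} = {\R_2}$. Since $\cev$ and $\ecev$ are both symmetric equivalence relations, it will suffice to establish clauses~(1) and~(2); their converses then follow by exchanging $M$ and $N$ throughout.

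For clause~(1), suppose $M \R_2 N$ and $M \ra M'$. Then $M \Ra M'$, so $M \ecev M'$ by Lemma~\ref{lemma:RainECEV}. Taking $N' := N$ (so that $N \Ra N'$ holds trivially), transitivity and symmetry of $\ecev$ give $M' \ecev N$, as required.

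For clause~(2), suppose $M = \lambda x.M' \R_2 N$. Since $M$ is a value, $M \conv$, and instantiating the definition of $\ecev$ at the evaluation context $[\cdot]$ forces $N \conv$; hence $N \Ra \lambda x.N'$ for some $N'$. By Lemma~\ref{lemma:RainECEV} we get $N \ecev \lambda x.N'$, so $\lambda x.M' \ecev \lambda x.N'$ by transitivity, and since both sides are values Proposition~\ref{prop:jmd} upgrades this to $\lambda x.M' \cev \lambda x.N'$, which is precisely the $\R_1$-requirement of clause~(2). It remains to show $M'[P/x] \R_2 N'[Q/x]$ for all values $P, Q$ with $P \mathrel{{\cev}^\star} Q$. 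I would chain two steps: from $\lambda x.M' \ecev \lambda x.N'$ and $P \in \V$, Lemma~\ref{lemma:ecevvalsbeta} gives $M'[P/x] \ecev N'[P/x]$; and from $P \mathrel{{\cev}^\star} Q$ one obtains $N'[P/x] \ecev N'[Q/x]$, because $N'[[\cdot]/x]$ is a context (as $\fv(N') \subseteq \{x\}$) and contextual equivalence is closed under contexts, so $N'[P/x] \cev N'[Q/x]$, which lies in $\ecev$. Transitivity of $\ecev$ then delivers $M'[P/x] \ecev N'[Q/x]$.

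The converses of clauses~(1) and~(2) follow from the identical arguments with $M$ and $N$ interchanged, invoking symmetry of $\cev$ and $\ecev$. The one ingredient above not read off directly from the earlier results is the identity ${\cev}^\star = {\cev}$ — equivalently, that contextual equivalence is a congruence — which I regard as the point most worth stating carefully. It follows from the routine observation that for any contexts $C$ and $D$ one has $C[D[\wt{L}]] = (C[D])[\wt{L}]$, so closing $\cev$ under contexts (in particular under the substitution contexts $N'[[\cdot]/x]$ used above) adds nothing.
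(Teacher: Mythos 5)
Your proof is correct and takes essentially the same route as the paper: a direct check of the CLB clauses, using ${\Ra}\subseteq{\ecev}$ for clause~(1) and the context lemma (your Proposition~\ref{prop:jmd} in place of the paper's appeal to Theorem~\ref{thm:cev-is-ecev}) plus the congruence identity $({\cev})^\star={\cev}$ for clause~(2). The only cosmetic difference is that for the substitution requirement you chain Lemma~\ref{lemma:ecevvalsbeta} with closure of $\cev$ under the context $N'[[\cdot]/x]$, where the paper invokes Lemma~\ref{lemma:starsubst} directly; both hinge on the same unproved-but-routine fact that $\cev$ is contextually closed, which you rightly flag as the step worth making explicit.
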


\begin{proof}
  Assume $M \ecev N$.
  If $M \ra M'$, then $M \Ra M'$, and so since ${\Ra} \subseteq
  {\ecev}$, $M \ecev M'$.
  Then $N \Ra N$ and by transitivity and symmetry, $M' \ecev N$.

  If $M = \lambda x.P$, then by definition of $\ecev$, $N \Ra \lambda
  x.Q$ for some $Q$, and $\lambda x.P \ecev \lambda x.Q$ by the fact
  that $\ecev$ is an equivalence relation and ${\Ra} \subseteq
  {\ecev}$.
  Then by Theorem \ref{thm:cev-is-ecev}, $\lambda x.P \cev \lambda
  x.Q$ as desired.
  Since ${\cev} \subseteq {\ecev}$ and $({\cev})^\star = {\cev}$, by
  Lemma \ref{lemma:starsubst}, we get that $P[V/x] \ecev Q[W/x]$ for
  all $V \vr{({\cev})^\star} W$, as desired.
\arxiv{

 }{}
  Since the symmetric cases follow symmetrically, we derive the
  theorem.
\end{proof}

\begin{corollary}
  \label{cor:cevecevequalsclbv}~

  Coupled logical bisimilarity coincides with the contextual
  equivalences, i.e., $(\clbv_1, \clbv_2) = (\cev, \ecev)$.

\arxiv{
\end{corollary}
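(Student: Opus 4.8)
The plan is to obtain the equality by double inclusion, componentwise, using the two results immediately preceding the statement. First I would recall that $(\cev,\ecev)$ is indeed a coupled relation: since every context is in particular an evaluation context (or more simply, since $\cev$ and $\ecev$ agree by Theorem~\ref{thm:cev-is-ecev}, and in any case ${\cev}\subseteq{\ecev}$ is immediate from the definitions because testing against all contexts is stronger than testing against evaluation contexts only). Hence it makes sense to compare it with the coupled relation $\clbv$.

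For the inclusion ${\clbv}\subseteq(\cev,\ecev)$, I would simply invoke Corollary~\ref{cor:CLBVinCEV}, which already states exactly this: if $M\clbv_1 N$ then $M\cev N$, and if $E\clbv_2 F$ then $E\ecev F$.

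For the reverse inclusion $(\cev,\ecev)\subseteq\clbv$, I would use Theorem~\ref{thm:cevecevcislbv}, which establishes that $(\cev,\ecev)$ is a CLB. Since $\clbv=(\clbv_1,\clbv_2)$ is by definition the pairwise union of all CLBs, any particular CLB is contained in it; applying this to $(\cev,\ecev)$ gives ${\cev}\subseteq{\clbv_1}$ and ${\ecev}\subseteq{\clbv_2}$. Combining the two inclusions yields $(\clbv_1,\clbv_2)=(\cev,\ecev)$.

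There is no real obstacle here: all the work has been front-loaded into Corollary~\ref{cor:CLBVinCEV} (the soundness/congruence direction, resting on the up-to-context machinery) and Theorem~\ref{thm:cevecevcislbv} (that the contextual equivalences form a CLB, which in turn relies on the context lemma, Theorem~\ref{thm:cev-is-ecev}, and Lemma~\ref{lemma:starsubst}). The only thing to be mildly careful about is that the equality is an equality of \emph{coupled} relations, so I would make explicit that it is read pointwise on the two components.
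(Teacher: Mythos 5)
Your proof is correct and is exactly the paper's argument: the equality is obtained by double inclusion, with ${\clbv}\subseteq(\cev,\ecev)$ coming from Corollary~\ref{cor:CLBVinCEV} and the converse from Theorem~\ref{thm:cevecevcislbv} together with the fact that $\clbv$ is the pairwise union of all CLBs. Your added remarks (that $(\cev,\ecev)$ is a coupled relation and that the equality is read componentwise) are sound and merely make explicit what the paper leaves implicit.
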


\begin{proof}
  By double inclusion via Corollary \ref{cor:CLBVinCEV} and Theorem
  \ref{thm:cevecevcislbv}.
\end{proof}

\begin{corollary}
}{}
  Coupled logical bisimilarity, $\clbv$, is a CLB.
\end{corollary}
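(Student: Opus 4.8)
The plan is to deduce this immediately from the two preceding results, without re-entering any bisimulation game. The one conceptual point worth flagging first is that, a priori, the pairwise union $\clbv$ of all CLBs need \emph{not} itself be a CLB: the functional defining CLB is not monotone (this is the very reason Section~\ref{sec:genupto} was developed), so the usual ``union of bisimulations is a bisimulation'' slogan does not apply directly. Hence the statement is not vacuous, and the right way to establish it is via the explicit identification of $\clbv$ already obtained.

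Concretely, I would argue as follows. By Corollary~\ref{cor:cevecevequalsclbv}, $(\clbv_1, \clbv_2) = (\cev, \ecev)$. By Theorem~\ref{thm:cevecevcislbv}, the coupled relation $(\cev, \ecev)$ is a CLB. Combining these two facts, $\clbv$ is (equal to) a CLB, which is exactly the claim. That is the entire argument; there is no real obstacle, since all the work has been front-loaded into Corollary~\ref{cor:cevecevequalsclbv} (itself obtained by double inclusion from Corollary~\ref{cor:CLBVinCEV} and Theorem~\ref{thm:cevecevcislbv}).

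For completeness I would note the alternative, more ``hands-on'' route and why I avoid it: writing the family of all CLBs as $\{\R^i\}_{i}$, each satisfies $\R^i \prog \R^i$ by Proposition~\ref{prop:equalthenclb}, and one would like $\clbv \prog \clbv$. But Proposition~\ref{prop:clbv-prog-union} only delivers $\bigl(\bigcap_i \R^i_1,\ \bigcup_i \R^i_2\bigr) \prog \bigcup_i \R^i$, whose first component is in general strictly smaller than $\clbv_1 = \bigcup_i \R^i_1$; repairing this would require the call-by-value analogue of Lemma~\ref{lemma:clbsubs} together with continuity (Proposition~\ref{prop:clbvsympa}) to absorb the discrepancy. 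Going through $(\cev,\ecev)$ sidesteps this bookkeeping entirely, so that is the route I would take.
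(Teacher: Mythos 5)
Your proof is correct and takes essentially the same route as the paper: identify $\clbv$ with $(\cev,\ecev)$ (Corollary~\ref{cor:cevecevequalsclbv}, itself from Corollary~\ref{cor:CLBVinCEV} and Theorem~\ref{thm:cevecevcislbv}) and conclude because Theorem~\ref{thm:cevecevcislbv} says that coupled relation is a CLB. Your preliminary remark that the claim is not automatic for a non-monotone functional is accurate and explains why the detour through $(\cev,\ecev)$ is the right move, exactly as the paper does.
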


\begin{proof}
  Immediate by Theorems \ref{thm:cevecevcislbv} and
  \ref{thm:cev-is-ecev}.
\end{proof}

\subsection{Applicative Bisimulation}

The call-by-value version of applicative bisimulation is nearly
identical to the call-by-name version, apart from the obvious
restriction to values in the substitution clause:
\begin{definition}
  A relation $\R\ \subseteq \Ld\times\Ld$ is called an
  \textbf{applicative bisimulation} if $M\R N$ implies whenever $M \Ra
  \lambda x.P$, $N \Ra \lambda x.Q$ for some $Q$ and $P[W/x] \R
  Q[W/x]$ for all $W \in \V$, and conversely for $N$.
  We call the union of all applicative bisimulations, $\ABv$, Ass
  \textbf{applicative bisimilarity}.
\end{definition}

As in the call-by-name case, applicative bisimilarity is an
applicative bisimulation and we seek to show that applicative
bisimilarity coincides with contextual equivalence, i.e., is a
congruence.
In contrast to the call-by-name case, it is not easy to give a direct
proof that applicative bisimilarity can be seen as a CLB for the
call-by-value $\lambda$-calculus.
Assume we tried the naive approach we used in the call-by-name case,
and claimed that the coupled relation $(\vr{\Id}, \ABv)$ is a CLB.
We show that this claim is false: one can easily show that $\lambda
x.(\lambda y.y) x \ABv \lambda x.x$.
However, clause 2 of the definition of CLB then requires that $\lambda
x.(\lambda y.y) x \mathrel{\Id} \lambda x.x$, which is clearly false.
Thus, the proposed embedding of applicative bisimilarity into a CLB is
incorrect.
We could correct this deficiency by proposing instead the embedding
$(\vr{\ABv}, \ABv)$.
However, this embedding is no longer faithful to the spirit of
applicative bisimulation, since we now permit the substitution of
non-identical pairs into values related by $\ABv$.
This problem motivates the need for an analogue of the \textit{up-to
  environment} proposed in \cite{Sangiorgi2007} for logical
bisimulation:\RAK{Check that this reference is correct.}

\RAK{Find a way to define up-to environment using progression seeing
  that it's one of our main assets}
\begin{definition}
  A coupled relation $\R$ is said to be a \textbf{CLB up-to
    environment} if it satisfies all clauses of the definition of CLB
  except for the requirement that $\lambda x.M' \R_1 \lambda x.N'$ in
  clause 2.
\end{definition}

\subsection{Logical Bisimulation}

Although we cannot faithfully embed applicative bisimulation into
call-by-value CLBs, we can still embed logical bisimulations (LBs). We
present the call-by-value version of \arxiv{the logical
  bisimulation}{LB} as 
introduced by \textcite{Sangiorgi2007}.

\begin{definition}
  A relation $\R\ \subseteq\ \Ld\times\Ld$ is called a \textbf{logical
    bisimulation} if whenever $M \R N$:
  \begin{enumerate}
  \item if $M \ra M'$, then $N \Ra N'$ and $M' \R N'$;
  \item if $M = \lambda x.M'$, then $N \Ra \lambda x.N'$, and for all
    $V \vr{\R^\star} W$, $M'[V/x] \R N'[W/x]$;
  \item the converses for $N$.
  \end{enumerate}
  The union of all logical bisimulations is called \textbf{logical
    bisimilarity} and is denoted $\lbv$.
\end{definition}

Although it is claimed in \cite{Sangiorgi2007} that to have soundness,
we must additionally require that $\lambda x.M' \R \lambda x.N'$ in
clause 2 of the definition, that is to say, that clause 2 should read
as ``if $M = \lambda x.M'$, then $N \Ra \lambda x.N'$, and for all $V
\vr{\R^\star} W$, $M'[V/x] \R N'[W/x]$'', the following proposition
shows that this
additional requirement is redundant:

\begin{prop}
  \label{prop:R-relates-values}
  If $\R$ is a relation such that whenever $M \R N$:
  \begin{enumerate}
  \item if $M \ra M'$, $N \Ra N'$ and $M' \R N'$;
  \item if $M = \lambda x.M'$, then $N \conv$; and
  \item the converses of the previous two conditions for $N$;
  \end{enumerate}
  then whenever $\lambda x.M \R N$, we have that $N \conv \lambda x.N'$ and
  $\lambda x.M \R \lambda x.N'$.
\end{prop}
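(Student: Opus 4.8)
The plan is to first extract convergence of $N$ from clause 2, and then to transport $\R$ along the reduction path $N \Ra \lambda x.N'$ using the converse of clause 1 (which is part of clause 3), exploiting the fact that $\lambda x.M$ is already a normal form.

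First I would observe that $\lambda x.M$ is an abstraction, so applying clause 2 to the pair $(\lambda x.M, N)$ yields $N \conv$. Since the values of the call-by-value calculus are exactly the (closed) abstractions, this gives $N \Ra V$ for some value $V$, and up to $\alpha$-renaming we may write $V = \lambda x.N'$. Hence $N \conv \lambda x.N'$, which is the first half of the conclusion.

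Next I would prove, by induction on $k$, that if $N = N_0 \ra N_1 \ra \cdots \ra N_k = \lambda x.N'$ is the chosen reduction sequence witnessing $N \conv \lambda x.N'$, then $\lambda x.M \R N_i$ for every $0 \leq i \leq k$. The base case $i = 0$ is the hypothesis $\lambda x.M \R N$. For the inductive step, assume $\lambda x.M \R N_i$ with $N_i \ra N_{i+1}$; by clause 3, there exists an $M'$ with $\lambda x.M \Ra M'$ and $M' \R N_{i+1}$. Since $\lambda x.M$ is an abstraction, it admits no call-by-value reduction, so $\lambda x.M \Ra M'$ forces $M' = \lambda x.M$; thus $\lambda x.M \R N_{i+1}$. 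Taking $i = k$ gives $\lambda x.M \R \lambda x.N'$, the second half of the conclusion.

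The only point requiring a little care, and the closest thing to an obstacle, is the observation that $\lambda x.M \Ra M'$ implies $M' = \lambda x.M$, i.e.\ that abstractions are normal forms for $\ra$ in call-by-value; this is immediate from inspection of the three reduction rules, none of which has a left-hand side matching a bare abstraction. A minor cosmetic issue is the $\alpha$-renaming needed to make the bound variable of the value reached by $N$ agree with $x$, which is harmless since everything is read up to $\alpha$-equivalence.
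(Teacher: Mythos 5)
Your proof is correct and follows essentially the same route as the paper's: an induction along the reduction sequence $N \Ra \lambda x.N'$ (whose existence comes from clause 2), transporting $\R$ at each step via the converse of clause 1 and using the fact that the abstraction $\lambda x.M$ is a normal form, so the matching reduction on the left is trivial. Nothing to add.
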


\begin{proof}
  Suppose $P \R N$. By induction on $N\Ra Q$, using 1, we get $P \Ra
  P'$ such that $P' \R Q$.  With $P = \lambda x.M$ and $Q' = \lambda
  x.N'$ for some $N'$ (thanks to 2) we must have $P' = \lambda x.M$ as
  well.
\end{proof}


Unfortunately, we have not been able to similarly drop the requirement
that $\lambda x.M' \R_1 \lambda x.N'$ in the second clause of the
definition of call-by-value CLBs.

As one would expect, we still have the following proposition, due to
\cite[Corollary 1, Lemma 4, and p. 12]{Sangiorgi2007}:

\begin{prop}[\cite{Sangiorgi2007}]
  \label{prop:lbvisbisimandcong}
  Logical bisimilarity is the largest logical bisimulation and is a
  congruence\arxiv{ relation.}{.}
\end{prop}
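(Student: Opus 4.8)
The plan is to identify $\lbv$ with the contextual equivalence $\cev$ by routing through the CLB machinery already developed, and then read off both assertions. The key step I would establish first is a call-by-value analogue of Proposition~\ref{prop:lbiffclb}: a relation $\R$ is a logical bisimulation if and only if the coupled relation $(\R,\R)$ is a CLB. The ``if'' direction is immediate, since the second clause of the CLB definition is strictly stronger than the second clause of the LB definition (it additionally demands $\lambda x.M' \R_1 \lambda x.N'$), so every CLB of the shape $(\R,\R)$ is witnessed by a logical bisimulation. For the ``only if'' direction the only subtle point is recovering precisely that extra demand: a logical bisimulation $\R$ satisfies the hypotheses of Proposition~\ref{prop:R-relates-values} (its clause~1 is clause~1, and clause~2 of LB forces $N$ to converge whenever $M$ is an abstraction), so whenever $\lambda x.M' \R N$ we obtain $N \conv \lambda x.N''$ with $\lambda x.M' \R \lambda x.N''$; by determinism of call-by-value reduction this $\lambda x.N''$ is the very value $\lambda x.N'$ provided by clause~2 of the LB definition, so $\lambda x.M' \R_1 \lambda x.N'$. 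Together with $\R_1 = \R_2$ and the unchanged clause~1, this shows $(\R,\R)$ is a CLB.

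With this correspondence available, I would argue as follows. By Corollary~\ref{cor:cevecevequalsclbv} we have $\clbv_1 = \cev$ and $\clbv_2 = \ecev$, and the context lemma (Theorem~\ref{thm:cev-is-ecev}) gives $\cev = \ecev$, so $\clbv_1 = \clbv_2$ and hence $\clbv = (\clbv_2,\clbv_2)$. Since $\clbv$ is itself a CLB, the correspondence shows that $\clbv_2$ is a logical bisimulation. Conversely, any logical bisimulation $\R$ yields a CLB $(\R,\R)$, which is contained in $\clbv$, so $\R \subseteq \clbv_2$. Therefore $\clbv_2$ is a logical bisimulation containing every logical bisimulation, i.e.\ $\lbv = \clbv_2 = \cev$: coupled logical bisimilarity's second component is the largest logical bisimulation and it coincides with contextual equivalence.

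Finally, being a congruence is inherited from $\cev$. Read with the usual open extension to all of $\Lambda$, $\cev$ is an equivalence relation, and it is closed under application and abstraction because substituting related (closed instances of) terms into the surrounding context of an application or of a binder, and composing these contexts, preserves the defining convergence condition; transitivity then upgrades ``one argument at a time'' replacements to simultaneous ones. Hence $\lbv = \cev$ is a congruence.

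I expect the main obstacle to be the ``only if'' half of the correspondence lemma --- precisely, seeing that Proposition~\ref{prop:R-relates-values} together with determinism of call-by-value reduction is exactly what is needed to obtain $\lambda x.M' \R_1 \lambda x.N'$ for free; the rest is either a direct verification or an appeal to already-proved results (the context lemma, Corollary~\ref{cor:cevecevequalsclbv}, and the fact that $\clbv$ is a CLB). A minor bookkeeping concern is the passage between relations on closed terms and their open extensions when phrasing the congruence statement, which is routine.
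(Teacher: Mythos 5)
Your proof is correct, but it takes a genuinely different route from the paper: the paper does not prove this proposition at all, but imports it from Sangiorgi's work on logical bisimulation (the citation to Corollary~1, Lemma~4 and p.~12 of that paper), where it is established by a direct congruence argument for LB that makes no reference to CLB. You instead derive it internally from the paper's own machinery, and the derivation is non-circular: your key step is exactly the paper's Proposition~\ref{prop:lbiffclbv} (whose justification via Proposition~\ref{prop:R-relates-values} and determinism of call-by-value reduction you reconstruct correctly, and which does not depend on the statement being proved), and the remaining ingredients --- Corollary~\ref{cor:cevecevequalsclbv}, Theorem~\ref{thm:cev-is-ecev}, and the fact that ${\clbv}$ is a CLB --- are all established independently of ${\lbv}$. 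What your approach buys is a self-contained proof that simultaneously yields ${\lbv}={\clbv_2}={\cev}$, essentially absorbing the subsequent corollary of the paper; what it costs is that it leans on the full weight of the call-by-value development (the context lemma and the soundness of up-to context), whereas the cited external proof is independent of all of that. Two small caveats: your identification of the value $\lambda x.N''$ produced by Proposition~\ref{prop:R-relates-values} with the $\lambda x.N'$ of the LB clause does need determinism of call-by-value reduction (which holds here, since values do not reduce), so you should state that explicitly; and, as you note, the congruence claim is literally about a relation on $\Lambda\times\Lambda$ while ${\cev}$ is defined on closed terms, so the statement only makes sense for the open extension of ${\cev}$ --- a routine but genuinely necessary step that the one-context-at-a-time argument plus transitivity does handle.
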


Moreover, as in the call-by-name case, and thanks to Proposition
\ref{prop:R-relates-values}, we have that:

\begin{prop}
  \label{prop:lbiffclbv}
  A relation $\R$ is a logical bisimulation if and only if $(\R, \R)$
  is a CLB.
\end{prop}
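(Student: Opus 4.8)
The plan is to observe that, applied to a ``diagonal'' coupled relation $(\R,\R)$, the definition of a call-by-value CLB is, clause by clause, precisely the definition of a call-by-value logical bisimulation \emph{together with} the single extra demand that $\lambda x.M' \R \lambda x.N'$ in clause~2. The two implications then separate cleanly. For the direction ($\Leftarrow$), if $(\R,\R)$ is a CLB, then clause~1 of the CLB definition for $(\R,\R)$ is verbatim clause~1 of the logical bisimulation definition, clause~2 of the CLB definition is strictly stronger than clause~2 of the logical bisimulation definition (one simply drops the requirement $\lambda x.M' \R \lambda x.N'$), and clause~3 (the converses) matches; hence $\R$ is a logical bisimulation. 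This direction is immediate.

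For the direction ($\Rightarrow$), suppose $\R$ is a logical bisimulation. Then $\R \subseteq \R$, so $(\R,\R)$ is a coupled relation, and clauses~1 and~3 of the definition of CLB hold for $(\R,\R)$ directly by the corresponding clauses for $\R$. The only thing left to check is the extra conjunct $\lambda x.M' \R \lambda x.N'$ in clause~2: given $M \R N$ with $M = \lambda x.M'$, I must exhibit an $N'$ with $N \Ra \lambda x.N'$, with $\lambda x.M' \R \lambda x.N'$, and with $M'[V/x] \R N'[W/x]$ for all $V \vr{\R^\star} W$. I would obtain the missing conjunct from Proposition~\ref{prop:R-relates-values}: a logical bisimulation $\R$ satisfies that proposition's three hypotheses, since its clause~1 is clause~1 of the logical bisimulation definition, the weak clause~2 (``$M = \lambda x.M'$ implies $N\conv$'') follows from clause~2 of the logical bisimulation definition, which already gives $N \Ra \lambda x.N'$, and the converse clauses match. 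Hence Proposition~\ref{prop:R-relates-values} yields an $N'$ with $N \conv \lambda x.N'$ and $\lambda x.M' \R \lambda x.N'$.

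It remains to reconcile this $N'$ with the witness supplied by clause~2 of the logical bisimulation definition, i.e.\ to know that it is the \emph{same} $\lambda x.N'$ that enjoys the substitutivity property $M'[V/x] \R N'[W/x]$. Here I would use determinism of $\ra$ in call-by-value: since abstractions are irreducible, $N$ has at most one value reachable from it, so the $N'$ provided by Proposition~\ref{prop:R-relates-values} coincides with any $N'$ given by clause~2 of the logical bisimulation definition, and all three requirements of clause~2 of the CLB definition are met by this single $N'$. The converse (clause~3) follows symmetrically, completing the verification that $(\R,\R)$ is a CLB. I do not expect a serious obstacle here — the real content is entirely carried by Proposition~\ref{prop:R-relates-values} — and the only point requiring care is precisely this determinism-based identification of the witness $N'$ (alternatively, one may apply clause~2 of the logical bisimulation definition directly to the pair $\lambda x.M' \R \lambda x.N'$, using irreducibility of $\lambda x.N'$, avoiding any appeal to global determinism).
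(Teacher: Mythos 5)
Your proof is correct and follows essentially the same route as the paper: the paper gives no explicit argument but states that the result holds ``as in the call-by-name case, and thanks to Proposition~\ref{prop:R-relates-values}'', i.e.\ a direct check of the definitions with that proposition supplying the extra conjunct $\lambda x.M' \R \lambda x.N'$. Your additional care in identifying the witness $N'$ via determinism of call-by-value reduction (or via irreducibility of $\lambda x.N'$) is a legitimate and correctly handled detail that the paper leaves implicit.
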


\begin{corollary}
  Logical bisimilarity, coupled logical bisimilarities, and contextual
  equivalences coincide, i.e., ${\lbv} = {\clbv_2} = {\ecev} = {\cev}
  = {\clbv_1}$.
\end{corollary}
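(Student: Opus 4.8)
The plan is to assemble the claimed chain of equalities from results already proved, inserting logical bisimilarity as one extra link. First I would invoke Corollary~\ref{cor:cevecevequalsclbv}, which gives $\clbv_1 = \cev$ and $\clbv_2 = \ecev$, together with the context lemma (Theorem~\ref{thm:cev-is-ecev}), which gives $\cev = \ecev$; these immediately collapse the four relations $\clbv_1$, $\clbv_2$, $\cev$, $\ecev$ into a single one. It then remains only to identify this common relation with $\lbv$, which I would do by a double inclusion using the characterization of logical bisimulations provided by Proposition~\ref{prop:lbiffclbv}.

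For $\lbv \subseteq \clbv_2$: given $M \lbv N$, there is a logical bisimulation $\R$ with $M \R N$; Proposition~\ref{prop:lbiffclbv} turns $\R$ into the CLB $(\R,\R)$, and since $\clbv_2$ is by definition the union of the second components of all CLBs, $\R \subseteq \clbv_2$, hence $M \clbv_2 N$. For the reverse inclusion $\clbv_2 \subseteq \lbv$: since $\cev = \ecev$, the coupled relation $(\cev,\ecev)$, which is a CLB by Theorem~\ref{thm:cevecevcislbv}, has the form $(\R,\R)$ with $\R = \cev$; applying Proposition~\ref{prop:lbiffclbv} in the other direction shows $\cev$ is a logical bisimulation, so $\cev \subseteq \lbv$, and then $\clbv_2 = \ecev = \cev \subseteq \lbv$. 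Stringing the inclusions together yields $\lbv = \clbv_2 = \ecev = \cev = \clbv_1$.

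There is no genuine obstacle: the substantive work has already been carried out in the context lemma, in the soundness of up-to context (which underlies $(\cev,\ecev)$ being a CLB), and in Proposition~\ref{prop:lbiffclbv}, so this corollary is essentially bookkeeping. The one point deserving a moment's attention is that Proposition~\ref{prop:lbiffclbv} only speaks about coupled relations of the shape $(\R,\R)$, so it is precisely the identification $\clbv_1 = \clbv_2$ (equivalently $\cev = \ecev$) supplied by the context lemma that licenses applying it both to $\clbv$ and to $(\cev,\ecev)$.
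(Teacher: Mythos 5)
Your proposal is correct and follows exactly the route the paper intends (the paper leaves this corollary unproved as immediate bookkeeping): collapse $\clbv_1$, $\clbv_2$, $\cev$, $\ecev$ via Corollary~\ref{cor:cevecevequalsclbv} and Theorem~\ref{thm:cev-is-ecev}, then identify the result with $\lbv$ by double inclusion through Proposition~\ref{prop:lbiffclbv} applied to $(\R,\R)$ and to $(\cev,\cev)$. Your closing observation that the context lemma is precisely what licenses applying Proposition~\ref{prop:lbiffclbv} to $(\cev,\ecev)$ is the right point to flag.
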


\section{Concluding Remarks}

Logical bisimulations build upon applicative bisimulations and make
proofs of congruence simpler, without relying on Howe's
method~\cite{Howe1996,Pitts2011}.
Up-to techniques for logical bisimulations can be
defined~\cite{Sangiorgi2007},  in order to bisimulation proofs easier.
Their definition, however, is rather ad hoc.

Coupled logicial bisimulations bridge the gap between applicative and
logical bisimulations: indeed, the latter are special cases of CLBs.
One can reach applicative bisimulation by making the first component
of a CLB as small as possible.  For it to correspond to logical
bisimulation, the first component needs to be larger, to the point of
being equal to the second component.

We need to study further coupled logical bisimulations, in order in
particular to draw a comparison with environmental bisimulation. While
in the latter, intuitively, we need to make environments grow along
the development of an equivalence proof, an interesting feature of CLB
is the possibility to keep the first component of a coupled relation small.

Possible extensions of this work include treating richer
$\lambda$-calculi, like a $\lambda$-calculus with imperative features,
for which a notion of state have to be introduced.


\paragraph{Acknowledgements}

Discussions with Davide Sangiorgi, Damien Pous and Daniel Hirschkoff
have been helpful in the development of this work.
The authors acknowledge the support of the 
ANR 12IS02001 PACE project.

\bibliographystyle{abbrv}
\bibliography{ref}
\end{document}

\JMM{old comments:} Why can LB's paper establish congruence in cbv,
and we fail? \JMM{did you try? the argument made up in the LB paper
  might work} Congruence: no miracle, we need to rely to Howe's method
in cbv.